\documentclass[a4paper,11pt]{article} 
\usepackage[a4paper,hmargin=2.8cm,vmargin=3cm]{geometry}
\usepackage[utf8]{inputenc} 
\usepackage[english]{babel}

\usepackage{mathtools,amsthm, amsfonts, amssymb, mathrsfs, bbm, bm}
\usepackage{authblk}
\usepackage{lineno,csquotes}
\usepackage[dvipsnames]{xcolor}
\usepackage{hyperref}
\usepackage{comment}
\usepackage{upgreek}

\definecolor{BeauBlue}{rgb}{0, 0.2, .9}
\definecolor{BeauOrange}{rgb}{.8, .1, 0}
\usepackage{hyperref}
\hypersetup{
	colorlinks = true,
	linkcolor = BeauBlue,
	urlcolor = cyan,
	citecolor = BeauOrange,
}
 \usepackage[mathcal]{euscript}

\numberwithin{equation}{section}

\theoremstyle{plain}
\newtheorem{theorem}{Theorem}[section] 
\newtheorem*{theorem*}{Theorem} 
\newtheorem{proposition}[theorem]{Proposition} 

\newtheorem{lemma}[theorem]{Lemma}

\theoremstyle{definition}
\newtheorem{definition}[theorem]{Definition} 
\newtheorem{assumption}{Assumption}
\theoremstyle{remark}

\newtheorem{remark}[theorem]{Remark}

\DeclareMathOperator{\Tr}{Tr}
\newcommand{\mb}{\mathbf}
\newcommand{\mc}{\mathcal}
\newcommand{\mf}{\mathfrak}
\newcommand{\mr}{\mathrm}
\newcommand{\ms}{\mathsf}
\newcommand{\msc}{\mathscr}
\newcommand{\timord}{\mb T}

\newcommand{\FF}{^{\Delta F}}
\newcommand\pint{\int \mkern-2mu\mathbb P}

\newcommand\restr[2]{{
		\left.\kern-\nulldelimiterspace 
		#1 
		\vphantom{\Big|} 
		\right|_{#2} 
}}








\title{Edge transport in Haldane-like models with quasi-periodic disorder}

\author[1]{Fabrizio Caragiulo}
\author[2]{Vieri Mastropietro}
\author[1]{Marcello Porta}
\affil[1]{SISSA, Via Bonomea 265, 34136 Trieste, Italy}
\affil[2]{University of Rome Sapienza, P.le Aldo Moro 2, 00185 Roma, Italy}

\date{\today}

\begin{document}

\maketitle

\begin{abstract}
We consider Haldane-like $2d$ topological insulators on the cylinder, in the presence of weak quasi-periodic disorder. We prove that, at large distances, the boundary correlations agree with the correlations of a renormalized, translation-invariant, massless relativistic model in $1+1$ dimensions, multiplied by non-universal oscillatory factors, incommensurate with the lattice spacing. Furthermore, we compute the edge conductance and the edge susceptibility, starting from Kubo formula. We obtain explicit expressions for these response functions, completely determined by the renormalized Fermi velocity of the edge modes. In particular, we prove the quantization of the edge conductance, and the non-universality of the susceptibility. The proof relies on multiscale analysis and rigorous renormalization group methods for quasi-periodic systems, and on lattice Ward identities.
\end{abstract}

\tableofcontents

\section{Introduction}\label{sec:intro}
One of the hallmarks of topological phases in condensed matter physics is the presence of stable, metallic currents at the boundary of the material. These metallic states are called edge modes, and their very existence and stability is deeply related to the value of the bulk topological invariant labelling the phase of the system. Edge currents have first been predicted to arise for quantum Hall systems \cite{Halperin}. The bulk-edge duality for quantum Hall systems \cite{Hat, SKR, EG, EGS} relates the signed number of such chiral currents, where the sign takes into account the direction of propagation, to the value of the bulk Hall conductivity. In the last years, this remarkable duality has been extended to a large class of noninteracting topological materials; see {\it e.g.} \cite{PS} for a review.

However, the bulk-edge duality does not provide quantitative information about the properties of the edge modes. Spectral theoretic methods have been used to prove the stability of the continuous spectrum of the edge modes, with respect to a broad class of perturbations, and for models defined on a lattice or in the continuum, see {\it e.g.} \cite{DP, BW, Briet, FGW, HS, MMP}. Besides the existence and the stability of the edge modes, a natural question, addressed in this paper, is how to obtain rigorous predictions on physically relevant quantities, such as edge correlation functions, edge scaling exponents, or edge transport coefficients.

At a formal level, these objects can be investigated adopting an effective quantum field theory (QFT) viewpoint. At large scales, the edge modes are expected to be described by chiral, relativistic one dimensional fermions \cite{Wen, Frohlich}; the relevant effective QFT is the chiral Luttinger model, which is integrable via bosonization methods. This allows to obtain closed formulas for edge correlation functions, and for edge transport coefficients. Still, from a mathematical viewpoint the reduction step from a two-dimensional quantum system to a relativistic one-dimensional model is a nontrivial one; it is a natural question to understand in what sense this reduction holds, on which length scales, and for which class of lattice models.

In the last years there has been progress in the rigorous justification of the chiral Luttinger liquid as an effective one-dimensional theory for the boundary of weakly interacting two-dimensional quantum Hall systems. The works \cite{AMP, MP} allowed to put this reduction on rigorous grounds, for weakly interacting systems on a cylinder, via rigorous renormalization group (RG) methods. The technique allows to compute boundary correlation functions and to extract their scaling limit, which turns out to be described by the multi-channel Luttinger liquid. Furthermore, the combination of rigorous RG methods with lattice and with emergent conservation laws allows to compute the real-time transport coefficients relevant for linear response, without relying on bosonization, difficult to implement in rigorous arguments. In particular, the edge conductance turns out to be exactly quantized; as a corollary, if combined with the quantization of the many-body Hall conductivity \cite{HM, GMPhall, BBdRF} and in particular with its universality \cite{GMPhall}, the results of \cite{AMP, MP} allow to lift the bulk-edge duality to the many-body context, at least for weakly interacting systems. More recently, \cite{LMTW} proved the equality of bulk and edge transport coefficients for interacting fermions at positive temperatures, for systems satisfying local indistinguishability of the Gibbs state.

The main limitation of \cite{AMP, MP} is the restriction to systems that are translationally invariant in the direction of the edge. This symmetry allows to introduce the concept of quasi-momentum for the edge currents, and to rely on momentum conservation, crucial for the RG analysis. It is an important open problem to develop rigorous RG techniques that allow to study interacting and disordered systems, and to understand the scaling limit of the disordered correlation functions. 

In this work, we develop a rigorous RG method for the edge states of weakly disordered, non-interacting two-dimensional quantum systems. Our goal is to characterize precisely the large scale edge properties of the system, and to determine the linear response from a microscopic model. Disorder will be described by a quasi-periodic potential, whose frequency is incommensurate with the lattice spacing. We will actually suppose that the quasi-periodic potential is modulated via a frequency that is badly approximated by rationals: mathematically, we will assume that the frequency is a Diophantine number, which form a set of full measure in the reals. In the last five decades, there has been enormous work in the mathematics of quasi-periodic Schroedinger operators, mostly in one dimension, from the seminal works \cite{DS, MoPo, E} to more recent breakthroughs such as \cite{AJ1, AJ2, AYZ}. The standard example is the almost Mathieu model, which is the Hamiltonian of a quantum Hall system with no boundary, after a partial Bloch reduction. For this type of model, a very precise understanding of the spectrum has been achieved, at a non-perturbative level. In particular, the fractal nature of the spectrum as function of the magnetic field, related to the celebrated Hofstadter butterfly \cite{AOS}, has been understood \cite{AJ2}. 

The model considered in the present paper is motivated by the physics of topological insulators, and does not belong to the class of Schroedinger operators considered in the very large literature about quasi-periodic systems. In the absence of disorder, the Hamiltonian describes electrons hopping on a two-dimensional lattice wrapped around a cylinder. We will assume that the non-disordered Hamiltonian is finite-ranged, translationally invariant along the periodic direction of the cylinder, and that it satisfies Dirichlet boundary conditions on the edges. We will assume that the spectrum of the Hamiltonian supports edge modes: that is, generalized eigenfunctions that are exponentially localized on the two boundaries of the cylinder. Specifically, we will assume that every edge supports one edge mode. A model fitting the picture is the Haldane model \cite{Hal}, a paradigmatic example of topological insulator. We will then perturb the model adding a weak quasi-periodic potential, and we will address the problem of understanding the behavior of the edge modes, in a quantitative way. 

From the physics viewpoint, this analysis is motivated by the recent progress in condensed matter physics, where lattice distortions, Moiré superlattices and magnetic impurities are often described in terms of quasi-periodic potentials; see for instance \cite{HuangLiu, PixleyHuse, PixleyHuse2, Szabo, Okugawa, Karcher, Vongkovit}, as well as the experimental works \cite{Schreiber, Uri, He}. With respect to random systems, it is worth mentioning that quasiperiodic systems present additional challenges, due to the fact that many tools used in the random case (such as the replica trick in the physics literature) are no longer applicable.

Effectively, in our model each edge mode behaves as a one-dimensional system. The key difference with respect to the many one-dimensional cases that have been studied in the mathematics literature is that the edge modes are chiral: every boundary supports currents associated with a single Fermi point, instead of two counterpropagating currents as for the usual Laplacian in $1d$. For this model, one could address all the questions that have been investigated over the years for one-dimensional systems. Due to the bidimensionality of the lattice model, however, the existing theorems developed for $1d$ systems do not directly apply to the present case. In this work we adopt a statistical mechanics viewpoint, and we use KAM-like direct methods to rigorously characterize the Gibbs state of the system at weak quasi-periodic disorder. Furthermore, our method allows us to compute from first principles the edge conductance and the edge susceptibility of the system, defined respectively as the linear response of the edge current and edge density to an external potential that is supported in proximity of the boundary, and that is turned on adiabatially in time. 

Our first result is an explicit expression for the large-scale behavior of the edge two-point correlation function: we find that it is given by the two-point function of a dressed, translation invariant, massless relativistic model in $1+1$ dimensions, multiplied by non-universal oscillatory prefactors, that are incommensurate with the lattice spacing. Thus, informally, the scaling limit of the two-point function is given by an ``oscillatory''  QFT; to the best of our knowledge, this is the first time that such oscillatory behavior is obtained for the large scale properties of weakly disordered quantum systems on a lattice. Being the model non-interacting, the knowledge of the two-point function allows to fully characterize all edge correlation functions, via the fermionic Wick rule. Notice that, in contrast to the widely studied quasi-periodic models in 1d, where a Cantor set of gaps opens as soon as the disorder is turned on, here the system is always gapless for all values of the chemical potential within the bulk spectral gap.

Our second result is an explicit computation of the edge conductance and of the edge susceptibility. Their values turn out to be completely determined by the renormalized Fermi velocity, a quantity defined by the scaling limit of the two-point function. Both response functions are defined via Kubo formula, as response to an external perturbation supported in proximity of the edge, slowly varying in space and in time. We prove that the edge conductance is quantized, in agreement with the bulk Hall conductivity; instead, the edge susceptibility is non-universal, and it has a simple expression is terms of the effective Fermi velocity of the edge modes. To the best of our knowledge, this is the first time that edge transport coefficients are computed directly from Kubo formula, in the setting of weakly disordered systems.

The proofs are based on multiscale analysis for disordered systems, and on renormalization group techniques.  With respect to the many applications of fermionic RG of the last years, a simplification of the present case is that the fermions are non-interacting. Still, a major difficulty is due to the presence of the quasi-periodic disorder, that breaks translation invariance, and potentially introduces infinitely many new resonances in the multiscale analysis. Thanks to the Diophantine assumption on the modulation frequency of the potential, we can ultimately control all these singularities, and prove that the net effect of the disorder is to renormalize the Fermi velocity, to shift the Fermi points, and to dress the correlation functions by oscillatory factors. The use of renormalization group methods for quasi-periodic systems has been pioneered in  \cite{BGM}, where the Schwinger functions of one-dimensional systems in the presence of weak quasi-periodic disorder have been studied, and where the gap opening at special values of the chemical potential has been established in the sense of exponential decay of correlations. The analysis of the present paper is partially inspired by recent related work for the quasi-periodic Ising model \cite{GM}; however, the extension to the present case is nontrivial, due for instance to the presence of the bulk degrees of freedom, and to the necessity to control the scattering between different edges, ultimately suppressed by the exponential decay of the edge modes. Also, \cite{GM} studied the correlation functions for bilinear fermionic observables, while here we construct the two-point function, which allows to compute all correlation functions via the fermionic Wick's rule.

Similar methods have been applied in the past to study the localization and delocalization of one-dimensional interacting fermions with quasi-periodic disorder, for values of the density that rule out resonances between different Fermi points \cite{M0, M1, M2}.

With the result for the two-point function at hand, in the second part of the paper we address the problem of computing edge transport coefficients. We extend the strategy that has been introduced for translation-invariant edge modes in \cite{AMP, MP}; the key idea is to use conservation laws to impose strong constraints on all the contributions to transport that cannot be computed explicitly (that is, starting from the scaling limit of the two-point function). These contributions are crucial, and allow to obtain remarkably simple expressions for the response functions. The analysis of \cite{AMP, MP} relied strongly on translation invariance. Here, we extend the method to disordered systems; a major technical challenge faced in the present work is to prove that on large scales the transport coefficients can be approximated by those of a renormalized, translation-invariant model. To achieve this, we crucially rely on the number-theoretic properties of Diophantine frequencies. In particular, the method allows to establish nontrivial relations between the amplitudes of the modes of the oscillatory functions appearing in the scaling limit of the two-point function and the Fermi velocity, which seem to be novel, to the best of our knowledge.

In perspective, we believe that the methods of this paper open several interesting directions, such as the analysis of interacting, quasi-periodic edge currents, or the study of quasiperiodic models with several edge modes. The extension to several edge modes is far from being trivial, as the analysis of the almost-Mathieu model suggests: for suitable values of the chemical potential, counterpropagating chiral fermions are expected to annihilate, a phenomenon associated with mass generation from the QFT viewpoint. This of course cannot happen for all configurations of edge modes, since it would imply a violation of the bulk-edge duality. It would be interesting to have a RG perspective of the stability and on the disappearance of edge modes due to scattering induced by the disorder. Also, let us mention that the transport properties of interacting and disordered edge modes are not completely understood. For instance, we mention here the puzzle of the two-terminal conductance of the Hall bar: for interacting and translation-invariant systems, bosonization methods predict a non-universal result for materials with counterpropagating edge modes, in contrast with experiments. In order to recover the correct result, disorder should be taken into account; it has been suggested that disorder should drive the system to a different RG fixed point \cite{KF, KFP}, for which the quantization of the two-terminal conductance holds true. The rigorous formulation of this argument seems to be beyond present day mathematical methods.

The paper is organized as follows. In Section \ref{sec:themodel} we introduce the class of models we consider, we define the Gibbs state and we state the assumptions on the non-disordered edge spectrum. We then introduce the response functions, and the Euclidean two-point function. In Section \ref{sec:mainres} we state our main results, Theorem \ref{thm:2pt} about the two-point function, and Theorem \ref{thm:resp} about the response functions, and we give a sketch of the proof of Theorem \ref{thm:resp}. The rest of the paper is devoted to the proofs; the article is self-contained, and this contributes to its length. In Section \ref{sec:RG} we introduce the RG analysis, which is conveniently defined in the Grassmann representation of the model, and we prove Theorem \ref{thm:2pt}. In Section \ref{sec:transport} we use the result about the two-point function to compute the edge response function, after the rigorous Wick rotation, and to prove Theorem \ref{thm:resp}. In Appendix \ref{app:CT} we prove an estimate for the propagator associated with the energies away from the Fermi level; in Appendix \ref{app:2pt} we prove a recursion relation that allows to determine the two-point function from the effective potential; in Appendix \ref{app:bubble} we reproduce for completeness the computation of the relativistic bubble diagram, which plays a key role in the computation of the transport coefficients; while Appendices \ref{app:Rest}, \ref{app:contK} collect technical details of auxiliary results in our proofs.

\paragraph{Acknowledgements.} F. C. and M. P. acknowledge support by the European Research Council through the ERC-StG MaMBoQ, n. 802901. V. M and M. P. acknowledge support from the MUR, PRIN 2022 project MaIQuFi cod. 20223J85K3. This work has been carried out under the auspices of the GNFM of INdAM.

\section{The model}\label{sec:themodel}
\subsection{The Hamiltonian and the Gibbs state}
Let $L_{1}, L_{2}\in \mathbb{N}$, $L = (L_{1}, L_{2})$, and consider the lattice:
\begin{equation}
\Lambda_{L} = \Big\{ \vec x \in \mathbb{Z}^{2} \Big|\; 1 \le x_{i} \le L_{i},\quad i =1,2 \Big\}\;.
\end{equation}
We shall equip the lattice with periodic boundary conditions in the direction $1$, and Dirichlet boundary conditions in the direction $2$.  That is, we will consider functions $f(\vec x)$ satisfying {\it cylindric boundary conditions:}
\begin{equation}
f(x_{1} + n L_{1}, x_{2}) = f(x_{1}, x_{2})\;,\quad f(x_{1}, 1) = f(x_{1}, L_{2}) = 0\;,\quad \text{for all $\vec x\in \Lambda_{L}$.}
\end{equation}
We will use the following distance between points on $\Lambda_{L}$:
\begin{equation}\label{eq:distL}
\| \vec x - \vec y \|_{L} := \min_{n\in \mathbb{Z}} \| \vec x - \vec y + n L e_{1} \|\;,\qquad \text{for all $\vec x, \vec y \in \Lambda_{L}$},
\end{equation}
with $\|\cdot\|$ the usual Euclidean norm on $\mathbb{R}^{2}$ and $e_{1} = (1,0)$, $e_{2} = (0,1)$.
\begin{figure}
    \centering
    \includegraphics[scale=0.3]{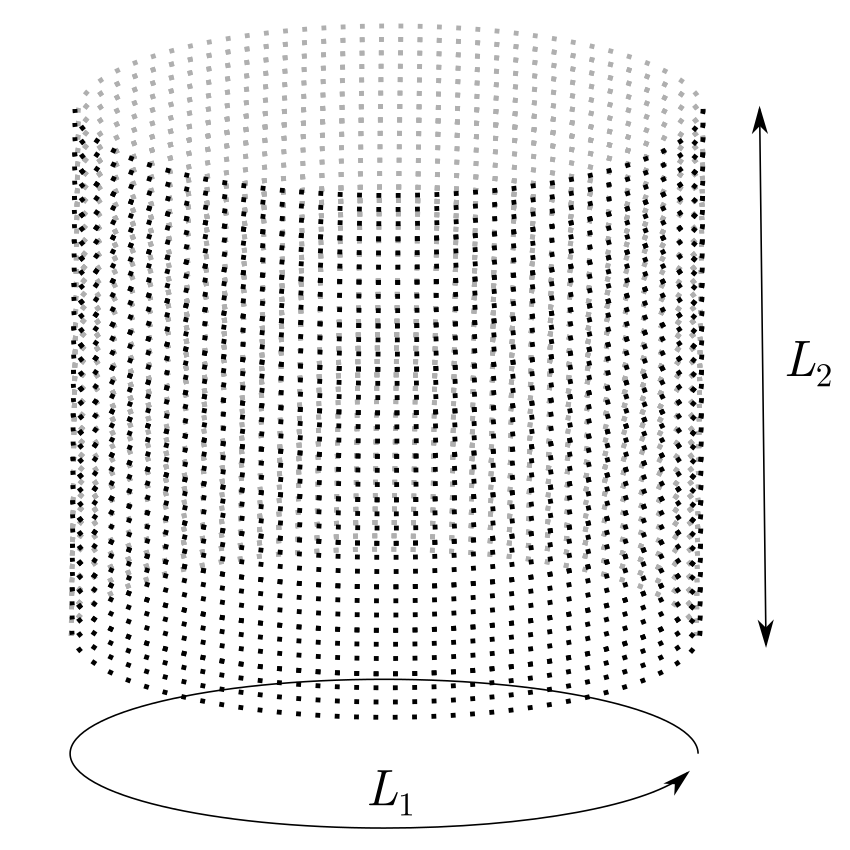}
    \caption{The lattice $\Lambda_L$.}
    \label{fig:cylinder}
\end{figure}

We are interested in describing a fermionic lattice gas on $\Lambda_{L}$, in a grand-canonical setting. The fermionic Fock space is:
\begin{equation}
\mathcal{F}_{L} = \bigoplus_{n\ge 0} \big(\ell^{2}(\Lambda_{L}; \mathbb{C}^{S})\big)^{\wedge n}\;;
\end{equation}
the number $S \in \mathbb{N}$ denotes the total number of internal degrees of freedom, {\it e.g.} the spin label or the sublattice label. The Hamiltonian of the model is a self-adjoint operator on $\mathcal{F}_{L}$, of the following form:
\begin{equation}\begin{split}\label{eq:Ham}
       \mathcal{H} &:=  {\sum_{\vec x,\vec y\in \Lambda_{L}}\sum_{\sigma,\zeta=1}^S  {a}^*_{\vec x,\sigma} H_{\sigma,\zeta}(\vec x,\vec y){a}_{\vec y,\zeta}} + \lambda \sum_{\vec x \in \Lambda_{L}} \sum_{\sigma = 1}^{S} \varphi(\vec x){a}^*_{\vec x,\sigma}  {a}_{\vec x,\sigma}    \; ,
\end{split}
\end{equation}
where $a, a^{*}$ are the usual fermionic creation and annihilation operators, $H$ is a self-adjoint operator on $\ell^{2}(\Lambda_{L}; \mathbb{C}^{S})$ (compatible with the cylindric boundary conditions), and $\lambda \varphi(\cdot )$ introduces a perturbation. The model describes an electron gas on a lattice in the presence of impurities, described by $\varphi(\cdot)$, and we will be interested in the equilibrium and non-equilibrium properties of the model. Before introducing the specific questions to be addressed, let us specify the type of Hamiltonians $H$ and perturbations $\varphi$ that we shall consider.

\begin{assumption}[On the free Hamiltonian]\label{ass:H1} The Hamiltonian $H$ is the periodization of a self-adjoint operator $H^{\infty}$ on $\ell^{2}\big(\mathbb{Z} \times ([1,L_2] \cap \mathbb{Z}); \mathbb{C}^{S}\big)$. That is, for all $\vec x, \vec y\in \Lambda_{L}$:
\begin{equation}\label{eq:period}
H(\vec x, \vec y) = \sum_{n\in \mathbb{Z}} H^{\infty}\big(( x_{1} + n L_{1}, x_{2} ), (y_{1}, y_{2})\big)\;.
\end{equation}
Furthermore, we assume that $H^{\infty}$ is finite-ranged and translation invariant:
\begin{equation}
H^{\infty}(\vec x, \vec y) = 0  \;,
\end{equation}
if $\|\vec x - \vec y\| > R\;,$ and
\begin{equation}
    H^{\infty}(\vec x, \vec y) \equiv H^{\infty}(x_{1} - y_{1}; x_{2}, y_{2})\;.
\end{equation}
We will assume that $R = \sqrt{2}$. This is not a loss of generality, up to increasing the number of internal degrees of freedom $S$. Thus, $H$ is also translation invariant, and finite ranged: $H(\vec x, \vec y) = 0$, if $\| \vec x - \vec y \|_{L} > \sqrt{2}$.
\end{assumption}
Later, we will make further assumptions on the spectrum of $H$. We now define the type of perturbations we will consider.
\begin{assumption}[On the perturbation]\label{ass:pert} The function $\varphi(\cdot)$ is real-valued, and it has the form:
\begin{equation}\label{eq:pot}
\varphi(x_{1},x_{2}) = \sum_{n\in \mathbb{Z}} e^{i n\alpha x_{1}} \hat \varphi_{n}(x_{2})\;,
\end{equation}
where:
\begin{equation}\label{eq:decFou}
| \hat \varphi_{n}(x_{2}) | \le Ce^{-c|n|}\;,\qquad \text{uniformly in $x_{2}$.}
\end{equation}
Also, $\hat \varphi_{n}(x_{2})$ satisfies the Dirichlet boundary conditions. Furthermore, the number $\alpha /2\pi=  m / L_{1}$ is the best rational approximant (in the sense of convergents of the continued fraction expansion, see {\it e.g.} \cite[Appendix 1]{BGM}), with denominator $L_{1}$, of a real number $\alpha_{\infty}/2\pi$ which is Diophantine. That is: there exists positive constants $\tilde c, \tau$ such that:
\begin{equation}\label{eq:bdiof}
| n\alpha_{\infty} |_{\mathbbm{T}} \ge \frac{\tilde c}{|n|^{\tau}}\;,\qquad \text{for all $n\neq 0$.}
\end{equation}
\end{assumption}
\begin{remark}\label{rem:lim}
\begin{itemize}
\item[(i)] The function $\varphi(\vec x)$ describes a quasi-periodic perturbation to the Hamiltonian $H$. Notice that we are not making any assumption about the $x_{2}$ dependence of the perturbation. In particular, translation-invariance might also be broken in the $x_{2}$ direction.
\item[(ii)] Thanks to the rational approximation of $\alpha$, the function $\varphi(\vec x)$ is compatible with the periodic boundary conditions. The thermodynamic limit is taken over sequences $\{L_{1}\}$ such that $\text{mcd}(m, L_{1}) = 1$.
\item[(iii)] The approximant $\alpha$ satisfies the following version of the bound (\ref{eq:bdiof}). There exists $c>0$ such that for every $n \neq 0$, $|n| \le L_{1} / 2$:
\begin{equation}\label{eq:diophantine}
| n \alpha |_{\mathbb{T}} \ge \frac{c}{|n|^{\tau}}\;.
\end{equation}
%
Furthermore, by the properties of the continued fractions expansion which is used to construct $\alpha$ (see {\it e.g.} \cite{BVDP}), we also have:
\begin{equation}
| \alpha - \alpha_{\infty} | \le \frac{1}{L_{1}^{2}}\;.
\end{equation}
\end{itemize}
\end{remark}
Next, let us introduce the Gibbs state associated with the Hamiltonian (\ref{eq:Ham}). Given an operator $\mathcal{O}$ on $\mathcal{F}_{L}$, we define:
\begin{equation}
\langle \mathcal{O} \rangle_{\beta, \mu, L} := \frac{\Tr \big( \mathcal{O}  \, e^{-\beta (\mathcal{H} - \mu \mathcal{N})} \big) }{\ms Z_{\beta, \mu, L}}\;,
\end{equation}
with $\beta>0$ the inverse temperature and $\mu \in \mathbb{R}$ the chemical potential. The operator $\mathcal{N}$ is the number operator, $\mathcal{N} = \sum_{\sigma, \vec x} a^{*}_{\vec x, \sigma} a_{\vec x, \sigma}$. Finally, $\ms Z_{\beta, \mu, L}$ is the partition function,
\begin{equation}
\ms Z_{\beta,\mu,L} = \Tr e^{-\beta (\mathcal{H} - \mu \mathcal{N})}\;.
\end{equation}
The Heisenberg evolution generated by $\mathcal{H}$ is denoted by $\tau_{t}(\mathcal{O}) = e^{i\mathcal{H} t} \mathcal{O} \,e^{-i\mathcal{H}t}$, and the Gibbs state is trivially invariant under such dynamics:
\begin{equation}
\langle \tau_{t}(\mathcal{O}) \rangle_{\beta, \mu, L} = \langle \mathcal{O} \rangle_{\beta, \mu, L}\;.
\end{equation}
Next, we will make an assumption on the form of the spectrum of the unperturbed Hamiltonian $H$ is proximity of the chemical potential $\mu$. Let:
\begin{equation}
\msc D_{L} := \frac{2\pi}{L_{1}} \big(\mathbb{Z} / L_{1}\mathbb{Z}\big)\;
\end{equation}
be the independent momenta associated with the periodic boundary conditions in $x_{1}$. As $L_{1}\to \infty$, these points fill densely the circle of length $2\pi$, $\mathbb{T}$. Recall the definition of partial Fourier transform for a function $f(x_{1}, x_{2})$ on $\Lambda_{L}$:
\begin{equation}
\hat f(k_{1}, x_{2}) = \sum_{x_{1} = 1}^{L_{1}} e^{-i k_{1} x_{1}} f(x_{1}, x_{2})\; ,\qquad \text{for $k_{1} \in \msc D_{L}\;.$}
\end{equation}
This identity can be inverted, as follows:
\begin{equation}
f(x_{1}, x_{2}) = \frac{1}{L_{1}} \sum_{k_{1} \in \msc D_{L}} e^{ik_{1} x_{1}} \hat f(k_{1}, x_{2})\;.
\end{equation}
Let us denote the partial Bloch decomposition $\hat H(k_{1})$ of $H$ as the operator such that:
\begin{equation}
\widehat{(H f)}(k_{1}, x_{2}) = \big(\hat H(k_{1}) \hat f(k_{1})\big)(x_{2})\;.
\end{equation}
Observe that $\hat H(k_{1})$ is an operator acting on functions on the discrete interval $[1, L_{2}]\cap \mathbb Z$. For every $k_{1}$, the operator $\hat H(k_{1})$ is self-adjoint. Explicitly,
\begin{equation}\label{eq:blochH}
\hat H_{\sigma,\zeta}(k_{1}; x_{2}, y_{2}) = \sum_{z_{1}=1}^{L_{1}} e^{-i k_{1} z_{1}} H_{\sigma,\zeta}\big((x_{1}+z_{1}, x_{2}), (x_{1}, y_{2})\big)\;.
\end{equation}
\begin{remark}
\begin{itemize}
\item[(i)] Being $H$ the periodization of $H^{\infty}$, see Eq. (\ref{eq:period}), it follows that:
\begin{equation}
\hat H(k_{1}) = \hat H^{\infty}(k_{1})\qquad \text{for all $k_{1} \in \msc D_{L}$.}
\end{equation}
\item[(ii)] Thanks to the finite range property of $H^{\infty}$, $\hat H^{\infty}(k_{1})$ is analytic in $k_{1}$.
\end{itemize}
\end{remark}
We will also adopt the following conventions for the Fourier transforms of the creation and annihilation operators:
\begin{equation}
\hat a_{k_{1}, x_{2}, \sigma} = \sum_{x_{1} = 1}^{L_{1}} e^{-ik_{1} x_{1}} a_{\vec x,\sigma}\;,\qquad \hat a^{*}_{k_{1}, x_{2}, \sigma} = \sum_{x_{1} = 1}^{L_{1}} e^{ik_{1} x_{1}} a^{*}_{\vec x,\sigma}\;,
\end{equation}
which can be inverted as:
\begin{equation}
a_{\vec x,\sigma} = \frac{1}{L_{1}} \sum_{k_{1} \in \msc D_{L}} e^{i k_{1} x_{1}} \hat a_{k_{1}, x_{2}, \sigma}\;,\qquad a^{*}_{\vec x,\sigma} = \frac{1}{L_{1}} \sum_{k_{1} \in \msc D_{L}} e^{-i k_{1} x_{1}} \hat a^{*}_{k_{1}, x_{2}, \sigma}\;.
\end{equation}
Observe that, with these definitions, the second-quantized unperturbed Hamiltonian can be written as:
\begin{equation}
\begin{split}
&{\sum_{\vec x,\vec y}\sum_{\sigma,\zeta}  {a}^*_{\vec x,\sigma} H_{\sigma,\zeta}(\vec x,\vec y){a}_{\vec y,\zeta}} \\&\qquad = \frac{1}{L_{1}} \sum_{k_{1}} \sum_{x_{2}, y_{2}} \sum_{\sigma,\zeta} \hat a^{*}_{k_{1}, x_{2}, \sigma} \hat H_{\sigma,\zeta}(k_{1}; x_{2}, y_{2}) \hat a_{k_{1}, y_{2}, \zeta}\;.
\end{split}
\end{equation}
The next assumption specifies properties on the spectrum of $H$. In particular, we will require that the model supports edge states, on the boundary of the cylinder $\Lambda_{L}$.
\begin{assumption}[On the spectrum of $H$]\label{ass:H3} Consider the eigenvalue equation for $\hat H^{\infty}(k_{1})$, for $k_{1} \in \mathbb{T}$:
\begin{equation}\label{eq:schro}
\hat H^{\infty}(k_{1}) \xi(k_{1}) = \varepsilon(k_{1}) \xi(k_{1})\;,
\end{equation}
with $\xi(k_{1}) \in \ell^{2}\big( [1, L_{2}]\cap \mathbb Z; \mathbb{C}^{S}\big)$ satisfying the Dirichlet boundary conditions. We suppose that there exists $\delta > 0$, independent of $L_{2}$, such that the following is true. Let $(\xi(k_{1}), \varepsilon(k_{1}))$ be a solution of (\ref{eq:schro}) with $\varepsilon(k_{1}) \in (\mu-\delta, \mu+\delta)$.
\begin{itemize}
\item[(i)] The eigenvalue $\varepsilon(k_{1})$ is a smooth function of $k_{1}$. We shall suppose that there are two and only two such functions in the energy range $(\mu - \delta, \mu + \delta)$, that we call $\varepsilon_{\omega}(k_{1})$ with $\omega = \pm$.
\item[(ii)] We call Fermi points $k_{F}^{\omega}$ the solutions of the equations:
\begin{equation}
\varepsilon_{\omega}(k_{F}^{\omega}) = \mu\;.
\end{equation}
%
Let $v_{\omega}$ be the Fermi velocity associated with the Fermi point labelled by $\omega$,
\begin{equation}
v_{\omega} = \partial_{k_{1}} \varepsilon_{\omega}(k_{F}^{\omega})\;.
\end{equation}
We shall suppose that $v_{\omega} \neq 0$. Up to choosing a smaller $\delta$, without loss of generality we assume that the first derivatives of $\varepsilon_{\omega}(k_{1})$ are nonzero as long as $\varepsilon_{\omega}(k_{1})\in (\mu - \delta, \mu + \delta)$. 
\item[(iii)] Let $\xi^{\omega}(k_{1})$ be the eigenfunctions associated with $\varepsilon_{\omega}(k_{1})$. They are localized on opposite sides of the cylinder:
\begin{equation}\label{eq:decedge}
| \partial_{k_{1}}^{r} \xi_{\sigma}^{+}(k_{1};x_{2}) | \le C_{r}e^{-\tilde cx_{2}}\quad \text{and} \quad | \partial_{k_{1}}^{r} \xi_{\sigma}^{-}(k_{1}; x_{2}) | \le C_{r}e^{-\tilde c(L_{2} - x_{2})}\;.
\end{equation}
\end{itemize}
\end{assumption}
\begin{figure}
    \centering
    \includegraphics[scale=0.55]{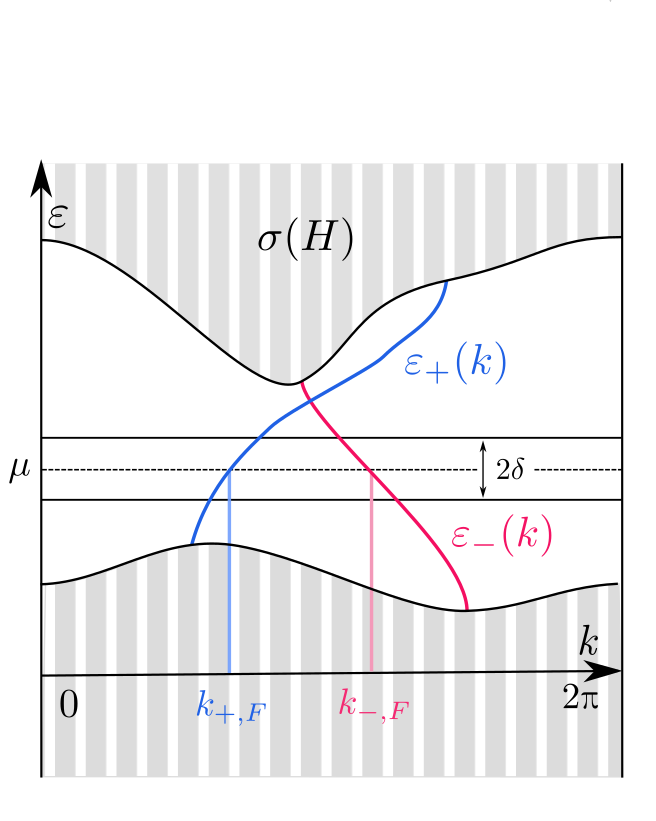}
    \caption{The colored lines corresponds to the edge modes, bulk spectrum sits inside the striped gray area.}
    \label{fig:spectrum}
\end{figure}
Thus, we are assuming the existence of two {\it chiral edge modes,} localized on the two sides of the cylinder. These edge modes correspond to (generalized) eigenfunctions for the original Hamiltonian $H$, of the form $e^{i k_{1} x_{1}} \xi^{\omega}(k_{1}; x_{2})$. An example of model that fulfills the existence of such edge modes is the {\it Haldane model}, a celebrated example of topological insulator \cite{Hal, KM}. Being delocalized in the $x_{1}$ direction and localized in proximity of one of the two boundaries, the edge modes support metallic currents. Of course, these edge modes are no longer eigenstates of the model after the quasi-periodic perturbation is turned on. One of the goals of the present paper will be to probe edge currents in the presence of weak quasi-periodic disorder, after exposing the system to a suitable class of time-dependent perturbations. This will be done introducing suitable transport coefficients, defined in the next section.
\subsection{Transport}
Let us consider the time-dependent Hamiltonian, for $t\le 0$ and $\eta, \theta > 0$:
\begin{equation}\label{eq:Ht}
\mathcal{H}(\eta t) = \mathcal{H} - e^{\eta t} \mc{P}\;,\qquad \mc{P} = \sum_{{\vec x} \in \Lambda_{L}} \mu(\theta \vec x) n_{{\vec x}}\;,
\end{equation}
where $n_{{\vec x}} = \sum_{\sigma }a^{*}_{{\vec x,\sigma}} a_{{\vec x,\sigma}}$ is the density operator, and $\mu(\theta \vec x)$ is the $x_{1}$-periodization of a smooth function $\mu_{\infty}(\vec x)$ on $\mathbb{R}^{2}$ rescaled by $\theta$:
\begin{equation}\label{eq:periodization}
\mu(\theta \vec x) = \sum_{n\in \mathbb{Z}} \mu_{\infty}\big(\theta (x_{1} + nL_{1}), \theta x_{2}\big)\;.
\end{equation}
We will suppose that $\mu_{\infty}(\vec x)$ decays faster than any power:
\begin{equation}\label{eq:decmu}
| \partial_{x_{0}}^{n_{0}} \partial_{x_{1}}^{n_{1}}\mu_{\infty}(\vec x) | \le  \frac{C_{r, n_{0},n_{1}}}{1+|\vec x|^{r}}\;,
\end{equation}
which implies decay of $\mu(\theta \vec x)$:
\begin{equation}
\begin{split}
|\mu(\theta \vec x)| &\le \sum_{n\in \mathbb{Z}} \frac{1 + \theta^{r} | x_{1} + nL_{1} |^{r}}{1 + \theta^{r} | x_{1} + nL_{1} |^{r}}  \big| \mu_{\infty}\big(\theta (x_{1} + nL_{1}), \theta x_{2}\big)\big| \\
&\le \frac{C_{r}}{1 + \theta^{r} | x_{1} |_{L_{1}}} \frac{1}{1 + \theta^{r}|x_{2}|^{r}}
\end{split}
\end{equation}
with $| x_{1} |_{L_{1}} = \min_{n\in \mathbb{Z}} | x_{1} + n L_{1} |$. Let us now consider the function restricted to the lattice $\Lambda_{L}$. In terms of the partial Fourier transform of $\mu_{\infty}(\theta \vec x)$, we have, for $\vec x \in \Lambda_{L}$:
\begin{equation}\label{eq:fourier}
\begin{split}
\mu(\theta \vec x) &= \frac{1}{L_{1}} \sum_{k \in \msc D_{L}}\mr e^{i kx_{1}} \hat \mu_{\theta}(k,  x_{2})\;,\\
\hat \mu_{\theta}(k,x_{2}) &= \frac{1}{\theta} \sum_{n\in \mathbb{Z}} \hat \mu_{\infty}\big((k + 2\pi n) / \theta,   \theta x_{2}\big)\;.
\end{split}
\end{equation}
Let $|\cdot|_{\mathbb{T}}=\min_{n\in \mathbb{Z}} | \cdot + 2\pi n |$ be the distance on the torus of length $2\pi$. By (\ref{eq:decmu}), the following bound holds:
\begin{equation}\label{eq:mutheta}
\begin{split}
|\hat \mu_{\theta}(k,x_{2})|& \le \frac{1}{\theta} \sum_{n\in \mathbb{Z}} \big|\hat \mu_{\infty}\big((k + 2\pi n) / \theta,   \theta x_{2}\big)\big| \\
&  \le \frac{1}{\theta} \sum_{n\in \mathbb{Z}} \frac{1 + |(k + 2\pi n) / \theta|^{r}}{1 + |(k + 2\pi n) / \theta|^{r}} \,  \hat \mu_{\infty}((k + 2\pi n) / \theta,   \theta x_{2}) \\
& \le \frac{C_{r}}{\theta} \frac{1}{1 + (| k |_{\mathbb{T}} / \theta)^{r}} \frac{1}{1 + \theta^{r} |x_{2}|^{r}}\;.
\end{split}
\end{equation}
 The dynamics of the equilibrium state of the system, generated by the Hamiltonian (\ref{eq:Ht}), is:
\begin{equation}\label{eq:dyn}
i\partial_{t} \rho(t) = [ \mathcal{H}(\eta t), \rho(t) ]\;,\qquad \rho(-\infty) = \rho_{\beta, \mu, L}\;,
\end{equation}
where $\rho_{\beta, \mu, L}$ is the Gibbs state of the Hamiltonian $\mathcal{H}$. We shall be interested in the linear response of the density and of the current operators, defined as follows. Consider the lattice continuity equation:
\begin{equation}\label{eq:cons}
\partial_{t} \tau_{t}(n_{\vec x}) + \sum_{i=1,2} \text{d}_{x_{i}} \tau_{t}( j_{i, \vec x} ) = 0\;,
\end{equation}
with $\tau_{t}$ the Heisenberg evolution generated by $\mathcal{H}$, and $\text{d}_{x_{i}}$ the discrete derivative: 
\begin{equation}
    \text{d}_{x_i} f(\vec y) = f(\vec y) - f(\vec y-\vec e_i)\;.
\end{equation} The operator $j_{i,\vec x}$ is called the current density operator, and  it can be explicitly determined, as follows. We compute: 
\begin{equation}\label{eq:densityevo}
\begin{split}
    \partial_t  \tau_t ( n_{\vec x}) &=i \tau_{t}\big(\big[\mathcal{H},  n_{\vec x}\big]\big) \\
    &= \sum_{{\vec y\in \Lambda_{L}  }} \tau_{t}(j_{\vec y, \vec x})\;,
    \end{split}
\end{equation}
where the bond current $j_{\vec y, \vec x}$ is:
\begin{equation}
j_{\vec y, \vec x} = i\sum_{\sigma,\zeta} \big( H_{\zeta,\sigma}(\vec y,\vec x)  a^*_{\vec y ,\zeta} a_{\vec x ,\sigma}-H_{\sigma,\zeta}(\vec x,\vec y) a^*_{\vec x,\sigma} a_{\vec y ,\zeta}\big)\;.
\end{equation}
Observe that the bond current is antisymmetric, $j_{\vec y, \vec x} = - j_{\vec x, \vec y}$. In terms of these operators, we can express the current density $j_{i,\vec x}$ as, for $j\neq i$:
\begin{equation}\label{eq:currdef}  
j_{i,\vec x}=  j_{\vec x,\vec x+\vec e_i}+\frac{1}{2}\big( j_{\vec x,\vec x+\vec e_i+\vec e_j}+  j_{\vec x,\vec x+\vec e_i-\vec e_j}+  j_{\vec x+\vec e_j,\vec x+\vec e_i}+  j_{\vec x-\vec e_j,\vec x+\vec e_i}\big)\;.\end{equation}
It is convenient to collect current and densities into a single $3$-current; to this end, we define $j_{0,\vec x}:=  n_{\vec x}$, and we shall collect density and currents into a single vector with components $j_{\nu,\vec x}$, $\nu = 0,1,2$. Also, we shall set:
\begin{equation}\label{eq:1dcurrent}
  j_{\nu,x_1} := \sum_{x_2 = 1}^{L_{2}}  j_{\nu,\vec x}\;.
\end{equation}
In the following, we shall consider smeared version of the current densities, against suitable test functions localized in proximity of the $x_{2} = 0$ boundary of the cylinder. 

Let $\ell > 1$, let $\phi_{\infty}(\vec x)$ be a smooth, compactly supported function on $\mathbb{R}^{2}$, and let $\phi_{\infty,\ell}(\vec x) = \phi_{\infty}(x_{1}, x_{2}/\ell)$. We denote by $\phi_{\ell}(\theta x_{1}, x_{2})$ the $x_{1}$-periodization of $\phi_{\infty,\ell}(\theta x_{1}, x_{2})$, as in (\ref{eq:periodization}), and we set $\phi_{\theta,\ell}(\vec x) = \theta \phi_{\ell}(\theta x_{1}, x_{2})$. We define:
\begin{equation}\label{eq:smearj}
j_{\nu}(\phi_{\theta,\ell}) := \theta \sum_{\vec x\in \Lambda_{L}}  \phi_{\ell}(\theta x_{1}, x_{2})\,j_{\nu,\vec x}\;.
\end{equation}
Thus, the smearing against the test function introduces an averaging of the current density, and a localization in proximity of the lower edge. Observe that the normalization of the test function implies that the operator $j_{\nu}(\phi_{\theta,\ell})$ is bounded uniformly in $\theta$ (but not in $\ell$). The parameter $\ell$ will be chosen independently of $\theta$ and $\eta$, and it defines the width of the strip where we will measure the edge current. We will be interested in the order of limits: $L_{i} \to \infty$; then $\beta \to \infty$; then $\eta,\theta \to 0$ (different relative orders of $\eta\to 0$ and $\theta\to 0$ will give different results); and finally $\ell \to \infty$.

\begin{figure}
    \centering
    \includegraphics[scale=0.5]{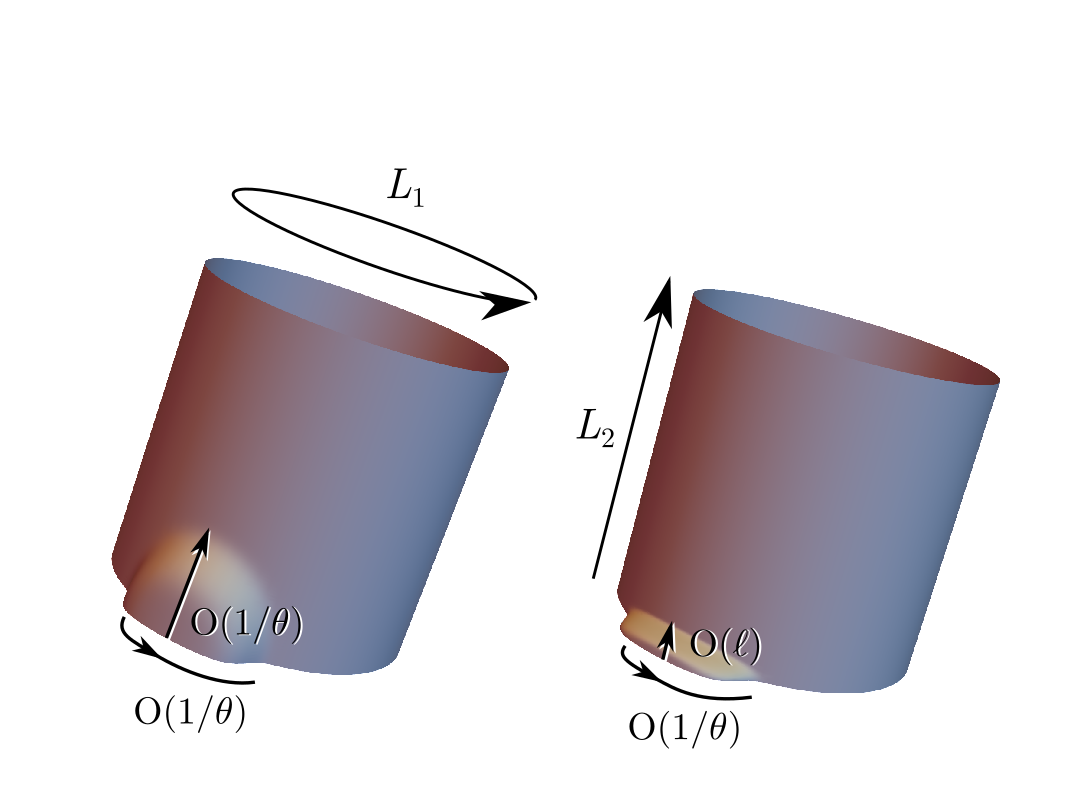}
    \caption{On the left  is depicted $\mu(\theta \vec x)$, on the right $\phi_{\ell}(\theta x_{1},x_{2})$.}
    \label{fig:testfunctions}
\end{figure}

We shall be interested in the variation of the average of $j_{\nu}(\phi_{\theta,\ell})$ over the solution of the Schr\"odinger equation (\ref{eq:dyn}), at first order in the time-dependent perturbation:
\begin{equation}\label{eq:linresp}
\Tr \big(j_{\nu}(\phi_{\theta,\ell}) \rho(0) \big)- \Tr \big(j_{\nu}(\phi_{\theta,\ell}) \rho_{\beta, \mu, L} \big)= \chi^{\beta,L}_{\nu}(\eta, \theta) + \text{h.o.t.}\;,
\end{equation}
where $\chi^{\beta, L}_{\nu}(\eta, \theta)$ is given by Kubo formula:
\begin{equation}\label{eq:kubo}
\chi^{\beta,L}_{\nu}(\eta, \theta) = i\int_{-\infty}^{0}  d t\, e^{\eta t} \Tr \big(\big[ j_{\nu}(\phi_{\theta,\ell}), \tau_{t}( \mc{P} )  \big] \rho_{\beta, \mu, L}\big)\;.
\end{equation}
In this work, we will not be concerned with proving the validity of linear response (\ref{eq:linresp}) in the zero temperature and infinite volume limit. Instead, we will start from Kubo formula (\ref{eq:kubo}), and study rigorously this transport coefficient. We shall be interested in computing the zero temperature and infinite volume limit of this quantity, 
\begin{equation}
\chi_{\nu}(\eta, \theta) = \lim_{\beta\to \infty} \lim_{L_{1}, L_{2} \to \infty} \chi^{\beta,L}_{\nu}(\eta, \theta)\;.
\end{equation}
Observe that already proving that the time-integral in (\ref{eq:kubo}) is finite uniformly in $\eta$ is non-trivial, in view of the lack of spectral gap at the Fermi level. Furthermore, the slow decay of correlations, also implied by the lack of spectral gap, gives rise to a non-trivial infrared problem in the zero temperature and infinite volume limit. Our main result will not only prove that these limits exist and are finite, but will also provide an explicit expression for $\chi_{\nu}(\eta, \theta)$, for small quasi-periodic perturbation. Furthermore, our main result will allow to characterize all equilibrium correlation functions of the disordered system. More precisely, we will be interested in the time-ordered, Euclidean correlation functions, defined as follows. Let us denote by $\gamma_{t}(\cdot)$ the imaginary-time evolution:
\begin{equation}
\gamma_{t}(\mathcal{O}) := e^{t (\mathcal{H} - \mu \mathcal{N})} \mathcal{O} \,e^{-t (\mathcal{H} - \mu \mathcal{N})}\;.
\end{equation}
The fermionic time-ordering $\timord$ acts on fermionic monomials, as follows. Let $t_{i}\in [0,\beta)$, for $i = 1,\ldots, n$. Then:
\begin{equation}\label{eq:T}
\begin{split}
&\timord \big(\gamma_{t_{1}}(a^{\sharp_{1}}_{x_{1},\sigma_{1}}) \cdots \gamma_{t_{n}}(a^{\sharp_{n}}_{x_{n},\sigma_{n}}) \big)\\
&= \text{sgn}(\pi) \gamma_{t_{\pi(1)}}(a^{\sharp_{\pi(1)}}_{x_{\pi(1)},\sigma_{\pi(1)}}) \cdots \gamma_{t_{\pi(n)}}(a^{\sharp_{\pi(n)}}_{x_{\pi(n)},\sigma_{\pi(n)}})\;,\quad \text{for $t_{i} \neq t_{j}$, if $i\neq j$,}
\end{split}
\end{equation}
where $a^{\sharp}$ can be either $a$ or $a^{*}$, and the permutation $\pi$ is such that $t_{\pi(1)} > t_{\pi(2)} > \cdots > t_{\pi(n)}$. The definition (\ref{eq:T}) is extended to coinciding times by normal ordering, that is by placing creation operators to the left of annihilation operators. Finally, the definition (\ref{eq:T}) is extended by linearity to the whole (finite dimensional) algebra of operators on $\mathcal{F}_{L}$. We shall be interested in the expectation value of time-ordered monomials. The simplest example is the two-point correlation function. Let us introduce the notations:
\begin{equation}
{\bm x} := (x_{0}, x_{1})\;,\qquad \vec {\bm x} := (x_{0}, \vec {x})\;.
\end{equation}
Let $0\le x_{0}, y_{0} < \beta$. We define the Euclidean two-point correlation function as:
\begin{equation}\label{eq:2pt}
S^{\beta, L}_{2; \sigma, \zeta}(\vec {\bm x}, \vec {\bm y}) := \langle \timord \gamma_{x_{0}}(a_{\vec x, \sigma}) \gamma_{y_{0}}(a^{*}_{\vec y,\zeta} )  \rangle_{\beta, \mu, L}\;.
\end{equation}
The two-point function is then extended to all values of $x_{0}, y_{0}$ in $\mathbb{R}$, by antiperiodicity:
\begin{equation}
S^{\beta, L}_{2; \sigma, \zeta}\big((x_{0} + n\beta, \vec x), (y_{0} + m\beta, \vec y)\big) := (-1)^{n+m} \langle \timord \gamma_{x_{0}}(a_{\vec x, \sigma}) \gamma_{y_{0}}(a^{*}_{\vec y,\zeta} )  \rangle_{\beta, \mu, L}\;.
\end{equation}
Being the Hamiltonian (\ref{eq:Ham}) quadratic in the fermionic operators, all Euclidean correlation functions can be computed starting from the two-point function (\ref{eq:2pt}), via the fermionic Wick rule. Furthermore, it turns out that also the response functions $\chi^{\beta, L}_{\mu}(\eta, \theta)$ can be expressed in terms of $S^{\beta, L}_{2}$, after Wick rotation, to be discussed later on.

For later use, we shall introduce the following distance, needed to quantify the decay properties of the two-point function, in a way compatible with its time anti-periodicity and space periodicity: 
\begin{equation}
\| \vec {\bm x} - \vec {\bm y} \|_{\beta, L}:= \min_{n\in \mathbb{Z}} | x_{0} - y_{0} + n\beta | + \| \vec x - \vec y \|_{L}\;,
\end{equation}
with $\| \vec x - \vec y \|_{L}$ as in (\ref{eq:distL}). We shall also set:
\begin{equation}
\| {\bm x} - {\bm y} \|_{\beta, L}:= \min_{n\in \mathbb{Z}} | x_{0} - y_{0} + n\beta | + \min_{m\in \mathbb{Z}} | x_{1} - y_{1} + mL|\;.
\end{equation}
Finally, we will use the short-hand notation:
\begin{equation}
\| \vec {\bm x} - \vec {\bm y} \| \equiv \| \vec {\bm x} - \vec {\bm y} \|_{\beta, L}\;,\qquad \| {\bm x} - {\bm y} \|\equiv \| {\bm x} - {\bm y} \|_{\beta, L}\;.
\end{equation}

\section{Main result}\label{sec:mainres}
We start by stating our result about the two-point function of the model, at weak disorder. To this end, let:
\begin{equation}
\msc D_{L,\beta} := \Big\{ {\bm k} = (k_{0}, k_{1})\, \Big|\, k_{0} \in \frac{2\pi}{\beta} \Big(\mathbb{N} + \frac{1}{2}\Big),\; k_{1} \in \msc D_{L} \Big\}\;.
\end{equation}
The numbers $\{k_{0}\}$ are the fermionic Matsubara frequencies, while $\{k_{1}\}$ are the independent momenta compatible with the periodic boundary conditions in the $x_{1}$ direction, on a finite lattice with $L_{1}$ sites. 

Let us denote by $\chi(\cdot)$ a smooth cutoff function: $\chi(x) \in C^{\infty}(\mathbb{R})$ such that $\chi(x) = \chi(-x)$ and, for $\delta>0$ and $\gamma>1$ to be chosen later:
\begin{equation}\label{eq:chidef}
\chi(x) = \begin{cases} 1, & |x|<\frac{\delta}{\upgamma} \\0, & |x|>\delta \end{cases}\;.
\end{equation}
The parameter $\delta$ is chosen as in Assumption \ref{ass:H3}. Also, let us define:
\begin{equation}\label{eq:x2omega}
|x_{2}|_{+} := x_{2}\;,\qquad |x_{2}|_{-} := L_{2} - x_{2}\;.
\end{equation}
We are now ready to state our first result.
\begin{theorem}[Two-point function]\label{thm:2pt} For any $\kappa>0$ there exists $\lambda_{0}, \beta_0 >0$ such that for $|\lambda| < \lambda_{0}$ the following holds. For $\beta_0\le \beta \le \kappa L_{1}$, $\beta \le \kappa L_{2}$ the two-point function can be written as:
\begin{equation}\label{eq:2ptmain}
S^{\beta,L}_{2; \sigma, \zeta}(\vec {\bm x}; \vec {\bm y}) = \sum_{\omega = \pm} e^{i k_{F,L}^{\omega}(\lambda)(x_{1} - y_{1})} Z_{\omega,\sigma}(\vec x) \check{g}_{\omega;\mr{s}}({\bm x} - {\bm y}) \overline{Z_{\omega,\zeta}(\vec y)} + R_{\sigma,\zeta}(\vec {\bm x}; \vec {\bm y})\;,
\end{equation}
where:
\begin{equation}\label{eq:2ptmain2}
\check{g}_{\omega;\mr{s}}({\bm x} - {\bm y}) = \frac{1}{\beta L_{1}} \sum_{{\bm q} \in \msc D_{L,\beta}} e^{i {\bm q}\cdot ( {\bm x} -  {\bm y})} \frac{\chi({\bm q})}{i v_{0,\omega}(\lambda) q_{0} + v_{1,\omega}(\lambda) q_{1}} 
\end{equation}
with $k^{\omega}_{F,L}(\lambda)$ is an approximant in $\msc D_{L}$ of a smooth function $k^{\omega}_{F}(\lambda)$ of $\lambda$ such that $k^{\omega}_{F}(0) = k_{F}^{\omega}$. The parameters $v_{0}(\lambda), v_{1}(\lambda)$ are smooth in $\lambda$ and satisfy:
\begin{equation}
v_{0,\omega}(0) = 1\;,\qquad v_{1,\omega}(0) = v_{\omega}\;;
\end{equation}
the functions $Z_{\omega,\sigma}(\vec x)$ are smooth in $\lambda$ and have the form:
\begin{equation}\label{eq:Zosc}
Z_{\omega,\sigma}(\vec x) = \sum_{n \in \mathbb{N}} Z_{n,\omega,\sigma}(x_{2}) e^{-i n \alpha x_{1}}\;.
\end{equation}
The Fourier coefficients satisfy:
\begin{equation}\label{eq:bdZm}
|Z_{n,\omega,\sigma}(x_{2})| \le C |\lambda|^{1-\delta_{n,0}} e^{-c |n|} e^{-c |x_{2}|_{\omega}}\;,\qquad Z_{0,\omega,\sigma}(\vec x)\Big|_{\lambda = 0} = \xi^{\omega}_{\sigma}(k^{+}_{F}; x_{2})\;.
\end{equation}
Also, the velocities $v_{\omega,0}(\lambda)$, $v_{\omega,1}(\lambda)$ satisfy the identities:
\begin{equation}\label{eq:relzeta}
\begin{split}
v_{0,\omega}(\lambda) &= \sum_{n,\sigma,x_{2}} Z_{n,\omega,\sigma}(x_{2}) \overline{Z_{n,\omega,\sigma}(x_{2})} \\
v_{1,\omega}(\lambda) &= \sum_{\substack{n,\sigma,\zeta \\ x_{2}, y_{2}}} \overline{Z_{n,\omega,\sigma}(x_{2})} \Big( \partial_{k_{1}} \hat H_{\sigma,\zeta}\big(k_{F,L}^{\omega}(\lambda) - n \alpha\;; x_{2}, y_{2}\big) \Big) {Z_{n,\omega,\zeta}(y_{2})}\;.
\end{split}
\end{equation}
Finally, the error term in (\ref{eq:2ptmain}) satisfies:
\begin{equation}\label{eq:Rest2pt}
\begin{split}
R_{\sigma, \zeta}(\vec {\bm x}; \vec {\bm y}) &= \sum_{n\in \mathbb{Z}} R_{n;\sigma, \zeta}({\bm x} - {\bm y}; x_{2}, y_{2}) e^{-in \alpha y_{1}} \\
| R_{n;\sigma, \zeta}({\bm x} - {\bm y}; x_{2}, y_{2}) | &\le C |\lambda|^{\delta_{n\neq 0}}e^{-c|n|} \frac{e^{-c| x_{2} -y_{2} |}}{1 + \|  {\bm x} -  {\bm y} \|^{1 + \xi}}\quad \text{for some $\xi>0$.}
\end{split}
\end{equation}
\end{theorem}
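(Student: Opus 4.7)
The plan is to carry out a Wilsonian renormalization group analysis of the Grassmann representation of the two-point function (\ref{eq:2pt}). First I would rewrite $S^{\beta,L}_{2}$ as a Grassmann integral whose quadratic part has inverse propagator $i k_0 - \hat H(k_1) + \mu$, and in which $\lambda \varphi$ enters as a bilinear perturbation. Using Assumption \ref{ass:H3}, I would split the free propagator into a "bulk" piece supported on momenta where the dispersion lies outside $(\mu - \delta/\upgamma, \mu + \delta/\upgamma)$, and an "edge" piece projected onto the two edge bands $\varepsilon_{\omega}(k_1)$. The bulk part is gapped, so a single Gaussian integration (using the propagator bound proved in Appendix \ref{app:CT}) integrates out all bulk degrees of freedom, producing an effective quadratic theory on the edges whose kernels are exponentially localized in $|x_2|_\omega$ thanks to (\ref{eq:decedge}).

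Next I would perform a multiscale RG on this edge effective theory: introduce a dyadic partition $\chi(\mb q) = \sum_{h\le 0} f_h(\mb q)$ with $f_h$ supported where $|i q_0 + v_\omega (k_1 - k_F^\omega)| \sim \upgamma^h$, and integrate scale by scale from $h=0$ downward. At each scale, the effective action is decomposed into a local part at the running Fermi points $k_F^\omega + n\alpha$, $n\in\mathbb Z$, encoding renormalizations of $v_{0,\omega}(\lambda), v_{1,\omega}(\lambda)$, of the Fermi-point location $k_{F,L}^\omega(\lambda)$ (a counterterm fixing the singularity at $\mb q = 0$), and of the wave-function factor $Z_{n,\omega,\sigma}(x_2)$, plus an irrelevant remainder that gains a factor $\upgamma^{h\xi}$ by power counting. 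A Gram--Hadamard bound on the $N$-th order Grassmann tree expansion, combined with the exponential decay (\ref{eq:decFou}) of the Fourier coefficients of $\varphi$, then yields absolute convergence of all series for $|\lambda| \le \lambda_0$, uniformly in $\beta, L_1, L_2$ in the regime $\beta_0 \le \beta \le \kappa L_i$.

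The main obstacle is the control of the infinitely many resonances generated by the quasi-periodic scattering: each scattering against $\varphi$ shifts momentum by $\alpha$, so the singular region $|i q_0 + v_\omega (k_1 - k_F^\omega)| \lesssim \upgamma^h$ can be re-entered after $n$ scatterings whenever $|n\alpha|_{\mathbb T}$ is comparable to $\upgamma^h$. This is where the Diophantine bound (\ref{eq:diophantine}) is essential: any such resonance forces $|n| \gtrsim \upgamma^{-h/\tau}$, so the $|\lambda|^{|n|} e^{-c|n|}$ weight supplied by the perturbative order beats the accumulation of small denominators at the price of a $\tau$-dependent exponent loss, still leaving some $\xi>0$ in (\ref{eq:Rest2pt}). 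Edge-to-edge scattering is handled analogously: the overlap between $\xi^+$ and $\xi^-$ is $O(e^{-\tilde c L_2})$, so cross-edge terms are exponentially suppressed and only the diagonal $\omega$ sector survives at leading order. In the spirit of \cite{BGM, GM}, this RG structure produces the representation (\ref{eq:2ptmain}) with the oscillatory form (\ref{eq:Zosc}), the exponent $e^{-i n \alpha x_1}$ tracking the total momentum absorbed from $\varphi$ along each scattering chain and the bound (\ref{eq:bdZm}) reflecting the $n$-th order weight of that chain.

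To extract the two-point function itself, I would couple external Grassmann sources to $a, a^*$ and apply the recursion formula of Appendix \ref{app:2pt} to express $S^{\beta,L}_2$ in terms of the effective potential and of the dressed single-scale propagators; summing the RG series, the dominant contribution yields the relativistic term (\ref{eq:2ptmain2}) with the $Z$-factors (\ref{eq:Zosc}), and the accumulated irrelevant terms provide $R$ with the decay stated in (\ref{eq:Rest2pt}) (the $e^{-c|x_2 - y_2|}$ factor reflecting that $R$ is a short-range kernel in the transverse direction, as opposed to the "edge--edge" structure of the leading term). Finally, the identities (\ref{eq:relzeta}) arise as consistency relations at the RG fixed point: computing $\partial_{q_0}$ and $\partial_{q_1}$ of the effective two-point Schwinger function at $\mb q=0$ in two ways — once from the dressed form (\ref{eq:2ptmain2}), once by propagating through the exact Bloch kernel $\hat H^\infty(k_1)$ dressed by the $Z$-factors at shifted momenta $k_{F,L}^\omega(\lambda) - n \alpha$ — one matches the two expressions and obtains the sum formulas on the right-hand sides, convergence of the series over $n$ being guaranteed by (\ref{eq:bdZm}) together with the Diophantine assumption.
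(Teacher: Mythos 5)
Your proposal follows the paper's architecture closely for most of the Theorem: the Grassmann representation of $S_2^{\beta,L}$, the split of the free propagator into a massive bulk piece (controlled by the Combes--Thomas estimate of Appendix~\ref{app:CT}) and a massless edge piece projected on the edge bands, the reduction to a one--dimensional edge theory, the multiscale RG with a counterterm $\nu_\omega$ fixing the running Fermi point, the Diophantine bound~(\ref{eq:diophantine}) used to beat the small divisors generated by $n\alpha$ shifts, and the exponential suppression of cross--edge scattering by $e^{-\tilde c L_2}$. Coupling external sources and iterating the $\phi\psi$--recursion of Appendix~\ref{app:2pt} is also exactly what the paper does to obtain (\ref{eq:2ptmain}), (\ref{eq:Zosc}), (\ref{eq:bdZm}), (\ref{eq:Rest2pt}). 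One small mischaracterization: the gain at a non-resonant node of frequency $n$ is supplied by the factor $e^{-c|n|}$ in the potential's Fourier coefficients (a single $O(\lambda)$ vertex already shifts momentum by $n\alpha$), not by a $|\lambda|^{|n|}$ perturbative weight; the Diophantine condition trades $|n|\gtrsim\gamma^{-h/\tau}$ for an exponential gain, as in Proposition~\ref{prp:chainbd}.

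The genuine gap is in your treatment of the identities~(\ref{eq:relzeta}). You argue these emerge by computing $\partial_{q_0}, \partial_{q_1}$ of the effective two-point function at $\bm q = 0$ in two ways, ``once from the dressed form~(\ref{eq:2ptmain2}), once by propagating through the exact Bloch kernel dressed by the $Z$--factors.'' This matching is precisely what the paper has to \emph{prove}, and it does so via the lattice continuity equation~(\ref{eq:cons2}) and the vertex Ward identity~(\ref{eq:WIvert}), culminating in Lemma~\ref{lem:vert} and its supporting Propositions~\ref{prop:vertasy} and~\ref{prp:asy2pt}. There is no automatic ``RG consistency'' that equates the coefficients $v_{0,\omega}, v_{1,\omega}$ (extracted from the flow of the $\psi\psi$ quadratic kernels) with the bilinears $\sum_n \|Z_n\|^2$ and $\sum_n \langle Z_n, \partial_{k_1}\hat H Z_n\rangle$ (built from the $\phi\psi$ kernels), because these come from \emph{a priori independent} recursions: the running velocities from~(\ref{eq:beta}) and the dressing factors from~(\ref{eq:rec2}). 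The naive argument ``$S_2^{-1} = ik_0 + \hat H(k_1) - \mu + \lambda\varphi$ so $\partial_{q_1} S_2^{-1} = \partial_{k_1}\hat H$'' does not carry over to the edge-sector effective propagator that actually appears in~(\ref{eq:2ptmain2}), because the edge projection from Section~\ref{sec:1dred} does not commute with inversion — the bulk integration already produces a momentum-dependent self-energy on the edges — and in addition one must show that the contributions of the non-resonant modes $n\neq 0$ to this derivative organize themselves exactly into the shifted-momentum sum on the right of~(\ref{eq:relzeta}), which requires the Diophantine gain at the level of $\hat S_3$ and the restriction~(\ref{eq:qqkk}) on the external momenta. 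Your proposal skips this entire step (roughly Section~\ref{sec:vertexcons} of the paper), so as written it does not establish~(\ref{eq:relzeta}) and hence does not complete the proof of Theorem~\ref{thm:2pt}.
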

\begin{remark}
\begin{itemize}
\item[(i)] By Wick's rule, Theorem \ref{thm:2pt} allows to fully characterize the large scale behavior of all the edge correlation functions of the system.
\item[(ii)] All $\lambda$-dependent quantities in the above theorem are still (weakly) dependent on $\beta, L_{1}, L_{2}$. The method of the proof also allows to show that they converge to a limit as $\beta , L_{1}, L_{2} \to \infty$, where the $L_{1}\to \infty$ limit is always understood as taken over sequences specified in Remark \ref{rem:lim} item (ii).
\end{itemize}
\end{remark}
Our next result allows to compute the zero temperature and infinite volume edge transport coefficients, defined as:
\begin{equation}\label{eq:Gdefmain}
G_{0} := \lim_{\theta \to 0} \lim_{\eta \to 0^{+}} \frac{\chi_{0}(\eta, \theta)}{\langle  \mu,\phi \rangle_{\text{edge}}}\;,\qquad G_{1} := \lim_{\theta \to 0} \lim_{\eta \to 0^{+}} \frac{\chi_{1}(\eta,\theta)}{\langle \mu,\phi \rangle_{\text{edge}}}\;,
\end{equation}
where:
\begin{equation}\label{eq:edgescalar}
\langle\mu,\phi\rangle_{\text{edge}} := \int_{\mathbb{R}} dx\, \overline{\mu_{\infty}(x,0)} \phi_{\infty}(x,0)\;.
\end{equation}
The quantity $G_{0}$ is the edge susceptibility, while the quantity $G_{1}$ is the edge conductance.  The normalization in (\ref{eq:Gdefmain}) is natural, in view of the arbitrariness of the shape of the perturbation and of the test function in the observable.
\begin{theorem}[Transport coefficients]\label{thm:resp} Under the same assumptions of Theorem \ref{thm:2pt} the following is true. Let $v^{\infty}_{i,+}(\lambda) = \lim_{\beta\to \infty} \lim_{L_{1}, L_{2}\to \infty} v_{i,+}(\lambda)$ with $v_{i,+}(\lambda)$ as in Theorem \ref{thm:2pt}. Let $\mf{v}(\lambda) := v^{\infty}_{1,+}(\lambda) / v^{\infty}_{0,+}(\lambda)$. Then, the edge transport coefficients are given by: 
\begin{equation}\label{eq:respf}
G_{0} = \frac{1}{2\pi |\mf{v}(\lambda)|}\;,\qquad G_{1} = \frac{\text{sgn}(\mf{v}(\lambda))}{2\pi}\;.
\end{equation}
\end{theorem}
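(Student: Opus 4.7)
The plan is to begin from the Kubo formula (\ref{eq:kubo}) and perform a Wick rotation to the imaginary-time axis, turning the real-time commutator into a Euclidean time-ordered current--current correlator at external Matsubara frequency close to $\eta$. Since the Hamiltonian (\ref{eq:Ham}) is quadratic, the fermionic Wick rule collapses this correlator into a single bubble diagram, i.e.\ a bilinear expression in the two-point function $S_2^{\beta,L}$ of Theorem \ref{thm:2pt}. The computation of $\chi^{\beta,L}_\nu$ is thereby reduced to objects fully controlled by the previous theorem, and the finiteness of the limits $\beta\to\infty$, $L_i\to\infty$ follows from the uniform bounds established there.

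Next, I would split $S_2 = S_2^{\mr{LL}} + R$ with $S_2^{\mr{LL}}$ the chiral Luttinger piece in (\ref{eq:2ptmain}) and expand the bubble into ``LL--LL'', ``LL--$R$'' and ``$R$--$R$'' contributions. The remainder bound (\ref{eq:Rest2pt}) gives a gain of $\|\cdot\|^{-\xi}$ with respect to the marginal $\|\cdot\|^{-2}$ decay of the relativistic propagator $\check g_{\omega;\mr s}$; after integration against the slowly-varying test functions $\mu_\theta$ and $\phi_{\theta,\ell}$ this produces contributions of order $\theta^\xi$, which vanish as $\theta\to 0$ once normalized by the generically nonzero $\langle\mu,\phi\rangle_{\mr{edge}}$. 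Moreover, the exponential decay of $Z_{\omega}(\vec x)$ in $|x_2|_\omega$ together with the localization of $\mu_\theta,\phi_{\theta,\ell}$ near the lower edge decouples the two Fermi points: only the $\omega=+$ channel contributes to leading order.

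For the surviving LL--LL bubble at $\omega=+$, the expansion (\ref{eq:Zosc}) generates a double sum over Fourier modes $n,n'\in\mathbb{N}$, dressed by oscillatory phases $e^{i(n-n')\alpha x_1}$ in the spatial integrand. Since $\mu_\theta,\phi_{\theta,\ell}$ vary on the slow scale $\theta^{-1}$, a discrete summation-by-parts combined with the Diophantine bound (\ref{eq:diophantine}) shows that all off-diagonal terms $n\neq n'$ contribute at most $O(\theta\,|n-n'|^\tau)$; together with the exponential decay (\ref{eq:bdZm}) in $|n|,|n'|$, this forces the double sum to collapse onto its diagonal. The diagonal part then identifies precisely $\sum_{n,\sigma,x_2}|Z_{n,+,\sigma}(x_2)|^2 = v_{0,+}^\infty(\lambda)$ at the density vertex; expanding the current vertex (\ref{eq:currdef}) in Fourier and using the derivative of $\hat H$ at the shifted Fermi points $k_{F,L}^{+}(\lambda) - n\alpha$ produces analogously $v_{1,+}^\infty(\lambda)$ via the second line of (\ref{eq:relzeta}). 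This is the step where the lattice Ward identity implicit in (\ref{eq:cons}) is essential: it ensures that the renormalized vertex exactly matches the renormalized propagator, so that no finite counter-term spoils the explicit computation.

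What remains is the evaluation of a purely relativistic $1+1$-dimensional chiral bubble with propagator $1/(i v_{0,+}^\infty q_0 + v_{1,+}^\infty q_1)$; this is the content of Appendix \ref{app:bubble}. Rescaling $q_1 \mapsto \mf v(\lambda)\,q_1$ reduces it to a standard closed-form computation, after which performing $\eta\to 0^+$ followed by $\theta\to 0$ yields (\ref{eq:respf}), with the sign of $\mf v(\lambda)$ governing the chirality and hence the quantized value of $G_1$, while $|\mf v(\lambda)|^{-1}$ sets the non-universal value of $G_0$. The main obstacle is the diagonal reduction in the third paragraph: one must show uniformly in $\beta,L$ that the off-diagonal oscillatory contributions are genuinely subleading, while simultaneously reorganising the on-diagonal sum into the exact combinations prescribed by (\ref{eq:relzeta}). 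This is precisely where the extension from the translation-invariant analysis of \cite{AMP, MP} to the quasi-periodic setting becomes delicate, and where the number-theoretic properties of $\alpha$ enter in an essential way.
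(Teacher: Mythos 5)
Your proposal correctly identifies the Wick rotation, the Wick-rule collapse to a bubble, and the role of the vertex Ward identity in forcing $\zeta_{\mu,+}=v_{\mu,+}$, but there is a genuine gap in how you treat the ``regular'' part of the bubble, i.e.\ the contributions in which at least one factor is the remainder $R$ from (\ref{eq:Rest2pt}).

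The crucial claim in your second paragraph---that the extra $\|\cdot\|^{-\xi}$ decay of $R$ makes these contributions $O(\theta^{\xi})$ and hence negligible as $\theta\to 0$---is false. A bubble with one relativistic factor $\check g_{\omega;\mr s}\sim \|\cdot\|^{-1}$ and one remainder $R\sim\|\cdot\|^{-1-\xi}$ produces a kernel decaying like $\|\cdot\|^{-2-\xi}$, which is \emph{summable} over the difference variable. The explicit $\theta$ in front of (\ref{eq:sketch1}) is then exactly compensated by the $\theta^{-1}$ volume of the support of $\phi_{\theta,\ell}$, so the ``regular'' contribution $K^{\text{reg}}_{\mu\nu}(\eta,\theta)$ is $O(1)$ as $\theta,\eta\to 0$, not $o(1)$. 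It cannot simply be discarded. The paper's way out is to exploit the \emph{current--current} Ward identity (\ref{eq:WI}), together with the Hölder-continuity statement (\ref{eq:contK}) proved in Lemma \ref{lem:contK}, to replace $K^{\text{reg}}(\eta,\theta)$ by $-K^{\text{sing}}(\eta,\eta^2)+o(1)$. That is, the Ward identity is used to express the unknown $O(1)$ regular part in terms of the singular part at the rescaled point $\theta=\eta^2$, leading to the subtracted formula (\ref{eq:chi0afterWI}), $\chi_0 \approx \int(\cdots)\big(\mc B^\infty_+(\eta,\theta p)-\mc B^\infty_+(\eta,\eta^2 p)\big)$. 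This subtraction is what removes the constant $\frac{1}{4\pi|v_0v_1|}$ offset from the bubble and fixes the relative order of the $\eta\to 0$ and $\theta\to 0$ limits; without it, the answer retains a spurious $\eta$-independent constant and the wrong normalization. Your proposal uses the continuity equation only at the vertex level and never for the density--current correlator, so this entire mechanism is missing.

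A smaller but related gap: your ``diagonal reduction'' of the oscillatory double sum is not a summation-by-parts gain of order $\theta|n-n'|^\tau$. The paper instead observes that the fast momentum-space decay of $\hat\mu_\theta,\hat\phi_{\theta,\ell}$ forces $|p+n\alpha|_{\mathbb T}\lesssim\theta$ and $|p-m\alpha|_{\mathbb T}\lesssim\theta$, hence $|(n+m)\alpha|_{\mathbb T}\lesssim\theta$, and by the Diophantine condition $|n+m|\gtrsim\theta^{-1/\tau}$; the exponential decay of $Z_{n}$ then kills the contribution. The delicate case $n+m=0$, $n\neq 0$ requires a further split according to $|n\alpha|_{\mathbb T}\gtrless\sqrt\theta$ (see around (\ref{eq:R><})): for $|n\alpha|_{\mathbb T}$ large the bubble is actually continuous at $(0,0)$ and the term can be absorbed into $K^{\text{reg}}$; for $|n\alpha|_{\mathbb T}$ small one uses the Diophantine condition again. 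This case analysis is not captured by a single $O(\theta|n-n'|^\tau)$ estimate. Finally, the step where ``the lattice Ward identity ensures no finite counter-term spoils the computation'' is in fact the entire content of Lemma \ref{lem:vert}, which requires establishing asymptotics for both the vertex and the two-point function for external momenta restricted to $|q_1-k_F^+|^M\le|k_0|\le|q_1-k_F^+|$ (Eq.\ (\ref{eq:qqkk})); this restriction is forced by the quasi-periodicity and would at minimum need to be identified in a complete argument.
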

\begin{remark}
\begin{itemize}
\item[(i)] Theorem \ref{thm:resp} proves, in particular, the quantization of the edge conductance for the model in presence of weak disorder, starting from Kubo formula. The integer is the same as for the bulk Hall conductivity, in agreement with the bulk-edge duality.
\item[(ii)] The order of limits in the definitions (\ref{eq:Gdefmain}) is important. As the proof will show, taking the limits in the reverse order would give a vanishing result.\end{itemize}
\end{remark}
\subsection{Sketch of the proof}

Here we shall give an outline of the proof of Theorem \ref{thm:resp}. Theorem \ref{thm:2pt} is based on an extension of the multiscale analysis for quasi-periodic systems, see {\it e.g.} \cite{BGM, M0, M1, M2, GM}, here applied to edge modes of topological insulators for the first time. With respect to these works, our main new contribution is the determination of the asymptotic behavior of the two-point function, which allows to resolve its oscillatory behavior (\ref{eq:2ptmain}). Since the model is quasi-free, this allows to determine the oscillatory behavior of all correlation functions, via Wick's rule.

Theorem \ref{thm:resp} is based on the combination of several ingredients. The starting point is the rewriting of the response function (\ref{eq:kubo}) in imaginary times (Wick's rotation):
\begin{equation}\label{eq:sketch1}
\chi^{\beta,L}_{\nu}(\eta, \theta) = \theta\sum_{\vec x,\vec y } \mu(\theta\vec  x)  \phi_{\ell}(\theta y_{1}, y_{2}) \int_{-\beta/2}^{\beta/2}  d s\, e^{-i\eta_{\beta} s} \langle \timord \gamma_{s}(n_{\vec x})\;; j_{\nu,\vec y} \rangle_{\beta,\mu,L} + O(1 / (\beta \eta^{3}))\;.
\end{equation}
This way of rewriting response functions has been used in several cases, see {\it e.g.} \cite{GMPhall, GMPweyl, AMP, MP, greenlamp}. The advantage of the rewriting (\ref{eq:sketch1}) is that we got rid of real-time correlation functions, for which no effective decay estimates can be proved, in favor of imaginary-time correlations, which can be studied via multiscale analysis. In the case at hand, being the model quasi-free, the right-hand side can actually be expressed in terms of the Euclidean two-point function (\ref{eq:2pt}). Still, the presence of disorder makes the right-hand side very hard to compute directly.

The main idea, which is an extension to quasi-periodic systems of the method used in \cite{AMP, MP}, is to isolate the ``scaling limit'' contribution in the integrand in (\ref{eq:sketch1}):
\begin{equation}\label{eq:sketch1b}
\chi^{\beta,L}_{\nu}(\eta, \theta) = \chi^{\text{sing}}_{\nu}(\eta, \theta) + \chi^{\text{reg}}_{\nu}(\eta, \theta)\;,
\end{equation}
where: $ \chi^{\text{sing}}_{\nu}$, the singular part, is obtained by replacing the two-point function by its leading oscillatory contribution in (\ref{eq:2ptmain}); while the remainder $\chi^{\text{reg}}_{\nu}(\eta, \theta)$, the regular part, is expressed in terms of quantities that have improved decay estimates in configuration space. Concerning the regular part, to avoid the impossible task of directly computing it we use lattice conservation laws to express it in terms of the singular part. The key information is the lattice continuity equation (\ref{eq:cons}), which implies {\it Ward identities} for the Euclidean correlation functions. As discussed in Section \ref{sec:cucuWI}, {\it if} the regular part $\chi^{\text{reg}}_{\nu}(\eta, \theta)$ satisfies the estimate, for some $\alpha > 0$,
\begin{equation}\label{eq:sketch2}
\big|\chi^{\text{reg}}_{\nu}(\eta,\theta) - \chi^{\text{reg}}_{\nu}(\eta,\theta')\big|\le C_{\ell} (|\theta|^{\alpha} + |\theta'|^{\alpha})\;,
\end{equation}
then the lattice continuity equation (\ref{eq:cons}) implies:
\begin{equation}\label{eq:sketch2b}
\chi^{\text{reg}}_{\nu}(\eta, \theta) = - \chi^{\text{sing}}_{\nu}(\eta, \eta^{2}) + O_{\ell}(|\theta|^{\alpha} + |\eta|^{\alpha})\;.
\end{equation}
This is particularly useful, provided one has a more explicit control on $\chi^{\text{sing}}_{\nu}$. Before discussing this, let us comment on (\ref{eq:sketch2}). In translation-invariant cases, this property is a straightforward consequence of the improved decay estimates of the argument of the integral in the definition of $\chi^{\text{reg}}_{\nu}(\eta, \theta)$: it is a consequence of the fact that the (two-dimensional) Fourier transform of functions with decay $\|{\bm x}\|^{-2 - \alpha}$ have Holder regularity with exponent $\alpha'<\alpha$. In our case, we cannot easily rely on Fourier analysis, due to the lack of translation invariance. Instead, we use that $K_{0\nu}^{\text{reg}}(\eta, \theta)$ has the form:
\begin{equation}\label{eq:chireg}
\chi^{\text{reg}}_{\nu}(\eta, \theta) = \theta \sum_{{\vec x},{\vec y}} \mu(\theta\vec  x)\phi_{\ell}(\theta y_{1}, y_{2})
\int_{-\frac{\beta}{2}}^{\frac{\beta}{2}}  d s\,e^{-i\eta s} F((s,\vec x); (0, \vec y))
\end{equation}
where the function $F(\cdot)$ satisfies:
\begin{equation}\label{eq:Fm}
\begin{split}
F(\vec {\bm x}; \vec {\bm y}) &= \sum_{m \in \mathbb{Z}} F_{m}({\bm x} - {\bm y}; x_{2}, y_{2}) e^{-i m\alpha y_{1}} \\
\big|F_{m}({\bm x} - {\bm y}; x_{2}, y_{2})\big| &\leq \frac{C e^{-c|m|} e^{-c|x_{2} - y_{2}|}}{1 + \| {\bm x} - {\bm y} \|^{2+\xi}}\qquad \text{with $\xi>0$.}
\end{split}
\end{equation}
To prove Eq. (\ref{eq:sketch2}), we will use that the $m=0$ contribution to (\ref{eq:chireg}) coming from (\ref{eq:Fm}) admits a limit as $\theta \to 0$, which cancels in the left-hand side of (\ref{eq:sketch2}). Then, we show that, for $\theta$ small, all the $m\neq0$ contributions to (\ref{eq:chireg}) coming from (\ref{eq:Fm}) are subleading in $\theta$ uniformly in $\eta$. To prove this, we crucially rely on the Diophantine condition (\ref{eq:diophantine}), and on the regularity properties of the test function. See Lemma \ref{lem:contK} for the precise statement, and Appendix \ref{app:contK} for the proof.

Next, we turn to the analysis of the singular contribution (\ref{eq:sketch2}). For simplicity, we will outline the computation of the susceptibility, $\nu = 0$; the analysis for the conductance, $\nu = 1$, proceeds along the same lines.  Let us summarize the analysis of Section \ref{sec:susce}. We use that, thanks to the more explicit form of the leading contribution to the two-point function (\ref{eq:2ptmain}), the response function can be rewritten as, in Fourier space:
\begin{equation}\label{eq:sketch4}
\begin{split}
\chi^{\text{sing}}_{0}(\eta, \theta) &= - \frac{\theta}{L_{1}} \sum_{p} \sum_{n,m}  \sum_{x_{2},y_{2}} (Z_{+}\star\overline{Z_{+}})(m;y_{2}) (Z_{+}\star\overline{Z_{+}})(n;x_{2})\\
&\qquad \cdot \hat \mu_{\theta}(-p-n\alpha,x_2) \hat \phi_{\theta,\ell}( p-m\alpha,y_{2})   \mc{B}_{+}^{\infty}(\eta, p) + \mr{o}(1)\;,
\end{split}
\end{equation}
where: here and below, the $\mr{o}(1)$ error terms vanish in the order of limits: first $L_{1,2}\to \infty$ (irrespective of the order); then $\beta \to \infty$; then $\eta,\theta \to 0$ (irrespective of the order); then $\ell \to \infty$; the function $ \mc{B}_{+}^{\infty}(\eta, p)$ is the so-called anomalous bubble diagram, computed for a free relativistic $1+1$ dimensional model with propagator (\ref{eq:2ptmain2}) with $\omega = +$; its value is given by:
\begin{equation}\label{eq:sketch3}
\mc{B}_{+}^{\infty}(\eta, p) =  \frac{1}{4\pi \big|v_{1,+}(\lambda)  v_{0,+}(\lambda) \big|}\frac{iv_{0,+}(\lambda)\eta-v_{1,+}(\lambda) p}{i v_{0,+}(\lambda)\eta +v_{1,+}(\lambda) p} + r_{+}(\eta, p)\;,
\end{equation}
where $v_{\mu,+}(\lambda)$ are as in (\ref{eq:2ptmain2}) and $r_{+}(\eta, p)$ vanishes continuously as $(\eta, p) \to (0,0)$; and
\begin{equation}
(Z_{+}\star\overline{Z_{+}})(m;y_{2}) = \sum_{\sigma} \sum_{m\in \mathbb Z}Z_{m,+,\sigma}(x_2) \overline{Z_{m-n,+,\sigma}(x_2)}\;,
\end{equation}
with $Z_{m,+,\sigma}(x_2)$ the amplitudes of the oscillatory functions appearing in the leading term (\ref{eq:2ptmain}), see also (\ref{eq:Zosc}). Our goal at this point is to use (\ref{eq:sketch1b}), (\ref{eq:sketch2b}), (\ref{eq:sketch4}) to show that, approximating the sum by an integral and performing a change of variable in $p$:
\begin{equation}\label{eq:sketch5}
\begin{split}
\chi^{\beta,L}_{0}(\eta, \theta) &= - \int \frac{dp}{(2\pi)} \sum_{x_{2},y_{2}} (Z_{+}\star\overline{Z_{+}})(0;y_{2}) (Z_{+}\star\overline{Z_{+}})(0;x_{2})\\
&\qquad \cdot \hat \mu_{\infty}(-p,x_2) \hat \phi_{\infty,\ell}( p,y_{2}) \Big( \mc{B}_{+}^{\infty}(\eta, \theta p) - \mc{B}_{+}^{\infty}(\eta, \eta^{2} p)\Big) + \mr{o}(1)\;.
\end{split}
\end{equation}
The advantage of this rewriting is that is resembles the expression of the analogous response function computed in a translation-invariant setting, {\it but} with renormalized parameters, due to homogeneization effects on a macroscopic scale. For instance, the test functions are ``dressed'' by $(Z_{+}\star\overline{Z_{+}})(0;x_{2})$, $(Z_{+}\star\overline{Z_{+}})(0;y_{2})$, which are disorder-dependent quantities. Let us postpone for a moment the analysis of the leading term in (\ref{eq:sketch5}), and let us briefly discuss why (\ref{eq:sketch5}) holds. 

The subtraction of $- \mc{B}_{+}^{\infty}(\eta, \eta^{2} p)$ takes into account the presence of the regular part, thanks to (\ref{eq:sketch2b}), up to subleading terms in $\eta$ and $\theta$. Then, we observe that the smoothness of the test functions in configuration space morally impose, in (\ref{eq:sketch4}), that $|p + n \alpha|_{\mathbb{T}} \lesssim \theta$ and that $| p - m \alpha |_{\mathbb{T}} \lesssim \theta$. If $n\neq m$, this implies that $| (n + m)\alpha |_{\mathbb{T}} \lesssim \theta$; and by the Diophantine condition (\ref{eq:diophantine}), since $m+n\neq 0$, this means that $|n + m| \gtrsim \theta^{-\frac{1}{\tau}}$. Thus, the corresponding contribution to the transport coefficient is overwhelmingly small in $\theta$, thanks to the exponential decay of the amplitudes $Z_{k,+,\sigma}(z_2)$ appearing in the expression for the two-point function and hence in (\ref{eq:sketch5}), recall (\ref{eq:Zosc}), (\ref{eq:bdZm}).

Consider now the contribution of the terms $n= -m \neq 0$ in (\ref{eq:sketch4}). After a change of variables, we are left with:
\begin{equation}
- \frac{\theta}{L_{1}} \sum_{p} \sum_{0\neq n}  \sum_{x_{2},y_{2}} (Z_{+}\star\overline{Z_{+}})(-n;y_{2}) (Z_{+}\star\overline{Z_{+}})(n;x_{2}) \hat \mu_{\theta}(-p,x_2) \hat \phi_{\theta,\ell}( p,y_{2})   \mc{B}_{+}^{\infty}(\eta, p - n\alpha)\;.
\end{equation}
Here we rely again on the Diophantine condition. The properties of the test functions imply that $|p|_{\mathbb{T}} \lesssim \theta$; we shall distinguish two regimes, $|n\alpha|_{\mathbb{T}} \geq \sqrt{\theta}$ and $|n\alpha|_{\mathbb{T}} < \sqrt{\theta}$. In the first regime, and for $p$ in the range of the test functions, the function $\mc{B}_{+}^{\infty}(\eta, p - n\alpha)$ is actually {\it continuous} at $(\eta, p) = (0,0)$. This ultimately implies that this regime contributes to (\ref{eq:sketch4}) with a term that satisfies the bound (\ref{eq:sketch2}). Thus, the corresponding contribution can be reabsorbed in a redefinition of $\chi^{\text{reg}}_{0}(\eta, \theta)$. Concerning the second regime, since $n\neq 0$ we can use again the Diophantine condition, to prove that the corresponding contribution is overwhelmingly small in $\theta$.

Therefore, we are left with (\ref{eq:sketch5}). Up to subleading terms,
\begin{equation}
\begin{split}
\chi^{\beta,L}_{0}(\eta, \theta) &= - \int \frac{d p}{2\pi} \sum_{x_{2},y_{2}} (Z_{+}\star\overline{Z_{+}})(0;y_{2}) (Z_{+}\star\overline{Z_{+}})(0;x_{2}) \\
&\qquad \cdot \hat \mu_{\infty}(- p, 0) \hat \phi_{\infty}( p,0) \big(  \mc{B}_{+}^{\infty}(\eta, \theta p) - \mc{B}_{+}^{\infty}(\eta, 0)\big)  + \mr{o}(1)\;. 
\end{split}
\end{equation}
Here we localized the test functions at $x_{2} = y_{2} = 0$. This is possible, up to small errors, due to the fact that the test functions depend on $\theta x_{2}$ and $y_{2} /\ell$, recall (\ref{eq:periodization}) and the discussion before (\ref{eq:smearj}), and thanks to the decay estimates (\ref{eq:bdZm}), which ultimately follow from the decay of the edge modes away from the $x_{2} = 0$ boundary on the scale of the lattice spacing. Calling:
\begin{equation}\label{eq:sketch5b}
\zeta_{0,+} := \sum_{y_{2} = 0}^{L_{2}} (Z_{+}\star\overline{Z_{+}})(0;y_{2})\;,
\end{equation}
we then have, from the explicit expression of the anomalous bubble diagram (\ref{eq:sketch3}):
\begin{equation}
\chi^{\beta,L}_{0}(\eta, \theta) = \frac{\zeta_{0,+}^{2}}{v_{0,+}^{2}} \int \frac{dp}{(2\pi)^2} \hat \mu_{\infty}(- p, 0) \hat \phi_{\infty}( p,0)\frac{\text{sgn}(v_{1,+})\,\theta p}{i \eta + \mf{v}(\lambda) \theta p} + \mr{o}(1)\;,
\end{equation}
with $\frak{v}(\lambda) = v_{1}(\lambda) / v_{0}(\lambda)$. Hence, for $\eta \ll \theta$:
\begin{equation}\label{eq:sketch5c}
\chi^{\beta,L}_{0}(\eta, \theta) = \frac{\zeta_{0,+}^{2}}{v_{0,+}^{2}} \frac{\langle \mu, \phi \rangle_{\text{edge}}}{2\pi |\frak{v}(\lambda)|}  + \mr{o}(1)\;,
\end{equation}
recall the notation (\ref{eq:edgescalar}) for the edge scalar product. This implies our main result for the susceptibility {\it if} we can prove that $\zeta_{0,+}^{2} / v_{0,+}^{2} = 1$. 

To prove this, we rely once more on the lattice continuity equation, and on its implication called the vertex Ward identity, discussed in Section \ref{sec:vertWI}. Let $j_{\mu, x_{1}} = \sum_{x_{2} = 1}^{L_{2}} j_{\mu, x}$. Let us introduce the momentum-space vertex function as:
\begin{equation}
\begin{split}
&\hat S_{3; \mu,\sigma,\zeta}(\bm{p}, \bm{k}, \bm{h}; x_{2}, y_{2}) \\
&:= \frac{1}{\beta L_{1}} \int_{[0,\beta]^{3}}\mr  d w_0  d x_0  d y_0 \sum_{w_{1}, x_{1}, y_{1}} e^{-i\bm{p} \cdot \bm{w} - i\bm{k} \cdot \bm{x} + i\bm{h} \cdot \bm{y} } \big\langle \timord\, \gamma_{w_0}(j_{\mu; w_{1}})\;; \gamma_{x_{0}}(a_{{\vec x},\sigma}) \;; \gamma_{y_{0}}(a^{*}_{{\vec y},\zeta}) \big\rangle\;,
\end{split}
\end{equation}
and the momentum-space two-point function as:
\begin{equation}
\begin{split}
&\hat S_{2;\sigma,\zeta}(\bm{k}, \bm{h}; x_{2}, y_{2}) \\
&\quad := \frac{1}{\beta L_{1}} \int_{[0,\beta]^{2}}\mr  dx_{0}\mr  dy_{0} \sum_{x_{1}, y_{1}} e^{-i\bm{k} \cdot \bm{x} + i\bm{h} \cdot \bm{y} }\langle \timord \gamma_{x_{0}}(a_{{\vec x},\sigma}) \gamma_{y_{0}}(a^{*}_{{\vec y},\zeta}) \big\rangle\;.
\end{split}
\end{equation}
Notice that the momenta do not have to satisfy the canonical momentum conservation, due to lack of translation invariance.  The momentum-space vertex Ward identity reads, see Section \ref{sec:vertWI}, choosing ${\bm h} = {\bm k} + {\bm p}$:
\begin{equation}\label{eq:sketch6}
\begin{split}
&-p_{0} \hat S_{3; 0,\sigma,\zeta}(\bm{p}, \bm{k}, {\bm k} + {\bm p}; x_{2}, y_{2}) + (1 - e^{-ip_{1}}) \hat S_{3; 1,\sigma,\zeta}(\bm{p}, \bm{k}, {\bm k} + {\bm p}; x_{2}, y_{2}) \\
&\qquad =i \hat S_{2;\sigma,\zeta}(\bm{k}, \bm{k}; x_{2}, y_{2})- i \hat S_{2;\sigma,\zeta}(\bm{k}+  \bm{p}, {\bm k} + {\bm p}; x_{2}, y_{2})\;.
\end{split}
\end{equation}
This identity can be used to prove non-trivial relations for the renormalized parameters. In fact, the left-hand side can be computed via Wick's rule, in terms of the two-point function; thus, both left-hand side and right-hand side of the identity are determined by the two-point function. The goal is to show that, for suitable choices of the external momenta, both sides of the identity are dominated by a ``translation-invariant'' contribution, plus subleading terms. The translation-invariant contribution is renormalized, by homogeneization effects on the macroscopic scale; the analysis crucially relies on the Diophantine property, and it is morally similar to the one outlined before, to prove that the leading contribution to the susceptibility is (\ref{eq:sketch5}). Ultimately, in Section \ref{sec:vertexcons} we prove that the identity (\ref{eq:sketch6}) reduces to, for a suitable range of external momenta, setting ${\bm q} = {\bm k} - {\bm k}_{F}^{+}$ and ${\bm k}_F^{+} = (0, k_{F}^{+})$:
\begin{equation}\label{eq:sketch7}
(-p_{0} \zeta_{0,+} + ip_{1}\zeta_{1,+}) g_{+;\mr{s}}({\bm q} + {\bm p}) g_{+;\mr{s}}({\bm q}) = i(g_{+;\mr{s}}({\bm q}) - g_{+;\mr{s}}({\bm q} + {\bm p})) + \mathcal{E}({\bm q}, {\bm p})
\end{equation}
where: $g_{+;\mr{s}}({\bm q})$ is given by the momentum-space relativistic propagator in (\ref{eq:2ptmain2}); the error term satisfies
\begin{equation}
\mathcal{E}({\bm q}, {\bm p}) g_{+;\mr{s}}({\bm q} + {\bm p})^{-1} g_{+;\mr{s}}({\bm q})^{-1} = \mr{o}(\bm p)\;;
\end{equation}
the parameter $\zeta_{0,+}$ is given by (\ref{eq:sketch5b}) while the parameters $\zeta_{1,+}$ is
\begin{equation}
\zeta_{1,+} = \sum_{\substack{n,\sigma,\zeta \\ x_{2}, y_{2}}} \overline{Z_{n,\omega,\sigma}(x_{2})} \Big( \partial_{k_{1}} \hat H_{\sigma,\zeta}\big(k_{F}^{\omega}(\lambda) - n \alpha\;; x_{2}, y_{2}\big) \Big) {Z_{n,\omega,\zeta}(y_{2})}\;.
\end{equation}
The range of external momenta for which we can establish (\ref{eq:sketch7}) is such that $\|{\bm q}\|$ and $\|{\bm p}\|$ are sufficiently small, and such that $|q_{0}| \geq |q_{1}|^{M}$ for some $M>1$; this condition is  not required in the application of this strategy to translation-invariant problems, and it is ultimately needed to avoid new resonances in the analysis of the momentum-space correlation functions. Finally, the key observation is that:
\begin{equation}
g_{+;\mr{s}}({\bm q}) - g_{+;\mr{s}}({\bm q} + {\bm p}) = (i v_{0,+} p_{0} + v_{1,+} p_{1}) g_{+;\mr{s}}({\bm q}) g_{+;\mr{s}}({\bm q} + {\bm p})\;;
\end{equation}
plugging this identity in (\ref{eq:sketch7}), we obtain
\begin{equation}
\zeta_{0,+} = v_{0,+}\;,\qquad \zeta_{1,+} = v_{1,+}\;.
\end{equation}
The first identity allows to prove the desired cancellation in (\ref{eq:sketch5c}), while the second is used to compute the edge conductance. This concludes the sketch of the proof of Theorem \ref{thm:resp}.

\section{Renormalization group analysis}\label{sec:RG}
In this section we will introduce the renormalization group analysis that allows to prove Theorems \ref{thm:2pt}, \ref{thm:resp}. We will start from the analysis of the partition function, which will allow us to introduce the notion of effective potential, and its RG flow. Then, we will discuss how to adapt the construction to the case of correlation functions, which will be computed introducing an external field in the partition function, and taking derivatives.

\subsection{Integration of the massive modes}\label{sec:massive}
A convenient starting point for the renormalization group analysis is the representation of the partition function of the disordered system in terms of a Gaussian Grassmann integral. We have, denoting by $\ms{Z}^{0}$ the partition function of the non-disordered system and omitting all the other labels:
\begin{equation}\label{eq:ZZ}
\ms{Z} /\ms{Z}^{0}=\lim_{N\to +\infty} \int \mathbb{P}_{N}(d\psi) \,\exp\big({\mc V_{N}( \psi)}\big)\;,
\end{equation}
where:
\begin{itemize}
\item[(i)] $\mc V_{N}$ is a polynomial in the Grassmann variables $\psi^{\pm}_{{\bm k}, x_{2}, \sigma}$,
\begin{equation}
    \mc{V}_{N}( \psi):= - \frac{\lambda}{\beta L_{1}}\sum_{n}\, \sum_{\bm{k}, x_2, \sigma}\hat\varphi_{n}(x_2) \psi_{\bm{k}+n{\bm \alpha},x_2,\sigma}^+ \psi_{\bm{k},x_2,\sigma}^-  
\end{equation} 
defined on indices $\bm{k}=(k_0,  k_1)$ in:
\begin{equation}\label{eq:DNLB}
\msc D_{N, L,\beta} = \Big\{ (k_{0}, k_{1}) \in \msc D_{L,\beta}\, \Big|\, k_{0} = \frac{2\pi}{\beta} \Big(n_{0} + \frac{1}{2}\Big)\;,\; |n_{0}| \le N \Big\}\;;
\end{equation}
\item[(ii)] $\bm \alpha:=  (0,\alpha)$ with $\alpha/2\pi = m / L_{1}$ is the best rational approximant (with denominator $L_{1}$) of the Diophantine number $\alpha_{\infty}/2\pi$, recall Assumption \ref{ass:pert};
 \item[(iii)] $\int \mathbb{P}_{N}(d\psi)$ is the Grassmann Gaussian integration with covariance:
\begin{equation}\label{eq:cov}
    \begin{split}
&        \pint_{N}( d \psi)\psi_{\bm k,x_2;\sigma}^-\psi^{+}_{\bm h,y_2;\zeta} \\
&\qquad = \beta L_{1}\delta_{k_0,h_0}\delta_{k_1, h_1} \Big(\frac{1}{ik_{0} + \hat H(k_{1}) - \mu}\Big)_{\sigma, \zeta}(x_{2}, y_{2}) \\
        &\qquad =: \beta L_{1}\delta_{k_0,h_0}\delta_{k_1, h_1} G_{\sigma,\zeta}(\bm{k},x_2, y_2)\;.
    \end{split}
\end{equation}
\end{itemize}
The goal of this section will be to introduce an algorithm that allows to compute (\ref{eq:ZZ}) as a convergent series in $\lambda$. The difficulty in doing this is that, as Eq. (\ref{eq:cov}) shows, the Grassmann field is singular: in the limit $\beta, L_{1} \to \infty$, the momentum-space covariance is unbounded, since by assumption $\mu$ belongs to the spectrum of the unperturbed Hamiltonian. In order to deal with this infrared problem, we shall perform a multiscale analysis, which is made possible by the addition principle of Grassmann Gaussian integrations: we split the Grassmann field into a sum of independent Grassmann fields, whose covariances are supported at a given distance from the singularity. The fields will then be integrated iteratively, and this will give rise to a flow of effective potentials.

To begin, we isolate the potentially divergent contribution to the covariance $G$. To this end, recall the definition of the smooth cutoff function, Eq. (\ref{eq:chidef}). Let $\chi(\hat H(k_{1}) - \mu)$ be the smoothed spectral projection, defined via functional calculus, which only depends on eigenstates of $\hat H(k_{1})$ with energies $\varepsilon(k_{1})$ in the interval $|\mu - \varepsilon(k_{1})| \le \delta$. Let us choose $\delta$ as in Assumption \ref{ass:H3}. Then, all the eigenstates of $\hat H(k_{1})$ in this energy range correspond to edge modes. We define:
\begin{equation}
G^{(\mr e)}_{\sigma,\zeta}({\bm k}, x_{2}, y_{2}) := \chi(k_{0}) \Big(\chi\big(\hat H(k_{1}) - \mu\big) G(\bm{k})\Big)_{\sigma,\zeta}(x_2, y_2)\;,
\end{equation}
which we can rewrite more explicitly as:
\begin{equation}\label{eq:Ge}
G^{(\mr e)}_{\sigma,\zeta}({\bm k}, x_{2}, y_{2}) = \sum_{\omega=\pm} \chi(k_0)\chi\big(\varepsilon_\omega(k_1)-\mu\big)\frac{\hat\xi^{\omega}_{\sigma}(k_1;x_2) \overline{\hat\xi^{\omega}_{\zeta}(k_1; y_2)}}{i k_0+\varepsilon_{\omega}(k_1)-\mu}\;.
\end{equation}
The label $(\mr{e})$ stands for ``edge'': the covariance $G^{(\mr e)}$ is spatially supported in proximity of the two boundaries of the cylinder, by the exponential decay of $\hat\xi^{\omega}_{\sigma}(k_1;x_2)$, recall Eq. (\ref{eq:decedge}). Next, let us introduce:
\begin{equation}
G^{(\mr b)}_{\sigma,\zeta}({\bm k}, x_{2}, y_{2}) := G_{\sigma,\zeta}(\bm{k},x_2, y_2) - G^{(\mr e)}_{\sigma,\zeta}({\bm k}, x_{2}, y_{2})\;.
\end{equation}
where $(\mr{b})$ stands for ``bulk''.
By construction, this covariance is bounded uniformly in ${\bm k}$. Let us define the discrete derivatives:
\begin{equation}
\begin{split}
\text{d}_{k_{0}} f(k_{0}) &= \frac{\beta}{2\pi} \big( f(k_{0}) - f(k_{0} -  2\pi/\beta) \big) \\
\text{d}_{k_{1}} f(k_{1}) &= \frac{L_{1}}{2\pi} \big( f(k_{1}) - f(k_{1} - 2\pi/L_{1}) \big)\;.
\end{split}
\end{equation}
Then, we have:
\begin{equation}\label{eq:bdGb}
\Big| \text{d}_{k_{0}}^{n_{0}} \text{d}_{k_{1}}^{n_{1}} G^{(\mr b)}_{\sigma,\zeta}({\bm k}, x_{2}, y_{2}) \Big| \le \frac{C^{\delta}_{n_{0}, n_{1}} e^{-c|x_{2} - y_{2}|}}{1 + |k_{0}|}\; ,
\end{equation}
uniformly in ${\bm k}$. The bound (\ref{eq:bdGb}) can proved using the estimate for the edge modes (\ref{eq:decedge}), and a Combes-Thomas type argument. See Appendix \ref{app:CT} for details.

We now represent the Grassmann field $\psi^{\pm}$ as:
\begin{equation}
\psi^{\pm}_{{\bm k}, x_{2}, \sigma} = \psi^{(\mr{e})\pm}_{{\bm k}, x_{2}, \sigma} + \psi^{(\mr{b})\pm}_{{\bm k}, x_{2}, \sigma}\;,
\end{equation}
where $\psi^{(\mr{e})\pm}$, $\psi^{(\mr{b})\pm}$ have covariances given respectively by $G^{(\mr e)}$ and $G^{(\mr b)}$. Then, we write, using the addition principle of the Gaussian Grassmann integration:
\begin{equation}\label{eq:ve}
\begin{split}
\ms{Z} /\ms{Z}^{0} &= \lim_{N\to +\infty} \int \mathbb{P}_{\mr{e}}( d  \psi^{(\mr e)}) \int \mathbb{P}_{\mr{b}}( d  \psi^{(\mr b)})\,\exp\big({\mc V_{N}( \psi^{(\mr e)} + \psi^{(\mr b)})}\big) \\
&= z_{\mr{b}}  \int \mathbb{P}_{\mr{e}}( d  \psi^{(\mr e)}) \exp\big({\mc V^{(\mr{e})}( \psi^{(\mr e)})}\big)\;,
\end{split}
\end{equation}
where: $z_{\mr{b}}$ is a suitable constant, to be discussed; $\mathbb{P}_{\mr{e}}$, $\mathbb{P}_{\mr{b}}$ are the Grassmann Gaussian integrations with covariances given by $G^{(\mr e)}$ and $G^{(\mr b)}$; and $\mathcal{V}^{(\mr{e})}$ is the {\it edge effective potential.} The effective potential is defined as a sum of truncated expectations (or cumulants) of the $\psi^{(\mr{b})}$ field:
\begin{equation}
\log(z_{\mr{b} }) \mathcal{V}^{(\mr{e})}(\psi^{(\mr e)}) = \lim_{N\to \infty} \sum_{m\ge 1} \frac{1}{m!} \mathbb{E}^{\text{T}}_{\mr{b}} \big(\mathcal{V}_{N}^{(\mr{b})}(\psi^{(\mr e)} + \cdot)\;; m\big)\; ,
\end{equation}
where $\mathbb{E}_{\mr{b}}^{\text{T}}$ denotes the truncated expectation:
\begin{equation}
\begin{split}
   & \mathbb{E}^{\text{T}}_{\mr{b}} \big(\mathcal{V}_{N}^{(\mr{b})}(\psi^{(\mr e)} + \cdot)\;; m\big)\\
   & \quad := \frac{\partial^{m}}{\partial t^{m}} \bigg(\log \int \mathbb{P}_{\mr{b}}( d  \psi^{(\mr b)})\,\exp\big(t{\mc V_{N}( \psi^{(\mr e)} + \psi^{(\mr b)})}\big) \bigg)\bigg|_{t=0}\;.
\end{split}
\end{equation}
It is well-known that the truncated expectations of Gaussian fields can be conveniently expressed in terms of connected graphs. In this case, due to the fact that the initial potential $\mathcal{V}_{N}$ is quadratic, we are left with particularly simple graphs, called chains. We have:
\begin{equation}\label{eq:Ve}
\mathcal{V}^{(\mr{e})}(\psi^{(\mr e)}) = \frac{1}{\beta L_{1}} \sum_{\substack{n, {\bm k}, \\ x_{2}, y_{2}, \sigma, \zeta}} V_{n;\sigma,\zeta}^{(\textrm{$\mr e$})}(\bm k,x_2,y_2)\psi_{\bm{k},x_2,\sigma}^{(\textrm{$\mr e$})+} \psi_{\bm{k}+n\bm{\alpha},y_2,\zeta}^{(\textrm{$\mr e$})-} \;,
\end{equation}
where $V_{n;\sigma,\zeta}^{(\textrm{$\mr e$})}(\bm k,x_2,y_2)=-\lambda \hat \varphi_{-n}\delta_{x_2,y_2}\delta_{\sigma,\zeta}+\mr O(\lambda^2)$ is given by:
\begin{equation}\label{eq:Vedef}
\begin{split}
    V_{n;\sigma,\zeta}^{(\textrm{$\mr e$})}(\bm k,x_2,y_2) = \sum_{s\ge 1} \bigg(\frac{1}{s!} \sum_{\theta \in {\msc T}^{(\mr{b})}_{s;n}} \theta(\bm{k},x_2,y_2)\bigg)\;,
\end{split}
\end{equation}   
where the sum is over tree graphs associated with values $\theta(\bm{k},x_2,y_2)$, with the following properties: 
\begin{itemize}
\item[(i)] the trees $\theta$ are labelled, and have nodes $v = 1,\ldots, s$ with $s\ge 1$;
\item[(ii)] each node has an incoming line and an outgoing line. The tree is formed by joining lines with opposite orientation. The incoming line of the first node and the outgoing line of the last node are not paired;
\item[(iii)] to each node we associate an integer $n_{v}$. The integers satisfy the condition $\sum_{v = 1}^{s} n_{v} = n$;
\item[(iv)] each node is labelled by a coordinate $x_{2,v}$ and a spin index $\sigma_{v}$; the first and the last node satisfy the constraint $x_{2,1} = x_{2}$, $x_{2,v} = y_{2}$, $\sigma_{1} = \sigma$, $\sigma_{v} = \zeta$;
\item[(v)] the value of the tree is:
\begin{equation}\label{eq:theta}
\begin{split}
    &\theta(\bm k,x_2,y_2) \\
    &\quad:= \Big(\prod_{v=1}^s -\lambda \hat\varphi_{-n_v}(x_{2,v})\Big) \prod_{v=1}^{s-1}G^{(\textrm{$\mr b$})}_{\sigma_v,\sigma_{v+1}}\big(\bm{k} + n(v)\bm{\alpha}, x_{2,v}, x_{2,v+1}\big)\;,
\end{split}
\end{equation}
\end{itemize}
where $n(v) = \sum_{v'\le v} n_{v'}$.
\begin{figure}
    \centering
    \includegraphics[scale=0.7]{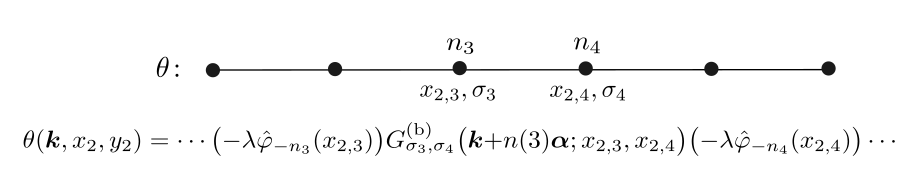}
    \caption{A couple of labels of a six nodes tree and the values to which they correspond.}
    \label{fig:bulktree}
\end{figure}
\begin{remark} 
\begin{itemize}
\item[(i)] We can visualize $\theta(\bm k,x_2,y_2)$, cfr. figure \ref{fig:bulktree}, as the value of a Feynman graph, obtained from the contraction of the lines exiting and incoming from the nodes. Each edge is associated with a propagator $G^{(\mr b)}$, while each node is associated with a perturbation $\lambda \hat\varphi_{n_{v}}$. The value of the graph is obtained by the product of all these objects, and the momentum flowing in each propagator is obtained from the Kirchhoff law, with the rule that each node is a ``source'' of momentum $n_{v} {\bm \alpha}$, which is transported along each propagator. The momentum ${\bm k}$ is the incoming momentum of the graph; the outgoing momentum is then ${\bm k}$ plus the net momentum shift due to the nodes, which is set to be $n{\bm \alpha}$.
\item[(ii)] The sum over $\theta$ takes into account the sum over all possible contractions between nodes, and the sum over all labels. If $s$ is the number of vertices, the number of possible connected graphs is $s!$.
\item[(iii)] The sum over ${\bm k} = (k_{0}, k_{1})$ in (\ref{eq:Ve}) involves all $k_{0} \in (2\pi / \beta) (\mathbb{Z} + 1/2)$, since the $N\to \infty$ limit has been taken. 
\end{itemize}
\end{remark}
Thus, in view of the bound (\ref{eq:bdGb}) and of the decay of the Fourier modes of the potential (\ref{eq:decFou}), we obtain that the effective potential is analytic in $\lambda$ for $|\lambda|$ small enough uniformly in $\beta, L$; moreover we get the following bound, for suitable constants $C_{n_{0},n_{1}}, c>0$:
\begin{equation}\label{eq:Vebd}
\Big| \text{d}_{k_{0}}^{n_{0}} \text{d}_{k_{1}}^{n_{1}} V_{n;\sigma,\zeta}^{(\textrm{$\mr e$})}(\bm k,x_2,y_2) \Big|\le C_{n_{0},n_{1}} |\lambda| e^{- c |n|} e^{-c|x_2-y_2|}\;.
\end{equation}
An important symmetry property of the effective potential is that:
\begin{equation}\label{eq:reale}
\overline{V_{n;\sigma,\zeta}^{(\mr e)}(\bm k,x_2,y_2)} = V_{-n;\zeta,\sigma}^{(\mr e)}\big((-k_{0}, k_{1} + n \alpha),y_2,x_2\big)\; ,
\end{equation}
which follows from (\ref{eq:theta}), from $\overline{\hat \varphi_{n}(x_{2})} = \hat \varphi_{-n}(x_{2})$ and from:
\begin{equation}
\overline{G^{(\mr b)}_{\sigma,\zeta}({\bm k}, x_{2}, y_{2})} = G^{(\mr b)}_{\zeta,\sigma}\big((-k_{0}, k_{1}), y_{2}, x_{2}\big)\; ,
\end{equation}
as a consequence of the self-adjointness of $\hat H(k_{1})$. 

Finally, the constant $z_{\mr{b}}$ in (\ref{eq:ve}) takes into account the exponential of the sum over ``vacuum diagrams'', that is with no external lines. It is:
\begin{equation}
\log(z_{\mr{b}}) = \sum_{{\bm k}} \sum_{n}\sum_{s\ge 1} \bigg(\frac{1}{s!} \sum_{\theta \in \widetilde{{\msc T}}^{(\mr{b})}_{s;n}} \theta(\bm{k})\bigg)\;,
\end{equation}
where the graphs in $\widetilde{{\msc T}}^{(\mr{b})}_{s;n}$ are obtained as before, with the only difference that all lines are now paired. The sum over $k_{0}$, which is unbounded as $N\to \infty$, is performed using the estimate (\ref{eq:bdGb}). This allows to control all graphs with at least two nodes, hence two propagators. In the case of a graph with a single node, the so-called tadpole graph, an explicit computation shows that the $k_{0}$ sum is bounded uniformly in $N$ (this is related to the fact that the propagator is aymptotically odd in $k_{0}$). Thus, $\log(z_{\mr{b}})$ is analytic in $\lambda$ for $|\lambda|$ small enough, and furthermore:
\begin{equation}\label{eq:zb}
 |\log(z_{\mr{b}}) | \le C \beta L_{1} L_{2}\;.
\end{equation}
%
%
%

\subsection{The flow of the effective potentials}
\subsubsection{One-dimensional reduction}\label{sec:1dred}
We are now left with the integration of the massless Gaussian field, with covariance given by $G^{(\mr e)}$, Eq. (\ref{eq:Ge}). As we will see, this can be reduced to the analysis of a massless, one-dimensional field. In order to see this, let us introduce the Grassmann field $\psi^{(\le 1)}_{\omega, {\bm k}}$, with measure $\mathbb{P}_{(\le 1)}$ and covariance:
\begin{equation}\label{eq:defcov}
\begin{split}
\int \mathbb{P}_{(\le 1)}(d\psi^{(\le 1)})\, \psi^{(\le 1)-}_{\omega,{\bm k}} \psi^{(\le 1)+}_{\omega', {\bm h}} &:= \beta L_{1}\delta_{\omega,\omega'} g^{(\le 1)}_{\omega}({\bm k})\;, \\
g^{(\le 1)}_{\omega}({\bm k}) &:= \frac{\chi(k_{0}) \chi(\varepsilon_\omega(k_1)-\mu)}{ik_{0} + \varepsilon_{\omega}(k_{1}) - \mu}\;.
\end{split}
\end{equation}
It is understood that the fields are only defined for momenta ${\bm k}$ in the support of the propagators. The one-dimensional reduction of the effective model obtained after integrating out $\psi^{(\mr{b})}$ is based on the observation that the random field $\psi^{(\mr e)}_{{\bm k}, x_{2}, \sigma}$ can be represented as:
\begin{equation}
\psi^{(\mr e)+}_{{\bm k}, x_{2}, \sigma} = \sum_{\omega = \pm} \psi^{(\le 1)+}_{\omega, {\bm k}} \overline{\xi^{\omega}_{\sigma}(k_{1}; x_{2})}\;,\qquad \psi^{(\mr e)-}_{{\bm k}, x_{2}, \sigma} = \sum_{\omega = \pm} \psi^{(\le 1)-}_{\omega, {\bm k}} \xi^{\omega}_{\sigma}(k_{1}; x_{2})\;.
\end{equation}
For short, we write $\psi^{(\mr e)} = \psi^{(\le 1)}\times \xi$. The $1d$ reduction of the model is a consequence of the fact that, given any monomial $M(\cdot)$ in the field $\psi^{(\mr e)}$:
\begin{equation}\label{eq:red1d}
\begin{split}
\int \mathbb{P}_{\mr{e}}( d  \psi^{(\mr e)}) M(\psi^{(\mr e)}) &= \int \mathbb{P}_{(\le 1)}(d\psi^{(\le 1)}) M(\psi^{(\le 1)}\times \xi) \\
&= \int \mathbb{P}_{(\le 1)}(d\psi^{(\le 1)}) \widetilde{M}(\psi^{(\le 1)})\;,
\end{split}
\end{equation}
where $\widetilde{M}(\psi^{(\le 1)})$ is a monomial in the field $\psi^{(\le 1)}$, with coefficients dependent on $\xi^{\omega}$, such that $\widetilde{M}(\psi^{(\le 1)}) = M(\psi^{(\le 1)}\times \xi)$. For
\begin{equation}
M(\psi) = \psi^{-}_{{\bm k}, x_{2}, \sigma} \psi^{+}_{{\bm h}, y_{2}, \zeta}\;,
\end{equation}
the identity in (\ref{eq:red1d}) is a trivial consequence of the definitions of the covariances of the two Gaussian fields. The general statement follows from the validity of the fermionic Wick rule, for both sides.

We can apply the one-dimensional reduction to rewrite the partition function in terms of the one-dimensional field $\psi^{(\le 1)}$. We get:
\begin{equation}\label{eq:Vred1d}
\begin{split}
\ms{Z} /\ms{Z}^{0} &= z_{\mr{b}}  \int \mathbb{P}_{\mr e}( d  \psi^{(\mr e)})\exp\big({\mc V^{(\mr{e})}( \psi^{(\mr e)})}\big) \\
&= z_{\mr{b}}  \int \mathbb{P}_{(\le 1)}(d\psi^{(\le 1)}) \exp\big( \mathcal{V}^{(1)}(\psi^{(\le 1)}) \big)\;,
\end{split}
\end{equation}
where $\mathcal{V}^{(1)}(\psi^{(\le 1)}) = \widetilde{\mathcal{V}}^{(\mr{e})}(\psi^{(\le 1)})$; explicitly,
\begin{equation}
\mathcal{V}^{(1)}(\psi^{(\le 1)}) = \frac{1}{\beta L_{1}} \sum_{\substack{n, {\bm k}, \\ \omega, \omega'}} V_{n;\omega,\omega'}^{(1)}(\bm k)\psi_{\bm{k} ,\omega}^{(\le 1)+} \psi_{\bm{k}+ n {\bm \alpha},\omega'}^{(\le 1)-}  \;;
\end{equation}
for ${\bm k}$ and ${\bm k} + n {\bm \alpha}$ in the support for the fields, the effective potential is:
\begin{equation}
V_{n;\omega,\omega'}^{(1)}(\bm k) = \sum_{\substack{x_{2}, y_{2} \\ \sigma, \zeta}} V_{n;\sigma,\zeta}^{(\textrm{$\mr e$})}(\bm k,x_2,y_2) \overline{\xi^{\omega}_{\sigma}(k_{1}; x_{2})} \xi^{\omega'}_{\zeta}(k_{1}+ n\alpha ; y_{2})\;,
\end{equation}
while it is set to zero for ${\bm k}$ or ${\bm k} + n {\bm \alpha}$ not in the support of the fields. 

Thus, in Eq. (\ref{eq:Vred1d}) we are reducing the computation of the partition function of the original $2d$ model to the analysis of the partition function of an effective $1d$ model; the effective potential of this $1d$ model correspond to the ``projection'' of the original kernels, on the two edges. The effective potential is analytic for $|\lambda|$ small enough; furthermore, from the bound (\ref{eq:Vebd}) and the assumptions on the edge modes (\ref{eq:decedge}), we have:
\begin{equation}\label{eq:est1}
\Big| \text{d}_{k_{0}}^{n_{0}} \text{d}_{k_{1}}^{n_{1}} V_{n;\omega,\omega'}^{(1)}(\bm k) \Big| \le C_{n_{0},n_{1}} |\lambda| e^{-c|n|} e^{-\tilde c\delta_{\omega,-\omega'}L_{2}}\;.
\end{equation}
In particular, we see that the scattering between the different edges is exponentially suppressed, for $L_{2}$ large. To conclude the paragraph, it is useful to write the first order contribution to the effective potential. It is:
\begin{equation}
V_{n;\omega,\omega'}^{(1)}(\bm k)= -\lambda \sum_{x_{2},\sigma} \overline{\xi^{\omega}_{\sigma}(k_{1}; x_{2})} \xi^{\omega'}_{\sigma}(k_{1} + n\alpha; y_{2}) \hat \varphi_{-n}(x_{2})+\mr O(\lambda^2)\;.
\end{equation}
Thus, we see that, already at first order in $\lambda$, the two edge modes are coupled by the disorder. Also, we conclude by observing that the analogue of (\ref{eq:reale}) holds true:
\begin{equation}
\overline{V_{n;\omega,\omega'}^{(1)}(\bm{k})} = V_{-n;\omega',\omega}^{(1)}(-k_{0}, k_{1} + n\alpha)\;.
\end{equation}
\subsubsection{Integration of the first scale}\label{sec:firstscale}
\paragraph{Shift of the Fermi points.} We now set up the multiscale integration of the one-dimensional field. In order to do this, it is convenient, for later purpose, to allow for a small shift of the Fermi points. Let:
\begin{equation}
\chi^{(\le 1)}_{\omega}({\bm k}) := \chi(k_{0}) \chi(\varepsilon_\omega(k_1)-\mu)\;.
\end{equation}
We rewrite:
\begin{equation}\label{eq:shift}
\begin{split}
g^{(\le 1)}_{\omega}({\bm k}) &= \frac{\chi^{(\le 1)}_{\omega}({\bm k})}{ik_{0} + \varepsilon_{\omega}(k_{1}) - \mu + \nu_{\omega}({\bm k}) - \nu_{\omega}({\bm k})}\;,
\end{split}
\end{equation}
where $\nu_{\omega}({\bm k}) = \nu_{\omega} \chi^{(\le 1)}_{\omega}({\bm k})$ with $\nu_{\omega} = O(\lambda)$ real-valued, to be determined. We rewrite the partition function as:
\begin{equation}
\ms{Z} /\ms{Z}^{0} = \tilde z_{\mr{b}}  \int \widetilde{\mathbb{P}}_{(\le 1)}(d\psi^{(\le 1)}) \exp\big( \widetilde{\mathcal{V}}^{(1)}(\psi^{(\le 1)}) \big)
\end{equation}
where: the new Grassmann Gaussian integration has covariance given by
\begin{equation}
\tilde g^{(\le 1)}_{\omega}({\bm k}) = \frac{\chi^{(\le 1)}_{\omega}({\bm k})}{ik_{0} + \varepsilon_{\omega}(k_{1}) - \mu + \nu_{\omega}({\bm k})}\;;
\end{equation}
the constant $\tilde z_{\mr{b}}$ takes into account the change in normalization of the Grassmann integration; the new effective potential is:
\begin{equation}\label{eq:tildev1}
\begin{split}
\widetilde{\mathcal{V}}^{(1)}(\psi^{(\le 1)}) &= \mathcal{V}^{(1)}(\psi^{(\le 1)}) + \frac{1}{\beta L_{1}} \sum_{{\bm k}, \omega} \nu_{\omega} \psi_{\bm{k},\omega}^{(\le 1)+}  \psi_{\bm{k},\omega}^{(\le 1)-} \\
&\equiv \frac{1}{\beta L_{1}} \sum_{\substack{n, {\bm k}, \\ \omega, \omega'}} \widetilde{V}_{n;\omega,\omega'}^{(1)}(\bm k)  \psi_{\bm{k} ,\omega}^{(\le 1)+}\psi_{\bm{k}+ n {\bm \alpha},\omega'}^{(\le 1)-} \;,
\end{split}
\end{equation}
with
\begin{equation}
\widetilde{V}_{n;\omega,\omega'}^{(1)}(\bm k)  = V_{n;\omega,\omega'}^{(1)}(\bm k) + \delta_{n,0} \delta_{\omega,\omega'} \nu_{\omega}\;.
\end{equation}
Thus, we now have that the denominator of the covariance is vanishing in correspondence with a slightly different point, to be denoted ${\bm k}_{F}^{\omega}(\lambda) = \big(0, k_{F}^{\omega}(\lambda)\big)$, where $k^{\omega}_{F}(\lambda)\equiv k^{\omega}_{F,L}(\lambda)$ is the the solution to:
\begin{equation}
\varepsilon_{\omega}\big(k^{\omega}_{F}(\lambda)\big) - \mu + \nu_{\omega} = 0\;.
\end{equation}
Clearly, since $|\nu_{\omega}|\le C|\lambda|$, we also have that $| k^{\omega}_{F}(\lambda) - k_{F}^{\omega} | \le K|\lambda|$. 

Next, we center coordinates around the two new Fermi points. We rewrite any ${\bm k}$ in the support of $\tilde g^{(\le 1)}_{\omega}({\bm k})$ as ${\bm k} = {\bm q} + {\bm k}_{F,L}^{\omega}(\lambda)$, with $\| {\bm q} \| \le C\delta$, and ${\bm k}_{F}^{\omega}(\lambda)$ a best lattice approximation of ${\bm k}_{F}^{\omega}(\lambda)$.  With a slight abuse of notation, we shall set:
\begin{equation}\label{eq:coordinatechange}
\psi^{(\le 1)\pm}_{{\bm q} + {\bm k_{F}^{\omega}}(\lambda), \omega} \equiv \psi^{(\le 1)\pm}_{{\bm q}, \omega}\;,
\end{equation}
and omit the $L$ subscript in the Fermi point.
\paragraph{Field integration.} Let us denote by $\tilde v_{\omega}$ the Fermi velocity of the shifted Fermi point:
\begin{equation}
\tilde v_{\omega} = \partial_{k_{1}} \varepsilon_{\omega}\big(k_{F}^{\omega}(\lambda)\big)\;.
\end{equation}
By the assumptions on the energy bands of the edge modes, $|v_{\omega} - \tilde v_{\omega}| \le C|\lambda|$. We define the new cutoff function:
\begin{equation}\label{eq:g0}
\chi^{(\le 0)}_{\omega}({\bm q}) := \chi\Big(\gamma \sqrt{ q_{0}^{2} + \tilde v_{\omega}^{2} q_{1}^{2}  }\Big)\;.
\end{equation}
For $\gamma$ large enough, the support of this cutoff function is strictly contained in the support of the cutoff function appearing the definition of the covariance (\ref{eq:defcov}). Let us define the new propagator:
\begin{equation}
g^{(\le 0)}_{\omega}({\bm q}) :=  \frac{\chi^{(\le 0)}_{\omega}({\bm q})}{iq_{0} + \varepsilon_{\omega}(q_{1}) - \mu + \nu_{\omega}}\;,
\end{equation}
where we used that, choosing $\gamma$ large enough, $\nu_{\omega}\big({\bm q} \big) = \nu_{\omega}$ for ${\bm q }$ in the support of $\chi^{(\le 0)}_{\omega}({\bm k})$. We rewrite the propagator $\tilde g^{(\le 1)}_{\omega}$ as:
\begin{equation}\label{eq:g1}
\tilde g^{(\le 1)}_{\omega}\big({\bm q} + {\bm k}_{F}^{\omega}(\lambda)\big) = g^{(\le 0)}_{\omega}({\bm q}) + g^{(1)}_{\omega}({\bm q})
\end{equation}
where $g^{(1)}_{\omega}({\bm q})$ is supported for momenta $(\delta/\gamma^2) \lesssim \| {\bm q} \| \lesssim \delta$, and it is bounded together with all its derivatives:
\begin{equation}\label{eq:dg1}
\big| \text{d}_{k_{0}}^{n_{0}} \text{d}_{k_{1}}^{n_{1}} g^{(1)}_{\omega}({\bm q})  \big| \le C_{n_{0}, n_{1}}\;,
\end{equation}
for a constant $C_{n_{0}, n_{1}}$ which depends on $\gamma$.

We then represent the propagator $\psi^{(\le 1)}$ as $\psi^{(\le 1)} = \psi^{(1)} + \psi^{(\le 0)}$, where the field $\psi_{\omega}^{(1)}$ has covariance $g^{(1)}_{\omega}$ while the field $\psi_{\omega}^{(\le 0)}$ has covariance $g^{(\le 0)}_{\omega}$. By the addition principle,
\begin{equation}\label{eq:scale0}
\begin{split}
\ms{Z} /\ms{Z}^{0} &= \tilde z_{\mr{b}}  \int \widetilde{\mathbb{P}}_{(\le 1)}(d\psi^{(\le 1)}) \exp\big( \widetilde{\mathcal{V}}^{(1)}(\psi^{(\le 1)}) \big) \\
&= \tilde z_{\mr{b}} \int \mathbb{P}_{(\le 0)}(d\psi^{(\le 0)}) \int \mathbb{P}_{(1)}(d\psi^{(1)}) \exp\big( \widetilde{\mathcal{V}}^{(1)}(\psi^{(\le 0)} + \psi^{(1)}) \big) \\
&\equiv \tilde z_{\mr{b}} z_{1} \int \mathbb{P}_{(\le 0)}(d\psi^{(\le 0)}) \exp\big(\mathcal{V}^{(0)}(\psi^{(\le 0)}) \big)\;,
\end{split}
\end{equation}
where the single-scale integration is performed via a cumulant expansion, as in Section \ref{sec:massive}. The new effective potential has the form:
\begin{equation}
\mathcal{V}^{(0)}(\psi^{(\le 0)}) = \frac{1}{\beta L_{1}} \sum_{\substack{n, {\bm q}, \\ \omega, \omega'}} V_{n;\omega,\omega'}^{(0)}({\bm q}) \psi_{{\bm q},\omega}^{(\le 0)+}  \psi_{{\bm q} + n {\bm \alpha},\omega'}^{(\le 0)-}\;,
\end{equation}
for new functions $V_{n;\omega,\omega'}^{(0)}({\bm q})$. Informally, these new functions are easily understood diagrammatically, by contracting the old $\widetilde{V}_{n;\omega,\omega'}^{(1)}(\cdot)$ through chains of single-scale propagators on scale $1$. Similarly, the constant $z_{1}$ is obtained by collecting the vacuum diagrams constructed using the contractions of the single-scale propagators.  In the next section we will give a precise definition of the general structure of these objects, in a more general context. What we can say at this stage is that the new constant $z_{1}$ and the new functions $V_{n;\omega,\omega'}^{(0)}({\bm q})$ are analytic in $\lambda$ for $|\lambda|$ small enough, and satisfy similar properties as their counterparts $z_{\mr{b}}$ and $V_{n;\omega,\omega'}^{(1)}(\bm k)$. In particular:
\begin{equation}\label{eq:V0}
\Big| \text{d}_{q_{0}}^{n_{0}} \text{d}_{q_{1}}^{n_{1}} V_{n;\omega,\omega'}^{(0)}(\bm q) \Big| \le C_{n_{0},n_{1}} |\lambda| e^{-c|n|} e^{-\tilde c\delta_{\omega,-\omega'}L_{2}}\;.
\end{equation}
The important remark at this point is that analyticity holds {\it for a smaller range of $\lambda$}, which so far depends on $\gamma$: this is due to the fact that the estimate (\ref{eq:dg1}) is sensitive to the smallest value that ${\bm q}$ can attain, which us now of order $1/\gamma^3$ instead of order $1/\gamma^2$.   In order to iterate this integration to smaller scales, and to ensure a positive radius of convergence in $\lambda$, we will have to introduce a suitable localization and renormalization procedure.

\subsubsection{Iterative integration}\label{sec:iterative}

The above procedure can be iterated for smaller values of the momenta. This will be done by grouping the momenta into scales, defined in terms of a given distance from the Fermi points. We will proceed by induction. 
\paragraph{Inductive assumptions.} Let $h_{\beta} \in \mathbb{Z}_{-}$, such that:
\begin{equation}\label{eq:hbeta}
\delta\gamma^{h_{\beta} - 1} \le \frac{\pi}{\beta}\le \delta \gamma^{h_{\beta}}\;.
\end{equation}
Let $h \in \mathbb{Z}_{-}$, such that $h_{\beta} \le h$. We assume that the partition function can be rewritten as, for $|\lambda| < \lambda_{0}$:
\begin{equation}\label{eq:ind}
\ms{Z} /\ms{Z}^{0} = z_{(> h)} \int \mathbb{P}_{(\le h)} (d\psi^{(\le h)}) \exp\big( \mathcal{V}^{(h)}(\psi^{(\le h)}) \big)\;,
\end{equation}
with the following meaning of the various objects in (\ref{eq:ind}).
\medskip

\noindent(i) The field $\psi^{(\le h)\pm}_{{\bm q}, \omega}$ has covariance given by:
\begin{equation}\label{eq:ind2}
\begin{split}
\int \mathbb{P}_{(\le h)} (d\psi^{(\le h)}) \psi^{-}_{{\bm q}, \omega} \psi^{+}_{{\bm h}, \omega'} &= \beta L_{1} \delta_{{\bm q}, {\bm h}} \delta_{\omega,\omega'} g^{(\le h)}_{\omega}({\bm q})\;, \\
g^{(\le h)}_{\omega}({\bm q}) &= \frac{\chi^{(\le h)}_{\omega}({\bm q})}{i v_{0,\omega,h+1} q_{0} + v_{1,\omega,h+1} q_{1} + r_{\omega}(q_{1})}\;,
\end{split}
\end{equation}
where: $\chi^{(\le h)}_{\omega}({\bm q}) = \chi\Big(\gamma^{-h-1} \sqrt{v_{0,\omega,h+1}^{2}q_{0}^{2} + v_{1,\omega,h+1}^{2}q_{1}^{2} }\Big)$ with $v_{0,\omega, h+1}$, $v_{1,\omega, h+1}$ real valued and such that:
\begin{equation}
| v_{0,\omega,h+1} - 1 |\le C|\lambda|\;,\qquad | v_{1,\omega,h+1} - v_{\omega}| \le C|\lambda|\;.
\end{equation}
The term $r_{\omega}(q_{1})$ is real-valued and it satisfies, for $\gamma^{h} \le |q_{1}|$:
\begin{equation}
\big| \text{d}_{q_{1}}^{n_{1}} r_{\omega}(q_{1}) \big| \le C_{ n_{1}} \gamma^{-h n_{1}}| q_{1} |^{1+\alpha}\;,\qquad \text{for $\alpha > 0$;}
\end{equation}
\noindent(ii) The effective potential has the form:
\begin{equation}\label{eq:Vh}
\mathcal{V}^{(h)}(\psi^{(\le h)}) = \frac{1}{\beta L_{1}} \sum_{\substack{n, {\bm q}, \\ \omega, \omega'}} V_{n;\omega,\omega'}^{(h)}({\bm q}) \psi_{{\bm q},\omega}^{(\le h)+}  \psi_{{\bm q} + n {\bm \alpha},\omega'}^{(\le h)-}\; ,
\end{equation}
where $V_{n;\omega,\omega'}^{(h)}({\bm q})$ satisfy:
\begin{equation}\label{eq:Vindh}
\Big| \text{d}_{q_{0}}^{n_{0}} \text{d}_{q_{1}}^{n_{1}} V^{(h)}_{n;\omega,\omega'}({\bm q}) \Big| \le K_{n_{0},n_{1}} \gamma^{h(1 - n_{0} - n_{1})} |\lambda| e^{-\frac{c}{4}|n|} e^{-\frac{\tilde c}{4}\delta_{\omega,-\omega'}L_{2}}\;,
\end{equation}
with $c, \tilde c >0$ is as in (\ref{eq:V0}). Furthermore,
\begin{equation}\label{eq:realh}
\overline{V_{n;\omega,\omega'}^{(h)}(\bm{q})} = V_{-n;\omega',\omega}^{(h)}\big(-q_{0}, q_{1} + n\alpha - \delta_{n\neq 0}(k_{F}^{\omega} - k_{F}^{\omega'})\big)\;.
\end{equation}
\noindent(iii) The overall normalization constant satisfies:
\begin{equation}
\big| \log (z_{(> h)}) \big| \le C \beta L_{1} L_{2}\;.
\end{equation}
We will discuss how these assumptions allow to integrate the scale $h$, and to obtain the effective potential on scale $h-1$. This will induce an iterative construction of the effective potentials, that allows to express the effective potential on scale $h-1$ in terms of the potential on scale $0$ and of a set of running coupling constants on scales $k\ge h$, to be introduced below. Under the assumption that the flow of these running coupling constant satisfies suitable estimates, which hold true on scale $0$, we will be able to prove the above assumptions on scale $h-1$. Then, we will show how to prove the assumption on the running coupling constant on scale $h-1$, and this will conclude the iterative argument. 
\begin{remark}
As we will see, the iterative construction provides much more precise information about the various objects entering in (\ref{eq:ind}) than just the stated bounds: it will introduce a convergent expansion for the effective potential on all scales.
\end{remark}

\paragraph{Localization and renormalization.} In order to integrate the scale $h$, we will renormalize the Gaussian integration, by reabsorbing into the covariance the contributions to the effective potential the dangerous contributions to the effective potentials. Let ${\bm 0}^{+}_{\beta} := (\pi/\beta, 0)$, ${\bm 0}^{-}_{\beta} := (-\pi/\beta, 0)$ and let:
\begin{equation}\label{eq:sym}
F({\bm 0}_{\beta}) := \frac{1}{2} \sum_{\alpha = \pm} F({\bm 0}^{\alpha}_{\beta})\;.
\end{equation}
We shall also use the notation:
\begin{equation}
 \mathbbm{1}\Big|_{{\bm q} = {\bm 0}_{\beta}} F({\bm q}) := F({\bm 0}_{\beta})\;.
\end{equation}
We are now ready to introduce the notion of localization.
\begin{definition}[Localization operator]\label{def:loc} Let $V^{(h)}_{n;\omega,\omega'}({\bm q})$ be a coefficient of the effective potential (\ref{eq:Vh}). We define:
\begin{equation}
\mf L V^{(h)}_{n;\omega,\omega'}({\bm q}) := \delta_{n,0}\big(V^{(h)}_{n;\omega,\omega'}({\bm 0}_{\beta}) + q_{0} \text{d}_{q_{0}}\! V^{(h)}_{n;\omega,\omega'}({\bm 0}_{\beta}) + q_{1} \text{d}_{q_{1}}\! V^{(h)}_{n;\omega,\omega'}({\bm 0}_{\beta})\big)\;.
\end{equation} 
The action of $\mf{L}$ is extended to $\mathcal{V}^{(h)}$ by linearity. Furthermore, we define:
\begin{equation}
\mf RV^{(h)}_{n;\omega,\omega'}({\bm q}) := V^{(h)}_{n;\omega,\omega'}({\bm q}) - \mf L V^{(h)}_{n;\omega,\omega'}({\bm q})\;.
\end{equation}
We also set $\mf L = \mf L_{0} + \mf L_{1;\text{d}} + \mf L_{1;\text{od}}$, where:
\begin{equation}\label{eq:Ldef2}
\begin{split}
\mf L_{0} V^{(h)}_{n;\omega,\omega'}({\bm q}) &:= \delta_{n,0} V^{(h)}_{0;\omega,\omega'}({\bm 0}_{\beta})\;,\\
\mf L_{1;\text{d}} V^{(h)}_{n;\omega,\omega'}({\bm q}) &:= \delta_{n,0}\delta_{\omega,\omega'}\big( q_{0} \text{d}_{q_{0}}\! V^{(h)}_{n;\omega,\omega'}({\bm 0}_{\beta}) + q_{1} \text{d}_{q_{1}}\! V^{(h)}_{n;\omega,\omega'}({\bm 0}_{\beta})\big)\; , \\
\mf L_{1;\text{od}} V^{(h)}_{n;\omega,\omega'}({\bm q}) &:= \delta_{n,0}\delta_{\omega,-\omega'}\big( q_{0} \text{d}_{q_{0}}\! V^{(h)}_{n;\omega,\omega'}({\bm 0}_{\beta}) + q_{1} \text{d}_{q_{1}}\! V^{(h)}_{n;\omega,\omega'}({\bm 0}_{\beta})\big)\;,
\end{split}
\end{equation}
where the subscript ``d'' stands for diagonal, and the subscript ``od'' stands for off-diagonal.
\end{definition}
\begin{remark}\label{rem:loc}
\begin{itemize}
\item[(i)] The operator $\mf L$ is called the localization operator, and its role is to extract from the effective potential $\mathcal{V}^{(h)}$ the relevant and marginal terms in the renormalization group sense; these terms require a separate discussion. Ultimately, they will be controlled by a suitable choice of the parameters $\nu_{\omega}$, entering the definition of shifted chemical potentials; recall Eq. (\ref{eq:shift}).
\item[(ii)] The operator $\mf L_{0}$ isolates {\it relevant terms} in the renormalization group sense, while the operators $\mf L_{1;\text{d}}$, $\mf{L}_{1;\text{od}}$ isolate {\it marginal terms} in the renormalization group sense. The naive dimensional estimate for the former diverges exponentially in $|h|$, while the naive dimensional estimate for the latter diverge linearly in $|h|$. As we will see, and as the reader might already expect, the off-diagonal terms (involving scattering processes between different edges) are actually exponentially small in $L_{2}$, recall the bound (\ref{eq:Vindh}); this allows to define a range of temperatures and of cylinder widths for which the edge-edge scatterings are subleading.
\item[(iii)] The reader might wonder why in the definition of localization we are only selecting the $n=0$ terms. As we will see, thanks to the Diophantine properties of $\alpha$, Eq. (\ref{eq:diophantine}), the contributions to the effective potential associated with $n \neq 0$ will be suppressed by an extra, scale dependent, small factor.
\item[(iv)] It is important to observe that the operator $\mf R$ acts as a projection: $\mf R^{2} V^{(h)}_{n}({\bm q})  = \mf R V^{(h)}_{n}({\bm q})$. If $n\neq 0$, this is trivial, since ${\mf R}$ acts as the identity in this case. If $n = 0$, we compute:
\begin{equation}
\begin{split}
&\mf R^{2} V^{(h)}_{n}({\bm q}) \\
&= \Big(1 - \mathbbm{1}\Big|_{{\bm q} = {\bm 0}_{\beta}} - q_{0} \text{d}_{q_{0}}\Big|_{{\bm q} = {\bm 0}_{\beta}} - q_{1} \text{d}_{q_{1}}\Big|_{{\bm q} = {\bm 0}_{\beta}}  \Big) {\mf R} V^{(h)}_{n}({\bm q}) \\
&= {\mf R} V^{(h)}_{n}({\bm q}) - \Big( \mathbbm{1}\Big|_{{\bm q} = {\bm 0}_{\beta}} + q_{0} \text{d}_{q_{0}}\Big|_{{\bm q} = {\bm 0}_{\beta}} + q_{1} \text{d}_{q_{1}}\Big|_{{\bm q} = {\bm 0}_{\beta}}\Big) \\
&\quad \circ \Big( 1 - \mathbbm{1}\Big|_{{\bm q} = {\bm 0}_{\beta}}- q_{0} \text{d}_{q_{0}}\Big|_{{\bm q} = {\bm 0}_{\beta}} - q_{1} \text{d}_{q_{1}}\Big|_{{\bm q} = {\bm 0}_{\beta}} \Big) V^{(h)}_{n}({\bm q}) \\
&= {\mf R} V^{(h)}_{n}({\bm q})\;,
\end{split}
\end{equation}
where in the last step we used that:
\begin{equation}
\text{d}_{q_{\mu}}\Big|_{{\bm q} = {\bm 0}_{\beta}} q_{\nu} = \delta_{\mu,\nu} \mathbbm{1}\Big|_{{\bm q} = {\bm 0}_{\beta}}\;.
\end{equation}
\item[(v)] Observe that, by (\ref{eq:sym}), we are defining the localization in a ``symmetrized'' way. This definition combined with (\ref{eq:realh}) implies:
\begin{equation}\label{eq:ccLR}
\begin{split}
\overline{{\mf L} V^{(h)}_{n;\omega,\omega'}({\bm q})} &= {\mf L} V^{(h)}_{-n;\omega',\omega}\Big(- q_{0}, q_{1} + n\alpha - \delta_{n\neq 0}\big(k_{F}^{\omega}(\lambda( - k_{F}^{\omega'}(\lambda)\big)\Big)\;, \\
\overline{{\mf R} V^{(h)}_{n;\omega,\omega'}({\bm q})} &= {\mf R} V^{(h)}_{-n;\omega',\omega}\Big(- q_{0}, q_{1} + n\alpha - \delta_{n\neq 0}\big(k_{F}^{\omega}(\lambda) - k_{F}^{\omega'}(\lambda)\big)\Big)\;.
\end{split}
\end{equation}
These properties will be important to ensure the reality properties of the velocities appearing in the propagator on smaller scales.
\end{itemize}
\end{remark}
\paragraph{The beta function.} Let us now use the localization operator to rewrite the argument of the Grassmann integral in a way which is more convenient fot setting up an iterative integration. We have:
\begin{equation}\label{eq:change}
\begin{split}
& \int \mathbb{P}_{(\le h)} (d\psi^{(\le h)}) \,e^{\mathcal{V}^{(h)}(\psi^{(\le h)})} \\
&\quad = \int \mathbb{P}_{(\le h)} (d\psi^{(\le h)})\, e^{\mf L _{1;\text{d}}\mathcal{V}^{(h)}(\psi^{(\le h)})} e^{(\mf L_{0} + \mf L_{1;\text{od}} + \mf R) \mathcal{V}^{(h)}(\psi^{(\le h)})} \\
&\quad = \tilde z_{h} \int \widetilde{\mathbb{P}}_{(\le h)} (d\psi^{(\le h)})\, e^{(\mf L_{0} + \mf L_{1;\text{od}} + \mf R) \mathcal{V}^{(h)}(\psi^{(\le h)})}\;,
 \end{split} 
\end{equation}
where $\widetilde{\mathbb{P}}_{(\le h)}$ is a new Grassmann Gaussian integration, whose covariance is {\it renormalized} by including $\mf L_{1;\text{d}} \mathcal{V}^{(h)}$ into its definition. The constant $\tilde z_{h}$ takes into account the change of normalization of the integral, and it is bounded as $|\tilde z_{h}| \leq \text{exp}(\beta L_{1} C|\lambda| \gamma^{h})$. Let us discuss the structure of the new covariance. To this end, we define:
\begin{equation}
\text{d}_{q_{0}}\! V^{(h)}_{0;\omega,\omega'}({\bm 0}_{\beta}) =: i \beta^{v}_{0,h+1,\omega,\omega'}\;,\qquad \text{d}_{q_{1}}\! V^{(h)}_{0;\omega,\omega}({\bm 0}_{\beta}) =: \beta^{v}_{1,h+1,\omega,\omega'}\;.
\end{equation}
Observe that, thanks to (\ref{eq:realh}), 
\begin{equation}\label{eq:realbetaa}
\overline{\beta^{v}_{\nu,\omega,\omega',h+1}} = \beta^{v}_{\nu,\omega',\omega,h+1} \qquad \nu = 0,1\;,
\end{equation}
and in particular $\beta^{v}_{\nu,\omega,\omega,h+1}$ is real. Also, the bound (\ref{eq:Vindh}) implies that:
\begin{equation}
|\beta^{v}_{\nu,\omega,\omega',h} | \le C|\lambda| e^{-\frac{\tilde c}{4}\delta_{\omega,-\omega'}L_{2}}\;.
\end{equation}
The propagator of the new Grassmann integration in (\ref{eq:change}) is:
\begin{equation}\label{eq:tildegh}
\widetilde{g}^{(\le h)}_{\omega}({\bm q}) := \frac{\chi^{(\le h)}_{\omega}({\bm q})}{i v_{0,\omega,h}({\bm q}) q_{0} + v_{1,\omega,h}({\bm q}) q_{1} + r_{\omega}(q_{1})}
\end{equation}
where:
\begin{equation}\label{eq:vflow}
v_{\nu,\omega,h}({\bm q}) := v_{\nu,\omega,h+1} + \beta^{v}_{\nu,\omega,\omega,h+1} \chi_{\omega}^{(\le h)}({\bm q})\;.
\end{equation}
By (\ref{eq:realbetaa}), the function $v_{\nu,\omega,h}({\bm q})$ is real-valued. For later purposes, we shall also define:
\begin{equation}\label{eq:rcc}
v_{\nu,\omega,h} := v_{\nu,\omega,h}({\bm 0}_{\beta})\;;
\end{equation}
by the properties of the function $\chi_{\omega}^{(\le h)}(\cdot)$, we have $v_{\nu,\omega,h}({\bm q}) = v_{\nu,\omega,h}$ for
\begin{equation}\label{eq:supph}
\sqrt{v_{0,\omega,h+1}^{2}q_{0}^{2} + v_{1,\omega,h+1}^{2}q_{1}^{2} } \le \delta\gamma^{h-1}\;.
\end{equation}
Thus, the new propagator has a form very similar to the previous one, up to a renormalization of order $\lambda$ of the parameters. Next, let us look at the effective interaction in the last step of (\ref{eq:change}). We set, recalling (\ref{eq:Ldef2}):
\begin{equation}\label{eq:Ldef3}
\mf L_{0} V^{(h)}_{0;\omega,\omega'}({\bm q}) =: \gamma^{h} \nu_{\omega,\omega',h}\;,\qquad  \mf L_{1;\text{od}} V^{(h)}_{0;\omega,-\omega}({\bm q}) =: iq_{0} \tilde v_{0,\omega,h} + q_{1} \tilde v_{1,\omega, h}\;.
\end{equation}
By the assumption (\ref{eq:Vindh}), we know that:
\begin{equation}
| \nu_{\omega,\omega',h} | \le C|\lambda| e^{-c\delta_{\omega,-\omega'} L_{2}}\;,\qquad |\tilde v_{\nu, \omega, h}| \le C|\lambda| e^{-c L_{2}}\;.
\end{equation}
Let us collect both $v_{\mu,\omega,h}$ and $\tilde v_{\mu,\omega,h}$ in a matrix $v_{\mu,\omega,\omega',h}$,
\begin{equation}
v_{\mu,\omega,\omega,h} := v_{\mu,\omega,h}\;,\qquad v_{\mu,\omega,-\omega,h} := \tilde v_{\mu,\omega,h}\;;
\end{equation}
the beta function for the parameters $v_{\mu,\omega,\omega',h}$, $v_{\mu,\omega,\omega',h}$ is:
\begin{equation}\label{eq:nuflow}
\beta^{\nu}_{\omega,\omega',h+1} := \gamma^{-1} \nu_{\omega,\omega',h} - \nu_{\omega,\omega',h+1}\;,\qquad \beta^{v}_{\nu,\omega,\omega',h+1} := v_{\nu,\omega,\omega',h} - v_{\nu,\omega,\omega',h+1}\;.
\end{equation}
Thus, we can write the {\it flow of the running coupling constants} as:
\begin{equation}\label{eq:beta}
\begin{split}
\nu_{\omega,\omega',h} &= \gamma \nu_{\omega,\omega',h+1} + \gamma \beta^{\nu}_{\omega,\omega',h+1} \\
 v_{\nu,\omega,\omega',h} &= v_{\nu,\omega,\omega',h+1} + \beta^{v}_{\nu,\omega,\omega',h+1}\;.
\end{split}
\end{equation}
We shall refer to the running coupling constants with $\omega \neq \omega'$ as the {\it off-diagonal} running coupling constants. These running coupling constants satisfy, as a consequence of (\ref{eq:realh}):
\begin{equation}
\overline{\nu_{\omega,\omega',h}} = \nu_{\omega',\omega,h}\;,\qquad \overline{v_{\nu,\omega,\omega',h}} = v_{\nu,\omega',\omega,h}\;.
\end{equation}
The initial datum of this discrete dynamical system is read off from (\ref{eq:g0}), (\ref{eq:scale0}). Thanks to (\ref{eq:V0}), it satisfies the estimates:
\begin{equation}
\begin{split}
&v_{0,\omega,\omega,0} = 1\;,\qquad |v_{1,\omega,\omega,0} - v_{\omega}| \le C|\lambda|\;,\qquad |v_{\nu,\omega,-\omega,0}| \le C|\lambda|e^{-\tilde c L_{2}}\;, \\
&|\nu_{\omega,\omega,0} - \nu_{\omega}|\le C|\lambda|\;,\qquad |\nu_{\omega,-\omega,0}| \le C|\lambda| e^{-\tilde cL_{2}}\;.
\end{split}
\end{equation}
The sequence $( \beta^{v}_{\nu,\omega,\omega',h}, \beta^{\nu}_{\omega,\omega',h} )_{h = h_{\beta} + 1}^{0}$ defines the {\it beta function} of the model. It is a nontrivial task to prove bounds for the beta function that are summable in the scale label, and that allow to prove that the solution of the recursion relation is bounded uniformly in $h$. The first line in (\ref{eq:beta}) describes the flow of the {\it relevant} running coupling constants, while the second line in (\ref{eq:beta}) describes the flow of the marginal running coupling constants. 

One of our key technical goals will be to show that, for a suitable choice of $\nu_{\omega} = O(\lambda)$, and for $L_{2} \ge \kappa\beta$:
\begin{equation}\label{eq:indrcc}
\begin{split}
|v_{\nu,\omega,\omega',h} - v_{\nu,\omega,\omega',0}| &\le C|\lambda|^{2} e^{-\frac{\tilde c}{16} L_{2}\delta_{\omega,-\omega'}} \\
| \nu_{\omega,-\omega,h} - \gamma^{-h} \nu_{\omega,-\omega,0} | &\le C|\lambda|^{2}\gamma^{-h}e^{-\frac{\tilde c}{16} L_{2}} \\
|\nu_{\omega,\omega,h}| &\leq C|\lambda| \gamma^{\xi h}\;\qquad \text{for $\xi > 0$.}
\end{split} 
\end{equation} 
\begin{remark} Thus, we can allow for a cylinder with finite width $L_{2}$, provided that the width grows at least linearly in the inverse temperature $\beta$. In this range of parameters, the two edge modes living on the two boundaries of $\Lambda_{L}$ are only weakly correlated.
\end{remark}

\paragraph{Single-scale integration.} We now integrate the scale $h$. This is done by writing the propagator (\ref{eq:tildegh}) as:
\begin{equation}
\widetilde{g}^{(\le h)}_{\omega}({\bm q}) = g^{(\le h-1)}_{\omega}({\bm q}) + g^{(h)}_{\omega}({\bm q})\;,
\end{equation}
where:
\begin{equation}\label{eq:proph}
\begin{split}
g^{(\le h-1)}_{\omega}({\bm q}) &:= \frac{\chi^{(\le h-1)}_{\omega}({\bm q})}{i v_{0,\omega,h} q_{0} + v_{1,\omega,h}q_{1} + r_{\omega}(q_{1})}\;, \\
g^{(h)}_{\omega}({\bm q}) &:= \frac{f^{(h)}_{\omega}({\bm q})}{i v_{0,\omega,h}({\bm q}) q_{0} + v_{1,\omega,h}({\bm q}) q_{1} + r_{\omega}(q_{1})}\;,
\end{split}
\end{equation}
with $\chi^{(\le h-1)}_{\omega}(\cdot)$ defined as after (\ref{eq:ind2}) with $h$ replaced by $h-1$, and 
\begin{equation}
f^{(h)}_{\omega}({\bm q}) = \chi^{(\le h)}_{\omega}({\bm q}) - \chi^{(\le h-1)}_{\omega}({\bm q})\;.
\end{equation}
Observe that, at the denominator of $g^{(\le h-1)}_{\omega}({\bm q})$, the functions $v_{\nu,\omega,h}({\bm q})$ have been replaced by the real constants $v_{\nu,\omega,h}$, recall Eq. (\ref{eq:rcc}); this is due to the fact that the condition (\ref{eq:supph}) is satisfied in the support of $\chi^{(\le h-1)}_{\omega}(\cdot)$, up to possibly choosing a larger value of $\gamma$ once and for all.

In constrast to $g^{(\le h-1)}$, the single-scale propagator $g^{(h)}$ is bounded; it satisfies the estimate:
\begin{equation}\label{eq:singlescale}
\Big| \text{d}_{q_{0}}^{n_{0}} \text{d}_{q_{1}}^{n_{1}} g^{(h)}_{\omega}({\bm q}) \Big| \le C_{n_{0}, n_{1}} \gamma^{-h (1 + n_{0} + n_{1})}\;.
\end{equation}
Also, the following property holds true:
\begin{equation}\label{eq:realg}
\overline{g^{(h)}_{\omega}({\bm q})} = g^{(h)}_{\omega}\big((-q_{0}, q_{1})\big)\;;
\end{equation}
this will be needed when checking (\ref{eq:realh}) on scale $h-1$.

We are now ready to integrate the scale $h$. We represent the Grassmann field as $\psi^{(\le h)} = \psi^{(\le h-1)} + \psi^{(h)}$, where $\psi^{(\le h-1)}$, $\psi^{(h)}$ are independent fields with propagators given by (\ref{eq:proph}). We then rewrite the partition function (\ref{eq:ind}) as:
\begin{equation}\label{eq:hint}
\begin{split}
\frac{\ms{Z}}{\ms{Z}^{0}} &= z_{(> h)} \tilde z_{h} \\&\quad \cdot \int \mathbb{P}_{(\le h-1)} (d\psi^{(\le h-1)}) \int \mathbb{P}_{(h)}(d\psi^{(h)}) e^{(\mf L_{0} + \mf L_{1;\text{od}} + \mf R) \mathcal{V}^{(h)}(\psi^{(\le h-1)} + \psi^{(h)})} \\
&= z_{(> h)} \tilde z_{h} z_{h} \int \mathbb{P}_{(\le h-1)} (d\psi^{(\le h-1)}) e^{\mathcal{V}^{(h-1)}(\psi^{(\le h-1)})}\;,
\end{split}
\end{equation}
for a new effective potential $\mathcal{V}^{(h-1)}(\psi^{(\le h-1)})$ which is obtained via a cumulant expansion associated with the Gaussian integration over the $\psi^{(h)}$ field, and where $z_{h}$ satisfies $|z_{h}| \leq \text{exp}(\beta L_{1} C|\lambda| \gamma^{h})$, as it will follow from the forthcoming analysis. The expression obtained in (\ref{eq:hint}) has the form of (\ref{eq:ind}) with $h$ replaced by $h-1$, after defining $z_{(>h-1)} := z_{(> h)} \tilde z_{h} z_{h}$.

In order to prove the inductive assumption stated at the beginning of this section, and to actually derive an explicit representation for the effective potential at all scales, we will introduce a graphical representation for the effective potential at a given scale in terms of a hierarchy of {\it clusters}. This will be the goal of the next section.

\subsubsection{Chain expansion}\label{sec:chain}

The goal of this section will be to introduce a graphical representation for the effective potential at all scales, which will be useful to prove estimates for the kernels of the potential, and in particular to prove Eq. (\ref{eq:Vindh}).

Being the Grassmann theory quadratic in the fermions, the graphical representation will involve chain graphs, as in Section \ref{sec:massive}. We will express the kernel of the effective potential on scale $h-1$ as:
\begin{equation}\label{eq:Vhtheta}
    V^{(h-1)}_{n;\omega,\omega'}({\bm q}) = \sum_{s \ge 1} \bigg(\frac{1}{s!} \sum_{\theta \in {\msc T}_{n;s, \omega,\omega'}^{(h)}}\theta({\bm q})\bigg)\;,
\end{equation}
where ${\msc T}_{n;s, \omega,\omega'}^{(h)}$ is the set of labelled chains on scale $h$, to be defined, and $\theta({\bm q})$ is the value associated to each chain with external quasi-momentum ${\bm q}$. Let us explain how the graphs in ${\msc T}_{n;s, \omega,\omega'}^{(h)}$ are defined. After the integration of the scale $h$, the effective potential is expressed as a sum of chain graphs similarly to what we discussed in (\ref{eq:Vedef}). The definition of the values of the chains proceeds as after (\ref{eq:Vedef}), with the following differences. To every node $v$, $v = 1,\ldots, s$, we attach a frequency label $n_{v}$, and two labels, $\omega_{v}$ and $\omega'_{v}$, associated respectively with the outgoing and the incoming line to the node; we impose the constraint $\omega'_{s} = \omega'$ and $\omega_{1} = \omega$. The edge outgoing from a node $v$ is associated with a single scale propagator,
\begin{equation}
g^{(h)}_{\omega_{v}}\big({\bm q} + {\bm n}(v)\big)\;,
\end{equation}
where:
\begin{equation}\label{eq:nv}
{\bm n}(v) := \sum_{v'\le v} \Big(n_{v'} {\bm \alpha} - \delta_{n_{v'}\neq 0}\big({\bm k}_{F}^{\omega_{v}}(\lambda) - {\bm k}_{F}^{\omega'_{v}}(\lambda)\big)\Big)
\end{equation}
has the interpretation of quasi-momentum outgoing a vertex $v$. The sum over all the quasi-momentum shifts satisfies ${\bm n}(v) = n {\bm \alpha} - \delta_{n\neq 0}({\bm k}_{F}^{\omega}(\lambda) - {\bm k}_{F}^{\omega'}(\lambda))$. To every node of the chain graph we associate:
\begin{equation}\label{eq:defNv}
N^{(h)}_{n_{v}; \omega_{v}, \omega'_{v}}\big({\bm q} + {\bm n}'(v)\big) := (\mf L_{0} + \mf L_{1;\text{od}} + \mf R) V^{(h)}_{n_{v};\omega_{v},\omega'_{v}}\big({\bm q} + {\bm n}'(v)\big)
\end{equation}
where ${\bm n}'(v) = {\bm n}(v-1)$: the momentum ${\bm q} + {\bm n}'(v)$ has the interpretation of incoming quasi-momentum in the node $v$. For later convenience, we shall also set ${\bm n}(v) = {\bm n}'(v) + \delta{\bm n}(v)$.

Thus, we can represent $V^{(h-1)}_{n;\omega,\omega'}$ as the sum of chain graphs with the following values (compare with (\ref{eq:theta})):
\begin{equation}\label{eq:newtrees}
\theta({\bm q}) = \Big(\prod_{v=1}^s N^{(h)}_{n_{v}; \omega_{v}, \omega'_{v}}\big({\bm q} + {\bm n}'(v)\big) \Big) \Big(\prod_{v=1}^{s-1}g^{(h)}_{\omega_{v}}\big({\bm q} + {\bm n}(v)\big)\Big)\;.
\end{equation}
This representation is not explicit enough to prove useful estimates for the effective potential. However, we can use this expression, together with the assumption (\ref{eq:realh}) on scales greater or equal than $h$, to check (\ref{eq:realh}) on scale $h-1$. We have, from (\ref{eq:ccLR}) and (\ref{eq:realg}):
\begin{equation}
\begin{split}
\overline{\theta({\bm q})} &= \bigg(\prod_{v=1}^s N^{(h)}_{-n_{v}; \omega'_{v}, \omega_{v}}\Big(-q_{0}, q_{1} + \big({\bm n}'(v) + \delta {\bm n}(v)\big)\Big) \bigg)\\
&\qquad\cdot \Big(\prod_{v=1}^{s-1}g^{(h)}_{\omega_{v}}\big(-q_{0}, q_{1} + {\bm n}(v)\big)\Big) \\
&= \Big(\prod_{v=1}^s N^{(h)}_{-n_{v}; \omega'_{v}, \omega_{v}}\big(-q_{0}, q_{1} + {\bm n}(v)\big) \Big) \Big(\prod_{v=2}^{s}g^{(h)}_{\omega'_{v}}\big(-q_{0}, q_{1} + {\bm n}'(v)\big)\Big)\;,
\end{split}
\end{equation}
where we used that the labels $\omega_{v}$ and $\omega'_{v+1}$ of two edges connected by a propagator must agree. We now relabel the vertices: $\tilde v = s +1 - v$, which reverses the order of the labelling of the chain. We have:
\begin{equation}
\begin{split}
\overline{\theta({\bm q})} &= \Big(\prod_{\tilde v=1}^{s} N^{(h)}_{-n_{\tilde v}; \omega'_{\tilde v}, \omega_{\tilde v}}\big(-q_{0}, q_{1} + {\bm n}(q) - {\bm n}'(\tilde v)\big) \Big) \\&\quad\cdot \Big( \prod_{v=2}^{s}g^{(h)}_{\omega'_{v}}\big(-q_{0}, q_{1} + {\bm n}(q) - {\bm n}(\tilde v)\big)\Big) \;.
\end{split}
\end{equation}
Next, we set $\tilde \omega_{\tilde v} := \omega'_{\tilde v}$, $\tilde \omega'_{\tilde v} := \omega_{\tilde v}$. Let us introduce the change of variables $\tilde n_{\tilde v} := - n_{\tilde v}$, and let us define:
\begin{equation}
\tilde{\bm n}(v) := \sum_{v'\le v} \Big(\tilde n_{v'} {\bm \alpha} - \delta_{n_{v'} \neq 0}\big({\bm k}_{F}^{\tilde\omega_{v}}(\lambda) - {\bm k}_{F}^{\tilde\omega'_{v}}(\lambda)\big)\Big)\;.
\end{equation}
We then get:
\begin{equation}\label{eq:realdim}
\begin{split}
\overline{\theta({\bm q})} &= \Big(\prod_{\tilde v=1}^s N^{(h)}_{\tilde n_{\tilde v}; \tilde \omega_{\tilde v}, \tilde \omega'_{\tilde v}}\big(-q_{0}, q_{1} + {\bm n}(q) + \tilde{{\bm n}}'(\tilde v)\big) \Big) \\&\quad\cdot \Big(\prod_{\tilde v = 1}^{s-1}g^{(h)}_{\tilde \omega'_{\tilde v}}\big(-q_{0}, q_{1} + {\bm n}(q) + \tilde{{\bm n}}(\tilde v)\big)\Big)\;,
\end{split}
\end{equation}
where the constraint now reads $\tilde {\bm n}(q) = -n{\bm \alpha} + \delta_{n\neq 0}({\bm k}_{F}^{\omega}(\lambda) - {\bm k}_{F}^{\omega'}(\lambda))$. Comparing with (\ref{eq:newtrees}), we see that the right-hand side of (\ref{eq:realdim}) is the value of a chain graph in ${\msc T}_{-n;s, \omega',\omega}^{(h)}$ with incoming momentum ${\bm q} + {\bm n}(q)$. Summing over all trees, this shows that:
\begin{equation}
\overline{V^{(h-1)}_{n;\omega,\omega'}({\bm q})} = V^{(h-1)}_{-n;\omega',\omega}\Big(-q_{0}, q_{1} + n\alpha - \delta_{n\neq 0}\big(k_{F}^{\omega}(\lambda) - k_{F}^{\omega'}(\lambda)\big)\Big)\;,
\end{equation}
which proves the validity of (\ref{eq:realh}) on scale $h-1$.

Let us now derive a more explicit expansion for the effective potential. Combining the definition (\ref{eq:defNv}) with the identity (\ref{eq:newtrees}), we can further express $V^{(h-1)}_{n;\omega,\omega'}({\bm q})$ in terms of new chain graphs, whose edges are associated with propagators on all scales from $h$ to $0$; this is done as follows. We write
\begin{equation}\label{eq:Nh+1}
\begin{split}
N^{(h)}_{n_{v}; \omega_{v}, \omega'_{v}}\big({\bm q} + {\bm n}'(v)\big) &= (\mf L_{0} + \mf L_{1;\text{od}}) V^{(h)}_{n_{v};\omega_{v},\omega'_{v}}\big({\bm q} + {\bm n}'(v)\big)\\
&\qquad + \mf R V^{(h)}_{n_{v};\omega_{v},\omega'_{v}}\big({\bm q} + {\bm n}'(v)\big)\;;
\end{split}
\end{equation}
for the first term in the right-hand side, we recall the definitions of Eqs. (\ref{eq:Ldef2}), (\ref{eq:Ldef3}):
\begin{equation}\label{eq:nodesh+1}
\begin{split}
\mf L_{0} V^{(h)}_{n_{v};\omega_{v},\omega'_{v}}\big({\bm q} + {\bm n}'(v)\big) &= \delta_{n_{v},0} \gamma^{h} \nu_{\omega_{v},\omega'_{v},h}\;, \\
\mf L_{1;\text{od}} V^{(h)}_{n_{v};\omega_{v},\omega'_{v}}\big({\bm q} + {\bm n}'(v)\big) &= \delta_{n_{v},0} \delta_{\omega_{v}, -\omega'_{v}} (q_{0} v_{0,\omega_{v},-\omega_{v},h} + q_{1} v_{1,\omega_{v},-\omega_{v},h})\;.
\end{split}
\end{equation}
The values of the running coupling constants are determined recursively, by solving (\ref{eq:beta}). Now, if $ \mf R V^{(h)}_{n_{v};\omega_{v},\omega'_{v}}(\cdot)$ was not present in (\ref{eq:Nh+1}), the expression (\ref{eq:newtrees}) combined with (\ref{eq:nodesh+1}) would give rise to chains whose nodes are labelled by running coupling constants and propagators on scale $h$. We shall call (\ref{eq:nodesh+1}) the {\it nodes on scale $h$}, and we shall denote by $\msc{N}_{h}(\theta)$ the set of nodes on scale $h$ of a tree $\theta$ in $ {\msc T}_{n;s, \omega,\omega'}^{(h)}$. The value of each such node will be denoted by $V_{v}$, and it corresponds to one of the two terms in (\ref{eq:nodesh+1}).

In order to take into account the presence of $\mf R V^{(h)}_{n_{v};\omega_{v},\omega'_{v}}(\cdot)$ in (\ref{eq:Nh+1}), we proceed as follows. Let $\widetilde{\msc{N}}_{h}(\theta)$ be the nodes that are associated with $\mf R V^{(h)}$. We express $V^{(h)}$ at the argument of $\mf R$ in terms of chains involving propagators on scale $g^{(h+1)}$, nodes on scale $h+1$, and kernels $\mf R V^{(h+1)}$. Thus, we can express the value of every chain $\theta\equiv \theta_{h}$ in ${\msc T}_{n;s, \omega,\omega'}^{(h)}$ as a linear combination of:
\begin{equation}\label{eq:thetah}
\Big( \prod_{v \in \msc{N}_{h}(\theta_{h})} V_{v}\Big)  \Big(\prod_{v=1}^{s-1}g^{(h)}_{\omega_{v}}\big({\bm q} + {\bm n}(v)\big)\Big) \bigg( \prod_{v\in \widetilde{\msc{N}}_{h}(\theta_{h})} \mf R\Big( \theta_{h+1}\big({\bm q} + {\bm n}'(v)\big)\Big) \bigg)\;,
\end{equation}
where $\theta_{h+1}$ at the argument of the last product belongs to ${\msc T}_{n_{v};w_{v}, \omega_{v},\omega_{v}'}^{(h+1)}$. Recall the action of $\mf R$, see Definition \ref{def:loc}:
\begin{equation}\label{eq:Rtheta1}
\mf R\big( \theta({\bm q})\big) = \theta({\bm q})\;,\qquad \text{if $\theta \in {\msc T}_{n_{v};s_{v}, \omega_{v},\omega_{v}'}^{(h+1)}$, with $n_{v}\neq 0$},
\end{equation}
and:
\begin{equation}\label{eq:Rtheta2}
\mf R\big( \theta({\bm q})\big) = \Big( 1 - \mathbbm{1}\Big|_{{\bm q} = {\bm 0}_{\beta}} - q_{0} \text{d}_{q_{0}}\Big|_{{\bm q} = {\bm 0}_{\beta}} - q_{1} \text{d}_{q_{1}}\Big|_{{\bm q} = {\bm 0}_{\beta}} \Big) \theta({\bm q})\;,\qquad \text{otherwise.}
\end{equation}
Also, by Remark \ref{rem:loc}, it is important to recall that $\mf R$ acts as a projection: $\mf R^{2} = \mf R$, which avoids the (potentially dangerous) accumulation of derivatives.

Iterating the above argument, we end up with chain graphs involving nodes and propagators from scales $h$ to $0$. Let us introduce the necessary concepts in order to describe the resulting expansion.
\begin{itemize}
\item[(i)] The chain is formed by a sequence of nodes connected by edges. Each node $v$ is labelled by a scale label $h_{v} \ge h$, by a frequency label $n_{v}$, by quasi-particle labels $(\omega_{v}, \omega'_{v})$. The nodes are decorated by an incoming and an outgoing line; the incoming line is decorated by the quasi-particle label $\omega'_{v}$, the outgoing line is decorated by the quasi-particle label $\omega_{v}$. The chain is formed joining lines attached to nodes, with opposite orientations and with coinciding quasi-particle labels.
\item[(ii)] The nodes with scale $h_{v} < 0$ correspond to $n_{v} = 0$. They correspond to the running coupling constants (\ref{eq:nodesh+1}). The nodes with $h_{v} = 0$ might have $n_{v} = 0$ or $n_{v} \neq 0$. If they are labelled by $n_{v} \neq 0$, they correspond to ${\mf R}V^{(0)}_{n_{v}; \omega_{v}, \omega'_{v}}(\cdot )$. If they are labelled by $n_{v} = 0$, they correspond to either the running coupling constant (\ref{eq:nodesh+1}), or to ${\mf R} V^{(0)}_{n_{v}; \omega_{v}, \omega'_{v}}(\cdot)$. 
\item[(iii)] The nodes associated with $n_{v} = 0$ are called {\it resonant}, while the nodes associated with $n_{v}\neq 0$ are called {\it non-resonant}. We denote by $\msc{N}(\theta)$ the set of nodes, by $\msc{N}_{\text{r}}(\theta)$ the set of resonant nodes, and by $\msc{N}_{\text{nr}}(\theta)$ the set of non-resonant nodes.
\item[(iv)] Two consecutive nodes $v, v+1$ are connected by an edge $e$, corresponding to a propagator $g^{(h_{e})}_{\omega_{v}}$. We denote by $\msc{E}(\theta)$ the set of all edges of $\theta$. The scale $h_{e}$ is equal to the minimum of $h_{v}$, $h_{v+1}$. The incoming momentum of the whole chain is ${\bm q}$, and the outgoing momentum of the chain is ${\bm q} + {\bm n}(|\msc{N}(\theta)|)$ with ${\bm n}(v)$ as in (\ref{eq:nv}). Each propagator supports a momentum which is determined by the incoming momentum plus the momentum shift determined by the frequency labels of all vertices preceding the edge (Kirchhoff rule): if the edge $e$ connects $v$ to $v+1$, then the corresponding propagator supports the momentum ${\bm q}_{e} = {\bm q} + {\bm n}(v)$.
\item[(v)] A {\it cluster} $T$ is a connected segment contained in the chain $\theta$, so that the incoming line and the outgoing line of $T$ have scale labels strictly smaller than the scales contained in the cluster. We use the convention that clusters must contain at least one edge: that is, nodes are not clusters. We denote by $\msc{C}(\theta)$ the set of clusters of $\theta$, and we include $\theta$ in the set of clusters. We denote by $h^{\text{ext}}_{T}$ is the maximal scale of the external lines of $T$ (the {\it external scale} of $T$), and by $h_{T}$ the smallest scale contained in $T$ (the {\it scale} of $T$); as a consequence of the above definitions, $h^{\text{ext}}_{T} < h_{T}$.
\item[(vi)] A cluster is {\it resonant} if the incoming and the outgoing momenta of $T$ are equal. Otherwise, the cluster is called {\it non-resonant.} We denote by $\msc{C}_{\text{r}}(\theta)$ the set of resonant clusters of $\theta$, and by $\msc{C}_{\text{nr}}(\theta)$ the set of non-resonant clusters of $\theta$. A cluster $T'\subset T$ is {\it maximal} if it is not contained into any other subcluster of $T$. Similarly, a node $v\in T$ is called maximal if it is not contained into any other subcluster of $T$.  We call $s^{\text{c}}_{T}$ the number of maximal clusters contained in $T$, $s^{\text{v}}_{T}$ the number of maximal nodes contained in $T$, and we set $s_{T} = s^{\text{c}}_{T} + s^{\text{v}}_{T}$. We denote by $s_{T}^{\text{nr}}$ the number of maximal non-resonant clusters or nodes contained in $T$, and by $s^{\text{r}}_{T}$ the number of maximal resonant clusters or nodes contained in $T$.
%
\end{itemize}
\begin{figure}
    \centering
    \includegraphics[scale=0.65]{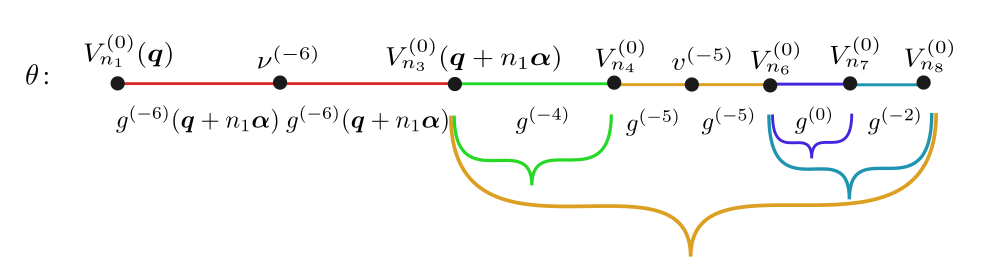}
    \caption{An example of a tree with $8$ nodes and its clusters structure. For simplicity we omitted the $\omega$ labels and all arguments excepts a few ones. Each clusters is represented by a curly bracket. }
    \label{fig:clusters}
\end{figure}

The next proposition is the main technical result of the section.
\begin{proposition}[Bound for chain graphs.]\label{prp:chainbd} Let $\theta\in {\msc T}_{n;s, \omega,\omega'}^{(h)}$. Then, assuming the validity of (\ref{eq:indrcc}) the following estimate holds true:
\begin{equation}\label{eq:chainbd}
\begin{split}
| \text{d}_{0}^{n_{0}} \text{d}_{1}^{n_{1}} \theta({\bm q}) | &\le C_{n_{0},n_{1}}\gamma^{h(1 - n_{0} - n_{1})} K^{|\msc{N}(\theta)|} |\lambda|^{|\msc{N}_{\text{nr}}(\theta)|} \Big( \prod_{v\in \msc{N}(\theta)} e^{-\frac{c}{2}|n_{v}|} \Big) \\
&\quad\cdot \Big( \prod_{v\in \msc{N}_{\text{r}}(\theta)} \zeta_{h_{v}} \Big)\Big(\prod_{v\in \msc{N}_{\text{nr}}(\theta)} e^{- \frac{\tilde c}{4} \delta_{\omega_{v},-\omega_{v}} L_{2}}  \Big)  \Big(\prod_{T\in \msc{C}(\theta)} \gamma^{ (h^{\text{ext}}_{T} - h_{T})}\Big)\;,
\end{split}
\end{equation}
where $\zeta_{h_{v}}$ denotes the absolute of $\nu_{\omega_{v},\omega'_{v},h_{v}}$ or of $v_{\nu,\omega_{v},-\omega_{v},,h_{v}}\delta_{\omega'_{v},-\omega_{v}}$.
\end{proposition}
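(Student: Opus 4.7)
The plan is to bound $\theta({\bm q})$ by a direct multiplicative estimate. Since a chain is intrinsically one-dimensional (no closed loops), no internal momentum integrations appear, and the entire estimate reduces to tracking a product of scale factors coming from propagators and nodes, with extra convergence factors extracted from each $\mf R$ operator wrapped around a cluster. I would first iterate (\ref{eq:thetah}) down to scale $h$, obtaining a representation of $\theta({\bm q})$ as a product of propagators $g^{(h_e)}_{\omega_e}({\bm q}_e)$ with Kirchhoff-determined momenta, interleaved with node values $V_v$ of one of the three types in (\ref{eq:nodesh+1}) (if $h_v < 0$) or $\mf R V^{(0)}_{n_v}$ (if $h_v = 0$), and with $\mf R$ operators wrapped around each maximal subchain corresponding to a cluster $T \in \msc{C}(\theta)$.

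For the individual ingredients, by (\ref{eq:singlescale}) each propagator contributes $C \gamma^{-h_e}$ (with an extra $\gamma^{-h_e}$ per external derivative landing on it). Each $\mf L_0$ resonant node contributes $\gamma^{h_v} \nu_{\omega_v,\omega'_v,h_v}$, bounded by $C \gamma^{h_v} \zeta_{h_v}$ via (\ref{eq:indrcc}); each $\mf L_{1;\text{od}}$ resonant node contributes $q_0 v_{0,\omega_v,-\omega_v,h_v} + q_1 v_{1,\omega_v,-\omega_v,h_v}$, also bounded by $C \gamma^{h_v} \zeta_{h_v}$ since the momentum flowing through the node lies in the support where $\|{\bm q}\| \lesssim \gamma^{h_v}$; each non-resonant node is $\mf R V^{(0)}_{n_v}$ with $n_v \neq 0$, bounded by (\ref{eq:V0}) as $C|\lambda| e^{-c|n_v|} e^{-\tilde c \delta_{\omega_v,-\omega'_v} L_2}$. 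Multiplying these reproduces every factor in (\ref{eq:chainbd}) except the per-cluster improvement $\gamma^{h^{\text{ext}}_T - h_T}$ and the overall dimensional factor $\gamma^{h(1 - n_0 - n_1)}$, which together follow from the $\mf R$ operators and a telescoping identity over the cluster hierarchy.

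The cluster improvement arises as follows. For a resonant cluster, $\mf R$ subtracts the first-order Taylor polynomial at ${\bm 0}_\beta$ as in (\ref{eq:Rtheta2}), and the integral form of Taylor's remainder gives $|\mf R \theta_T({\bm q})| \le C \|{\bm q}\|^2 \sup_{\bm \xi} (|\text{d}_{q_0}^2 \theta_T({\bm \xi})| + |\text{d}_{q_1}^2 \theta_T({\bm \xi})|)$; applying two internal derivatives costs $\gamma^{-2 h_T}$ by (\ref{eq:singlescale}), while $\|{\bm q}\|^2 \le \gamma^{2 h^{\text{ext}}_T}$, so the net gain per cluster is $\gamma^{2(h^{\text{ext}}_T - h_T)}$, stronger than the claimed $\gamma^{h^{\text{ext}}_T - h_T}$. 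For a non-resonant cluster, $\mf R$ acts trivially by (\ref{eq:Rtheta1}), but the nonzero net momentum shift $\sum_{v \in T}(n_v {\bm \alpha} - \delta_{n_v \neq 0}({\bm k}^{\omega_v}_F - {\bm k}^{\omega'_v}_F))$ is incompatible with the external cutoff $\chi^{(\le h^{\text{ext}}_T)}$; the Diophantine condition (\ref{eq:diophantine}) then converts this incompatibility into the required scale gap. Combining with the telescoping of node, edge, and cluster scales yields the overall $\gamma^{h(1 - n_0 - n_1)}$. The main obstacle is the bookkeeping of external derivatives distributed via Leibniz across $\mf R$-wrapped subchains, which could naively force differentiation of already-regularized objects; this is handled via the projection property $\mf R^2 = \mf R$ from Remark \ref{rem:loc}(iv), allowing all derivatives to be rerouted to the innermost layer of the cluster hierarchy without disturbing the outer $\mf R$'s.
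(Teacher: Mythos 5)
Your skeleton is correct and matches the paper's approach: iterate (\ref{eq:thetah}) to express $\theta(\bm q)$ as a product of single-scale propagators, node values, and $\mf R$-wrapped subchains; gain $\gamma^{2(h^{\text{ext}}_T - h_T)}$ from each $\mf R$ on a resonant cluster via a second-order Taylor remainder; and use $\mf R^2 = \mf R$ to prevent derivative accumulation on nested renormalized subclusters. However, there is a genuine gap at the step where you handle the non-resonant clusters. You write that the nonzero net momentum shift of a non-resonant cluster ``is incompatible with the external cutoff'' and that the Diophantine condition ``converts this incompatibility into the required scale gap,'' and then you defer everything else to ``telescoping.'' Two things are off here. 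First, the cluster is not killed by the cutoff: the shift is merely constrained to be small, $\big|\sum_{v\in T\,\cap\,\msc{N}^{\text{d}}_{\text{nr}}(\theta)} n_v\alpha\big|_{\mathbb T} \lesssim \gamma^{h^{\text{ext}}_T}$, and Diophantine then forces $|n^{\text{d}}_T|\gtrsim \gamma^{-h^{\text{ext}}_T/\tau}$, which must be fed back into the exponential decay $e^{-c|n_v|}$ of the node amplitudes to produce smallness. Second and more importantly, the factor you must absorb is the scale-sum-divergent product $\prod_{T\in \msc{C}(\theta)}\gamma^{-h_T\, s^{\text{nr}}_T}$ coming from the dimensional estimate, and a single ``per-cluster Diophantine gain'' declared at the outer scale does not obviously suffice, because each non-resonant node sits inside a whole nested hierarchy of clusters, each of which wants its own geometric gain, while each node carries only one exponential factor $e^{-c|n_v|}$.

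The paper closes this gap with a hierarchical redistribution of the exponential decay: writing $e^{-a_v}=\prod_{k\le 0}e^{-2^k a_v}$ and attaching the factor $e^{-2^k a_v}$ to the unique cluster $T$ containing $v$ with $h^{\text{ext}}_T=k$, which yields $\prod_{v}e^{-a_v}\le \prod_{T}e^{-2^{h^{\text{ext}}_T}a_T}$ and then, after Diophantine, a factor $C\gamma^{m h^{\text{ext}}_T}$ for any $m$ at \emph{every} level of the nested hierarchy simultaneously; this is what allows the telescoping $\prod_T\gamma^{h^{\text{ext}}_T - h_T}$ to be extracted without double-counting. Your proposal never introduces this or any equivalent mechanism, so the claim that the Diophantine condition converts into ``the required scale gap'' per cluster is unjustified as written. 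Without it, the sum over the scale labels of non-resonant subclusters is not under control and the inductive bound (\ref{eq:Vindh}) for $V^{(h-1)}$ would not follow.
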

The bound (\ref{eq:chainbd}) is summable over the scale labels, and over the frequency labels. In fact:
\begin{equation}\label{eq:chain2V}
\sum_{\{n_{v}\}: \sum_{v} n_{v} = n}\Big( \prod_{v\in \msc{N}(\theta)} e^{-\frac{c}{2}|n_{v}|}\Big) \le C^{|\msc{N}(\theta)|} e^{-\frac{c}{4} |n|}\;,
\end{equation}
and:
\begin{equation}
\sum_{\{h_{T}\}: h^{\text{ext}}_{\theta} = h} \Big(\prod_{T\in \msc{C}(\theta)} \gamma^{ (h^{\text{ext}}_{T} - h_{T})}\Big) \le 1\;.
\end{equation}
Concerning the combinatorics for a given cluster $T$, the number of chain graphs obtained connecting $s_{T}$ maximal clusters or nodes is $s_{T}!$. On the other hand, following the iterative construction of the chain graphs discussed starting from (\ref{eq:Vhtheta}), we see that the value of the resulting graphs is accompanied by a prefactor:
\begin{equation}
\prod_{T\subseteq \theta} \frac{1}{s_{T}!}\;.
\end{equation}
Thus, the overall combinatorial factor is $1$. In conclusion, from (\ref{eq:Vhtheta}) we get, up to a redefinition of the overall constant and for $|\lambda|$ small enough:
\begin{equation}
| \text{d}_{0}^{n_{0}} \text{d}_{1}^{n_{1}} V^{(h-1)}_{n;\omega,\omega'}({\bm q})| \le C_{n_{0},n_{1}}|\lambda| \gamma^{(h-1)(1 - n_{0} - n_{1})} e^{-\frac{c}{4} |n|} e^{- \frac{\tilde c}{4} \delta_{\omega,-\omega'} L_{2}}\;,
\end{equation}
which is the claimed bound (\ref{eq:Vindh}) on scale $h-1$.
\begin{proof}(of Proposition \ref{prp:chainbd}) To begin, it is useful to first introduce the {\it unrenormalized value} $\theta_{\text{nr}}$, as follows:
\begin{equation}
\begin{split}
\theta_{\text{nr}}({\bm q}) &= \Big(\prod_{v \in \msc{N}_{\text{r}}(\theta)} V_{v}\Big) \Big(\prod_{v\in \msc{N}_{\text{nr}}(\theta)} {\mf R}V^{(0)}_{n_{v}; \omega_{v},\omega'_{v}}\big({\bm q} + {\bm n}'(v)\big) \Big) \\&\qquad \cdot \Big(\prod_{e\in \msc{E}(\theta)} g^{(h_{e})}_{\omega_{e}}\big({\bm q} + \tilde {\bm n}(e)\big)\Big)\;,
\end{split}
\end{equation}
where if $e = (v, v+1)$, $\tilde {\bm n}(e) = {\bm n}(v)$. This expression is generated iterating (\ref{eq:thetah}) neglecting the ${\mf R}$ operators acting on $\theta_{h+1}$. Let us first derive an estimate for the unrenormalized value, and then discuss the actual estimate for $\theta({\bm q})$, taking into account the action of ${\mf R}$. 

Assume that the estimates for the running coupling constants (\ref{eq:indrcc}) hold true on scales greater or equal than $h$, and recall the bound (\ref{eq:thetah}) for the single-scale propagator and the bound (\ref{eq:V0}) for the kernels on scale $0$. Then, we have:
\begin{equation}\label{eq:thetanr}
\begin{split}
| \theta_{\text{nr}}({\bm q}) | &\le \Big(\prod_{v \in \msc{N}_{\text{r}}(\theta)} \gamma^{h_{v}} \zeta_{h_{v}}\Big) \Big(\prod_{v\in \msc{N}_{\text{nr}}(\theta)} C|\lambda| e^{-c|n_{v}|} e^{-\frac{\tilde c}{2} \delta_{\omega_{v},-\omega_{v}} L_{2}}  \Big) \\
&\qquad \cdot \Big(\prod_{e\in \msc{E}(\theta)} C\gamma^{-h_{e}}\Big)\;.
\end{split}
\end{equation}
The sum over all trees implies a sum over all scale labels. Under the sole constraint that if $e$ joins $v$ to $v+1$ then $h_{e} = \min(h_{v}, h_{v+1})$, it is clear that the bound (\ref{eq:thetanr}) cannot be used to prove the bound (\ref{eq:Vindh}) for the effective potential on scale $h$. 

In order to improve the situation, we have to take into account the action of the ${\mf R}$ operator in (\ref{eq:thetah}), on a general $\theta_{h+1}({\bm q})$. The action is nontrivial only if $\theta_{h+1}({\bm q})$ is associated with a resonant cluster. It is convenient to define the remainder operator:
\begin{equation}\label{eq:Rexp}
\text{Rem}_{{\bm p}, {\bm q}; 2} F({\bm q}) := \Big( 1 - \mathbbm{1}\Big|_{{\bm q} = {\bm 0}_{\beta}} - p_{0} \text{d}_{q_{0}}\Big|_{{\bm q} = {\bm 0}_{\beta}} - p_{1} \text{d}_{q_{1}}\Big|_{{\bm q} = {\bm 0}_{\beta}} \Big) F({\bm q}) \;;
\end{equation}
we recall that evaluating a function at ${\bm 0}_{\beta}$ means (\ref{eq:sym}). Thus, 
\begin{equation}
{\mf R} \theta_{h+1}({\bm q}) = \text{Rem}_{{\bm q}, {\bm q}; 2} \theta_{h+1}({\bm q})\;.
\end{equation}
The operator $\text{Rem}_{{\bm q}, {\bm q}; 2}$ could either act on $g^{(h+1)}({\bm q} + {\bm r})$, or on ${\mf R}(\theta_{h+2}({\bm q} + {\bm r}))$, with ${\bm r}$ a generic momentum determined by Kirchhoff rule. Suppose it acts on $g^{(h+1)}$; then, we are left with estimating:
\begin{equation}
\text{Rem}_{{\bm q}, {\bm q}; 2}\, g^{(k)}_{\omega}({\bm q} + {\bm r})\;,
\end{equation}
with ${\bm q}$ on scale $h$ and ${\bm r}$ on scale $k = h+1$. By interpolation, we get:
\begin{equation}\label{eq:Rg}
\Big| \text{Rem}_{{\bm q}, {\bm q}; 2}\, g^{(k)}_{\omega}({\bm q} + {\bm r})\Big|_{{\bm q} = {\bm 0}_{\beta}} \Big| \le C \gamma^{2h} \gamma^{-3k} = C\gamma^{2(h-k)} \gamma^{-k}\;;
\end{equation}
that is, we gain a factor $\gamma^{2(h-k)}$ with respect to the dimensional bound (\ref{eq:singlescale}) for the single-scale propagator on scale $k = h+1$. In this bound, observe that $\gamma^{h}$ is the external scale of the cluster $T$ associated with $\theta_{h+1}$, while $\gamma^{k}$ with $k=h+1$ is the scale of the cluster; thus, the factor $\gamma^{2(h - k)}$ in (\ref{eq:Rg}) is equivalent to a factor $\gamma^{2(h^{\text{ext}}_{T} - h_{T})}$, recall item (v) in the previous definitions.

The other possibility is that the operator (\ref{eq:Rexp}) acts on ${\mf R}\theta_{h+2}({\bm q} + {\bm r})$. If $\theta_{h+2}$ is associated with a non-resonant cluster, then the action is trivial. If it is resonant, ${\mf R}\theta_{h+2}({\bm q} + {\bm r})$ is:
\begin{equation}\label{eq:Rh+1}
\text{Rem}_{{\bm q} + {\bm r}, {\bm q} + {\bm r}; 2}\, \theta_{h+2}({\bm q} + {\bm r})\;,
\end{equation}
and the action of $\text{Rem}_{{\bm q}, {\bm q}; 2}$ on (\ref{eq:Rh+1}) gives:
\begin{equation}\label{eq:comp}
\text{Rem}_{{\bm q}, {\bm q}; 2}\, \theta_{h+2}({\bm q} + {\bm r})\;.
\end{equation}
That is, derivatives do not accumulate. We now express $\theta_{h+2}$ in terms of propagators on scale $h+2$ and ${\mf R}\theta_{h+3}$. If the remainder operator acts on $g^{(h+2)}$, we use (\ref{eq:Rg}), to gain a factor $\gamma^{(2(h - (h-2)))}$ with respect to the dimensional bound of the single-scale propagator; this bound can be viewed as a factor $\gamma^{2(h^{\text{ext}}_{T} - h_{T})} \cdot \gamma^{2(h^{\text{ext}}_{T'} - h_{T'})}$ where $T$ is the cluster associated with $\theta_{h+1}$ and $T' \subset T$ is the cluster associated with $\theta_{h+2}$. If instead the derivatives act on ${\mf R}\theta_{h+3}$ we iterate the argument, until scale zero. 

It is now clear how to quantify the action of the ${\mf R}$ operators on the resonant clusters. Observe that the estimate (\ref{eq:thetanr}) can be rewritten as:
\begin{equation}
\begin{split}
| \theta_{\text{nr}}({\bm q}) | &\le \Big(\prod_{v \in \msc{N}_{\text{r}}(\theta)}  \gamma^{h_{v}} \zeta_{h_{v}}\Big) \Big(\prod_{v\in \msc{N}_{\text{nr}}(\theta)} C|\lambda| e^{-c|n_{v}|} e^{-\frac{\tilde c}{2} \delta_{\omega_{v},-\omega_{v}} L_{2}}  \Big) \\
&\qquad \cdot \Big(\prod_{T\in \msc{C}(\theta)} \gamma^{-h_{T} (s_{T} - 1)}\Big)\;.
\end{split}
\end{equation}
From the above discussion, we see that the action of the ${\mf R}$ introduces a factor 
\begin{equation}
\prod_{T \in \msc{C}_{\text{r}}(\theta)} \gamma^{2(h^{\text{ext}}_{T} - h_{T})}\;,
\end{equation}
in the bound for the chain (recall that the action of ${\mf R}$ is nontrivial only when the cluster is resonant, hence the constraint in the product). Thus, the bound for the renormalized chain is:
\begin{equation}\label{eq:esttheta}
\begin{split}
| \theta({\bm q}) | &\le C^{|\msc{N}(\theta)|} \Big(\prod_{v \in \msc{N}_{\text{r}}(\theta)}  \gamma^{h_{v}} \zeta_{h_{v}}\Big) \Big(\prod_{v\in \msc{N}_{\text{nr}}(\theta)} C|\lambda| e^{-c|n_{v}|} e^{-\frac{\tilde c}{2} \delta_{\omega_{v},-\omega_{v}} L_{2}}  \Big) \\
&\qquad \cdot \Big(\prod_{T\in \msc{C}(\theta)} \gamma^{-h_{T} (s_{T} - 1)}\Big) \Big(\prod_{T\in \msc{C}_{\text{r}}(\theta)} \gamma^{2 (h^{\text{ext}}_{T} - h_{T})}\Big)\;,
\end{split}
\end{equation}
where the factor $C^{|\msc{N}(\theta)|}$ bounds the number of ways to distribute the derivatives coming from the ${\mf R}$ operators.\footnote{Whenever ${\mf R}$ acts on a non-resonant clusters, we have a combinatorial factor produced by the action of ${\mf R}$ on the propagators, plus a combinatorial factor produced by the action of ${\mf R}$ on the resonant maximal subclusters. For each cluster, we have $s_{T} - 1$ propagators and at most $s_{T}$ resonant maximal subclusters, on which an operator ${\mf R}$ is already applied (we call them renormalized clusters). Thus, the number of ways to distribute the derivatives in a cluster $T$ is $2s_{T} - 1$. Using the property (\ref{eq:comp}), whenever ${\mf R}$ acts on the renormalized maximal subclusters, we get that the overall combinatorial factor produced by the ${\mf R}$ operators is bounded as:
\begin{equation}
\prod_{T \subseteq \theta} (2s_{T} -  1) = e^{\sum_{T\subseteq \theta} \log(2s_{T} - 1)} \le e^{\sum_{T\subseteq \theta} (2s_{T} - 1)} \le e^{3\sum_{T\subseteq \theta} (s_{T} - 1)}\;.
\end{equation}
Then, since $\sum_{T\subseteq \theta} (s_{T} - 1) = | \msc{N}(\theta)| - 1$, we obtain that the combinatorial factor produced by the action of the ${\mf R}$ operators is bounded by $C^{| \msc{N}(\theta)| }$, as claimed after (\ref{eq:esttheta}).
}
To better understand the right-hand side of (\ref{eq:esttheta}), let us write:
\begin{equation}\label{eq:split}
\prod_{T\in \msc{C}(\theta)} \gamma^{-h_{T} (s_{T} - 1)} = \Big(\prod_{T\in \msc{C}(\theta)} \gamma^{-h_{T} (s^{\text{r}}_{T} - 1)}\Big) \Big(\prod_{T\in \msc{C}(\theta)} \gamma^{-h_{T} s^{\text{nr}}_{T}}\Big)\;,
\end{equation}
where we recall that $s_{T}^{\text{r}}$ is the number of maximal resonant clusters or vertices contained in $T$, and $s_{T}^{\text{nr}}$ is the number of non-resonant maximal clusters or vertices contained in $T$.  We will control the first factor in the right-hand side of (\ref{eq:split}) using part of the last factor in (\ref{eq:esttheta}). In fact, we have, calling $s^{\text{c,r}}_{T}$ the number of maximal resonant clusters contained in $T$:
\begin{equation}
\begin{split}
&\Big(\prod_{T\in \msc{C}(\theta)} \gamma^{-h_{T} (s^{\text{r}}_{T} - 1)}\Big) \Big(\prod_{T\in \msc{C}_{\text{r}}(\theta)} \gamma^{(h^{\text{ext}}_{T} - h_{T})}\Big) \\
&\quad \le \Big(\prod_{T\in \msc{C}(\theta)} \gamma^{-h_{T} s^{\text{r}}_{T}}\Big) \Big(\prod_{T\in \msc{C}_{\text{r}}(\theta)} \gamma^{h^{\text{ext}}_{T}}\Big) \\
&\quad = \gamma^{\varepsilon(\theta) h} \Big(\prod_{T\in \msc{C}(\theta)} \gamma^{-h_{T} s^{\text{r}}_{T}}\Big) \Big(\prod_{T\in \msc{C}(\theta)} \gamma^{h_{T} s^{\text{c,r}}_{T}}\Big) \\
&\quad = \gamma^{\varepsilon(\theta) h} \prod_{v\in \msc{N}_{\text{r}}(\theta)} \gamma^{-h_{v}}\;,
\end{split}
\end{equation}
where $\varepsilon(\theta) = 1$ if $\theta$ is resonant and $\varepsilon(\theta) = 0$ otherwise. Thus, plugging this bound in (\ref{eq:esttheta}) we obtain:
\begin{equation}\label{eq:thetabd3}
\begin{split}
&| \theta({\bm q}) | \le \gamma^{\varepsilon(\theta) h} K^{|\msc{N}(\theta)|} |\lambda|^{|\msc{N}_{\text{nr}}(\theta)|}  \Big(\prod_{v \in \msc{N}_{\text{r}}(\theta)} \zeta_{h_{v}}\Big) \\
&\cdot \Big(\prod_{v\in \msc{N}_{\text{nr}}(\theta)} e^{-c|n_{v}|} e^{-\frac{\tilde c}{2} \delta_{\omega_{v},-\omega_{v}} L_{2}}  \Big) \Big(\prod_{T\in \msc{C}(\theta)} \gamma^{-h_{T} s^{\text{nr}}_{T}}\Big) \Big(\prod_{T\in \msc{C}_{\text{r}}(\theta)} \gamma^{ (h^{\text{ext}}_{T} - h_{T})}\Big)\;.
\end{split}
\end{equation}
Hence, thanks to the renormalization procedure, the sum over the scale labels of the resonant clusters can be controlled by the last factor in (\ref{eq:thetabd3}). 
\begin{remark}
The above estimate makes it clear why the localization operation is defined up to order $1$ in the Taylor expansion in the momenta, recall Definition \ref{def:loc}: it is essential that the prefactor of $(h^{\text{ext}}_{T} - h_{T})$ in $\gamma^{2 (h^{\text{ext}}_{T} - h_{T})}$ in the bound (\ref{eq:esttheta}) is bigger than $1$, in order to control the sum over the scale labels of the resonant clusters, as we did above. 
\end{remark}
We are left with discussing the dependence on the scale labels associated with non-resonant clusters. Here, a key role will be played by the approximate Diophantine property (\ref{eq:diophantine}); it will allow us to use part of the second factor in the second line of (\ref{eq:thetabd3}) to overcompensate the bad third factor in (\ref{eq:thetabd3}). Let us define $\msc{N}_{\text{nr}}^{\text{d}}(\theta)$ as the set of diagonal non-resonant nodes, that is with the same $\omega_{v}, \omega'_{v}$ nodes, and $\msc{N}_{\text{nr}}^{\text{od}}(\theta)$ as the set of off-diagonal non-resonant nodes, that is with different $\omega_{v}, \omega'_{v}$ labels. We start by writing:
\begin{equation}
\Big(\prod_{v\in \msc{N}_{\text{nr}}^{\text{od}}(\theta)}e^{-\frac{\tilde c}{4} L_{2}}\Big) \Big(\prod_{v\in \msc{N}^{\text{d}}_{\text{nr}}(\theta)} e^{-\frac{c}{2}|n_{v}|}\Big) = \prod_{v\in \msc{N}_{\text{nr}}(\theta)} e^{-a_{v}}
\end{equation}
with 
\begin{equation}
a_{v} := (c/2) |n_{v}| \quad \text{if $v\in \msc{N}_{\text{nr}}^{\text{d}}(\theta)$} \quad \text{and}\quad  a_{v} = (\tilde c /4) L_{2}\quad  \text{if $v\in \msc{N}_{\text{nr}}^{\text{od}}(\theta)$.} 
\end{equation}
Then,
\begin{equation}\label{eq:diof1}
\begin{split}
\prod_{v\in \msc{N}_{\text{nr}}(\theta)} e^{-a_{v}} &= \prod_{v\in \msc{N}_{\text{nr}}(\theta)} \prod_{k = -\infty}^{0} e^{-2^{k} a_{v}} \\
&\le \prod_{k = -\infty}^{0} \prod_{\substack{v\in \msc{N}_{\text{nr}}(\theta) \\ v\in T\, \text{s.t.}\, h^{\text{ext}}_{T} = k}} e^{-2^{h^{\text{ext}}_{T}} a_{v} } \\
&\le \prod_{T \in \msc{C}_{\text{nr}}(\theta) \cup \msc{N}_{\text{nr}}(\theta)} e^{-2^{h^{\text{ext}}_{T}}a_{T}}
\end{split}
\end{equation}
where we set $a_{T} := \sum_{v\in T} a_{v}$. In the last inequality of (\ref{eq:diof1}), the product runs over non-resonant clusters and non-resonant endpoints, and with a slight abuse of notation we use the symbol $T$ to denote the elements of both sets. The presence of the non-resonant endpoints in the last product is due to the contributions of the clusters $T$ with external scale zero, which are by definition endpoints. Given a cluster $T$ in the last product in (\ref{eq:diof1}), suppose that $T$ contains at least one non-resonant off-diagonal node. Then:
\begin{equation}\label{eq:aT1}
\begin{split}
e^{-2^{h^{\text{ext}}_{T}}a_{T}} &\le e^{-2^{h^{\text{ext}}_{T}} \frac{\tilde c}{4} L_{2}} \\
&\le e^{-2^{h^{\text{ext}}_{T}} \tilde C \gamma^{-h^{\text{ext}}_{T}}}
\end{split}
\end{equation}
where we used that $\kappa L_{2} \ge \beta \ge C\gamma^{-h^{\text{ext}}_{T}}$. 

Next, suppose that $T$ does not contain any non-resonant off-diagonal  node. The difference between the incoming and the outgoing momenta of the cluster/node $T$ is bounded by $C \gamma^{h^{\text{ext}}_{T}}$. Since no off-diagonal node is present, we derive the following constraint for the sum over the momentum shifts associated with the diagonal non-resonant nodes:
\begin{equation}
\bigg| \sum_{v\in T \cap \msc{N}^{\text{d}}_{\text{nr}}(\theta)} n_{v} \alpha \bigg| \le C\gamma^{h^{\text{ext}}_{T}}\;.
\end{equation}
Let $n_{T}^{\text{d}} := \sum_{v\in T \cap \msc{N}_{\text{nr}}^{\text{d}}} n_{v}$. Being the cluster $T$ non-resonant, we must have $n_{T}^{\text{d}} \neq 0$. Suppose that $| n^{\text{d}}_{T} | \le L_{1} / 2$. Then, by the Diophantine condition (\ref{eq:diophantine}) we have:
\begin{equation}
\frac{c}{| n^{\text{d}}_{T} |^{\tau}} \le \bigg| \sum_{v\in T \cap \msc{N}^{\text{d}}_{\text{nr}}(\theta)} n_{v} \alpha \bigg|_{\mathbb{T}} \le C \gamma^{h^{\text{ext}}_{T}}\;,
\end{equation}
or equivalently:
\begin{equation}
|n^{\text{d}}_{T}| \ge K\gamma^{-h^{\text{ext}}_{T} / \tau}\;.
\end{equation}
Therefore, we get:
\begin{equation}\label{eq:aT2}
e^{-2^{h^{\text{ext}}_{T}}a_{T}} \le e^{-2^{h^{\text{ext}}_{T}} K \gamma^{-h^{\text{ext}}_{T} / \tau}}\;.
\end{equation}
Finally, suppose that $| n^{\text{d}}_{T} | > L_{1} / 2$. In this case, we cannot use the approximate Diophantine condition. We estimate:
\begin{equation}\label{eq:aT3}
\begin{split}
e^{-2^{h^{\text{ext}}_{T}} a_{T}} &\le e^{-2^{h^{\text{ext}}_{T}}\frac{c}{8} L_{1}} \\
&\le  e^{-2^{h^{\text{ext}}_{T}}C \gamma^{-h_{\beta}}} \\
&\le e^{-2^{h^{\text{ext}}_{T}}C \gamma^{-h^{\text{ext}}_{T}}}
\end{split}
\end{equation}
where in the second inequality we used that $\kappa L_{1} \ge \beta$, and in the last we used that $h^{\text{ext}}_{T} \ge h_{\beta}$. Putting together (\ref{eq:aT1}), (\ref{eq:aT2}), (\ref{eq:aT3}), we have:
\begin{equation}
e^{-2^{h^{\text{ext}}_{T}} a_{T}} \le e^{-2^{h^{\text{ext}}_{T}} C \gamma^{-h^{\text{ext}}_{T} / \tau}}\;.
\end{equation}
Using this estimate in (\ref{eq:diof1}), we find:
\begin{equation}
\begin{split}
\prod_{v\in \msc{N}_{\text{nr}}(\theta)} e^{-a_{v}} &\le \prod_{T \in \msc{C}_{\text{nr}}(\theta) \cup \msc{N}_{\text{nr}}(\theta)} e^{-2^{h^{\text{ext}}_{T}} C \gamma^{-h^{\text{ext}}_{T} / \tau}} \\
&\le e^{-(1 - \varepsilon(\theta))2^{h}\widetilde{C} \gamma^{-h / \tau}} \prod_{T\in \msc{C}(\theta)} e^{- s^{\text{nr}}_{T} 2^{h_{T}}\widetilde{C} \gamma^{-h_{T} / \tau}}\;.
\end{split}
\end{equation}
We will now use this bound in (\ref{eq:thetabd3}). We have:
\begin{equation}\label{eq:diof4}
\begin{split}
| \theta({\bm q}) | &\le \gamma^{h} K^{|\msc{N}(\theta)|}  |\lambda|^{|\msc{N}_{\text{nr}}(\theta)|}   \Big(\prod_{v \in \msc{N}_{\text{r}}(\theta)}  \zeta_{h_{v}}\Big)  \Big(\prod_{v\in \msc{N}_{\text{nr}}(\theta)} e^{-\frac{c}{2} |n_{v}|}e^{-\frac{\tilde c}{4} \delta_{\omega_{v},-\omega_{v}} L_{2}}  \Big) \\& \quad\cdot \Big(\prod_{T\in \msc{C}(\theta)} \gamma^{-h_{T} s^{\text{nr}}_{T}} e^{- s^{\text{nr}}_{T} 2^{h_{T}}\widetilde{C} \gamma^{-h_{T} / \tau}}\Big) \Big(\prod_{T\in \msc{C}_{\text{r}}(\theta)} \gamma^{ (h^{\text{ext}}_{T} - h_{T})}\Big)\;.
\end{split}
\end{equation}
To obtain this estimate, we used that:
\begin{equation}
e^{-(1 - \varepsilon(\theta))2^{h}\widetilde{C} \gamma^{-h / \tau}} \le C\gamma^{(1 - \varepsilon(\theta))h}\;;
\end{equation}
the factor $\gamma^{(1 - \varepsilon(\theta))h}$, combined with the factor $\gamma^{\varepsilon(\theta)}$ in (\ref{eq:thetabd3}), yields the overall $\gamma^{h}$ factor in (\ref{eq:diof4}). Next, the factor in the third product can be estimated as $C \gamma^{h_{T} s_{T}^{\text{nr}}}$, for a constant $C$ that only depends on $\gamma$. Therefore, since the number of clusters of $\msc{C}(\theta)$ is bounded by $|\msc{N}(\theta)|$, we have:
\begin{equation}
\begin{split}
| \theta({\bm q}) | &\le \gamma^{h} \widetilde{K}^{|\msc{N}(\theta)|}  |\lambda|^{|\msc{N}_{\text{nr}}(\theta)|}   \Big(\prod_{v \in \msc{N}_{\text{r}}(\theta)} \zeta_{h_{v}}\Big)  \Big(\prod_{v\in \msc{N}_{\text{nr}}(\theta)} e^{-\frac{c}{2} |n_{v}|}e^{-\frac{\tilde c}{4} \delta_{\omega_{v},-\omega_{v}} L_{2}}  \Big) \\& \quad\cdot  \Big(\prod_{T\in \msc{C}(\theta)} \gamma^{ (h^{\text{ext}}_{T} - h_{T})}\Big)\;,
\end{split}
\end{equation}
where we used that: 
\begin{equation}\label{eq:4116}
\prod_{T\in \msc{C}(\theta)} \gamma^{h_{T} s^{\text{nr}}_{T}} = \prod_{T\in \msc{C}_{\text{nr}}(\theta) \cup \msc{N}_{\text{nr}}(\theta)} \gamma^{h^{\text{ext}}_{T}} \le \prod_{T\in \msc{C}_{\text{nr}}(\theta) } \gamma^{h^{\text{ext}}_{T} - h_{T}}\;,
\end{equation}
recall that the scale of the non-resonant nodes is zero. This concludes the proof of (\ref{eq:chainbd}), for $n_{0} = n_{1} = 0$. The bound for the derivatives of $\theta({\bm q})$ can be proved in the same way, using the estimates (\ref{eq:singlescale}) for the derivatives of the propagator. We omit the details.
\end{proof}
\subsubsection{The flow of the running coupling constants}
To conclude the analysis of the effective potential, in this section we will control the flow of the running coupling constants. In particular, our goal will be to show the bounds (\ref{eq:indrcc}) on all scales greater or equal than $h$. This is the content of the next proposition.
\begin{proposition}[The flow of the beta function - Part 1]\label{prp:rcc1} Suppose that the estimates (\ref{eq:indrcc}) hold true for all scales greater or equal than $h+1$. Then:
\begin{equation}\label{eq:bdrcc}
\begin{split}
|v_{\nu,\omega,\omega',h} - v_{\nu,\omega,\omega',0}| &\le C|\lambda|^{2} e^{-\frac{\tilde c}{16} L_{2}\delta_{\omega,-\omega'}} \\
| \nu_{\omega,-\omega,h} - \gamma^{-h} \nu_{\omega,-\omega,0} | &\le C|\lambda|^{2}\gamma^{-h}e^{-\frac{\tilde c}{16} L_{2}}\;.
\end{split}
\end{equation}
\end{proposition}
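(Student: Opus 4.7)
The plan is to rewrite the differences on the left-hand sides of (\ref{eq:bdrcc}) as telescoping sums of beta-function contributions, via the iterated flow equations (\ref{eq:beta}):
\begin{equation*}
v_{\nu,\omega,\omega',h}-v_{\nu,\omega,\omega',0}=\sum_{k=h}^{-1}\beta^{v}_{\nu,\omega,\omega',k+1}\;,\qquad \nu_{\omega,-\omega,h}-\gamma^{-h}\nu_{\omega,-\omega,0}=\sum_{k=h}^{-1}\gamma^{k-h+1}\beta^{\nu}_{\omega,-\omega,k+1}\;.
\end{equation*}
Each beta function on the right-hand sides can be expressed, via the chain graph expansion of Section \ref{sec:chain}, as a sum over chains $\theta\in{\msc T}^{(k+1)}_{0;s,\omega,\omega'}$ containing at least one propagator of scale exactly $k+1$, evaluated either as $\partial_{q_\nu}\theta({\bm 0}_\beta)$ (for $\beta^v$) or as $\theta({\bm 0}_\beta)$ (for $\beta^\nu$). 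I would then bound each such chain using (\ref{eq:chainbd}) and the inductive hypothesis that (\ref{eq:indrcc}) holds at scales $\ge h+1$.

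The first observation is that any chain contributing to $\beta^{v,\nu}_{k+1}$ has at least two nodes, since a single-node chain has no internal propagator and hence cannot register the scale-$(k+1)$ integration. This forces the prefactor $|\lambda|^{|\msc N_{\text{nr}}(\theta)|}\prod_{v\in\msc N_{\text{r}}(\theta)}\zeta_{h_v}$ in (\ref{eq:chainbd}) to be of size at least $|\lambda|^2$, once the inductive bounds for the running coupling constants are used (the $\gamma^{-h_v}$ growth of the off-diagonal $\nu_{\omega,-\omega,h_v}$ is compensated by the overall $\gamma^{h}$ factor of the chain value at ${\bm 0}_\beta$). Furthermore, any chain contributing to an off-diagonal ($\omega'=-\omega$) beta function necessarily contains either an off-diagonal non-resonant node --- producing $e^{-\tilde c L_2/4}$ --- or an off-diagonal resonant node whose smallness from (\ref{eq:indrcc}) yields the required exponential decay $e^{-\tilde c L_2/16}$.

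The core technical step is the \emph{short-memory estimate}: one must extract from the factor $\prod_{T\in\msc C(\theta)}\gamma^{h^{\text{ext}}_T-h_T}$ in (\ref{eq:chainbd}) an additional $\gamma^{\vartheta(k+1)}$ smallness with some $\vartheta>0$, using that the root cluster $T^{*}$ of any chain contributing to $\beta^{v,\nu}_{k+1}$ satisfies $h_{T^{*}}\le k+1$ with strictly smaller external scale. Combining this gain with the Diophantine control (\ref{eq:diophantine}) on non-resonant frequency labels, exactly as in the proof of Proposition \ref{prp:chainbd}, would give bounds of the form
\begin{equation*}
|\beta^{v}_{\nu,\omega,\omega',k+1}|\le C|\lambda|^{2}\gamma^{\vartheta(k+1)}e^{-\frac{\tilde c}{16}L_2\delta_{\omega,-\omega'}}\;,\qquad |\beta^{\nu}_{\omega,-\omega,k+1}|\le C|\lambda|^{2}\gamma^{(\vartheta-1)(k+1)}e^{-\frac{\tilde c}{16}L_2}\;,
\end{equation*}
which upon insertion in the telescoping sums and standard geometric summation produce the two claimed estimates in (\ref{eq:bdrcc}); the $\gamma^{-(k+1)}$ factor in the $\beta^{\nu}$ bound reflects the relevance of $\nu_{\omega,-\omega}$ and combines with the explicit $\gamma^{k-h+1}$ prefactor to give the overall $\gamma^{-h}$ growth.

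The main obstacle is the short-memory estimate itself: extracting the $\gamma^{\vartheta(k+1)}$ gain uniformly in the cluster hierarchy requires revisiting the per-cluster bookkeeping from Proposition \ref{prp:chainbd}, and verifying that the required smallness can be absorbed into the existing dimensional factors without destroying the summability over non-resonant frequency labels or over the combinatorial sums on cluster structure. A secondary point to check is the consistency of the argument between diagonal and off-diagonal off-components, in particular that the $L_2$-exponential decay of the off-diagonal RCCs assumed at scales $\ge h+1$ indeed propagates to scale $h$ with the same rate $\tilde c/16$, which follows from the strict inequality between $\tilde c/4$ in (\ref{eq:indrcc}) and $\tilde c/16$ in the target bound.
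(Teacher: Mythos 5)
Your overall framework is right: telescope via the flow equations, represent each $\beta^{v,\nu}_{k+1}$ via the chain expansion, note that relevant chains carry at least two nodes and hence an $|\lambda|^{2}$ prefactor, and identify the off-diagonal $L_{2}$-exponential from the off-diagonal couplings. But the proposal has a genuine gap at precisely the point you yourself flag as the ``main obstacle,'' and the mechanism you propose to close it does not work.

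The missing $\gamma^{\vartheta(k+1)}$ smallness of $\beta^{v,\nu}_{k+1}$ cannot be extracted from the short-memory factor $\prod_{T\in\msc{C}(\theta)}\gamma^{h^{\text{ext}}_{T}-h_{T}}$ in (\ref{eq:chainbd}). That factor is fully consumed in making the sums over scale and frequency labels convergent, and gives no reserve: a chain contributing to $\beta^{v}_{k+1}$ can have \emph{all} its propagators at scale $k+1$, so the telescoping product of cluster differences is $O(1)$. If the short-memory product alone sufficed, the result would hold for arbitrary (non-Diophantine) $\alpha$, which is false. The paper's mechanism is different and involves two observations you do not use. First, a chain with all resonant nodes ($n_{v}\equiv 0$) \emph{vanishes identically}: the momentum on every internal line is ${\bm 0}_{\beta}^{\pm}$, which lies outside the support of $g^{(k+1)}_{\omega}$ for $k+1\geq h_{\beta}$. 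This, together with $\sum_{v}n_{v}=0$, forces at least two non-resonant nodes — this is what actually justifies $|\lambda|^{2}$, not merely ``two nodes.'' Second, and crucially, the scale-$(k+1)$ propagator supports a nonzero momentum $\bm{n}(v_{*})$ of magnitude $\lesssim\gamma^{k+1}$; after splitting into the case with an off-diagonal non-resonant node before $v_{*}$ (which yields $e^{-\tilde cL_{2}/4}\leq C_{m}\gamma^{m(k+1)}e^{-\tilde cL_{2}/16}$ using $\kappa L_{2}\geq\beta$) and the complementary case, the latter gives $\big|\sum_{v\le v_{*}}n_{v}\alpha\big|_{\mathbb{T}}\leq C\gamma^{k+1}$ with $\sum_{v\le v_{*}}n_{v}\neq 0$, and the Diophantine condition (\ref{eq:diophantine}) forces $\big|\sum_{v\le v_{*}}n_{v}\big|\geq K\gamma^{-(k+1)/\tau}$. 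This is then traded against the exponential decay in $|n_{v}|$ of the node kernels to yield a factor $e^{-\frac{c}{4}K\gamma^{-(k+1)/\tau}}\leq C_{m}\gamma^{m(k+1)}$, which is the source of the super-polynomial decay. Saying ``exactly as in the proof of Proposition~\ref{prp:chainbd}'' is not accurate: there the Diophantine gain is entirely spent controlling $\prod_{T}\gamma^{-h_{T}s_{T}^{\text{nr}}}$, while here a \emph{fresh} Diophantine argument must be run on the specific momentum shift $\bm{n}(v_{*})$ carried by the single-scale propagator, and the resulting factor is extracted in addition to the bound of Proposition~\ref{prp:chainbd}.
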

\begin{remark} Thus, to conclude the inductive check we are left with proving the bound for $\nu_{\omega,\omega,h}$. This will be done in Proposition \ref{prp:rcc2}, for a suitable choice of $\nu_{\omega}$.
\end{remark}
\begin{proof} Let us start from the bounds for the marginal terms, $v_{\nu,\omega,\omega',h}$. Recall the flow equation (\ref{eq:vflow})-(\ref{eq:beta}); the beta function can be represented in terms of chain graphs, as follows:
\begin{equation}\label{eq:betachain}
\beta^{v}_{\nu,\omega,\omega',h+1} = i^{\delta_{\nu},0}\sum_{s \ge 1} \frac{1}{s!} \sum_{\theta \in {\msc T}_{n;s, \omega,\omega'}^{(h+1)}} \text{d}_{\nu} \theta({\bm q})\Big|_{{\bm q} = {\bm 0}_{\beta}}\;.
\end{equation}
The argument of the sum could be estimated with (\ref{eq:chainbd}). This however would produce an estimate for $\beta^{v}_{\nu,\omega,\omega',h+1}$ that is not summable in $h$, and it would not allow to check the desired estimates on scale $h$. To obtain a better estimate, we observe that if all nodes of the tree $\theta$ contributing to the beta function are resonant, {\it i.e.} associated with $n_{v} = 0$, then the quasi-momentum flowing on every propagator on the chain is either ${\bm 0}_{\beta}^{+}$ or ${\bm 0}_{\beta}^{-}$. Since these quasi-momenta do not belong to the support of the propagators on scales $h+1$ with $h\ge h_{\beta}$, the corresponding chain graph is vanishing.

Thus, all the chain graphs contributing to the beta function must have at least one non-resonant node. In fact, since the outgoing and the incoming momentum of the whole chain have to be the same, there must be at least two non-resonant nodes. Also, the chain graphs contributing to $\beta^{v}_{\nu,\omega,\omega',h+1}$ must have at least one propagator on scale $h+1$; let us denote by $(v_{*}, v_{*}+1)$ the two nodes connected by $g^{(h+1)}_{\omega_{v_{*}}}$. By the Kirchhoff rule, the momentum flowing on this propagator is ${\bm n}(v_{*})$, and ${\bm n}(v_{*}) \neq 0$ otherwise the propagator is vanishing, by the support of the single-scale cutoff in the definition of $g^{(h+1)}$. 

Let us start by considering $\beta^{v}_{\nu,\omega,\omega,h+1}$. Let us rewrite:
\begin{equation}
\begin{split}
\sum_{\theta \in {\msc T}_{n;s, \omega,\omega}^{(h+1)}} \text{d}_{\nu} \theta({\bm q})\Big|_{{\bm q} = {\bm 0}_{\beta}} = \sum_{j=2}^{s-1}\sum_{\substack{\theta \in {\msc T}_{n;s, \omega,\omega}^{(h+1)} \\ v_{*} = j}} \text{d}_{\nu}\theta({\bm q})\Big|_{{\bm q} = {\bm 0}_{\beta}}\;,
 \end{split}
\end{equation}
and let us further split:
\begin{equation}\label{eq:AB}
\sum_{\substack{\theta \in {\msc T}_{n;s, \omega,\omega}^{(h+1)} \\ v_{*} = j}} \text{d}_{\nu}\theta({\bm q})\Big|_{{\bm q} = {\mb 0}_{\beta}} = \text{A} + \text{B}\;,
\end{equation}
where $\text{A}$ takes into account chains with no off-diagonal non-resonant vertex before $v_{*}$, and $\text{B}$ takes into account the contributions where we have at least  one off-diagonal non-resonant vertex before $v_{*}$. Let us estimate $\text{B}$. Using the bound (\ref{eq:chainbd}), we easily get:
\begin{equation}\label{eq:Best}
| \text{B} | \le C s! |\lambda|^{2} e^{-\frac{\tilde c}{4} L_{2}}\;;
\end{equation}
the factor $|\lambda|^{2}$ comes from the fact that we must have at least two non-resonant nodes. Consider now the term ${\text A}$. Here, there are no off-diagonal non-resonant vertices before $v_{*}$. Since all non-resonant vertices before $v_{*}$ are diagonal, by the support of the propagator on scale $h+1$,
\begin{equation}
\Big| \sum_{v\le v_{*}} n_{v} \alpha \Big| \le C\gamma^{h}\;.
\end{equation}
Suppose that $\big|\sum_{v\le v_{*}} n_{v}\big| \le L_{1} / 2$. Since $\sum_{v\le v_{*}} n_{v} \neq 0$ (otherwise the propagator $g^{(h+1)}$ connecting $v_{*}$ to $v_{*} + 1$ is vanishing), by the Diophantine condition (\ref{eq:diophantine}) we have:
\begin{equation}\label{eq:dioff}
\Big| \sum_{v\le v_{*}} n_{v} \Big| \ge K\gamma^{-h / \tau}\;.
\end{equation}
This will introduce an extra small factor in the estimate for the beta function. To see this, we estimate the first product in Eq. (\ref{eq:chainbd}) as:
\begin{equation}\label{eq:gainbeta}
\begin{split}
\Big( \prod_{v\in \msc{N}(\theta)} e^{-\frac{c}{2}|n_{v}|} \Big) &\le  \Big( \prod_{\substack{v\in \msc{N}(\theta) \\ v\le v_{*}}} e^{-\frac{c}{4}|n_{v}|} \Big) \Big( \prod_{v\in \msc{N}(\theta)} e^{-\frac{c}{4}|n_{v}|} \Big) \\
&\le e^{-\frac{c}{4} K \gamma^{-h/\tau}} \Big( \prod_{v\in \msc{N}(\theta)} e^{-\frac{c}{4}|n_{v}|} \Big)\;,
\end{split}
\end{equation}
where the second bound follows from (\ref{eq:dioff}). If $\Big|\sum_{v\le v_{*}} n_{v}\Big| > L_{1} / 2$ we proceed as after (\ref{eq:4116}), and also in this case we end up with the estimate (\ref{eq:gainbeta}). We then get:
\begin{equation}\label{eq:Aest}
|\text{A}| \le Cs! |\lambda|^{2}e^{-\frac{c}{4} K \gamma^{-h/\tau}}
\end{equation}
We are now ready to prove the estimate for the beta function. Putting together (\ref{eq:AB}), (\ref{eq:Best}), (\ref{eq:Aest}) we obtain, recalling that $L_{2} \ge C\beta$:
\begin{equation}\label{eq:betanu}
| \beta^{v}_{\nu, \omega,\omega,h} | \le K C_{m}|\lambda|^{2} \gamma^{m h}\;;
\end{equation}
The constant $K$ takes into account the bound over all the other running coupling constant, for $|\lambda|$ small enough, and the factor $|\lambda|^{2}$ comes from the fact that we must have at least two non-resonant nodes.

Consider now the case $\omega \neq \omega'$. Here, we must have at least one off-diagonal node, either resonant or non-resonant. If there is at least one non-resonant off-diagonal node, we extract a factor $e^{-\frac{\tilde c}{8} L_{2}}$ from the third product in (\ref{eq:chainbd}), which we further estimate as $e^{-\frac{\tilde c}{8} L_{2}} \le C_{m} \gamma^{m h} e^{-\frac{\tilde c}{16} L_{2}}$. Otherwise, if there are only resonant off-diagonal nodes, we extract the exponential factor from one of the running coupling constants, using the Diophantine gain discussed above. In fact:
\begin{equation}
 e^{-\frac{c}{8} K \gamma^{-h/\tau}} ( |v_{\nu, \omega_{v},-\omega_{v},h_{v}}| + | \nu_{\omega_{v},-\omega_{v},h_{v}} | ) \le C_{m} e^{-\frac{\tilde c}{16} L_{2}} |\lambda| \gamma^{(m-1) h}\;.
\end{equation}
Therefore, we obtained:
\begin{equation}
|\beta^{v}_{\nu,\omega,\omega',h+1}| \le KC_{m} |\lambda|^{2} \gamma^{m h} e^{-\frac{\tilde c}{16} L_{2}\delta_{\omega,-\omega'}}\;,
\end{equation}
which gives:
\begin{equation}
\begin{split}
| v_{\nu, \omega,\omega',h} - v_{\nu, \omega,\omega',0}  | &\le \sum_{k=h+1}^{0} KC_{m} |\lambda|^{2} \gamma^{m k} e^{-\frac{\tilde c}{16} L_{2}\delta_{\omega,-\omega'}} \\
&\le C|\lambda|^{2} e^{-\frac{\tilde c}{16} L_{2}\delta_{\omega,-\omega'}}\;.
\end{split}
\end{equation}
This proves the first of (\ref{eq:bdrcc}). Let us now prove the second of (\ref{eq:bdrcc}). Proceeding exactly as for the marginal terms, we estimate the beta function $\beta^{\nu}_{\omega,-\omega,h}$ as:
\begin{equation}\label{eq:betanuest}
| \beta^{\nu}_{\omega,-\omega,h} | \le KC_{m} |\lambda|^{2} \gamma^{m h} e^{-\frac{\tilde c}{16} L_{2}}\;.
\end{equation}
From the flow equation (\ref{eq:beta}) we obtain:
\begin{equation}
\nu_{\omega,-\omega,h} = \gamma^{-h} \nu_{\omega,-\omega,0} + \sum_{k=h+1}^{0} \gamma^{k - h} \beta^{\nu}_{\omega,-\omega,k}\;,
\end{equation}
which gives, thanks to (\ref{eq:betanuest}):
\begin{equation}
| \nu_{\omega,-\omega,h} - \gamma^{-h} \nu_{\omega,-\omega,0} | \le C|\lambda|^{2}\gamma^{-h}e^{-\frac{\tilde c}{16} L_{2}}\;.
\end{equation}
This proves the second of (\ref{eq:bdrcc}) and concludes the proof of Proposition \ref{prp:rcc1}.
\end{proof}
With the next proposition we conclude the control of the flow of the running coupling constants on scale $h$, by discussing the flow of the remaining relevant term.
\begin{proposition}[The flow of the beta function - Part 2]\label{prp:rcc2} There exists a choice of $\nu_{\omega} = O(\lambda)$ in $\mathbb{R}$ such that:
\begin{equation}\label{eq:nuest}
|\nu_{\omega,\omega,k}|\le C|\lambda| \gamma^{\theta h}\;,\qquad \text{for all $k \ge h$.}
\end{equation}
for $0<\theta < 1$.
\end{proposition}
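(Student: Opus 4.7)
The idea is to use the freedom in choosing the shift $\nu_\omega$ in (\ref{eq:shift}) to tune the relevant flow onto its stable manifold. Iterating the first line of (\ref{eq:beta}) backwards from scale $0$ yields
\begin{equation}
\nu_{\omega,\omega,h}=\gamma^{-h}\nu_{\omega,\omega,0}+\sum_{k=h+1}^{0}\gamma^{k-h}\beta^{\nu}_{\omega,\omega,k}\;,
\end{equation}
with $\nu_{\omega,\omega,0}=\nu_\omega+O(\lambda^2)$ thanks to the one-dimensional reduction of Section \ref{sec:1dred}. The $\gamma^{-h}$ prefactor is the usual signature of a relevant coupling, and to obtain a bounded orbit one must choose $\nu_\omega$ so that
\begin{equation}
\nu_{\omega,\omega,0}=-\sum_{k=h_\beta+1}^{0}\gamma^{k}\beta^{\nu}_{\omega,\omega,k}\;,
\end{equation}
in which case $\nu_{\omega,\omega,h}=-\sum_{k=h_\beta+1}^{h}\gamma^{k-h}\beta^{\nu}_{\omega,\omega,k}$.

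\textbf{Step 1: decay of the beta function.} Adapting the argument behind (\ref{eq:betanu}) in Proposition \ref{prp:rcc1}, I would establish that for every $m\in\mathbb{N}$ there is $C_m$ such that
\begin{equation}
|\beta^{\nu}_{\omega,\omega,k}|\leq K\,C_m\,|\lambda|^{2}\gamma^{m k}\;.
\end{equation}
The key observation is that any chain contributing to $\beta^{\nu}_{\omega,\omega,k}$ in which every $n_v=0$ has quasi-momentum ${\bm 0}_{\beta}$ on each internal line, which lies outside the support of the single-scale propagators $g^{(k+1)}_{\omega_{v}}$. Hence every nonzero chain contains at least two non-resonant nodes, and the Diophantine estimate (\ref{eq:gainbeta}) produces a gain $e^{-c\gamma^{-k/\tau}}\leq C_m\gamma^{m k}$, exactly as in (\ref{eq:betanu}).

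\textbf{Step 2: contraction mapping for $\nu_\omega$.} The stable-manifold equation for $\nu_{\omega,\omega,0}$ is implicit because each $\beta^\nu_{\omega,\omega,k}$ depends on the whole sequence $\{\nu_{\omega,\omega,j}\}_{j\geq k+1}$ through the resonant nodes of the chain expansion. I would solve it by a contraction argument in the Banach space
\begin{equation}
\mc B:=\Big\{\nu=(\nu_h)_{h=h_\beta}^{0}\subset\mathbb R\;:\;\|\nu\|:=\sup_{h_\beta\leq h\leq 0}\gamma^{-\theta h}|\nu_h|<\infty\Big\}\;,
\end{equation}
with $0<\theta<1$ fixed once and for all, equipped with the map
\begin{equation}
(\mc T \nu)_h := -\sum_{k=h_\beta+1}^{h}\gamma^{k-h}\beta^{\nu}_{\omega,\omega,k}[\nu]\;,
\end{equation}
where $\beta^{\nu}[\nu]$ denotes the beta function computed by substituting the sequence $\nu$ for $\nu_{\omega,\omega,\cdot}$ in the chain expansion, with all other running coupling constants fixed at the values given by Proposition \ref{prp:rcc1}. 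The bound of Step 1, applied with any $m>\theta$, shows that $\mc T$ maps the ball $\{\|\nu\|\leq C|\lambda|\}$ into itself. An analogous chain-graph analysis applied to $\beta^\nu[\nu]-\beta^\nu[\nu']$, with one resonant node replaced by the ``difference vertex'' $\nu_j-\nu_j'$, yields $\|\mc T\nu-\mc T\nu'\|\leq C|\lambda|\,\|\nu-\nu'\|$, so $\mc T$ is a contraction for $|\lambda|$ small enough. The unique fixed point gives the desired sequence, and $\nu_\omega=\nu_{\omega,\omega,0}+O(\lambda^2)$ is recovered from the one-dimensional reduction.

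\textbf{Main obstacle.} The delicate step is the Lipschitz bound for $\mc T$: a variation of $\nu_{\omega,\omega,j}$ at one scale enters the beta function on every lower scale through non-linear combinations in the chain graphs, and one must track this cascade without losing the Diophantine gain. I expect this to be handled by a chain-by-chain comparison along the lines of Proposition \ref{prp:chainbd}, in which the difference vertex plays the role of an additional non-resonant label, and the scale sum over its position is absorbed using the $\gamma^{mk}$ bound of Step 1.
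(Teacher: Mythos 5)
Your proposal follows essentially the same route as the paper: you rewrite the iterated flow with boundary datum at the deepest scale equal to zero (your ``stable manifold'' condition $\nu_{\omega,\omega,0}=-\sum_{k=h_\beta+1}^{0}\gamma^{k}\beta^{\nu}_{\omega,\omega,k}$ is algebraically equivalent to the paper's choice $\nu_{\omega,\omega,h_\beta}=0$ and yields the same map $\bm T$), prove the $\gamma^{mk}$ decay of $\beta^\nu_{\omega,\omega,k}$ from the two-non-resonant-node argument, set up a contraction on sequences satisfying $|\nu_h|\leq C|\lambda|\gamma^{\theta h}$, and finally invert $\nu_\omega\mapsto\nu_{\omega,\omega,0}$ using that this map is $1+O(\lambda^2)$ close to the identity. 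The only cosmetic differences are your choice of the weighted norm $\sup_h\gamma^{-\theta h}|\nu_h|$ instead of the paper's unweighted sup-norm on the same set, and the Lipschitz constant $C|\lambda|$ in place of the paper's sharper $C|\lambda|^2$; neither affects the argument.
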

\begin{proof} The proof is based on a fixed point argument. We rewrite the first of (\ref{eq:beta}) as:
\begin{equation}\label{eq:nu}
\begin{split}
\nu_{\omega,\omega,h+1} &= \gamma^{-1} \nu_{\omega,\omega,h} - \beta^{\nu}_{\omega,\omega,h+1} \\
&= \gamma^{h_{\beta} - h - 1} \nu_{h_{\beta}} - \sum_{k=h_{\beta} + 1}^{h+1} \gamma^{k - h - 1} \beta^{\nu}_{\omega,\omega,k}\;.
\end{split}
\end{equation}
We look at this equation as an equation for the sequence $\underline{\nu} = (\nu_{\omega,\omega,h})$ with $\omega = \pm$ and $h = h_{\beta} + 1, \ldots, 0$ in the space of sequences:
\begin{equation}\label{eq:defS}
\mc{S} = \big\{ \underline{\nu} \in \mathbb{R}^{2(|h_{\beta}|+1)} \, \big| \, |\nu_{\omega,\omega,k}| \le C|\lambda| \gamma^{\theta k} \big\}\;.
\end{equation}
Our goal will be to show that Eq. (\ref{eq:nu}) can be viewed as a fixed point equation for a contraction in the space $\mc{S}$. Then, the existence and the uniqueness of the solution will follow from Banach fixed point theorem. Observe that the beta function in (\ref{eq:nu}) depends on all running coupling constants; we will consider these as fixed except for $(\nu_{\omega,\omega,k})$,  and we suppose that they satisfy the bounds (\ref{eq:bdrcc}) on all scales between $h_{\beta}$ and $0$.

Given $\underline{\nu} \in \mc{S}$, let us define the map $\bm T: \mathbb{R}^{2(|h_{\beta}|+1)} \to \mathbb{R}^{2(|h_{\beta}|+1)}$:
\begin{equation}
\begin{split}
    \big(\bm T(\underline{\nu})\big)_{\omega,j} &:= \gamma^{h_{\beta} - j} \nu_{\omega,\omega,h_{\beta}} - \sum_{k=h_{\beta} + 1}^{j} \gamma^{k - j} \beta^{\nu}_{\omega,\omega,k}(\underline{\nu})\;,\quad j = h_{\beta} + 1, \ldots, 0\;,\\
    \big(\bm T(\underline{\nu})\big)_{\omega,h_\beta} &:= \nu_{\omega,\omega,h_\beta}\;.
\end{split}
\end{equation}
The map has to be though as parametrized by all the other running coupling constants, satisfying the estimates (\ref{eq:bdrcc}) on all scales. These estimates, combined with the estimates defining $\mc{S}$, imply the validity of the bound (\ref{eq:betanu}) for $\beta^{\nu}_{\omega,\omega,k}$.

We claim that the map $\bm{T}$ leaves $\mc{S}$ invariant. First of all, observe that, thanks to the symmetry (\ref{eq:realh}), the beta function is real, if all $(\nu_{\omega,\omega,k})$ are real. Then, we estimate:
\begin{equation}
\begin{split}
\big| \big(\bm{T}(\underline{\nu})\big)_{\omega,j} \big| &\le C\gamma^{h_{\beta} - j} |\lambda| \gamma^{\theta j} + \sum_{k = h_{\beta} + 1}^{j} \gamma^{k-j} C_{m} \gamma^{m k} |\lambda|^{2} \\
&\le C\gamma^{\theta h_{\beta}}\gamma^{(1 - \theta)(h_{\beta} - j)} |\lambda| + K_{m} \gamma^{m j} |\lambda|^{2}\;,
\end{split}
\end{equation}
where we used the bound (\ref{eq:betanuest}) to bound the beta function in the sum. Choosing $\gamma$ so that $\gamma^{-(1 - \theta)} \le 1/2$, and $|\lambda|$ small enough, we see that the right-hand side is bounded by $C|\lambda| \gamma^{\theta j}$ with the same $C$ as in (\ref{eq:defS}). This proves that $\bm{T}$ leaves $\mc{S}$ invariant. 

Let us now prove that $\bm{T}$ defines a contraction on subspace of $\mc{S}$ defined by $\nu_{\omega,\omega,h_\beta}=0$. Let $\underline{\nu}$, $\underline{\tilde\nu}$ be two sequences such subspace. We have:
\begin{equation}
\begin{split}
\big|\big(\bm{T}(\underline{\nu})\big)_{\omega,j} - \big(\bm{T}(\underline{\tilde\nu})\big)_{\omega,j}\big| \le  \sum_{k = h_{\beta} + 1}^{j} \gamma^{k-j} \big| \beta^{\nu}_{\omega,\omega,k}(\underline{\nu}) - \beta^{\nu}_{\omega,\omega,k}(\underline{\tilde\nu}) \big|\;.
\end{split}
\end{equation}
Observe that $\beta^{\nu}_{\omega,\omega,k}$ only depends on the running coupling constants labelled by $j \ge k$. The difference of the beta functions can be estimated as:
\begin{equation}
\big|\beta^{\nu}_{\omega,\omega,k}(\underline{\nu}) - \beta^{\nu}_{\omega,\omega,k}(\underline{\tilde\nu})\big| \le \sum_{\ell = k+1}^{0}\sum_{\omega = \pm} | \nu_{\omega,\omega,\ell} - \tilde \nu_{\omega,\omega,\ell} | \sup^{*}_{\underline{\nu}} \big|\partial_{\nu_{\ell}} \beta^{\nu}_{\omega,\omega,k}(\underline{\nu})\big|
\end{equation}
where the sup is taken over all the sequences such that $|\nu_{\omega,\omega,\ell}| \le 2C|\lambda| \gamma^{\theta \ell}$. Since the beta function is given by a convergent series in the $(\nu_{\omega,\omega,\ell})$ running coupling constants, the argument leading to the estimate (\ref{eq:betanuest}) can be easily adapted to prove:
\begin{equation}
\sup^{*}_{\underline{\nu}}\Big|\partial_{\nu_{\ell}} \beta^{\nu}_{\omega,\omega,k}(\underline{\nu})\Big| \le C_{m} |\lambda|^{2} \gamma^{m k}\;;
\end{equation}
the factor $|\lambda|^{2}$ follows from the fact that the chain graphs contributing to the beta function contain at least two non-resonant vertices. Therefore, we obtain:
\begin{equation}\label{eq:cont}
\begin{split}
&\big|\big(\bm{T}(\underline{\nu})\big)_{\omega,j} - \big(\bm{T}(\underline{\tilde\nu})\big)_{\omega,j}\big|\\
& \quad\le \sum_{k = h_{\beta} + 1}^{j} \gamma^{k-j}  \sum_{\ell = k+1}^{0} \sum_{\omega = \pm}\gamma^{\theta \ell} | \nu_{\omega,\omega,\ell} - \tilde \nu_{\omega,\omega,\ell} | C_{m} |\lambda|^{2} \gamma^{m (k - \ell)}
\end{split}
\end{equation}
where in the last step we used that $1\le \gamma^{\theta \ell} \gamma^{-m\ell}$ choosing $m \ge \theta$. Let us define the distance, for $0<\theta<1$:
\begin{equation}
\| \underline{\nu} - \tilde{\underline{\nu}} \|_{\infty} := \max_{\omega = \pm} \max_{h_{\beta} \le k\le 0} | \nu_{\omega,\omega,k} - \tilde \nu_{\omega,\omega,k} |\;.
\end{equation}
Then, Eq. (\ref{eq:cont}) implies, recalling that $j > h_{\beta}$:
\begin{equation}
\begin{split}
\big\| \bm{T}(\underline{\nu}) - \bm{T}(\underline{\tilde\nu})  \big\|_{\infty} &\le  C|\lambda|^{2} \| \underline{\nu} - \underline{\tilde \nu} \|_{\infty} \\
&< \| \underline{\nu} - \underline{\tilde \nu} \|_{\infty}\;,
\end{split}
\end{equation}
for $|\lambda|$ small enough. Thus, $\bm{T}(\cdot)$ defines a contraction in the subspace of $\mc{S}$ such that $\nu_{h_{\beta},\omega,\omega} = 0$. By Banach fixed point theorem, there exists a unique fixed point $\underline{\nu}^{*} \in \mc{S}$ of the map $\bm{T}$ on this subspace. By construction, the fixed point solves the recursion relation (\ref{eq:nu}), with initial datum $\nu^{*}_{\omega,\omega,0}$. Since $\nu_{\omega,\omega,0}$, defined in (\ref{eq:Ldef3}) with $h=0$, is an analytic function of $\nu_{\omega}$ such that $\partial_{\nu_{\omega}} \nu_{\omega,\omega,0} = 1 + O(\lambda^{2})$, by the implicit function theorem we can choose $\nu_{\omega} = O(\lambda)$ so that $\nu_{\omega,\omega,0} = \nu^{*}_{\omega,\omega,0}$. Hence, for this choice of $\nu_{\omega}$, the solution of the recursion equation (\ref{eq:nu}) satisfies the bound (\ref{eq:nuest}). This concludes the proof of Proposition \ref{prp:rcc2}.
\end{proof}
\begin{remark} In the previous proof, we could have replaced the choice $\nu_{\omega,\omega,h_\beta}=0$ by a more general choice such that $|\nu_{\omega,\omega,h_\beta}|\le C|\lambda|\gamma^{\theta h_\beta}$ with $0<\theta <1$; this would correspond to a slightly different value of $\nu_{\omega}$, and hence of the Fermi points, which however converge to the previous values as $\beta \to \infty$. We omit the details.
\end{remark}
The control of the flow of the running coupling constants concludes the construction of the effective potential on all scales. Next, we observe that the effective potential admits improved estimates.
\begin{proposition}[Improved bound for the effective potential] Let $\nu_{\omega}$ as in Proposition \ref{prp:rcc2}. Then, for $1>\theta>0$:
\begin{equation}\label{eq:Vimpro}
| \text{d}_{0}^{n_{0}} \text{d}_{1}^{n_{1}} V^{(h)}_{n;\omega,\omega'}({\bm q})| \le C_{n_{0},n_{1}}|\lambda| \gamma^{\theta h}\gamma^{h(1 - n_{0} - n_{1})} e^{-\frac{c}{4} |n|} e^{- \frac{\tilde c}{4} \delta_{\omega,-\omega'} L_{2}}\;.
\end{equation}
\end{proposition}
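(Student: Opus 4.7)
The plan is to sharpen the chain bound of Proposition \ref{prp:chainbd} using the improved running coupling constant bounds from Propositions \ref{prp:rcc1} and \ref{prp:rcc2}: we have $|\nu_{\omega,\omega,h_v}|\le C|\lambda|\gamma^{\theta h_v}$ for diagonal resonant nodes, while off-diagonal running coupling constants carry a factor $e^{-\tilde c L_{2}/16}\le C_M\gamma^{Mh_v}$ for any $M$, using $L_2\ge\kappa\beta\ge C\gamma^{-h_v}$. The strategy is a tree-by-tree case analysis on the chain expansion, aimed at extracting the additional factor $\gamma^{\theta h}$ claimed in (\ref{eq:Vimpro}).

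For each tree $\theta\in\msc T^{(h+1)}_{n;s,\omega,\omega'}$ contributing to the chain expansion, I would distinguish three cases. (A) If $\theta$ contains at least one non-resonant node, the Diophantine argument from the proof of Proposition \ref{prp:rcc1}, applied at the smallest-scale propagator of $\theta$, combined with (\ref{eq:diophantine}) yields $|\sum_v n_v|\ge K\gamma^{-h/\tau}$ on the relevant subsum, and the exponential decay $\prod_v e^{-c|n_v|/2}$ then produces the gain $e^{-cK\gamma^{-h/\tau}/4}\le C_M\gamma^{Mh}$. (B) If every node is resonant but at least one is off-diagonal, the corresponding $\zeta_{h_v}$ supplies $e^{-\tilde c L_2/16}\le C_M\gamma^{Mh}$. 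In either case, the chain carries a factor $\gamma^{Mh}$ for any fixed $M$, in particular beating $\gamma^{\theta h}$. (C) If all nodes are diagonal resonant, all Kirchhoff shifts vanish and every propagator carries the external momentum ${\bm q}$; since $g^{(h_e)}_\omega({\bm q})\equiv 0$ for $h_e\ge h+1$ and ${\bm q}$ of scale $\le h$, multi-node chains in case (C) drop out entirely, and the surviving single-node contributions are constant in ${\bm q}$.

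For derivatives $n_0+n_1\ge 1$, the chain expansion of $\partial_0^{n_0}\partial_1^{n_1}V^{(h)}({\bm q})$ then receives contributions only from (A) and (B), since case (C) single-node trees are constant in ${\bm q}$ and have vanishing ${\bm q}$-derivatives; each derivative costs $\gamma^{-h}$ via (\ref{eq:singlescale}), and combining with the case (A)/(B) improvement one gets $|\partial_0^{n_0}\partial_1^{n_1} V^{(h)}|\le C|\lambda|\gamma^{\theta h}\gamma^{h(1-n_0-n_1)}$. For $V^{(h)}({\bm q})$ itself, the value at ${\bm q}={\bm 0}_\beta$ coincides by construction with $\gamma^h\nu_{\omega,\omega,h}$, bounded by $C|\lambda|\gamma^{(1+\theta)h}$ via Proposition \ref{prp:rcc2}; the extension to general ${\bm q}$ is obtained from $V^{(h)}({\bm q})=V^{(h)}({\bm 0}_\beta)+{\bm q}\cdot\partial V^{(h)}({\bm 0}_\beta)+\mf R V^{(h)}({\bm q})$, estimating the linear term via $|\partial V^{(h)}({\bm 0}_\beta)|=|\beta^v_{\nu,\omega,\omega,h+1}|\le C_m|\lambda|^{2}\gamma^{mh}$ from (\ref{eq:betanu}) and the Taylor remainder via $\|{\bm q}\|^2\sup|\partial^2 V^{(h)}|\le \gamma^{2h}\cdot C|\lambda|\gamma^{(\theta-1)h}=C|\lambda|\gamma^{(1+\theta)h}$, using the bound on second derivatives just established. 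The main obstacle is the treatment of case (C): the key structural input is the observation that multi-node diagonal resonant chains are killed identically by the propagator cutoff mismatch, so the diagonal resonant sector reduces to the single $\gamma^h\nu_{\omega,\omega,h}$ contribution at ${\bm q}={\bm 0}_\beta$, whose improvement by $\gamma^{\theta h}$ is precisely the content of Proposition \ref{prp:rcc2}.
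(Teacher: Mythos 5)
Your strategy differs from the paper's, and unfortunately both of the branches of your case analysis that bear the main load have gaps.

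The central problem is your case (C). You claim that if all nodes are diagonal resonant, the propagator $g^{(h_e)}_\omega(\bm q)$ with $h_e\ge h+1$ vanishes identically for $\bm q$ in the support of $\psi^{(\le h)}$. This is false. The single-scale cutoff $f^{(h+1)}_\omega$ is supported roughly on $\delta\gamma^{h}\lesssim \|\bm q\|\lesssim\delta\gamma^{h+2}$, while the support of $\chi^{(\le h)}_\omega$ reaches up to $\|\bm q\|\lesssim\delta\gamma^{h+1}$; these overlap. So multi-node fully-resonant chains do \emph{not} drop out for general $\bm q$. The vanishing argument you have in mind is exactly the one the paper uses in the proof of Proposition~\ref{prp:rcc1}, but there the quasi-momentum is pinned at $\bm 0_\beta=(\pm\pi/\beta,0)$, whose modulus $\sim\pi/\beta\sim\gamma^{h_\beta}$ lies below the support of every $g^{(h')}$ with $h'>h_\beta$. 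That is specific to the beta-function evaluation and cannot be exported to $V^{(h)}(\bm q)$ at general $\bm q$. Since your handling of $n_0+n_1=0$ leans entirely on (C) collapsing to a single node, the argument for the undifferentiated bound does not go through; the Taylor-expansion step would then need to absorb multi-node resonant contributions that you have discarded by fiat.

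Your case (A) is also shakier than it looks. The Diophantine gain in the proof of Proposition~\ref{prp:rcc1} requires that the Kirchhoff shift $\sum_{v\le v_*}n_v$ at the smallest-scale propagator be nonzero; at $\bm q=\bm 0_\beta$ this is forced, because otherwise that propagator evaluates to zero. For generic $\bm q$ in the overlap region just described, the propagator $g^{(h+1)}_\omega(\bm q)$ can be nonzero even with zero Kirchhoff shift, so having one non-resonant node somewhere in the chain does not give you $|\sum_{v\le v_*}n_v|\ge K\gamma^{-h/\tau}$ at the lowest-scale line, and the $e^{-cK\gamma^{-h/\tau}/4}$ gain is not available in general.

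The paper's proof is both simpler and robust for general $\bm q$. It treats two cases: (i) fully resonant chains, for which it only uses the improved running-coupling bounds $\zeta_{h_v}\le C|\lambda|\gamma^{\theta h_v}$ (from Propositions~\ref{prp:rcc1}--\ref{prp:rcc2}) together with the observation that at least one propagator in the chain sits on scale $h$, forcing at least one node to have $h_v=h$, whence $\gamma^{\theta h}$ is extracted regardless of $\bm q$; and (ii) chains containing a non-resonant node, for which it exploits that non-resonant nodes live on scale $0$, so the cluster hierarchy spans scales $h$ to $0$, and splitting $\prod_T\gamma^{h^{\mathrm{ext}}_T-h_T}=\bigl(\prod_T\gamma^{\theta(h^{\mathrm{ext}}_T-h_T)}\bigr)\bigl(\prod_T\gamma^{(1-\theta)(h^{\mathrm{ext}}_T-h_T)}\bigr)$ telescopes the first factor to $\gamma^{\theta h}$ while the second factor still sums over scale labels since $1-\theta>0$. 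This completely sidesteps the Diophantine argument and the vanishing claim, and is the route you should adopt.
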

\begin{remark} That is, the bound (\ref{eq:Vimpro}) improves the estimate (\ref{eq:Vindh}) by a gain factor $\gamma^{\theta h}$.
\end{remark}
\begin{proof} The proof is based on a simple adaptation of the chain expansion discussed above; let us sketch it. Consider the graphs with only resonant nodes. For $\kappa L_{2} \ge \beta$, all running coupling constants can be bounded by $C|\lambda| \gamma^{\theta h}$; recall (\ref{eq:bdrcc}), (\ref{eq:nuest}). The factor $\gamma^{\theta h}$ in the estimate for the running coupling constants allows to obtain the desired dimensional gain in the estimate for the fully resonant graphs. Consider now the graphs with at least one non-resonant node (there has to be at least one non-resonant node if $n\neq 0$). Non-resonant nodes are on scale $0$: thus, the cluster stucture of the graphs contributing to the effective potential reaches scale $0$. For this reason, we can rewrite the last factor in (\ref{eq:chainbd}) as, for $0<\theta<1$:
\begin{equation}
\begin{split}
\prod_{T\in \msc{C}(\theta)} \gamma^{ (h^{\text{ext}}_{T} - h_{T})} &= \Big(\prod_{T\in \msc{C}(\theta)} \gamma^{ \theta (h^{\text{ext}}_{T} - h_{T})}\Big) \Big( \prod_{T\in \msc{C}(\theta)} \gamma^{ (1-\theta)(h^{\text{ext}}_{T} - h_{T})} \Big) \\
&\le \gamma^{\theta h} \Big( \prod_{T\in \msc{C}(\theta)} \gamma^{ (1-\theta)(h^{\text{ext}}_{T} - h_{T})} \Big)\;,
\end{split}
\end{equation}
where we used that the external scale of the deepest cluster is $h$ and the internal scale of the highest cluster is $0$, and we estimated the first factor as a telescopic product. This allows to extract a factor $\gamma^{\theta h}$ also for the graphs containing at least one non-resonant node; the remaining product allows to control the sum over the scale labels, since $1-\theta >0$. This concludes the proof.
\end{proof}
 Our next task will be to compute the two-point correlation function of the model; this will be discussed in the next section, via an adaptation of the multi-scale analysis used to construct the effective potential.

\subsection{The two-point function}\label{sec:2pt}
In this section we shall discuss how to adapt the previous multiscale analysis, to compute the two point correlation function (\ref{eq:2pt}). Being the state quasi-free, the computation of the two-point correlation function allows to determine all correlation functions of the model via Wick's rule. 
\subsubsection{The generating functional of correlations}
The generating functional of the correlation functions of the model is defined similarly to the partition function, after introducing an external Grassmann field. Let us define the configuration space Grassmann field as:
 \begin{equation}\label{eq:defrealspace}
       \psi^{\pm}_{\vec{\bm x},\sigma}:= \frac{1}{L\beta} \sum_{{\bm k} \in \msc D_{N, L,\beta}} e^{\mp i{\bm x}\cdot {\bm k}} \psi^{\pm}_{{\bm k},x_2,\sigma}\;,
 \end{equation}
with $\msc D_{N, L,\beta}$ as in Eq. (\ref{eq:DNLB}). The Euclidean two-point correlation function of the model is, for $0\le x_{0}, y_{0} < \beta$, $x_{0}\ne y_{0}$: 
\begin{equation}
\big\langle \timord \, \gamma_{x_0}(a_{{\vec x},\sigma}) ; \gamma_{y_0}(a_{\vec y,\zeta}^{*})\big\rangle_{\beta,\mu,L} = \lim_{N\to+\infty}\ms Z_{N}^{-1}\int \mathbb{P}_{N}(d\psi) \,\exp\big({\mc V_{N}( \psi)}\big)\psi^{-}_{\vec{\bm x},\sigma}\psi^{+}_{\vec{\bm y},\zeta}\;;
\end{equation}
for $x_{0} = y_{0}$, the identity holds adding $- (1/2)\delta_{\vec x,\vec y} \delta_{\sigma,\zeta}$ to the right-hand side. 
It is convenient to express the two-point correlation function in terms of the derivatives of the generating functional of correlations,
\begin{equation}
\exp\big(\mathcal{W}_{N}(\phi)\big) := \int \mathbb{P}_{N}(d\psi) \,\exp\big({\mc V_{N}( \psi) + (\psi, \phi)}\big)\;,
\end{equation}
with:
\begin{equation}\label{eq:source}
(\psi, \phi) := \sum_{\sigma = 1}^{M} \sum_{\vec{ x} \in \Lambda_{L}} \int_{0}^{\beta} dx_0\, (\phi^+_{\vec{\bm x},\sigma} \psi_{\vec{\bm x},\sigma}^- +  \psi^+_{\vec{\bm x},\sigma}\phi^-_{\vec{\bm x},\sigma})\;,
\end{equation}
and where $\phi^{\pm}$ is an external Grassmann field, which can also be written as in (\ref{eq:defrealspace}) in terms of suitable momentum space fields $\phi^{\pm}_{{\bm k}, x_{2}, \sigma}$. Then, the two-point correlation function (at non-coinciding space-time points) is:
\begin{equation}\label{eq:2ptgrassman}
\big\langle \timord \, \gamma_{x_0}(a_{{\vec x},\sigma}) ; \gamma_{y_0}(a_{\vec y,\zeta}^{*})\big\rangle_{\beta,\mu,L} = \lim_{N\to \infty} \frac{\partial^{2}}{\partial \phi^{+}_{\vec {\bm x}, \sigma} \partial \phi^{-}_{\vec {\bm y}, \zeta}} \mathcal{W}_{N}(\phi)\Big|_{\phi = 0}\;.
\end{equation}
Our goal will be to compute the generating functional of correlations, adapting the multiscale analysis developed for the partition function. Proceeding as in Section \ref{sec:massive}, we end up with:
\begin{equation}
\exp\big(\mathcal{W}_{N}(\phi)\big) = z_{\mr{b}}(\phi) \int \mathbb{P}_{\mr{e}}(d\psi^{(\mr{e})}) \exp\big(\mathcal{V}^{(\mr{e})}(\psi^{(\mr{e})}) + \mathcal{W}^{(\mr{e})}(\psi^{(\mr{e})}, \phi)\big)\;,
\end{equation}
with the following differences with respect to the analysis of the effective potential.
\begin{itemize}
\item[(i)] The prefactor $z_{\mr{b}}(\phi)$ is:
\begin{equation}
\begin{split}
z_{\mr{b}}(\phi) &= z_{\mr{b}} \exp\big( \phi^{+}, W^{(\mr{b})} \phi^{-} \big) \\
\big( \phi^{+}, W^{(\mr{b})} \phi^{-} \big) &= \int_{0}^{\beta} dx_{0} dy_{0}\, \sum_{\vec x, \vec y \in \Lambda_{L}} \sum_{\sigma, \zeta} \phi^{+}_{\vec {\bm x}, \sigma} W^{(\mr{b})}_{\sigma,\zeta}(\vec {\bm x}, \vec {\bm y}) \phi^{-}_{\vec {\bm y}, \zeta}\;,
\end{split}
\end{equation}
where $W^{(\mr{b})}$ is analytic in $\lambda$ for $|\lambda|$ small enough and it satisfies:
\begin{equation}\label{eq:2ptWb}
| W^{(\mr{b})}_{\sigma, \zeta}(\vec {\bm x}, \vec {\bm y}) | \le Ce^{-c\| \vec {\bm x} - \vec {\bm y} \|}\;;
\end{equation}
\item[(ii)] $\mathcal{W}^{(\mr{e})}(\psi, \phi)$ has the form:
\begin{equation}
\begin{split}
\mc W^{(\mr e)}( \psi^{(\mr e)},\phi ) &= \int_{0}^{\beta} d x_0\, d  y_0  \sum_{\vec x, \vec y \in \Lambda_{L}} \sum_{\sigma, \zeta} \Big( \phi^+_{\vec{\bm x},\sigma}W^{(\mr e)}_{\phi\psi; \sigma,\zeta}(\vec{\bm x},\vec{\bm y}) \psi^{(\mr e)-}_{\vec{\bm y},\zeta}\\
&\qquad +\psi^{(\mr e)+}_{\vec{\bm x},\sigma}W^{(\mr e)}_{\psi\phi; \sigma,\zeta}(\vec{\bm x},\vec{\bm y})\phi^-_{\vec{\bm y},\zeta}\Big)\;,
\end{split}
\end{equation}
where the kernels $W^{(\mr e)}$ are analytic in $\lambda$ for $|\lambda|$ small enough and decay exponentially, as in (\ref{eq:2ptWb}).
\end{itemize}
The analyticity and the exponential decay of the kernels is proved via a chain expansion produced by the integration of the massive field, as discussed in Section \ref{sec:massive}. Next, we parametrize the field $\psi^{(\mr{e})}$ in terms of the boundary fields, as discussed in Section \ref{sec:1dred}, and we shift the Fermi momentum and integrate the first scale as in Section \ref{sec:firstscale}. In these steps, the presence of the external field does not play any significant role: it simply introduces a different type of external line in the chain expansions, associated with $\phi^{\pm}$, which cannot be contracted on smaller scales.

Thus, we end up with the following representation of the generating functional of correlations:
\begin{equation}
\exp\big(\mathcal{W}_{N}(\phi)\big) = z_{>0}(\phi) \int \mathbb{P}_{(\le 0)}(d\psi^{(\le 0)}) \exp\big(\mathcal{V}^{(0)}(\psi^{(\le 0)}) + \mathcal{W}^{(0)}(\psi^{(\le 0)}, \phi)\big)\;,
\end{equation}
where:
\begin{itemize}
\item[(i)] the overall prefactor is:
\begin{equation}\label{eq:2ptz>}
z_{>0}(\phi) = \tilde z_{\mr{b}} \exp\big( \phi^{+} , W^{(>0)} \phi^{-} \big)\;,
\end{equation}
with $W^{(>0)} = W^{(\mr{b})} + W^{(1)}$, and
\begin{equation}
\big( \phi^{+} , W^{(1)} \phi^{-} \big) = \int_{0}^{\beta} dx_{0} dy_{0}\, \sum_{\vec x, \vec y \in \Lambda_{L}} \sum_{\sigma, \zeta} \phi^{+}_{\vec {\bm x}, \sigma} W^{(1)}_{\sigma,\zeta}(\vec {\bm x}, \vec {\bm y}) \phi^{-}_{\vec {\bm y}, \zeta}\;.
\end{equation}
The kernels satisfy the bound:
\begin{equation}
\big| W^{(1)}_{\sigma,\zeta}(\vec {\bm x}, \vec {\bm y}) \big| \le C e^{-c\| {\bm x} - {\bm y} \|} e^{-\tilde c |x_{2}|_{L}}\;,
\end{equation}
where we introduced the notation:
\begin{equation}
|x_{2}|_{L} := \min(x_{2}, L_{2}-x_{2})\;.
\end{equation}
The kernel $W^{(1)}$ is expressed in terms of chain diagrams constructed using the propagator $g^{(1)}$ in (\ref{eq:g1}).
\item[(ii)] The effective potential $\mathcal{W}^{(\le 0)}(\psi^{(\le 0)}, \phi)$ has the form:
\begin{equation}\label{eq:2ptW0}
\begin{split}
\mc W^{(0)}( \psi^{(\le 0)},\phi ) &= \frac{1}{\beta L_{1}} \sum_{\substack{n, {\bm k} \\ x_{2},\omega,\sigma}} \Big( \phi^+_{{\bm k}, x_{2}, \sigma}  W^{(0)}_{\phi\psi;n,\omega,\sigma} ( {\bm k} , x_{2}) \psi^{(\le 0)-}_{ {\bm k}   + n{\bm \alpha}, \omega}\\
&\qquad + \psi^{(\le 0)+}_{{\bm k}  , \omega}  W^{(0)}_{\psi\phi;n,\omega,\sigma} ( {\bm k}  , x_{2}) \phi^{-}_{ {\bm k} + n{\bm \alpha}, x_{2},\sigma}\Big)\\
&\equiv \frac{1}{\beta L_{1}} \sum_{\substack{n, {\bm k} \\ x_{2},\omega,\sigma}} \Big( \phi^+_{{\bm k}, x_{2}, \sigma}  W^{(0)}_{\phi\psi;n,\omega,\sigma} \big(\bm{q}(\bm{k}), x_{2}\big) \psi^{(\le 0)-}_{ \bm{q}(\bm{k}) + n{\bm \alpha}, \omega}\\
&\qquad + \psi^{(\le 0)+}_{\bm{q}(\bm{k}) , \omega}  W^{(0)}_{\psi\phi;n,\omega,\sigma} \big(\bm{q}(\bm{k}) , x_{2}\big) \phi^{-}_{ {\bm k} + n{\bm \alpha}, x_{2},\sigma}\Big) \;;
    \end{split}
\end{equation}
 we recall $\bm{q}\equiv \bm q(\bm{k})=\bm{k}-\bm{k}_F^\omega$ as in (\ref{eq:coordinatechange}). The two kernels are related by:
\begin{equation}
\overline{W^{(0)}_{\phi\psi;n,\omega,\sigma} ( {\bm q}, x_{2})} = W^{(0)}_{\psi\phi;-n,\omega,\sigma} \big( (-q_{0}, q_{1} + n\alpha), x_{2}\big)\;,
\end{equation} 
and they satisfy the estimate:
\begin{equation}\label{eq:w0phipsi}
\Big| \text{d}_{q_{0}}^{n_{0}} \text{d}_{q_{1}}^{n_{1}} W^{(0)}_{\phi\psi;n,\omega,\sigma} ( {\bm q}, x_{2}) \Big| \le C_{n_{0}, n_{1}} |\lambda|^{\delta_{n\neq 0}} e^{-c|n|} e^{-\tilde c |x_{2}|_{\omega}}\;,
\end{equation}
with $|x_{2}|_{\omega}$ defined in (\ref{eq:x2omega}).
\end{itemize}
The plan will be to integrate scale-by-scale the generating functional of the correlations, as we did for the effective potential, and to understand the recursion relation satisfied by the new kernels involving $\phi^{\pm}$ external lines.

To see this, suppose that the generating functional of correlations can be written as:
\begin{equation}\label{eq:indgen}
\begin{split}
&\exp\big(\mathcal{W}_{N}(\phi)\big) \\&\quad = z_{(>h)}(\phi) \int \mathbb{P}_{(\le h)}(d\psi^{(\le h)}) \exp\big(\mathcal{V}^{(h)}(\psi^{(\le h)}) + \mathcal{W}^{(h)}(\psi^{(\le h)}, \phi)\big)\;,
\end{split}
\end{equation}
for suitable $z_{(>h)}(\cdot)$, $ \mathcal{W}^{(h)}(\cdot)$, similarly to (\ref{eq:2ptz>}), (\ref{eq:2ptW0}). In particular, we assume that:
\begin{equation}
z_{(>h)}(\phi) = z_{(>h)} \exp(\phi^{+}, W^{(>h)} \phi^{-})\;,
\end{equation}
with $z_{(>h)}$ as in Section \ref{sec:iterative}, and:
\begin{equation}\label{eq:indgen3}
\begin{split}
(\phi^{+}, W^{(>h)} \phi^{-}) &= \sum_{j=h+1}^{0}\frac{1}{\beta L_{1}} \sum_{n, \bm{k}} \phi^{+}_{\bm k, x_{2},\sigma} W_{\phi\phi;n,\sigma,\zeta}^{(j)}({\bm k}; x_{2}, y_{2}) \phi^{-}_{\bm k + n{\bm \alpha}, y_{2},\zeta}\\&\quad + (\phi^{+}, W^{(>0)} \phi^{-})\;,
\end{split}
\end{equation}
for suitable kernels $W^{(k)}$ to be determined inductively. To set up the integration of the single-scale, we proceed as in Section \ref{sec:iterative}. We have:
\begin{equation}\label{eq:tildeP2pt}
\begin{split}
&\exp\big(\mathcal{W}_{N}(\phi)\big) \\
&= z_{(>h)}(\phi) \tilde z_{h} \int \widetilde P_{(\le h)}(d\psi^{(\le h)}) \exp \big(\widetilde{ \mathcal{V}}^{(h)}(\psi^{(\le h)}) + \mathcal{W}^{(h)}(\psi^{(\le h)}, \phi)\big)\;,
\end{split}
\end{equation}
where:
\begin{equation}\label{eq:tildeVh}
\begin{split}
\widetilde{\mathcal{V}}^{(h)}(\psi^{(\le h)}) &= (\mf L_{0} + \mf L_{1;\text{od}} + \mf R) \mathcal{V}^{(h)}(\psi^{(\le h)}) \\
&= \frac{1}{\beta L_{1}} \sum_{\substack{n, {\bm q}, \\ \omega, \omega'}} \widetilde{V}_{n;\omega,\omega'}^{(h)}({\bm q}) \psi_{{\bm q},\omega}^{(\le h)+}  \psi_{{\bm q} + n {\bm \alpha},\omega'}^{(\le h)-}\;.
\end{split}
\end{equation}
Let us discuss the structure of $z_{(>h-1)}$ and of $\mathcal{W}^{(\le h-1)}$, which are obtained after integrating the scale $h$. These objects collect respectively all chain graphs with both external lines given by $\phi$ fields, and all chain graphs having one external line given by a $\phi$ field and a $\psi$ field. We have:
\begin{equation}
z_{(>h-1)}(\phi) = z_{(>h)}(\phi) \tilde z_{h} z_{h} \exp{(\phi^{+}, W^{(h)} \phi^{-})}
\end{equation}
where
\begin{equation}
(\phi^{+}, W^{(h)} \phi^{-}) = \frac{1}{\beta L_{1}} \sum_{n, {\bm k}} \phi^{+}_{\bm k, x_{2},\sigma} W_{\phi\phi;n,\sigma,\zeta}^{(h)}({\bm k}; x_{2}, y_{2}) \phi^{-}_{\bm k + n{\bm \alpha}, y_{2},\zeta}\;;
\end{equation}
the kernels satisfy the following recursion relation, again ${\bm q} = {\bm k} - {\bm k}_{F}^{\omega}$:
\begin{equation}\label{eq:recphi2}
\begin{split}
&W_{\phi\phi;n,\sigma,\zeta}^{(h)}({\bm k}; x_{2}, y_{2}) \\
&= \sum_{m,\omega} W^{(h)}_{\phi\psi;m,\omega,\sigma} ( {\bm q}, x_{2}) g^{(h)}_{\omega}({\bm q} + m{\bm \alpha}) W^{(h)}_{\psi\phi;-m + n,\omega,\zeta} ( {\bm q} + m{\bm \alpha}, y_{2}) \\
& \quad + \sum_{\substack{m,m' \\ \omega,\omega'}}W^{(h)}_{\phi\psi;m,\omega,\sigma} ( {\bm q}, x_{2}) g^{(h)}_{\omega}({\bm q} + m{\bm \alpha}) V^{(h-1)}_{m';\omega,\omega'}({\bm q} + m{\bm \alpha} ) \\
&\qquad\quad \cdot g^{(h)}_{\omega'}\big({\bm q} + (m + m'){\bm \alpha}\big) W^{(h)}_{\psi\phi;-m - m' + n,\omega',\zeta} \big( {\bm q} + (m + m'){\bm \alpha}, y_{2}\big)\;,
\end{split}
\end{equation}
and:
\begin{equation}\label{eq:rec2}
\begin{split}
&W^{(h-1)}_{\phi\psi;n,\omega,\sigma} ( {\bm q}, x_{2}) \\
&\quad = W^{(h)}_{\phi\psi;n,\omega,\sigma} ( {\bm q}, x_{2})\\
&\qquad + \sum_{m,\omega'} W^{(h)}_{\phi\psi;m,\omega',\sigma} ( {\bm q}, x_{2}) g^{(h)}_{\omega'}({\bm q} + m{\bm \alpha}) V^{(h-1)}_{n-m;\omega',\omega}({\bm q} + m {\bm \alpha})\;,
\end{split}
\end{equation}
with $V^{(h-1)}$ the effective potential on scale $h-1$. See Appendix \ref{app:2pt} for the proof of (\ref{eq:recphi2}), (\ref{eq:rec2}).
\begin{remark}
In the generating functional, the kernels $W^{(h)}_{\phi\psi;n,\omega}({\bm q})$, respectively $W^{(h)}_{\phi\psi;-m+n,\omega}({\bm q} + m{\bm \alpha})$, are multiplied by fields $\psi^{(\le h)-}_{{\bm q} + m{\bm \alpha}.\omega}$, respectively $\psi^{(\le h)+}_{{\bm q} + m{\bm \alpha},\omega}$. These are not defined for ${\bm q} + m{\bm \alpha}$ outside of the support of the corresponding propagator. Nevertheless, the chain expansion for the kernels and the recursion relation (\ref{eq:rec2}) make sense for all ${\bm q}$ and for all $n$. In what follows, we will consider the $W^{(h)}_{\phi\psi;n,\omega}({\bm q})$ as defined for all ${\bm q}$ and for all $n$.
\end{remark}
The next proposition allows to bound the solution of the recursion relation (\ref{eq:rec2}).
\begin{proposition}[Flow of the $\phi\psi$-kernels]\label{prp:phipsi} For $|\lambda|$ small enough, the following is true:
\begin{equation}\label{eq:nnw}
\Big| W^{(h-1)}_{\phi\psi;n,\omega,\sigma} ( {\bm q}, x_{2}) \Big| \le C |\lambda|^{\delta_{n\neq 0}} e^{-\frac{c}{8}|n|} e^{- \frac{\tilde c}{32}|x_{2}|_{\omega}}\;.
\end{equation}
Also, for $k\ge h$:
\begin{equation}\label{eq:diffW}
\big| W^{(h-1)}_{\phi\psi;n,\omega,\sigma} ( {\bm q}, x_{2})  - W^{(k)}_{\phi\psi;n,\omega,\sigma} ( {\bm q}, x_{2})  \big| \le K |\lambda|^{\delta_{n\neq 0}} \gamma^{\theta k} e^{-\frac{c}{8}|n|} e^{- \frac{\tilde c}{32}|x_{2}|_{\omega}}\;.
\end{equation}
Furthermore, for $n_{0} + n_{1} =1$:
\begin{equation}\label{eq:derW}
\big| \text{d}_{q_{0}}^{n_{0}} \text{d}_{q_{1}}^{n_{1}}W^{(h-1)}_{\phi\psi;n,\omega,\sigma} ( {\bm q}, x_{2}) \big| \le C \gamma^{(\theta - 1)h }  |\lambda|^{\delta_{n\neq 0}} e^{-\frac{c}{8}|n|} e^{- \frac{\tilde c}{32}|x_{2}|_{\omega}}\;.
\end{equation}
\end{proposition}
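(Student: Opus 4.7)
The strategy is to iterate the recursion (\ref{eq:rec2}) jointly with the chain expansion of $V^{(k)}$ from Section \ref{sec:chain}, so as to reduce Proposition \ref{prp:phipsi} to an estimate of the same type as Proposition \ref{prp:chainbd}. Iterating (\ref{eq:rec2}) from scale $0$ down to $h-1$ expresses $W^{(h-1)}_{\phi\psi;n,\omega,\sigma}({\bm q}, x_2)$ as a sum of chain graphs in which a distinguished ``boundary'' vertex on the left carries the scale-zero kernel $W^{(0)}_{\phi\psi;n_0,\omega_0,\sigma}({\bm q}, x_2)$ of (\ref{eq:2ptW0})-(\ref{eq:w0phipsi}), connected via alternating single-scale propagators $g^{(h_{e_j})}_{\omega_{j-1}}$ and effective-potential nodes $N^{(h_{v_j})}_{n_j;\omega_{j-1},\omega_j}$ as in (\ref{eq:defNv})-(\ref{eq:nodesh+1}), and terminating on the right with the external $\psi^{(\le h-1)}$-line, subject to the frequency balance $\sum_{j=0}^{s} n_j = n$. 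The resulting graphs inherit a cluster structure to which the localization-renormalization apparatus of Section \ref{sec:chain} applies identically, with the understanding that the boundary kernel $W^{(0)}_{\phi\psi}$ is a special endpoint pinned at scale $0$ and not subject to further $\mf R$ action.

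To prove (\ref{eq:nnw}), I would bound each chain via (\ref{eq:w0phipsi}) at the boundary ($|\lambda|^{\delta_{n_0\neq 0}} e^{-c|n_0|} e^{-\tilde c|x_2|_\omega}$), the dimensional bound (\ref{eq:singlescale}) on propagators, and the improved bound (\ref{eq:Vimpro}) on the effective-potential nodes. The internal chain has net scaling dimension zero: the overall $\gamma^h$ prefactor appearing in (\ref{eq:chainbd}) is absent here because one external $\psi$-leg is replaced by the scale-$0$ $\phi$-leg, which effectively changes the dimensional balance by $\gamma^{h}$. Cluster renormalization produces the factor $\prod_T \gamma^{h_T^{\text{ext}}-h_T}$ and the Diophantine compensation for non-resonant clusters produces $\prod_T e^{-2^{h_T^{\text{ext}}} C\gamma^{-h_T^{\text{ext}}/\tau}}$, which together ensure summability over scale and frequency labels. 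Redistributing a fixed fraction of $\prod e^{-c|n_j|}$ and of $e^{-\tilde c|x_2|_\omega}$ to absorb respectively the Diophantine compensation and the off-diagonal losses $e^{-\tilde c L_2/4}$ (possible because $L_2 \ge \kappa\beta$) yields the final exponents $c/8$ and $\tilde c/32$. For (\ref{eq:diffW}), I would telescope $W^{(h-1)}_{\phi\psi}-W^{(k)}_{\phi\psi} = \sum_{j=h}^{k}[W^{(j-1)}_{\phi\psi}-W^{(j)}_{\phi\psi}]$: each summand is the correction in (\ref{eq:rec2}), containing at least one vertex $V^{(j-1)}$ whose improved bound (\ref{eq:Vimpro}) contributes a factor $|\lambda|\gamma^{\theta j}$ after compensating the propagator dimension $\gamma^{-j}$ against the vertex scaling $\gamma^{j}$, and the geometric sum in $j \ge k$ converges to $\gamma^{\theta k}$.

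For the derivative bound (\ref{eq:derW}), a momentum derivative acts on one factor of the chain, producing either an extra $\gamma^{-h_*}$ with $h_* \ge h$ on a single-scale propagator, or a bounded correction on the boundary kernel $W^{(0)}_{\phi\psi}$; splitting the cluster gains $\gamma^{h_T^{\text{ext}}-h_T} = \gamma^{(1-\theta)(h_T^{\text{ext}}-h_T)}\gamma^{\theta(h_T^{\text{ext}}-h_T)}$ as in the proof of (\ref{eq:Vimpro}) and telescoping the second factor extracts an overall $\gamma^{\theta h}$, which combined with $\gamma^{-h_*}\le \gamma^{-h}$ gives $\gamma^{(\theta-1)h}$. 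The main obstacle I anticipate is the consistent handling of the boundary endpoint: $W^{(0)}_{\phi\psi}$ is itself generated by a chain of bulk propagators at scale $0$, and its intrinsic cluster structure (on scale $0$, contributing the exponential decay $e^{-\tilde c|x_2|_\omega}$) must be kept separate from the edge-chain clusters on scales $k \le 0$ that undergo localization-renormalization; verifying that no new resonances are created at the interface, and that the ${\mf L}, {\mf R}$ operators act coherently only on the internal edge chain, requires care. The fine tuning of the final exponents $c/8, \tilde c/32$ then amounts to routine but somewhat lengthy bookkeeping of the repeated fractional extractions used to absorb off-diagonal and Diophantine compensations.
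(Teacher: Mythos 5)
Your second paragraph is essentially the paper's argument, and once you have it you never need the heavier first paragraph: the proof of~(\ref{eq:diffW}) by telescoping
\[
W^{(h-1)}_{\phi\psi}-W^{(k)}_{\phi\psi}=\sum_{j=h}^{k}\bigl(W^{(j-1)}_{\phi\psi}-W^{(j)}_{\phi\psi}\bigr),
\]
where each increment is the correction in~(\ref{eq:rec2}), bounded via~(\ref{eq:Vimpro}) applied to $V^{(j-1)}$ (which yields $\gamma^{\theta j}$ after compensating $g^{(j)}$), is exactly what the paper does. You should observe that~(\ref{eq:nnw}) \emph{follows} from~(\ref{eq:diffW}) at $k=0$ together with the scale-zero bound~(\ref{eq:w0phipsi}), and~(\ref{eq:derW}) is proved by the same induction differentiating~(\ref{eq:rec2}), choosing $\gamma^{\theta-1}\le 1/2$ to close the recursion. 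The paper never unwinds~(\ref{eq:rec2}) into a full chain-and-cluster expansion: the whole Diophantine/renormalization apparatus is invoked only once, inside the derivation of the improved bound~(\ref{eq:Vimpro}) for the effective potentials, and then imported as a black box.

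Your first paragraph proposes to expand each node $V^{(j-1)}$ back into $N^{(j)}$-chains~(\ref{eq:defNv})--(\ref{eq:nodesh+1}) and re-run the cluster/Diophantine analysis on the composite $\phi\psi$-chain. Beyond being unnecessary, this is where the proposal becomes shaky. First, $\mathfrak{L}$ and $\mathfrak{R}$ act only on effective-potential clusters, not on $\phi\psi$-kernels; there is no localization of the $\phi\psi$-flow, so the ``localization-renormalization apparatus of Section~\ref{sec:chain} applies identically'' is not literally true. Clusters containing the boundary kernel are not subject to $\mathfrak{R}$ subtraction, yet the $\phi\psi$-kernel is marginal (dimension $0$), so resonant such clusters would naively give a logarithmic divergence that must be compensated some other way. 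Second, you simultaneously invoke the improved bound~(\ref{eq:Vimpro}) \emph{and} the factors $\prod_T\gamma^{h_T^{\text{ext}}-h_T}$ together with the Diophantine gains $\prod_T e^{-2^{h_T^{\text{ext}}}C\gamma^{-h_T^{\text{ext}}/\tau}}$; these last two are precisely the ingredients in the derivation of~(\ref{eq:Vimpro}), so using both amounts to double-counting. Either use~(\ref{eq:Vimpro}) directly on each $V^{(j-1)}$-node (in which case the chain is a simple tree-level sum, no clusters, no Diophantine, and your concern about ``new resonances at the interface'' evaporates), or discard~(\ref{eq:Vimpro}) and prove a dimension-$0$ analogue of Proposition~\ref{prp:chainbd} from scratch for the composite chains, which is genuine extra work. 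The paper's route is the former; your second paragraph already contains it.
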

\begin{remark} These bounds also allow to control the kernels $W^{(h-1)}_{\psi\phi}$, using that:
\begin{equation}\label{eq:symW}
W^{(h-1)}_{\psi\phi;n,\omega,\sigma} ( {\bm q}, x_{2}) = \overline{W^{(h-1)}_{\phi\psi;-n,\omega,\sigma} \big( (-q_{0}, q_{1} + n\alpha), x_{2}\big)}\;.
\end{equation}
\end{remark}
This identity is proved as (\ref{eq:realh}), see Section \ref{sec:chain}.
\begin{proof} Let us start by proving (\ref{eq:nnw}). To begin, recall the bound (\ref{eq:Vimpro}), for $L_{2} \ge C\beta$:
%
%
%
\begin{equation}\label{eq:tildebd}
\big| V^{(h-1)}_{n-m;\omega',\omega}({\bm q} + m {\bm \alpha})\big| \le C \gamma^{h(1+\theta)} e^{-\frac{c}{4} |n-m|} |\lambda| e^{-\frac{\tilde c}{4} L_{2} \delta_{\omega,-\omega'} }\;.
\end{equation}
%
%
We will proceed by induction. Let $h<0$, and assume that:
\begin{equation}\label{eq:ind2pth}
\big| W^{(h)}_{\phi\psi;n,\omega,\sigma} ( {\bm q}, x_{2}) \big| \le \sum^{0}_{j > h} \gamma^{\theta j} C|\lambda|^{\delta_{n\neq 0}} e^{-\frac{c}{8}|n|} e^{- \frac{\tilde c}{32}|x_{2}|_{\omega}}\;.
\end{equation}
Clearly, Eq. (\ref{eq:ind2pth}) also implies:
\begin{equation}\label{eq:ind2pth2}
\big| W^{(h)}_{\phi\psi;n,\omega,\sigma} ( {\bm q}, x_{2}) \big| \le K|\lambda|^{\delta_{n\neq 0}} e^{-\frac{c}{8}|n|} e^{- \frac{\tilde c}{32}|x_{2}|_{\omega}}\;,
\end{equation}
which is (\ref{eq:nnw}). The assumption (\ref{eq:ind2pth}) is true for $h=0$, by (\ref{eq:w0phipsi}). Our plan will be to check (\ref{eq:ind2pth}), using the recursion relation (\ref{eq:rec2}). From (\ref{eq:rec2}), we have:
\begin{equation}\label{eq:recbd}
\begin{split}
&\big| W^{(h-1)}_{\phi\psi;n,\omega,\sigma} ( {\bm q}, x_{2})  \big| \le \big| W^{(h)}_{\phi\psi;n,\omega,\sigma} ( {\bm q}, x_{2})  \big| \\
&\qquad + \sum_{m,\omega'} \big| W^{(h)}_{\phi\psi;m,\omega',\sigma} ( {\bm q}, x_{2})\big| \big| g^{(h)}_{\omega'}({\bm q} + m{\bm \alpha})  \big| \big|V^{(h-1)}_{n-m;\omega',\omega}({\bm q} + m {\bm \alpha})\big|\;.
\end{split}
\end{equation}
Consider the second term in the right-hand side of (\ref{eq:recbd}). We have, using the bounds (\ref{eq:tildebd}), (\ref{eq:ind2pth2}):
\begin{equation}\label{eq:a1}
\begin{split}
&\sum_{m,\omega'}\big| W^{(h)}_{\phi\psi;n,\omega',\sigma} ( {\bm q}, x_{2})\big| \big| g^{(h)}_{\omega'}({\bm q} + n{\bm \alpha})  \big| \big| V^{(h-1)}_{n-m;\omega',\omega}({\bm q} + n {\bm \alpha})\big| \\
&\quad \le \sum_{m,\omega'} K |\lambda|^{1+\delta_{n\neq 0}} e^{-\frac{c}{8}|m|} e^{- \frac{\tilde c}{32}|x_{2}|_{\omega'}} \gamma^{\theta h} e^{-\frac{c}{4} |n-m|} e^{-\frac{\tilde c}{32} L_{2} \delta_{\omega,-\omega'} } \\
&\quad \le  C |\lambda|^{\delta_{n\neq 0}} e^{-\frac{c}{8}|n|}  e^{- \frac{\tilde c}{32}|x_{2}|_{\omega}} \gamma^{\theta h}
\end{split}
\end{equation}
for $|\lambda|$ small enough, with $C$ as in (\ref{eq:tildebd}). We used that $|x_{2}|_{\omega'} - |x_{2}|_{\omega} + 2L_{2} \ge L_{2}$, recall (\ref{eq:x2omega}). Plugging the bounds (\ref{eq:ind2pth}), (\ref{eq:a1}) in (\ref{eq:recbd}) we get:
\begin{equation}
\begin{split}
\big| W^{(h-1)}_{\phi\psi;n,\omega,\sigma} ( {\bm q}, x_{2})  \big| &\le \sum^{0}_{j > h} \gamma^{\theta j} C|\lambda|^{\delta_{n\neq 0}} e^{-\frac{c}{8}|n|} e^{- \frac{\tilde c}{32}|x_{2}|_{\omega}} \\
&\quad +  \gamma^{\theta h} C |\lambda|^{\delta_{n\neq 0}} e^{-\frac{c}{8}|n|} e^{- \frac{\tilde c}{32}|x_{2}|_{\omega}}\;,
\end{split}
\end{equation}
which concludes the check of (\ref{eq:ind2pth}) with $h$ replaced by $h-1$. Let us now prove (\ref{eq:diffW}). From the recursion relation (\ref{eq:rec2}), we have:
\begin{equation}
\begin{split}
&\big| W^{(h-1)}_{\phi\psi;n,\omega,\sigma} ( {\bm q}, x_{2}) - W^{(h)}_{\phi\psi;n,\omega,\sigma} ( {\bm q}, x_{2})\big| \\
&\quad \le \sum_{m,\omega'} \big| W^{(h)}_{\phi\psi;m,\omega',\sigma} ( {\bm q}, x_{2})\big| \big| g^{(h)}_{\omega'}({\bm q} + m{\bm \alpha})\big|  \big| V^{(h-1)}_{n-m;\omega',\omega}({\bm q} + m {\bm \alpha})\big|\;.
\end{split}
\end{equation}
By (\ref{eq:a1}), (\ref{eq:tildebd}) we get:
\begin{equation}\label{eq:diffWh}
\big| W^{(h-1)}_{\phi\psi;n,\omega,\sigma} ( {\bm q}, x_{2}) - W^{(h)}_{\phi\psi;n,\omega,\sigma} ( {\bm q}, x_{2})\big| \le C \gamma^{\theta h} |\lambda|^{1+\delta_{n\neq 0}} e^{-\frac{c}{8}|n|} e^{- \frac{\tilde c}{32}|x_{2}|_{\omega}}\;.
\end{equation}
Therefore, by a telescopic sum argument:
\begin{equation}
\begin{split}
& \big| W^{(h-1)}_{\phi\psi;n,\omega,\sigma} ( {\bm q}, x_{2})  - W^{(k)}_{\phi\psi;n,\omega,\sigma} ( {\bm q}, x_{2})  \big| \\
&\quad \le \sum_{j=h}^{k}  \big|W^{(j-1)}_{\phi\psi;n,\omega,\sigma} ( {\bm q}, x_{2})  - W^{(j)}_{\phi\psi;n,\omega,\sigma} ( {\bm q}, x_{2})  \big| \\
&\quad \le \sum_{j=h}^{k} C \gamma^{\theta j} |\lambda|^{1+\delta_{n\neq 0}} e^{-\frac{c}{8}|n|} e^{- \frac{\tilde c}{32}|x_{2}|_{\omega}} \\
&\quad \le K |\lambda|^{1+\delta_{n\neq 0}} \gamma^{\theta k} e^{-\frac{c}{8}|n|} e^{- \frac{\tilde c}{32}|x_{2}|_{\omega}}
\end{split}
\end{equation}
where the second inequality follows from (\ref{eq:diffWh}). This proves (\ref{eq:diffW}). To conclude, let us now prove (\ref{eq:derW}). Assume that it holds on scales $k\ge h$:
\begin{equation}\label{eq:derWk}
\big| \text{d}_{q_{0}}^{n_{0}} \text{d}_{q_{1}}^{n_{1}}W^{(k)}_{\phi\psi;n,\omega,\sigma} ( {\bm q}, x_{2}) \big| \le C \gamma^{(\theta - 1)(k+1)}  |\lambda|^{\delta_{n\neq 0}} e^{-\frac{c}{8}|n|} e^{- \frac{\tilde c}{32}|x_{2}|_{\omega}}\;.
\end{equation}
As for (\ref{eq:ind2pth2}), the bound (\ref{eq:derWk}) is true on scale $0$. Let us check it on smaller scales. By the recursion relation:
\begin{equation}
\begin{split}
\text{d}_{q_{0}}^{n_{0}} &\text{d}_{q_{1}}^{n_{1}} W^{(h-1)}_{\phi\psi;n,\omega,\sigma} ( {\bm q}, x_{2}) = \text{d}_{q_{0}}^{n_{0}} \text{d}_{q_{1}}^{n_{1}} W^{(h)}_{\phi\psi;n,\omega,\sigma} ( {\bm q}, x_{2})\\&\quad + \sum_{m,\omega'} \text{d}_{q_{0}}^{n_{0}} \text{d}_{q_{1}}^{n_{1}} \Big(W^{(h)}_{\phi\psi;m,\omega',\sigma} ( {\bm q}, x_{2}) g^{(h)}_{\omega'}({\bm q} + m{\bm \alpha}) V^{(h-1)}_{n-m;\omega',\omega}({\bm q} + m {\bm \alpha})\Big)\;.
\end{split}
\end{equation}
Using the improved estimate (\ref{eq:Vimpro}) on the derivatives of the effective potential together with (\ref{eq:derW}) on scale $h$, we get:
\begin{equation}\label{eq:derder}
\begin{split}
\Big| \text{d}_{q_{0}}^{n_{0}} &\text{d}_{q_{1}}^{n_{1}} W^{(h-1)}_{\phi\psi;n,\omega,\sigma} ( {\bm q}, x_{2}) \Big| \le C \gamma^{(\theta - 1)(h+1) }  |\lambda|^{\delta_{n\neq 0}} e^{-\frac{c}{8}|n|} e^{- \frac{\tilde c}{32}|x_{2}|_{\omega}} \\
&+ \sum_{m,\omega'} K |\lambda|^{1+\delta_{n\neq 0}} e^{-\frac{c}{8}|m|} e^{- \frac{\tilde c}{32}|x_{2}|_{\omega'}} \gamma^{(\theta - 1) h} e^{-\frac{c}{4} |n-m|} e^{-\frac{\tilde c}{32} L_{2} \delta_{\omega,-\omega'} }\;;
\end{split}
\end{equation}
choosing $\gamma$ such that $\gamma^{\theta - 1} \le 1/2$ and $|\lambda|$ small enough, we get that both terms in the right-hand side of (\ref{eq:derder}) are bounded by:
\begin{equation}
\frac{C}{2} \gamma^{(\theta - 1)h}  |\lambda|^{\delta_{n\neq 0}} e^{-\frac{c}{8}|n|} e^{- \frac{\tilde c}{32}|x_{2}|_{\omega}}
\end{equation}
with $C$ as in (\ref{eq:derWk}). This concludes the proof of (\ref{eq:derW}).
\end{proof}
\begin{remark} The bound (\ref{eq:derW}) allows to prove that, for $\|{\bm q} + n{\bm \alpha}\| \le \gamma^{h}$:
\begin{equation}\label{eq:interp}
\begin{split}
&\big| W^{(h-1)}_{\phi\psi;n,\omega,\sigma} ( {\bm q}, x_{2}) - W^{(h-1)}_{\phi\psi;n,\omega,\sigma} ( -n{\bm \alpha}, x_{2}) \big| \\
&\qquad \le \int_{0}^{1} dt\, \Big|\frac{d}{dt} W^{(h-1)}_{\phi\psi;n,\omega,\sigma} \big( -n{\bm \alpha} + t ({\bm q} + n{\bm \alpha}), x_{2}\big)\Big| \\
&\qquad \le C \gamma^{\theta h}  |\lambda|^{\delta_{n\neq 0}} e^{-\frac{c}{8}|n|} e^{- \frac{\tilde c}{32}|x_{2}|_{\omega}}\;.
\end{split}
\end{equation}
This will be particularly useful in the analysis of the two-point function.
\end{remark}
\subsubsection{Proof of Theorem \ref{thm:2pt}}\label{sec:thm2ptproof}
In this section we will show how to use Proposition \ref{prp:phipsi} to obtain the asymptotic behavior of the two-point function, Eq. (\ref{eq:2ptmain}). We start by writing:
\begin{equation}
\begin{split}
    &\big\langle \timord \, \gamma_{x_0}(a_{{\vec x},\sigma}) ; \gamma_{y_0}(a_{\vec y,\zeta}^{*})\big\rangle_{\beta,\mu,L}\\
    &\quad = \frac{1}{\beta L_{1}} \sum_{n} \sum_{{\bm k}} e^{i{\bm k}\cdot ({\bm x} - {\bm y})} e^{-i n \alpha y_{1}} \hat S_{2;n,\sigma,\zeta}({\bm k}; x_{2}, y_{2})\;,
\end{split}
\end{equation}
where, using the Grassmann representation of the two-point function:
\begin{equation}
\hat S_{2;n,\sigma,\zeta}({\bm k}; x_{2}, y_{2}) = \beta L_{1} \lim_{N\to \infty} \frac{\partial^{2}}{\partial \hat \phi^{+}_{{\bm k}, x_{2}, \sigma} \partial \hat \phi^{-}_{{\bm k} + n{\bm \alpha}, y_{2}, \zeta}} \mathcal{W}_{N}(\phi)\Big|_{\phi = 0}\;.
\end{equation}
By (\ref{eq:indgen})-(\ref{eq:indgen3}), we have:
\begin{equation}\label{eq:2ptdec}
\hat S_{2;n,\sigma,\zeta}({\bm k}; x_{2}, y_{2}) = \sum_{h = h_{\beta}}^{0} W^{(h)}_{\phi\phi; n, \sigma,\zeta}({\bm k}; x_{2}, y_{2}) + \hat R_{2;n,\sigma,\zeta}({\bm k}; x_{2}, y_{2})\;,
\end{equation}
where:
\begin{itemize}
\item[(i)] the kernels $W^{(h)}_{\phi\phi; n, \sigma,\zeta}({\bm k}; x_{2}, y_{2})$ satisfy the recursion (\ref{eq:recphi2});
\item[(ii)] $\hat R_{2} = W^{(>0)}$, recall (\ref{eq:2ptz>}), and it satisfies:
\begin{equation}\label{eq:R2bd}
\Big| \text{d}_{k_{0}}^{n_{0}} \text{d}_{k_{1}}^{n_{1}} \hat R_{2;n,\sigma,\zeta}({\bm k}; x_{2}, y_{2})  \Big| \le C_{n_{0},n_{1}} |\lambda|^{\delta_{n\neq 0}} e^{-c|n|}e^{- c|x_{2} - y_{2}|}\;.
\end{equation}
\end{itemize}
Let us now focus on the main term in (\ref{eq:2ptdec}), due to the integration of the single-scale fields. Recall Eq. (\ref{eq:recphi2}); we rewrite it as:
\begin{equation}
W_{\phi\phi;n,\sigma,\zeta}^{(h)}({\bm k}; x_{2}, y_{2}) = \text{A}^{(h)}_{n,\sigma,\zeta}({\bm k}; x_{2}, y_{2}) + \mr{B}^{(h)}_{n,\sigma,\zeta}({\bm k}; x_{2}, y_{2})
\end{equation}
with:
\begin{equation}\label{eq:Ah0}
\begin{split}
&\text{A}^{(h)}_{n,\sigma,\zeta}({\bm k}; x_{2}, y_{2}) \\&\qquad := \sum_{m,\omega} W^{(h)}_{\phi\psi;m,\omega,\sigma} ( {\bm q}, x_{2}) g^{(h)}_{\omega}({\bm q} + m{\bm \alpha}) W^{(h)}_{\psi\phi;-m + n,\omega,\zeta} ( {\bm q} + m{\bm \alpha}, y_{2}) \\
&\qquad\equiv \sum_{\omega} \text{A}^{(h)}_{n,\sigma,\zeta,\omega}({\bm q}; x_{2}, y_{2})\;,
\end{split}
\end{equation}
where we recall that, at the argument of the sum over $\omega$, ${\bm q} \equiv {\bm q}({\bm k}) = {\bm k} - {\bm k_{F}^{\omega}}$ and:
\begin{equation}\label{eq:Bh}
\begin{split}
&\mr{B}^{(h)}_{n,\sigma,\zeta}({\bm k}; x_{2}, y_{2}) \\
&\qquad := \sum_{\substack{m,m' \\ \omega,\omega'}}W^{(h)}_{\phi\psi;m,\omega,\sigma} ( {\bm q}, x_{2}) g^{(h)}_{\omega}({\bm q} + m{\bm \alpha}) V^{(h-1)}_{m';\omega,\omega'}({\bm q} + m{\bm \alpha} ) \\ 
&\qquad\qquad \cdot g^{(h)}_{\omega'}\big({\bm q} + (m + m'){\bm \alpha}\big) W^{(h)}_{\psi\phi;-m - m' + n,\omega',\zeta} \big( {\bm q} + (m + m'){\bm \alpha}, y_{2}\big)\\
&\qquad \equiv \sum_{\omega,\omega'} \mr{B}^{(h)}_{n,\sigma,\zeta,\omega,\omega'}({\bm q}; x_{2}, y_{2})\;.
\end{split}
\end{equation}
We define:
\begin{equation}
\begin{split}
Z^{(h)}_{\phi\psi;n,\omega,\sigma}(x_{2}) &:= W^{(h)}_{\phi\psi;n,\omega,\sigma} ({\bm 0}_{\beta} - n{\bm \alpha}, x_{2})\\
Z^{(h)}_{\psi\phi;n,\omega,\sigma}(x_{2}) &:= W^{(h)}_{\psi\phi;n,\omega,\sigma}({\bm 0}_{\beta}, x_{2})\;.
\end{split}
\end{equation}
By the identity (\ref{eq:symW}), these two quantities are related by:
\begin{equation}\label{eq:symZ}
Z^{(h)}_{\psi\phi;n,\omega,\sigma}(x_{2}) = \overline{Z^{(h)}_{\phi\psi;-n,\omega,\sigma}(x_{2})}\;.
\end{equation}
Furthermore, we define:
\begin{equation}
\begin{split}
{\mf R} W^{(h)}_{\phi\psi;n,\omega,\sigma} ( {\bm q}, x_{2}) &:= W^{(h)}_{\phi\psi;n,\omega,\sigma} ( {\bm q}, x_{2}) - Z^{(h)}_{\phi\psi;n,\omega,\sigma}(x_{2}) \\
{\mf R} W^{(h)}_{\psi\phi;-m + n,\omega,\zeta} ( {\bm q} + m{\bm \alpha}, y_{2}) &:= W^{(h)}_{\psi\phi;-m + n,\omega,\zeta} ( {\bm q} + m{\bm \alpha}, y_{2})\\
&\qquad - Z^{(h)}_{\psi\phi;-m+n,\omega,\zeta}(y_{2})\;.
\end{split}
\end{equation}
\paragraph{Discussion of $\text{A}^{(h)}$.} We rewrite:
\begin{equation}\label{eq:Ah}
\begin{split}
\text{A}^{(h)}_{n,\sigma,\zeta,\omega}({\bm q}; x_{2}, y_{2}) &= \sum_{m} Z^{(h)}_{\phi\psi;m,\omega,\sigma} ( x_{2}) g^{(h)}_{\omega}({\bm q} + m{\bm \alpha}) Z^{(h)}_{\psi\phi;-m + n,\omega,\zeta} (y_{2})\\
&\qquad + \text{A}^{(h)}_{1; n,\sigma,\zeta,\omega}({\bm q}; x_{2}, y_{2})
\end{split}
\end{equation}
where $\text{A}^{(h)}_{1}$ contains at least one between
\begin{equation}\label{eq:RW}
 {\mf R} W^{(h)}_{\phi\psi;m,\omega,\sigma} ( {\bm q}, x_{2})\;,\qquad  {\mf R} W^{(h)}_{\psi\phi;-m + n,\omega,\zeta} ( {\bm q} + m{\bm \alpha}, y_{2})\;.
\end{equation}
For $\| {\bm q} + m\alpha \| \le \gamma^{h+1}$, which is imposed by the support of the single-scale propagator in $\text{A}^{(h)}$, both kernels in (\ref{eq:RW}) can be estimated as in Eq. (\ref{eq:interp}). We have:
\begin{equation}
\big|  {\mf R} W^{(h)}_{\phi\psi;m,\omega,\sigma} ( {\bm q}, x_{2}) \big| \le C \gamma^{\theta (h+1)}  |\lambda|^{\delta_{m\neq 0}} e^{-\frac{c}{8}|m|} e^{- \frac{\tilde c}{32}|x_{2}|_{\omega}}\;,\
\end{equation}
and the same bound holds true for the second kernel in (\ref{eq:RW}). Therefore, using that every ${\bm q}$ derivative introduces a factor $\gamma^{-h}$ in the estimate for the kernel, we find:
\begin{equation}\label{eq:errA1}
\begin{split}
&\big| \text{d}_{{ q}_{0}}^{n_{0}}\text{d}_{{ q}_{1}}^{n_{1}} \text{A}^{(h)}_{1; n,\sigma,\zeta,\omega}({\bm q}; x_{2}, y_{2}) \big| \\
&\qquad \le C_{n_{0},n_{1}} \gamma^{-h(1 + n_{0} + n_{1})} \gamma^{\theta h} |\lambda|^{\delta_{n\neq 0}} e^{-\frac{c}{16}|n|} e^{- \frac{\tilde c}{32}|x_{2}|_{\omega}} e^{- \frac{\tilde c}{32}|y_{2}|_{\omega}}\;.
\end{split}
\end{equation}
Let us now consider the main term in (\ref{eq:Ah}). By using (\ref{eq:diffW}), we have:
\begin{equation}
\big|Z^{(h_{\beta})}_{\phi\psi;n,\omega,\sigma}(x_{2}) - Z^{(h)}_{\phi\psi;n,\omega,\sigma}(x_{2})\big|\le K |\lambda|^{\delta_{n\neq 0}} \gamma^{\theta h} e^{-\frac{c}{8}|n|} e^{- \frac{\tilde c}{32}|x_{2}|_{\omega}}\;.
\end{equation}
Thus, we have:
\begin{equation}\label{eq:A2err}
\begin{split}
&\sum_{m} Z^{(h)}_{\phi\psi;m,\omega,\sigma} ( x_{2}) g^{(h)}_{\omega}({\bm q} + m{\bm \alpha}) Z^{(h)}_{\psi\phi;-m + n,\omega,\zeta} (y_{2}) \\
&\quad = \sum_{m} Z^{(h_{\beta})}_{\phi\psi;m,\omega,\sigma} ( x_{2}) g^{(h)}_{\omega}({\bm q} + m{\bm \alpha}) Z^{(h_{\beta})}_{\psi\phi;-m + n,\omega,\zeta} (y_{2})\\
&\quad\qquad + \text{A}^{(h)}_{2; n,\sigma,\zeta,\omega}({\bm q}; x_{2}, y_{2})\;,
\end{split}
\end{equation}
where $\text{A}^{(h)}_{2}$ satisfies the bound (\ref{eq:errA1}). Let us now focus on the main term in (\ref{eq:A2err}). We introduce the short-hand notation:
\begin{equation}
Z_{m,\omega,\sigma} ( x_{2}) := Z^{(h_{\beta})}_{\phi\psi;m,\omega,\sigma} ( x_{2})\;;
\end{equation}
by (\ref{eq:symZ}) we have 
\begin{equation}
Z^{(h_{\beta})}_{\psi\phi;-m+n,\omega,\zeta} ( y_{2}) = \overline{Z_{-n+m,\omega,\zeta}(y_{2})}\;.
\end{equation}
Thus, we can rewrite (\ref{eq:Ah}) as:
\begin{equation}
\begin{split}
\text{A}^{(h)}_{n,\sigma,\zeta}({\bm k}; x_{2}, y_{2}) &= \sum_{\omega,m} Z_{m,\omega,\sigma}(x_{2}) g^{(h)}_{\omega}({\bm q} + m{\bm \alpha})  \overline{Z_{-n+m,\omega,\zeta}(y_{2})} \\
&\quad + \sum_{i=1,2} \text{A}^{(h)}_{i;n,\sigma,\zeta}({\bm k}; x_{2}, y_{2})\;.
\end{split}
\end{equation}
Now, recall the expression of the single-scale propagator, Eq. (\ref{eq:proph}). By the estimate (\ref{eq:betanu}) on the beta function, we know that the velocities $v_{\nu,\omega,h}({\bm q})$ approach their infrared limit $v_{\nu,\omega} := v_{\nu,\omega,h_{\beta}}$ with the following rate:
\begin{equation}
|v_{\nu,\omega,h}({\bm q}) - v_{\nu,\omega}| \le C_{r} \gamma^{r h} |\lambda|^{2}\qquad \text{for all $r\in \mathbb{N}$.}
\end{equation}
Let us denote by $g^{(h)}_{\omega,\mr{s}}$ the propagator obtained after this replacement:
\begin{equation}
g^{(h)}_{\omega;\mr{s}}({\bm q}) := \frac{f^{(h)}_{\omega;\mr{s}}({\bm q})}{i v_{0,\omega} q_{0} + v_{1,\omega} q_{1}}
\end{equation}
where we denote by $f^{(h)}_{\omega;\mr{s}}({\bm q})$ the new single-scale cutoff function, after the above replacement of the velocities:
\begin{equation}
f^{(h)}_{\omega;\mr{s}}({\bm q}) := \chi\Big(2^{-h} \sqrt{v_{0,\omega}^{2}q_{0}^{2} + v_{1,\omega}^{2}q_{1}^{2} }\Big) - \chi\Big(2^{-(h-1)} \sqrt{v_{0,\omega}^{2}q_{0}^{2} + v_{1,\omega}^{2}q_{1}^{2} }\Big)\;.
\end{equation} 
Hence,
\begin{equation}\label{eq:AAA}
\begin{split}
\text{A}^{(h)}_{n,\sigma,\zeta}({\bm k}; x_{2}, y_{2}) &= \sum_{\omega,m} Z_{m,\omega,\sigma}(x_{2}) g^{(h)}_{\omega;\mr{s}}({\bm q} + m{\bm \alpha})  \overline{Z_{-n+m,\omega,\zeta}(y_{2})} \\
&\quad + \sum_{i=1,2,3} \text{A}^{(h)}_{i;n,\sigma,\zeta}({\bm k}; x_{2}, y_{2})\;,
\end{split}
\end{equation}
where the new error term $\text{A}^{(h)}_{3}$ takes into account the replacement of the velocities, and it is not difficult to see that it satisfies the bound (\ref{eq:errA1}). The main term in (\ref{eq:AAA}) will contribute to the scaling limit of the two-point function, while all the other terms will contribute with subleading corrections, that decay faster than the main term in configuration space.

\paragraph{Discussion of $\mr{B}^{(h)}$.} Consider now the term $\mr{B}^{(h)}$, Eq. (\ref{eq:Bh}). This term is subleading with respect to the main term in (\ref{eq:AAA}): informally, with respect to (\ref{eq:AAA}), this term contains an extra single-scale propagator, and an extra potential term. Since the single scale propagator is bounded dimensionally as $\gamma^{-h}$, while the potential term is bounded proportionally to $\gamma^{h(1+\theta)}$, we have a net dimensional gain of $\gamma^{\theta h}$ with respect to the leading term in (\ref{eq:AAA}). Ultimately, the term $\mr{B}^{(h)}_{n,\sigma,\zeta,\omega,\omega'}({\bm q}; x_{2}, y_{2})$ satisfies a bound similar to (\ref{eq:errA1}): 
\begin{equation}\label{eq:errB1}
\begin{split}
&\big| \text{d}_{{q}_{0}}^{n_{0}}\text{d}_{{  k}_{1}}^{n_{1}} \mr{B}^{(h)}_{n,\sigma,\zeta,\omega}({\bm q}; x_{2}, y_{2}) \big| \\
&\quad \le C_{n_{0},n_{1}} \gamma^{-h(1 + n_{0} + n_{1})} \gamma^{\theta h} |\lambda|^{1+\delta_{n\neq 0}} e^{-\frac{c}{16}|n|} e^{- \frac{\tilde c}{32}|x_{2}|_{\omega}} e^{- \frac{\tilde c}{32}|y_{2}|_{\omega'}}\;.
\end{split}
\end{equation}
We omit the details. 

\paragraph{Bounds for the remainder term.} All in all, we have:
\begin{equation}\label{eq:2ptres}
\begin{split}
&\big\langle \timord \, \gamma_{x_0}(a_{{\vec x},\sigma}) ; \gamma_{y_0}(a_{\vec y,\zeta}^{*})\big\rangle_{\beta,\mu,L} \\
&\quad = \frac{1}{\beta L_{1}} \sum_{n,m,\omega} \sum_{{\bm k}} e^{i{\bm k}\cdot ({\bm x} - {\bm y})} e^{-i n \alpha y_{1}} Z_{m,\omega,\sigma}(x_{2}) g_{\omega;\mr{s}}\big({\bm q(\bm k)} + m{\bm \alpha}\big)  \overline{Z_{-n+m,\omega,\zeta}(y_{2})} \\
&\qquad + R_{\sigma,\zeta}(\vec {\bm x}, \vec {\bm y})\;,
\end{split}
\end{equation}
where we set $g_{\omega,\mr{s}} := \sum_{h=h_{\beta}}^{0} g^{(h)}_{\omega,\mr{s}}$,
\begin{equation}\label{eq:grel}
g_{\omega;\mr{s}}({\bm q}) = \frac{\chi\Big(\sqrt{v_{0,\omega}^{2}q_{0}^{2} + v_{1,\omega}^{2}q_{1}^{2} }\Big)}{i v_{0,\omega} q_{0} + v_{1,\omega} q_{1}}\;.
\end{equation}
Eq. (\ref{eq:grel}) is an effective relativistic propagator, with renormalized velocities, and an ultraviolet cutoff; recall the definition of the cutoff function, Eq. (\ref{eq:chidef}). The remainder term in (\ref{eq:2ptres}) is:
\begin{equation}
R_{\sigma,\zeta}(\vec {\bm x}, \vec {\bm y}) = R_{1;\sigma,\zeta}(\vec {\bm x}, \vec {\bm y}) + R_{2;\sigma,\zeta}(\vec {\bm x}, \vec {\bm y})\;,
\end{equation}
with:
\begin{equation}\label{eq:rrn}
\begin{split}
R_{i;\sigma,\zeta}(\vec {\bm x}, \vec {\bm y}) &= \frac{1}{\beta L_{1}} \sum_{n,{\bm k}} e^{i{\bm k}\cdot ({\bm x} - {\bm y})} e^{-i n \alpha y_{1}} \hat R_{i;n,\sigma,\zeta}({\bm k}; x_{2}, y_{2}) \\
&\equiv \sum_{n} e^{-i n \alpha y_{1}} R_{i;n,\sigma,\zeta}({\bm x - {\bm y}}; x_{2}, y_{2})
\end{split}
\end{equation}
where $\hat R_{2;n,\sigma,\zeta}({\bm k}; x_{2}, y_{2})$ satisfies the estimate (\ref{eq:R2bd}) while $\hat R_{1;n,\sigma,\zeta}({\bm k}; x_{2}, y_{2})$ is given by:
\begin{equation}\label{eq:R1n}
\hat R_{1;n,\sigma,\zeta}({\bm k}; x_{2}, y_{2}) = \sum_{h=h_{\beta}}^{0} \Big(\sum_{i=1,2,3} \text{A}^{(h)}_{i;n,\sigma,\zeta}({\bm k}; x_{2}, y_{2}) + \mr{B}^{(h)}_{n,\sigma,\zeta}({\bm k}; x_{2}, y_{2})\Big)\;.
\end{equation}
To prove a decay estimate for $R_{i;n,\sigma,\zeta}({\bm x - {\bm y}}; x_{2}, y_{2})$, we proceed as follows. Let:
\begin{equation}
\begin{split}
    (x_{0} - y_{0})_{L} &:= \frac{\beta}{\pi} \sin\big(\pi (x_{0} - y_{0}) / \beta\big) \;,\\
    (x_{1} - y_{1})_{L} &:= \frac{L_{1}}{\pi} \sin\big(\pi (x_{0} - y_{0}) / L_{1}\big)\;.
\end{split}
\end{equation}
Observe that, for $\beta, L_{1}\to \infty$, these objects converge pointwise to $(x_{0} - y_{0})$ and to $(x_{1} - y_{1})$. Then, observe that, for $\nu=0,1$:
\begin{equation}
\begin{split}
&\Big|(x_{\nu} - y_{\nu})_{L}^{m}R^{(h)}_{i;n,\sigma,\zeta}({\bm x - {\bm y}}; x_{2}, y_{2})\Big| \\
&\qquad = \Big|\frac{1}{\beta L_{1}} \sum_{{\bm k}} \big(\text{d}_{k_\nu}^{m} e^{i{\bm k}\cdot ({\bm x} - {\bm y})}\big) e^{-i n \alpha y_{1}} \hat R^{(h)}_{i;n,\sigma,\zeta}({\bm k}; x_{2}, y_{2})\Big| \\
&\qquad = \Big|\frac{1}{\beta L_{1}} \sum_{{\bm k}} e^{i{\bm k}\cdot ({\bm x} - {\bm y})} e^{-i n \alpha y_{1}} \big(\text{d}_{\nu}^{m} \hat R^{(h)}_{i;n,\sigma,\zeta}({\bm k}; x_{2}, y_{2})\big)\Big|\;,
\end{split}
\end{equation}
where the last step follows from summation by parts, using that $\hat R^{(h)}_{i}$ vanishes as $k_{0}\to \infty$ and it is periodic in $k_{1}$ in the (discrete) Brillouin zone. Consider the error term $R_{1}$. From the estimate
\begin{equation}\label{eq:R1hest}
\begin{split}
&\big|\text{d}_{\nu}^{m} \hat R^{(h)}_{1;n,\sigma,\zeta}({\bm k}; x_{2}, y_{2})\big| \\
&\quad \le C_{m} \gamma^{-h(1 + m)} \gamma^{\theta h} |\lambda|^{\delta_{n\neq 0}} e^{-\frac{c}{16}|n|} e^{- \frac{\tilde c}{32}|x_{2}|_{L}} e^{- \frac{\tilde c}{32}|y_{2}|_{L}}\;,
\end{split}
\end{equation}
we find:
\begin{equation}\label{eq:bdRx}
\begin{split}
&\big|\gamma^{mh}(x_{\nu} - y_{\nu})_{L}^{m}R^{(h)}_{1;n,\sigma,\zeta}({\bm x - {\bm y}}; x_{2}, y_{2})\big| \\&\qquad \le K_{m} \gamma^{h(1+\theta)}  |\lambda|^{\delta_{n\neq 0}} e^{-\frac{c}{16}|n|} e^{- \frac{\tilde c}{32}|x_{2}|_{L}} e^{- \frac{\tilde c}{32}|y_{2}|_{L}}\;.
\end{split}
\end{equation}
From (\ref{eq:bdRx}) we easily get:
\begin{equation}
\big|R^{(h)}_{1;n,\sigma,\zeta}({\bm x - {\bm y}}; x_{2}, y_{2})\big| \le \frac{K_{m} \gamma^{h(1+\theta)}  |\lambda|^{\delta_{n\neq 0}} e^{-\frac{c}{16}|n|} e^{- \frac{\tilde c}{32}|x_{2}|_{L}} e^{- \frac{\tilde c}{32}|y_{2}|_{L}}}{1 + (\gamma^{h} \| {\bm x} - {\bm y} \|)^{m}}\;.
\end{equation}
Thus, writing, for $0<\theta'<\theta$:
\begin{equation}
\begin{split}
\gamma^{h(1 + \theta)} &=  \gamma^{h(\theta - \theta')} \gamma^{h(1 + \theta')} \\
&=  \gamma^{h(\theta - \theta')} \frac{1+\|{\bm x} - {\bm y}\|^{1 + \theta'}}{1+\|{\bm x} - {\bm y}\|^{1 + \theta'}}   \gamma^{h(1 + \theta')}\;,
\end{split}
\end{equation}
and observing that, uniformly in $\| {\bm x} - {\bm y} \|$,
\begin{equation}
\frac{(1+\|{\bm x} - {\bm y}\|^{1 + \theta'}) \gamma^{h(1 + \theta')}}{ 1 + (\gamma^{h} \| {\bm x} - {\bm y} \|)^{m} } \le C\;,
\end{equation}
we obtain:
\begin{equation}\label{eq:R1nest0}
\begin{split}
\big|R_{1;n,\sigma,\zeta}({\bm x - {\bm y}}; x_{2}, y_{2})\big| &\le \sum_{h=h_{\beta}}^{0} \big|R^{(h)}_{1;n,\sigma,\zeta}({\bm x - {\bm y}}; x_{2}, y_{2})\big| \\
&\le \frac{K |\lambda|^{\delta_{n\neq 0}} e^{-\frac{c}{16}|n|} e^{- \frac{\tilde c}{32}|x_{2}|_{L}} e^{- \frac{\tilde c}{32}|y_{2}|_{L}}}{1 + \|{\bm x} - {\bm y}\|^{1 + \theta'}} \sum_{h=h_{\beta}}^{0} \gamma^{h(\theta - \theta')}  \\
&\le  \frac{C |\lambda|^{\delta_{n\neq 0}} e^{-\frac{c}{16}|n|} e^{- \frac{\tilde c}{32}|x_{2}|_{L}} e^{- \frac{\tilde c}{32}|y_{2}|_{L}}}{1 + \|{\bm x} - {\bm y}\|^{1 + \theta'}}\;.
\end{split}
\end{equation}
Similarly, using the bound (\ref{eq:R2bd}) we obtain:
\begin{equation}\label{eq:R2est0}
\big|R_{2;n,\sigma,\zeta}(\vec {\bm x}, \vec {\bm y})\big| \le C |\lambda|^{\delta_{n\neq 0}} e^{-\frac{c}{16}|n|} e^{-c\|\vec {\bm x} - \vec {\bm y}\|}\;.
\end{equation}
Eqs. (\ref{eq:rrn}), (\ref{eq:R1nest0}), (\ref{eq:R2est0}) prove (\ref{eq:Rest2pt}).

\paragraph{Asymptotics of the leading term.} Consider now the main term in (\ref{eq:2ptres}). It is:
\begin{equation}\label{eq:Ss}
\begin{split}
&S_{2; \sigma, \zeta}^{\mr{s}}(\vec {\bm x}, \vec {\bm y}) \\
&:= \frac{1}{\beta L_{1}} \sum_{\substack{n,m \\ \omega, {\bm k}}} e^{i{\bm k}\cdot ({\bm x} - {\bm y})} e^{-i n \alpha y_{1}} Z_{m,\omega,\sigma}(x_{2}) g_{\omega;\mr{s}}\big(\bm q(\bm k )+ m{\bm \alpha}\big)  \overline{Z_{-n+m,\omega,\zeta}(y_{2})}\;.
\end{split}
\end{equation}
Let:
\begin{equation}\label{eq:Zfou}
Z_{\omega,\sigma}(\vec x) := \sum_{m} e^{-im\alpha x_{1}} Z_{m,\omega,\sigma}(x_{2})\;.
\end{equation}
Using that:
\begin{equation}
\begin{split}
&\frac{1}{\beta L_{1}}\sum_{{\bm k}} e^{i{\bm k}\cdot ({\bm x} - {\bm y})} g_{\omega;\mr{s}}\big({\bm q}(\bm k) + m{\bm \alpha}\big) \\
&\qquad = e^{i k_{F}^{\omega}(\lambda) (x_{1} - y_{1}) - i m \alpha (x_{1} - y_{1})} \frac{1}{\beta L_{1}}\sum_{{\bm q}} e^{i({\bm q} + m {\bm \alpha})\cdot ({\bm x} - {\bm y})} g_{\omega;\mr{s}}({\bm q} + m{\bm \alpha}) \\
&\qquad = e^{i k_{F}^{\omega}(\lambda) (x_{1} - y_{1}) - i m \alpha (x_{1} - y_{1})} \check{g}_{\omega;\mr{s}}({\bm x} - {\bm y})\;,
\end{split}
\end{equation}
we rewrite:
\begin{equation}\label{eq:S2s}
\begin{split}
&S_{2; \sigma, \zeta}^{\mr{s}}(\vec {\bm x}, \vec {\bm y}) \\
&= \sum_{\omega} e^{i k_{F}^{\omega}(\lambda) (x_{1} - y_{1})} \sum_{n,m}  e^{-i m \alpha (x_{1} - y_{1})} e^{-i n \alpha y_{1}}  Z_{m,\omega,\sigma}(x_{2}) \check{g}_{\omega;\mr{s}}({\bm x} - {\bm y}) \overline{Z_{-n+m,\omega,\zeta}(y_{2})} \\
&= \sum_{\omega} e^{i k_{F}^{\omega}(\lambda)(x_{1} - y_{1})} Z_{\omega,\sigma}(\vec x) \check{g}_{\omega;\mr{s}}({\bm x} - {\bm y}) \overline{Z_{\omega,\zeta}(\vec y)}\;.
\end{split}
\end{equation}
 This concludes the proof of (\ref{eq:2ptmain}). The only ingredient missing to conclude the proof of Theorem \ref{thm:2pt} are the relations (\ref{eq:relzeta}), which will be proved in Section \ref{sec:vertexcons}.
\section{Transport coefficients}\label{sec:transport}
In this section we shall use the construction of the two-point function to compute the edge conductivity and the edge susceptibility of the model. We will start from the linear response ansatz, as given from Kubo formula, Eq. (\ref{eq:kubo}). As we will see, the precise asymptotics of the two-point function derived in the previous section will allow us to compute these transport coefficients, and to prove Theorem \ref{thm:resp}.
 
\subsection{Response functions in imaginary times} 
A key role in the analysis is played by the lattice continuity equation (\ref{eq:cons}), which implies Ward identities for the Euclidean correlation functions. Specifically, we will need the current-current Ward identity and the vertex Ward identity; they will be discussed in this section. Later, we will discuss how these Ward identities can be used to study transport in real time; this is possible thanks to a complex deformation argument, that allows to relate the real-time transport coefficients to imaginary-time correlation functions. In fact, it turns out that the real-time response functions (\ref{eq:kubo}) can be rewritten as, see {\it e.g.} \cite{AMP, MP, greenlamp},
\begin{equation}\label{eq:wick}
\begin{split}
\chi_{\nu}^{\beta,L}(\eta,\theta) &= \theta\sum_{\vec x,\vec y } \mu(\theta\vec  x)  \phi_{\ell}(\theta y_{1}, y_{2}) \int_{-\beta/2}^{\beta/2}  d s\, e^{-i\eta_{\beta} s} \langle \timord \gamma_{s}(n_{\vec x})\;; j_{\nu,\vec y} \rangle_{\beta,\mu,L} \\
&\quad + \varepsilon_{\nu}^{\beta,L}(\eta,\theta)
\end{split}
\end{equation}
where: $\eta_{\beta}$ is the best approximation of $\eta$ in $(2\pi/\beta)\mathbb{N}$ such that $\eta_{\beta} \ge \eta$; $\varepsilon_{\nu}^{\beta,L}(\eta,\theta) = 0$ if $\eta \in \frac{2\pi}{\beta} \mathbb{N}$, and it satisfies the bound:
\begin{equation}\label{eq:epsest}
| \varepsilon_{\nu}^{\beta,L}(\eta,\theta) | \le \frac{C}{\beta \eta^{3}}\;
\end{equation}
uniformly in $L$ and $\theta$. The rewriting (\ref{eq:wick}) is particularly convenient: the Euclidean correlation function in the right-hand side can be studied via our renormalization group construction. Furthermore, as we shall see in the next section, the lattice continuity equation introduces nontrivial constraints on the structure of this correlation function, that will play an important role in the computation of the response function.
\subsection{Current-current Ward identity}\label{sec:cucuWI}
To begin, let us rewrite the conservation law (\ref{eq:cons}) in imaginary time. Using that $\tau_{t}(\mc{O}) = \gamma_{it}(\mc{O})$ if $[\mc{O}, \mc{N}] = 0$, we get the Euclidean continuity equation:
\begin{equation}\label{eq:cons2}
i\partial_{t} \gamma_{t}(n_{x}) + \sum_{k=1,2} \text{d}_{x_{k}} \gamma_{t}(j_{k,x}) = 0\;.
\end{equation}
We will be interested in the consequences of this identity at the level of the Euclidean correlation functions. Let $\eta \in \frac{2\pi}{\beta}\cdot \mathbb{N}$, and recall the definition (\ref{eq:smearj}) of the smeared current operator. We define:
\begin{equation}
j_{\nu}(\eta, \phi_{\theta,\ell}) := \int_{0}^{\beta} dt\,e^{-i\eta t} \gamma_{t}\big( j_{\nu}(\phi_{\theta,\ell}) \big)\;.
\end{equation}
By time-translation invariance and $\beta$-periodicity of the Gibbs state, we have:
\begin{equation}\label{eq:heavi}
\begin{split}
&\langle  {\timord} j_{0}(\eta, \mu_{\theta})\;; j_{\nu}(-\eta, \phi_{\theta,\ell})  \rangle_{\beta, \mu, L} \\
&\quad = \beta \int_{-\frac{\beta}{2}}^{\frac{\beta}{2}}  d s\,e^{-is\eta} \big[ \Theta(s\ge 0) \big\langle \gamma_{s}\big( j_{0}(\mu_{\theta}) \big)\;; j_{\nu}(\phi_{\theta,\ell})  \big\rangle_{\beta,\mu,L} \\
&\quad\qquad  + \Theta(s\le 0) \big\langle j_{\nu}(\phi_{\theta,\ell})\;; \gamma_{s}\big( j_{0}(\mu_{\theta}) \big)  \big\rangle_{\beta,\mu,L} \big]\;,
 \end{split}
\end{equation}
with $\Theta(\cdot)$ the Heaviside function\footnote{The $s=0$ ambiguity in (\ref{eq:heavi}) has no relevance, since all operators are bounded, and it involves a zero measure set of times.}. Hence,
\begin{equation}
\begin{split}
&\eta\langle  \timord j_{0}(\eta, \mu_{\theta})\;; j_{\nu}(-\eta, \phi_{\theta,\ell})  \rangle_{\beta, \mu, L} \\
&= \beta \int_{-\frac{\beta}{2}}^{\frac{\beta}{2}}\mr  ds\, (i\partial_{s}\mr e^{-is\eta})  \Big[ \Theta(s\ge 0) \big\langle \gamma_{s}\big( j_{0}(\mu_{\theta}) \big)\;; j_{\nu}(\phi_{\theta,\ell})  \big\rangle_{\beta,\mu,L}\\
&\qquad + \Theta(s\le 0) \big\langle j_{\nu}(\phi_{\theta,\ell})\;; \gamma_{s}\big( j_{0}(\mu_{\theta}) \big)  \big\rangle_{\beta,\mu,L} \Big]\;;
\end{split}
\end{equation}
integrating by parts, using the KMS identity $\langle A \gamma_{s}(B) \rangle = \langle \gamma_{s+\beta}(B) A  \rangle$ to show the cancellation of the boundary terms, and using the continuity equation (\ref{eq:cons2}):
\begin{equation}
\begin{split}
&\eta \langle  \timord j_{0}(\eta, \mu_{\theta})\;; j_{\nu}(-\eta, \phi_{\theta,\ell})  \rangle_{\beta, \mu, L} \\
&= \sum_{k=1,2} \langle  \timord (\text{d}_{x_{k}}j_{k})(\eta, \mu_{\theta})\;; j_{\nu}(-\eta, \phi_{\theta,\ell})  \rangle_{\beta, \mu, L} -i\beta \langle [ j_{0}(\mu_{\theta}),  j_{\nu}(\phi_{\theta,\ell}) ] \rangle_{\beta,\mu,L}\;.
\end{split}
\end{equation}
Next, observe that:
\begin{equation}
\begin{split}
\sum_{k=1,2} (\text{d}_{x_{k}}j_{k})(\eta, \mu_{\theta}) &= \theta\int_{-\frac{\beta}{2}}^{\frac{\beta}{2}}  d s\, e^{-i\eta s} \sum_{\vec x \in \Lambda_{L} } \sum_{k=1,2}\mu(\theta\vec x) \text{d}_{x_{k}}\gamma_{s}(j_{k,\vec x}) \\
&= - \theta\int_{-\frac{\beta}{2}}^{\frac{\beta}{2}} d s\, e^{-i\eta s}\sum_{\vec x \in \Lambda_{L}} \sum_{k=1,2}  \big(\text{d}_{x_{k}}\mu(\theta \vec x)\big) \gamma_{s}(j_{k,\vec x}) \\
& = -\theta^{2} \int_{-\frac{\beta}{2}}^{\frac{\beta}{2}} d s\, e^{-i\eta s}\sum_{\vec x \in \Lambda_{L}} \sum_{k=1,2} \mu_{\theta; k}(\theta \vec x) \gamma_{s}(j_{k;\vec x})
\end{split}
\end{equation}
where in the second equality we summed by parts and used the cylindric boundary conditions, while in the last equality we introduced the function:
\begin{equation}
\mu_{\theta;k}(\theta\vec  x) := \frac{ \mu(\theta \vec x) - \mu\big(\theta(\vec x - \vec{e}_{k})\big)}{\theta}\;,
\end{equation}
which has similar qualitative properties as $\mu(\theta \vec x)$. Thus, we obtained the following identity:
\begin{equation}\label{eq:WI}
\begin{split}
&\eta \langle  \timord j_{0}(\eta, \mu_{\theta})\;; j_{\nu}(-\eta, \phi_{\theta,\ell})  \rangle_{\beta, \mu, L} \\
&\quad = - \theta \sum_{k=1,2} \langle  \timord j_{k}(\eta, \mu_{\theta;k})\;; j_{\nu}(-\eta, \phi_{\theta,\ell})  \rangle_{\beta, \mu, L} - i \beta \langle [ j_{0}(\mu_{\theta}), j_{\nu}( \phi_{\theta,\ell}) ] \rangle_{\beta,\mu,L}\;.
\end{split}
\end{equation}
Eq. (\ref{eq:WI}) is the Ward identity for the current-current correlation function.
\begin{remark} The Ward identity (\ref{eq:WI}) immediately implies that the response functions are vanishing if one first takes the limit $\theta \to 0$ and then $\eta \to 0$.
\end{remark}

\subsubsection{Implications of the current-current Ward identity}\label{sec:JJ}
The Ward identity (\ref{eq:WI}) allows to derive important constraints on the current-current correlation functions. Let:
\begin{equation}
K^{\beta,L}_{\mu\nu}(\eta,\theta) :=  \frac{1}{\theta \beta}\langle  \timord j_{\mu}(\eta, \mu_{\theta;\mu})\;; j_{\nu}(-\eta, \phi_{\theta,\ell})  \rangle_{\beta, \mu, L}\;,
\end{equation}
with $\mu_{\theta;0} = \mu_{\theta}$. Observe that, for $\eta \in \frac{2\pi}{\beta} \mathbb{N}$:
\begin{equation}\label{eq:chiK}
\chi_{\nu}^{\beta,L}(\eta,\theta) = K^{\beta,L}_{0\nu}(\eta,\theta)\;;
\end{equation}
for general $\eta>0$, the identity holds with the additive error $\varepsilon_{\nu}^{\beta,L}(\eta,\theta)$, as in Eq. (\ref{eq:wick}). Suppose that $K^{\beta,L}_{\mu\nu}(\eta,\theta)$ satisfies the estimate:
\begin{equation}\label{eq:KKlog}
| K^{\beta,L}_{\mu\nu}(\eta,\theta) | \le C_{\ell} |\log \theta|\;,
\end{equation}
uniformly in $\eta, \beta, L_{1}, L_{2}$. This bound will be proven later, as a consequence of the result on the two-point function, Theorem \ref{thm:2pt}, and of the support properties of the test functions. Also, suppose that the following decomposition holds:
\begin{equation}\label{eq:singreg}
K^{\beta,L}_{\mu\nu}(\eta,\theta) = K^{\text{sing}}_{\mu\nu}(\eta,\theta) + K^{\text{reg}}_{\mu\nu}(\eta,\theta)\;,
\end{equation}
where, for $\alpha>0$:
\begin{equation}\label{eq:contK}
\big|K^{\text{reg}}_{\mu\nu}(\eta,\theta) - K^{\text{reg}}_{\mu\nu}(\eta,\theta')\big|\le C_{\ell} (|\theta|^{\alpha} + |\theta'|^{\alpha})\;.
\end{equation}
The constant $C_{\ell}$ might depend on $\ell$, entering the definition of the test function $\phi$, but it is independent from all the other parameters. We shall call $K^{\text{reg}}_{\mu\nu}$ the regular part of $K^{\beta,L}_{\mu\nu}$, while $K^{\text{sing}}_{\mu\nu}$, which may not satisfy (\ref{eq:contK}), is called the singular part. As we will see later, Theorem \ref{thm:2pt} will imply the decomposition (\ref{eq:singreg}), with a relatively explicit form for $K^{\text{sing}}_{\mu\nu}(\eta,\theta)$. 

Let $\nu = 0$. By (\ref{eq:WI}), we get:
\begin{equation}\label{eq:WI2}
\eta K_{00}^{\text{sing}}(\eta, \theta) + \eta K_{00}^{\text{reg}}(\eta, \theta) = - \theta \sum_{k=1,2} K^{\beta,L}_{k0}(\eta,\theta)\;.
\end{equation}
This identity implies that:
\begin{equation}\label{eq:WIKcons}
K_{00}^{\text{reg}}(\eta, \theta) = -K_{00}^{\text{sing}}(\eta,\theta) - \frac{\theta}{\eta} \sum_{k=1,2} K^{\beta,L}_{k0}(\eta,\theta)\;.
\end{equation}
By (\ref{eq:contK}), we have:
\begin{equation}
\begin{split}
K_{00}^{\text{reg}}(\eta, \theta) &= K_{00}^{\text{reg}}(\eta, \eta^{2}) + \mf{e}_{1}(\eta,\theta) \\
&= - K_{00}^{\text{sing}}(\eta, \eta^{2}) + \mf{e}_{2}(\eta,\theta)\;,
\end{split}
\end{equation}
where:
\begin{equation}\label{eq:eiest}
| \mf{e}_{i}(\eta,\theta) | \le C_{\ell}(|\theta|^{\alpha} + |\eta|^{\alpha})\;,\qquad i =1,2\;;
\end{equation}
the bound on the error terms follows from (\ref{eq:KKlog}), (\ref{eq:contK}) and from (\ref{eq:WIKcons}). Therefore, we obtained the following rewriting of the susceptibility, recall Eqs. (\ref{eq:wick}), (\ref{eq:chiK}):
\begin{equation}\label{eq:chi0WI}
\begin{split}
 K^{\beta,L}_{00}(\eta, \theta)= \big( K^{\text{sing}}_{00}(\eta, \theta) - K^{\text{sing}}_{00}(\eta, \eta^{2})\big) + E^{\beta,L}_{00}(\eta,\theta)\;,
\end{split}
\end{equation}
where $E^{\beta,L}_{00}(\eta,\theta)$ is bounded as in (\ref{eq:eiest}). This rewriting is particularly useful, 
due to the relatively explicit form of $K^{\text{sing}}_{00}$ that we will obtain.

A similar rewriting can be obtained for $\chi_{1}(\eta,\theta)$. Using (\ref{eq:WI}), we have:
\begin{equation}
\eta K_{01}^{\text{sing}}(\eta, \theta) + \eta K_{01}^{\text{reg}}(\eta, \theta) = - \theta \sum_{j=1,2} K^{\beta,L}_{j1}(\eta,\theta) + \Delta_{1}(\theta)\;,
\end{equation}
where:
\begin{equation}
\Delta_{1}(\theta) = -\frac{i}{\theta} \langle [ j_{0}(\mu_{\theta}), j_{\nu}( \phi_{\theta,\ell}) ] \rangle_{\beta,\mu,L}\;.
\end{equation} 
We claim that:
\begin{equation}\label{eq:bdS}
|\Delta_{1}(\theta)| \le C |\theta| \ell\;,
\end{equation}
where $\ell$ is the width of the strip in proximity of the edge, entering the definition of $\phi_{\theta,\ell}$. In fact:
\begin{equation}
[ j_{0}(\mu_{\theta}), j_{\nu}( \phi_{\theta,\ell}) ] = \theta^{2} \sum_{\vec x,\vec y } \mu(\theta \vec x) \phi(\theta y_{1}, y_{2}) [ n_{\vec x}, j_{1;\vec y} ]\;.
\end{equation}
Using that $j_{1,x}$ commutes with the number operator, we can rewrite:
\begin{equation}
\begin{split}
[ j_{0}(\mu_{\theta}), j_{\nu}( \phi_{\theta,\ell}) ] &= \theta^{2} \sum_{\vec x,\vec{y} } \big(\mu(\theta \vec x) - \mu(\theta \vec y)\big) \phi(\theta y_{1}, y_{2}) [ n_{\vec x}, j_{1;\vec y} ] \\
&= \theta^{2} \sum_{\substack{\vec x,\vec y  \\ \|\vec x-\vec y\|\le \sqrt{2}}} \big(\mu(\theta \vec x) - \mu(\theta y)\big) \phi(\theta y_{1}, y_{2}) [ n_{\vec x}, j_{1;\vec y} ]\;,
\end{split}
\end{equation}
where in the last step we used that the commutator between the density and the current operator is zero if $\|\vec x-\vec y\|>\sqrt{2}$. Also, from the regularity properties of $\mu(\cdot)$ we have $|\mu(\theta \vec x) - \mu(\theta  \vec y)| \le C \theta \|\vec x-\vec y\|$. Thus, we have:
\begin{equation}
\begin{split}
\big\| \big[ j_{0}(\mu_{\theta}), j_{\nu}( \phi_{\theta,\ell}) \big] \big\| &\le K|\theta|^{3} \sum_{\vec y } \varphi(\theta y_{1}, y_{2}) \\
&\le C\theta^{2} \ell\;,
\end{split}
\end{equation}
 by the support properties of $\varphi$. This implies (\ref{eq:bdS}). Therefore, we can repeat the strategy followed for $\chi_{0}(\eta,\theta)$, and we obtain: 

\begin{equation}\label{eq:chi1WI}
\begin{split}
K^{\beta,L}_{01}(\eta, \theta) = \big( K^{\text{sing}}_{01}(\eta, \theta) - K^{\text{sing}}_{01}(\eta, \eta^{2})\big) + E^{\beta,L}_{01}(\eta,\theta)\;,
\end{split}
\end{equation}
where $E^{\beta,L}_{01}(\eta,\theta)$ satisfies the bound (\ref{eq:eiest}). Eqs. (\ref{eq:chi0WI}), (\ref{eq:chi1WI}) are the main result of this section, and will allow to explicitly compute the response functions.

\subsection{Vertex Ward identity}\label{sec:vertWI}

Let us introduce the vertex function:
\begin{equation}
S_{3;\mu,\sigma,\zeta}(\vec{\bm w},\vec{\bm x},\vec{\bm y}) := \big\langle \timord\, \gamma_{w_0}(j_{\mu, \vec w})\;; \gamma_{x_{0}}(a_{{\vec x},\sigma}) \gamma_{y_{0}}(a^{*}_{{\vec y},\zeta}) \big\rangle_{\beta, \mu, L}\;.
\end{equation}
We will derive an identity for the vertex function, starting from the lattice continuity equation (\ref{eq:cons2}) in imaginary times. We compute:
\begin{equation}\label{eq:vertexWI}
\begin{split}
&i\partial_{w_0} \big\langle \timord\, \gamma_{w_0}(j_{0, \vec w})\;; \gamma_{x_{0}}(a_{{\vec x},\sigma}) \gamma_{y_{0}}(a^{*}_{{\vec y},\zeta}) \big\rangle_{\beta, \mu, L} \\ 
&\quad = -\sum_{\nu=1,2} \text{d}_{w_{\nu}} \big\langle \timord\, \gamma_{w_0}(j_{\nu, \vec w})\;; \gamma_{x_{0}}(a_{{\vec x},\sigma}) \gamma_{y_{0}}(a^{*}_{{\vec y},\zeta}) \big\rangle_{\beta, \mu, L}\\
&\qquad -i \big( \delta({\vec {\bm x}} - {\vec {\bm w}})- \delta({\vec {\bm y}} - {\vec {\bm w}})    \big) \langle \timord \gamma_{x_0}(a_{{\vec x},\sigma}) \gamma_{y_0}(a^{*}_{{\vec y},\zeta}) \rangle_{\beta, \mu, L}\;.
\end{split}
\end{equation}
Eq. (\ref{eq:vertexWI}) is the vertex Ward identity. Summing over $w_{2}$ from $1$ to $L_{2}$ and using the Dirichlet boundary conditions, omitting the $\beta,\mu,L$ labels:
\begin{equation}
\begin{split}
&i\partial_{w_0} \big\langle \timord\, \gamma_{w_0}(j_{0, w_{1}})\;; \gamma_{x_{0}}(a_{{\vec x},\sigma}) \gamma_{y_{0}}(a^{*}_{{\vec y},\zeta}) \big\rangle \\
&= -\text{d}_{w_{1}}\big\langle \timord\, \gamma_{w_0}(j_{1; w_{1}})\;; \gamma_{x_{0}}(a_{{\vec x},\sigma}) \gamma_{y_{0}}(a^{*}_{{\vec y},\zeta}) \big\rangle \\
& \mkern30mu { -} i\big(\delta({\vec {\bm x}} - {\vec {\bm w}})-  \delta(\vec{\bm y} - \vec{\bm w})   \big) \big\langle \timord \gamma_{x_0}(a_{x_{1}, w_{2}; \sigma}) \gamma_{y_0}(a^{*}_{y_{1}, w_{2},\zeta}) \big\rangle\;,
\end{split}
\end{equation}
where we recall:
\begin{equation}\label{eq:cursum}
j_{\nu,x_{1}} := \sum_{x_{2} = 1}^{L_{2}} j_{\nu,\vec x}\;.
\end{equation}
Next, let us introduce the partial Fourier transform of the vertex function:
\begin{equation}
\begin{split}
&\hat S_{3; \mu,\sigma,\zeta}(\bm{p}, \bm{k}, \bm{h}; x_{2}, y_{2}) \\
&\quad:= \frac{1}{\beta L_{1}} \int_{[0,\beta]^{3}}\mr  d w_0  d x_0  d y_0\Big(\sum_{w_{1}, x_{1}, y_{1}} e^{-i\bm{p} \cdot \bm{w} - i\bm{k} \cdot \bm{x} + i\bm{h} \cdot \bm{y} } \\
&\quad\qquad \cdot \big\langle \timord\, \gamma_{w_0}(j_{\mu; w_{1}})\;; \gamma_{x_{0}}(a_{{\vec x},\sigma}) \;; \gamma_{y_{0}}(a^{*}_{{\vec y},\zeta}) \big\rangle\Big)\; ,
\end{split}
\end{equation}
and the partial Fourier transform of the two-point function:
\begin{equation}
\begin{split}
&\hat S_{2;\sigma,\zeta}(\bm{k}, \bm{h}; x_{2}, y_{2}) \\
&\quad := \frac{1}{\beta L_{1}} \int_{[0,\beta]^{2}}\mr  dx_{0}\mr  dy_{0} \sum_{x_{1}, y_{1}} e^{-i\bm{k} \cdot \bm{x} + i\bm{h} \cdot \bm{y} }\langle \timord \gamma_{x_{0}}(a_{{\vec x},\sigma}) \gamma_{y_{0}}(a^{*}_{{\vec y},\zeta}) \big\rangle\;.
\end{split}
\end{equation}
Then, the vertex Ward identity (\ref{eq:vertexWI}) can be rewritten as
\begin{equation}\label{eq:WIvert}
\begin{split}
&-p_{0} \hat S_{3; 0,\sigma,\zeta}(\bm{p}, \bm{k}, \bm{h}; x_{2}, y_{2}) + (1 - e^{-ip_{1}}) \hat S_{3; 1,\sigma,\zeta}(\bm{p}, \bm{k}, \bm{h}; x_{2}, y_{2}) \\
&\qquad =i \hat S_{2;\sigma,\zeta}(\bm{k}, \bm{h} -\bm{p}; x_{2}, y_{2})- i \hat S_{2;\sigma,\zeta}(\bm{k}+  \bm{p}, \bm{h}; x_{2}, y_{2})\;.
\end{split}
\end{equation}
In order to extract useful information from this identity, it will be convenient to rewrite the current operator $j_{1;z_{1}}$ as, recalling (\ref{eq:currdef}) and (\ref{eq:cursum}), and using the Dirichlet boundary conditions:
\begin{equation}\label{eq:jsumx2}
j_{1;z_{1}} = \sum_{z_{2} = 1}^{L_{2}} \big( j_{\vec z,\vec z+\vec e_{1}} + j_{\vec z, \vec z+ \vec e_{1} - \vec e_{2}} + j_{\vec z, \vec z+\vec e_{1} + \vec e_{2}} \big)\;.
\end{equation}
Taking the Fourier transform:
\begin{equation}\label{eq:currentkernel}
\begin{split}
\hat j_{1;p_{1}} &= \sum_{z_{1} = 1}^{L_{1}} e^{-ip_1  z_{1}} j_{1;z_{1}} \\
& = \frac{i}{L_{1}} \sum_{k_{1} } \Big( e^{ik_{1}} \big( \hat a^{*}_{k_1-p_{1}}, H(-\vec e_{1}) \hat a_{k_1} \big) - e^{-i (k_{1} - p_{1})} \big( \hat a^{*}_{k_{1} - p_{1}}, H(\vec e_{1}) \hat a_{k_{1}} \big)\Big)\;,
\end{split}
\end{equation}
where $H(\vec e_{1})$ is the operator with kernel $H_{\sigma\zeta}(\vec e_{1}; x_{2}, y_{2})$. Recall that the Hamiltonian is translation invariant in the $x_{1}$ direction, which means that $H_{\sigma\zeta}(\vec e_{1}; x_{2}, y_{2}) \equiv H_{\sigma\zeta}\big((x_{1} + 1, x_{2}); (x_{1}, y_{2})\big)$, and we used the notation
\begin{equation}
( f, A g ) := \sum_{\sigma,\zeta} \sum_{x_{2}, y_{2}} f_{\sigma,x_{2}} A_{\sigma\zeta}(x_{2},y_{2}) g_{\zeta,y_{2}}\;.
\end{equation}
Observe that, for $p_{1} = 0$:
\begin{equation}\label{eq:currentkernelzero}
\hat j_{1;0} = \frac{1}{L_{1}} \sum_{k_{1}  } \Big( \hat a^{*}_{k_{1}}, \big(\partial_{k_{1}} \hat H(k_{1})\big) \hat a_{k_{1}} \Big)\;,
\end{equation}
where:
\begin{equation}
\partial_{k_{1}} \hat H(k_{1}; x_{2}, y_{2}) = \partial_{k_{1}} \sum_{|x_{1}|\le 1} e^{-ik_{1} x_{1}} H(x_{1}, 0; x_{2}, y_{2})\;.
\end{equation}
\subsubsection{Implications of the vertex Ward identity}\label{sec:vertexcons}
Here we shall use the vertex Ward identity (\ref{eq:WIvert}) in order to prove the relations (\ref{eq:relzeta}) between the renormalized parameters appearing in the leading contribution to the two-point function, Eqs. (\ref{eq:2ptmain}), (\ref{eq:2ptmain2}). These relations will be important in the computations of the response functions. To do this, we will resolve the main contribution to the vertex function for small external momenta. We will then compare the resulting expression with the two-point function at low momenta, via the Ward identity (\ref{eq:WIvert}).

Let us rewrite the operators $\hat j_{0;p_{2}}$, $\hat j_{1;p_{1}}$ as:
\begin{equation}\label{eq:currentkernels}
\begin{split}
\hat j_{\mu;p_{1}} &= \frac{1}{L_{1}} \sum_{k_{1}} \big( \hat a^{*}_{k_{1} - p_{1}}, \hat J_{\mu}(k_{1},p_{1}) \hat a_{k_{1}} \big)\;, \\
\hat J_{0;\sigma,\zeta}(k_{1},p_{1};x_{2}, y_{2}) &= \delta_{\sigma,\zeta}\delta_{x_{2},y_{2}}\;,\\
\hat J_{1;\sigma,\zeta}(k_{1},p_{1};x_{2},y_{2}) &= i e^{ik_{1}} H_{\sigma,\zeta}(-\vec e_{1}; x_{2},y_{2}) - i e^{-i(k_{1}-p_{1})} H_{\sigma,\zeta}(\vec e_{1}; x_{2},y_{2})\;.
\end{split}
\end{equation}
In what follows, we will choose external momenta $\bm{k}, \bm{p}$ such that, for some $M> 1$:
\begin{equation}\label{eq:qqkk}
\big|k_{1} - k_{F}^{+}(\lambda)\big|^{M} \le |k_{0}|\le |k_{1} - k_{F}^{+}(\lambda)|\;,\quad |p_{1}| \leq \frac{1}{2}| k_{1} - k_{F}^{+}(\lambda) |\;,\qquad |p_{0}| \leq \frac{1}{2}|k_{0}|\;. 
\end{equation}
We recall that $k_{F}^{+}$ is the Fermi momentum of the edge mode localized in proximity of the boundary at $x_{2} = 0$. 
%
%
\begin{lemma}[Implication of the vertex Ward identity]\label{lem:vert} Under the same assumptions of Theorem \ref{thm:2pt}, the following holds. Let:
\begin{equation}
\zeta_{\mu,+} := \sum_{n} \langle Z_{-n,+}, J_{\mu}(k_{F}^{+}(\lambda) + n\alpha,0) Z_{-n,+} \rangle\;,
\end{equation}
and let $v_{\mu,+}$ be the parameters appearing in the asymptotics of the two-point function (\ref{eq:2ptmain2}). Then:
\begin{equation}\label{eq:vertrel}
\zeta_{0,+} = v_{0,+}\;,\qquad \zeta_{1,+} = v_{1,+}\;.
\end{equation}
\end{lemma}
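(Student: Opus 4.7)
The plan is to resolve both sides of the vertex Ward identity (\ref{eq:WIvert}), with ${\bm h} = {\bm k} + {\bm p}$ and external momenta in the regime (\ref{eq:qqkk}), using the asymptotic form of the two-point function (\ref{eq:2ptmain}) together with Wick's rule. Since the model is quasi-free, $\hat S_{3;\mu,\sigma,\zeta}$ factors as a momentum-space contraction of two copies of $\hat S_{2}$ with the current kernel $\hat J_\mu$ from (\ref{eq:currentkernels}). Substituting (\ref{eq:2ptmain}), each two-point function contributes a discrete sum over oscillation labels coming from the Fourier expansion (\ref{eq:Zosc}) of the amplitudes $Z_{n,+}$, times a singular propagator $g_{+;\mr{s}}$, plus a remainder $R$ whose decay from (\ref{eq:Rest2pt}) is better than the leading singular piece by a factor $\|\vec{\bm x}-\vec{\bm y}\|^{-\xi}$.

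The crucial step is to show that in the regime (\ref{eq:qqkk}) only the ``diagonal'' part of this double sum over oscillation labels survives to leading order, so that both sides of (\ref{eq:WIvert}) reduce to a single sum over a common label $n$. For this I would invoke the Diophantine condition (\ref{eq:diophantine}): the singular propagator $g_{+;\mr{s}}({\bm q} + n{\bm \alpha})$ (with ${\bm q} = {\bm k} - {\bm k}_{F}^{+}(\lambda)$) is only supported where $|n\alpha|_{\mathbb{T}} \lesssim \|{\bm q}\|$, which for $n \neq 0$ forces $|n| \gtrsim \|{\bm q}\|^{-1/\tau}$; but the exponential bound (\ref{eq:bdZm}) on $Z_{n,+}$ then makes such contributions overwhelmingly small in $\|{\bm q}\|$. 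The same argument, applied to the second propagator $g_{+;\mr{s}}({\bm q}+{\bm p}+m{\bm \alpha})$ and to the coupling between the two contractions, forces $n=m$ and collapses the double sum to a single mode. After matching indices and using $\hat J_{\mu}(k_{F}^{+}(\lambda)+n\alpha,0)$ in the sum, the outcome is precisely the translation-invariant-like identity (\ref{eq:sketch7}):
\[
(-p_{0} \zeta_{0,+} + i p_{1} \zeta_{1,+})\, g_{+;\mr{s}}({\bm q}+{\bm p})\, g_{+;\mr{s}}({\bm q}) = i\big(g_{+;\mr{s}}({\bm q}) - g_{+;\mr{s}}({\bm q}+{\bm p})\big) + \mc E({\bm q},{\bm p}),
\]
with $\mc E$ satisfying $\mc E\, g_{+;\mr{s}}({\bm q})^{-1} g_{+;\mr{s}}({\bm q}+{\bm p})^{-1} = \mr{o}({\bm p})$; here $1-e^{-ip_{1}} = ip_{1} + O(p_{1}^{2})$ has been used.

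From this point the conclusion is immediate: using the explicit form (\ref{eq:2ptmain2}) of $g_{+;\mr{s}}$, one has $g_{+;\mr{s}}({\bm q}) - g_{+;\mr{s}}({\bm q}+{\bm p}) = (i v_{0,+} p_{0} + v_{1,+} p_{1})\, g_{+;\mr{s}}({\bm q})\, g_{+;\mr{s}}({\bm q}+{\bm p})$. Dividing both sides of the displayed identity by $g_{+;\mr{s}}({\bm q})\, g_{+;\mr{s}}({\bm q}+{\bm p})$ and taking ${\bm p}\to 0$ first along the $p_{0}$-axis and then along the $p_{1}$-axis yields $\zeta_{0,+} = v_{0,+}$ and $\zeta_{1,+} = v_{1,+}$, which is (\ref{eq:vertrel}).

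The main obstacle is controlling the error $\mc E$ uniformly in the regime (\ref{eq:qqkk}). The unusual lower bound $|k_{0}| \geq |k_{1}-k_{F}^{+}|^{M}$ in (\ref{eq:qqkk}) is imposed precisely to prevent new resonances in the intermediate-scale RG analysis of the three-point function with a current insertion: without it, the current vertex $\hat J_{\mu}$ could drive the internal momentum across a Fermi point, producing singular contributions not captured by the leading Wick-factorized form. Verifying this requires extending the chain-graph analysis of Section \ref{sec:RG} to trees with an extra current-insertion node, and verifying that the Diophantine gains from (\ref{eq:diophantine}) together with the exponential decay of the $Z_{n,+}$ modes and of the edge eigenfunctions in $x_{2}$ are strong enough to absorb the sums over oscillation labels and transverse coordinates. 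Conceptually this is a direct extension of the two-point function analysis; the technical bookkeeping is the hard part.
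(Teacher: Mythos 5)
Your overall architecture matches the paper's: expand both sides of the vertex Ward identity (\ref{eq:WIvert}) via Wick's rule and the two-point-function asymptotics (\ref{eq:2ptmain}); extract a ``translation-invariant-like'' identity of the form (\ref{eq:sketch7}); use the telescoping identity $g_{+;\mr{s}}({\bm q}) - g_{+;\mr{s}}({\bm q}+{\bm p}) = (iv_{0,+}p_0 + v_{1,+}p_1)g_{+;\mr{s}}({\bm q})g_{+;\mr{s}}({\bm q}+{\bm p})$; and read off $\zeta_{\mu,+}=v_{\mu,+}$ by sending $|p_0|\ll|p_1|$ and $|p_1|\ll|p_0|$. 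But the step where you reduce the double sum over oscillation labels to the diagonal contains a genuine error, and the characterization of what (\ref{eq:qqkk}) is for is not what the paper uses.

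The problematic claim is that ``the singular propagator $g_{+;\mr{s}}({\bm q}+n{\bm\alpha})$ is only supported where $|n\alpha|_{\mathbb{T}}\lesssim\|{\bm q}\|$.'' This is false: from (\ref{eq:grel}) the support of $g_{+;\mr{s}}$ is the ball of radius $\sim\delta$ (a constant set by the cutoff $\chi$), so $g_{+;\mr{s}}({\bm q}+n{\bm\alpha})$ is nonvanishing whenever $|n\alpha|_{\mathbb{T}}\lesssim\delta$, not $\lesssim\|{\bm q}\|$. There are therefore plenty of nonresonant modes $n\neq 0$ with $|n\alpha|_{\mathbb{T}}$ of order, say, $\delta/10$ that do \emph{not} force $|n|$ to be large, and the exponential decay (\ref{eq:bdZm}) of $Z_{n}$ by itself does not kill them. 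The paper handles this in Proposition \ref{prop:vertasy} by splitting the nonresonant modes at the threshold $\sqrt{\|{\bm q}+{\bm p}\|}$: when $\|(n+m_2){\bm\alpha}\|\le\sqrt{\|{\bm q}+{\bm p}\|}$ the Diophantine condition together with (\ref{eq:bdZm}) yields an overwhelmingly small factor $\exp(-c\|{\bm q}+{\bm p}\|^{-\tau/2})$; when $\|(n+m_2){\bm\alpha}\|>\sqrt{\|{\bm q}+{\bm p}\|}$ the shifted propagator is \emph{partially regularized}, $|g_{+;\mr{s}}({\bm q}+{\bm p}+(n+m_2){\bm\alpha})|\le C/\sqrt{\|{\bm q}+{\bm p}\|}$, which beats the $1/\|{\bm q}+{\bm p}\|$ singularity of the leading diagonal term by half a power. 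Both parts of this dichotomy are needed to get the estimate (\ref{eq:IIest}) and hence (\ref{eq:estvert}); your sketch covers only the first and dismisses the second as vacuous, which it is not.

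Finally, the paper does not perform any new chain-graph or multiscale analysis for the three-point function. The vertex $\hat S_3$ is controlled entirely through Wick's rule (\ref{eq:V3}) applied to the full $\hat S_2$, plus a \emph{momentum-space} bound on the remainder $\hat R$ of the two-point function, namely (\ref{eq:RbdFou}) proved in Appendix \ref{app:Rest}. The role of the condition $|k_1-k_F^+|^M\le|k_0|$ in (\ref{eq:qqkk}) is precisely to make that appendix work: it lets one trade factors of $1/|k_0|$ (which control the contributions from the opposite edge and from scales set by $k_0$) for factors of $1/\|{\bm k}-{\bm k}_F^+\|^{1-\theta}$ in (\ref{eq:R1momentum}), uniformly in the regime considered. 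Your explanation in terms of avoiding resonances in a putative RG expansion of the three-point function with a current insertion is plausible-sounding but not necessary, and building that machinery would be substantial extra work that the paper carefully avoids.
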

Lemma \ref{lem:vert} proves the relations (\ref{eq:relzeta}), and concludes the proof of Theorem \ref{thm:2pt} about the two-point function. These relations will play an important role in the computation of the transport coefficients, discussed in the next section. The analysis can be adapted to prove the relations $\zeta_{0,-} = v_{0,-}$, $\zeta_{1,-} = v_{1,-}$, which however will not be needed in the following.

The proof of Lemma \ref{lem:vert} is based on the two following propositions. In what follows, we shall use the notation ${\bm q} = {\bm k} - {\bm k}_{F}^{+}$.
\begin{proposition}[Asymptotics of the vertex function]\label{prop:vertasy} Assume (\ref{eq:qqkk}). Then, for $\| {\bm k} - {\bm k}_{F}^{+}(\lambda) \| \ll 1$, $\| {\bm p} \| \ll 1$ and $L_{2}$ large enough:
\begin{equation}
\begin{split}
\hat S_{3; \mu,\sigma,\zeta}(\bm{p}, \bm{k}, \bm{k} + {\bm p}; x_{2}, y_{2}) &= Z_{0,+,\sigma}(x_{2}) \overline{Z_{0,+,\zeta}(y_{2})} g_{+;\mr{s}}({\bm q}) g_{+;\mr{s}}({\bm q} + {\bm p}) \zeta_{\mu,+} \\
&\quad + R^{\text{tot}}_{3; \mu,\sigma,\zeta}(\bm{p}, \bm{k}, \bm{k} + {\bm p}; x_{2}, y_{2})\;,
\end{split}
\end{equation}
where:
\begin{equation}\label{eq:zetamu}
\zeta_{\mu,+} := \sum_{n} \langle Z_{-n,+}, J_{\mu}(k_{F}^{+}(\lambda) + n\alpha,0) Z_{-n,+} \rangle\;,
\end{equation}
and:
\begin{equation}\label{eq:estvert}
\begin{split}
&\sum_{x_{2}, y_{2}} \big| Z_{0,+,\sigma}(x_{2})\big| \big| Z_{0,+,\zeta}(y_{2})\big| \Big|g^{-1}_{+;\mr{s}}({\bm q})\Big| \Big|g^{-1}_{+;\mr{s}}({\bm q} + {\bm p})\Big|\\&\quad \cdot \Big|R^{\text{tot}}_{3; \mu,\sigma,\zeta}(\bm{p}, \bm{k}, \bm{k} + {\bm p}; x_{2}, y_{2})\Big| = \mr{o}(1)\;.
\end{split}
\end{equation}
\end{proposition}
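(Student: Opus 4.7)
\textbf{Proof proposal for Proposition \ref{prop:vertasy}.} Since the model is quasi-free, the three-point function can be computed via Wick's theorem applied to the four-fermion expectation obtained after expressing the current as a bilinear using (\ref{eq:currentkernels}). Schematically, truncating removes the disconnected factor and leaves a single contraction pattern of the form
\begin{equation*}
\hat S_{3;\mu,\sigma,\zeta}({\bm p},{\bm k},{\bm k}+{\bm p};x_2,y_2) \;=\; -\sum_{m,n}\sum_{\substack{x_2',x_2''\\\sigma',\sigma''}} \hat S_{2;\sigma,\sigma'}({\bm k},{\bm k}+n\bm\alpha;x_2,x_2')\,\hat J_{\mu;\sigma',\sigma''}(k_{F}^{+}(\lambda)+n\alpha,p_1;x_2',x_2'')\,\hat S_{2;\sigma'',\zeta}\big({\bm k}+{\bm p}+(n-m)\bm\alpha,{\bm k}+{\bm p};x_2'',y_2\big),
\end{equation*}
where the shifts $n{\bm\alpha}, m{\bm\alpha}$ are forced by the decomposition (\ref{eq:2ptmain}) of the two-point function into momentum-space components indexed by integers. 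The first step of the proof will be to make this Wick contraction precise, using either the explicit Gaussian structure or the RG representation of Section \ref{sec:2pt} adapted to an additional bilinear source in the generating functional.

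The second step is to substitute the asymptotic expansion (\ref{eq:2ptmain})--(\ref{eq:2ptmain2}) of Theorem \ref{thm:2pt} into each of the two factors $\hat S_2$, separating the leading singular piece from the remainder $R_{\sigma,\zeta}$. Keeping only the $\omega=+$ sector (which is the singular one for ${\bm k}$ near ${\bm k}_{F}^{+}$) and fixing a representative momentum $k_1'=k_{F}^{+}(\lambda)+n\alpha$ in the current kernel, the leading contribution takes the form
\begin{equation*}
\sum_n \sum_{x_2',x_2''} Z_{0,+,\sigma}(x_2)\overline{Z_{n,+,\sigma'}(x_2')}\, g_{+;\mr s}({\bm q})\, J_{\mu}(k_{F}^{+}(\lambda)+n\alpha,0)_{\sigma',\sigma''}(x_2',x_2'')\, Z_{-n,+,\sigma''}(x_2'')\cdot\cdot\cdot,
\end{equation*}
where the complementary two-point factor must carry the complex conjugate amplitudes $\overline{Z_{0,+,\zeta}(y_2)}$ and a second propagator $g_{+;\mr s}({\bm q}+{\bm p})$ to reconstruct the announced leading expression; here I used that the oscillatory phase $e^{i k_{F}^{+}(\lambda)(x_1-y_1)}$ factors match the Wick contraction. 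Summing the internal indices produces exactly $\zeta_{\mu,+}$ as defined in (\ref{eq:zetamu}), via the identification $\sum_{\sigma',\sigma'',x_2',x_2''}\overline{Z_{n,+,\sigma'}(x_2')}J_{\mu}(\cdot)_{\sigma',\sigma''}(x_2',x_2'')Z_{-n,+,\sigma''}(x_2'')= \langle Z_{-n,+},J_\mu(k_{F}^{+}(\lambda)+n\alpha,0)Z_{-n,+}\rangle$ after a relabeling that exploits the relation $Z_{\psi\phi}=\overline{Z_{\phi\psi}}$ from (\ref{eq:symZ}).

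The third step is to collect all the other contributions into $R_3^{\text{tot}}$ and verify (\ref{eq:estvert}). Three families of error terms arise: (a) cross terms in which one of the two two-point functions is replaced by the remainder $R$ of (\ref{eq:Rest2pt}), which contributes an extra $\xi$-power of decay relative to the singular piece; (b) terms in which the current kernel is evaluated at momentum $k_F^{+}(\lambda)+n\alpha+p_1$ instead of $k_F^{+}(\lambda)+n\alpha$, producing an extra factor of $p_1$ by smoothness of $J_\mu$; and (c) terms in which the scattering sector $\omega=-$ or a higher chain on scale $h<0$ enters, which are exponentially small in $L_2$ or scale-suppressed. After dividing by $g_{+;\mr s}^{-1}({\bm q}) g_{+;\mr s}^{-1}({\bm q}+{\bm p})$, each of these contributions must be shown to vanish as $\|{\bm q}\|,\|{\bm p}\|\to 0$.

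The hard part will be step (c), specifically the control of the non-resonant sums in $n$: the restriction (\ref{eq:qqkk}) with $|q_1|^M \le |q_0|$ is precisely what guarantees that no propagator in the intermediate chain is evaluated at a momentum closer to the Fermi point than ${\bm q}$ itself, thereby preventing a breakdown of the RG scale decomposition. This plays the same role as the assumption used in (\ref{eq:sketch7}) in the introduction. Concretely, for the $n\neq 0$ terms one applies the Diophantine bound (\ref{eq:diophantine}) to the shift $n\alpha$ together with the $x_2$-decay of $Z_{n,+,\sigma}$, obtaining exponentially small contributions; for $n=0$ but with the two-point function on a deep scale, one uses the improved bound (\ref{eq:Vimpro}) and Proposition \ref{prp:phipsi} to gain a factor $\gamma^{\theta h}$ which, combined with the two factors $g_{+;\mr s}^{-1}$ in (\ref{eq:estvert}), yields the desired $\mr o(1)$ rate.
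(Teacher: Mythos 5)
Your proposal follows the same basic strategy as the paper: Wick expansion of $\hat S_3$ into a $\hat S_2 \cdot \hat J_{\mu} \cdot \hat S_2$ contraction, substitution of the singular-plus-remainder decomposition from Theorem~\ref{thm:2pt}, identification of the main $m_1=0$, $n+m_2=0$, $\omega=+$ term (which correctly reproduces $\zeta_{\mu,+}$), and control of the remaining contributions. Your families (a) and the $\omega=-$ piece of (c) correspond exactly to the paper's $\hat R_3$ and its term $\text{III}$.

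However, there is a genuine gap in how you propose to control the nonzero Fourier-mode contributions inside the product of two \emph{singular} pieces (the paper's term $\text{II}$). You claim that ``for the $n\neq 0$ terms one applies the Diophantine bound together with the $x_2$-decay of $Z_{n,+,\sigma}$, obtaining exponentially small contributions.'' This is only true in the regime where $\|n\alpha\|_{\mathbb T}$ is \emph{small} compared to the external momentum scale (say $\|n\alpha\|_{\mathbb T} \le \sqrt{\|{\bm q}+{\bm p}\|}$): only then does the Diophantine condition force $|n|$ to be huge ($\gtrsim \|{\bm q}+{\bm p}\|^{-\tau/2}$), and the exponential decay $e^{-c|n|}$ to be overwhelming. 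When $\|n\alpha\|_{\mathbb T}$ is of order one — e.g. $n=1$ — the Diophantine bound produces nothing and $e^{-c|n|}$ is just a constant, so your mechanism does not control that case. The missing ingredient is the \emph{partial regularization} of the shifted propagator: if $\|n\alpha\|_{\mathbb T} > \sqrt{\|{\bm q}+{\bm p}\|}$ then $g_{+;\mathrm{s}}({\bm q}+{\bm p}+n{\bm\alpha})$ is evaluated at a momentum bounded below by $\sqrt{\|{\bm q}+{\bm p}\|}/2$, hence is $\lesssim 1/\sqrt{\|{\bm q}+{\bm p}\|}$ rather than $\sim 1/\|{\bm q}+{\bm p}\|$; after multiplying by the two $g^{-1}$ factors in \eqref{eq:estvert} this leaves $\sqrt{\|{\bm q}+{\bm p}\|}\to 0$. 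Without this case split your error analysis does not close. Secondarily, your description of the role of the scale condition $|q_1|^M\le|q_0|$ in \eqref{eq:qqkk} — as preventing intermediate propagators from hitting the Fermi point — is not the actual mechanism: it is there to ensure that factors of $1/|k_0|$ arising in the remainder bound \eqref{eq:RbdFou} and in the Diophantine-regime estimate for $\text{II}$ remain comparable to $1/\|{\bm q}\|^M$ and can be beaten by the superpolynomially small Diophantine factor.
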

\begin{proposition}[Asymptotics of the two-point function]\label{prp:asy2pt} Assume (\ref{eq:qqkk}). Then, for $\| {\bm k} - {\bm k}_{F}^{+}(\lambda) \| \ll 1$ and $L_{2}$ large enough:
\begin{equation}\label{eq:Rn00}
\hat S_{2;\sigma,\zeta}(\bm{k},{\bm k} ; x_{2}, y_{2}) = Z_{0,+,\sigma}(x_{2}) \overline{Z_{0,+,\zeta}(y_{2})} g_{+;\mr{s}}({\bm q}) + \hat R^{\text{tot}}_{\sigma,\zeta}({\bm k}; x_{2}, y_{2})\;,
\end{equation}
with:
\begin{equation}\label{eq:Rn0}
\Big| \text{d}_{k_{0}}^{n_{0}} \text{d}_{k_{1}}^{n_{1}} \hat R^{\text{tot}}_{\sigma,\zeta}({\bm k}; x_{2}, y_{2}) \Big| \leq C_{n_{0},n_{1}}\frac{e^{-\tilde c|x_{2} - y_{2}|}}{\| {\bm k} - {\bm k}_{F}^{+} \|^{n_{0} + n_{1} + 1-\theta}} + \sum_{\substack{\omega_{1},\omega_{2} \\ (\omega_{1},\omega_{2}) \neq (+,+)}} \!\!\! C_{n_{0},n_{1}}\frac{e^{-c(|x_{2}|_{\omega_{1}} + |y_{2}|_{\omega_{2}})}}{|k_{0}|^{n_{0} + n_{1} + 1}}\;.
\end{equation}
%
%
\end{proposition}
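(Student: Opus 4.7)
The starting point is the asymptotic decomposition (\ref{eq:2ptmain}) of the two-point function provided by Theorem \ref{thm:2pt}. Substituting the Fourier expansion (\ref{eq:Zosc}) of $Z_{\omega,\sigma}$ into the leading term, I would take the partial Fourier transform at coinciding external momenta $\bm h = \bm k$: the resulting sum over $y_1$ and integral over $y_0$ produce a Kronecker condition on the frequency shifts, and since $\alpha/2\pi = m_0/L_1$ with $\gcd(m_0,L_1)=1$ this reduces to $m=n$ among the exponentially decaying coefficients. This yields
\begin{equation*}
\hat S_{2;\sigma,\zeta}(\bm k,\bm k;x_2,y_2) = \sum_{\omega=\pm}\sum_n Z_{n,\omega,\sigma}(x_2)\overline{Z_{n,\omega,\zeta}(y_2)}\, g_{\omega;\mr s}(\bm k-{\bm k}_F^{\omega}+n\bm\alpha) + \hat R_{\sigma,\zeta}(\bm k;x_2,y_2),
\end{equation*}
where $\hat R$ denotes the partial Fourier transform of the remainder $R$ in (\ref{eq:Rest2pt}). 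I would then isolate the $\omega=+,\,n=0$ term to obtain the claimed leading contribution $Z_{0,+,\sigma}(x_2)\overline{Z_{0,+,\zeta}(y_2)}g_{+;\mr s}(\bm q)$, and collect everything else into $\hat R^{\text{tot}}$.

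To bound $\hat R^{\text{tot}}$ as in (\ref{eq:Rn0}), I would treat three types of contributions. First, for $\omega=+$ and $n\neq 0$, the Diophantine condition (\ref{eq:diophantine}) gives $|n\alpha|_{\mathbb T}\geq c|n|^{-\tau}$, so $g_{+;\mr s}(\bm q+n\bm\alpha)$ and its $\bm k$-derivatives are bounded by polynomial factors in $|n|$, easily dominated by the exponential decay $e^{-c|n|}$ of $Z_{n,+,\sigma}$; combining with $e^{-\tilde c(x_2+y_2)}\leq e^{-\tilde c|x_2-y_2|/2}$ places these terms into the first summand of (\ref{eq:Rn0}), whose right-hand side is large in the regime $\|\bm k-{\bm k}_F^+\|\ll 1$. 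Second, for $\omega=-$, the edge localization $|Z_{n,-,\sigma}(x_2)|\le C|\lambda|^{1-\delta_{n,0}}e^{-c|n|}e^{-\tilde c(L_2-x_2)}$ produces the factor $e^{-\tilde c(|x_2|_-+|y_2|_-)}$ needed for the $(-,-)$ component of the second sum; the propagator $g_{-;\mr s}(\bm k-{\bm k}_F^-+n\bm\alpha)$ is either zero or bounded together with its derivatives, since $\bm k$ is near ${\bm k}_F^+\neq{\bm k}_F^-$ and the Diophantine property controls all shifts. Third, for the Fourier transform of the Theorem \ref{thm:2pt} remainder, I would use the scale-by-scale decomposition of Section \ref{sec:thm2ptproof}: the bound (\ref{eq:R1hest}) yields $|\text{d}_{k_0}^{n_0}\text{d}_{k_1}^{n_1}\hat R^{(h)}(\bm k;x_2,y_2)|\le C_{n_0,n_1}\gamma^{h(\theta-1-n_0-n_1)}e^{-\tilde c|x_2-y_2|}$, and an argument analogous to (\ref{eq:bdRx})--(\ref{eq:R1nest0})---now carried out in momentum space and exploiting that only a narrow range of scales with $\gamma^h\gtrsim \|\bm k-{\bm k}_F^+\|$ effectively contributes---produces the first summand in (\ref{eq:Rn0}).

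The main difficulty I expect will be the treatment of the $(+,-)$ and $(-,+)$ cross terms in the second sum of (\ref{eq:Rn0}): these are not visible at the level of the leading asymptotic expansion (\ref{eq:2ptmain}) and must be traced back to the multiscale construction of $R_{\sigma,\zeta}$ in Section \ref{sec:2pt}, where the bulk propagator $G^{(\mr b)}$ and the mixed kernels $W^{(h)}_{\phi\psi;n,\omega,\sigma}$ couple the two edges. To produce the announced $|k_0|^{-(n_0+n_1+1)}$ decay, I would combine the Combes--Thomas bound (\ref{eq:bdGb}) on the bulk propagator with the exponential localization (\ref{eq:decedge}) of the edge eigenfunctions on opposite sides of the cylinder; in the regime (\ref{eq:qqkk}), the lower bound $|k_0|\geq |q_1|^M$ will be crucial to identify $|k_0|$ as the natural scale controlling these mixed contributions.
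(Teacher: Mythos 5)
Your proposal follows essentially the same route as the paper: expand the leading term of \eqref{eq:2ptmain} at coinciding momenta via \eqref{eq:ZgZ}, isolate the $\omega=+,\,m=0$ propagator, use the Diophantine condition for the other $\omega=+$ contributions, the edge localization for the $\omega=-$ contributions, and invoke the momentum-space bound \eqref{eq:RbdFou} (or its re-derivation from the multiscale estimates) for the remainder of the two-point function. The one step that is not quite correct as written is your treatment of the $\omega=+,\,n\neq 0$ terms. You claim $g_{+;\mr{s}}(\bm q+n\bm\alpha)$ and its derivatives are ``bounded by polynomial factors in $|n|$''; this is false when $|n\alpha|_{\mathbb{T}}$ is comparable to $|q_1|$ (say $|n\alpha|_{\mathbb{T}} \in (\sqrt{\|\bm q\|},\,2|q_1|)$), where the denominator may only be of order $\sqrt{\|\bm q\|}$ rather than of order $|n|^{-\tau}$. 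The correct bookkeeping, which the paper inherits from the argument after \eqref{eq:IIA} in the proof of Proposition \ref{prop:vertasy}, is the dichotomy $|n\alpha|_{\mathbb{T}} \leq \sqrt{\|\bm q\|}$ versus $|n\alpha|_{\mathbb{T}} > \sqrt{\|\bm q\|}$: in the first regime the Diophantine bound forces $|n|\geq C\|\bm q\|^{-\tau/2}$ and the factor $e^{-c|n|}$ dominates any power of $1/|q_0|$, while in the second regime the shift partially regularizes the propagator, yielding a $\|\bm q\|^{-(n_0+n_1+1)/2}$ bound which, after summing in $n$, still falls under the first summand of \eqref{eq:Rn0} for $\theta<1/2$. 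With that replacement, and noting that the $(+,-)$ and $(-,+)$ components of the second sum in \eqref{eq:Rn0} are entirely absorbed by \eqref{eq:RbdFou} rather than requiring a new bulk-edge coupling analysis as you anticipate, the proof is the paper's.
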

Before discussing the proofs of these propositions, let us see how they imply Lemma \ref{lem:vert}.
\begin{proof} (of Lemma \ref{lem:vert}.) Suppose that the assumptions of Propositions \ref{prop:vertasy}, \ref{prp:asy2pt} hold true. Consider the left-hand side of the Ward identity (\ref{eq:WIvert}), for external momenta satisfying (\ref{eq:qqkk}). We have, from Proposition \ref{lem:vert},
\begin{equation}\label{eq:WIvertcons}
\begin{split}
&\sum_{\substack{x_{2}, y_{2} \\ \sigma, \zeta}}  \overline{Z_{0,+,\sigma}(x_{2})} Z_{0,+,\zeta}(y_{2}) \\&\quad \cdot \Big(-p_{0}\hat S_{3; 0,\sigma,\zeta}(\bm{p}, \bm{k}, \bm{k} + {\bm p}; x_{2}, y_{2}) + (1 - e^{-ip_{1}}) \hat S_{3; 1,\sigma,\zeta}(\bm{p}, \bm{k}, \bm{k} + {\bm p}; x_{2}, y_{2})\Big)\\
&= \| Z_{0,+} \|^{4} (-p_{0} \zeta_{0,+} + ip_{1}\zeta_{1,+}) g_{+;\mr{s}}({\bm q}) g_{+;\mr{s}}({\bm q} + {\bm p}) + \mathcal{E}_{1}({\bm k},{\bm p})
\end{split}
\end{equation}
where the error term satisfies:
\begin{equation}
\big| g^{-1}_{+;\mr{s}}({\bm q})\big| \big|g^{-1}_{+;\mr{s}}({\bm q} + {\bm p})\big| \big|\mathcal{E}_{1}({\bm k},{\bm p})\big| = o(\|{\bf p}\|)\;.
\end{equation}
Next, consider the right-hand side of the Ward identity (\ref{eq:WIvert}). We have, by Proposition \ref{prp:asy2pt}
\begin{equation}\label{eq:WIvertcons2}
\begin{split}
&\sum_{\substack{x_{2}, y_{2} \\ \sigma, \zeta}}  \overline{Z_{0,+,\sigma}(x_{2})} Z_{0,+,\zeta}(y_{2}) \\
&\quad \cdot \Big(i \hat S_{2;\sigma,\zeta}(\bm{k}, \bm{k}; x_{2}, y_{2})- i \hat S_{2;\sigma,\zeta}(\bm{k}+  \bm{p}, \bm{k}+  \bm{p}; x_{2}, y_{2})\Big) \\
&= \| Z_{0,+} \|^{4} \big( g_{+;\mr{s}}({\bm q}) - g_{+;\mr{s}}({\bm q} + {\bm p}) \big) + \mathcal{E}_{2}({\bm k},{\bm p})\;,
\end{split}
\end{equation}
where the error term satisfies:
\begin{equation}
\big| g^{-1}_{+;\mr{s}}({\bm q})\big| \big|g^{-1}_{+;\mr{s}}({\bm q} + {\bm p})\big| \big|\mathcal{E}_{2}({\bm k},{\bm p})\big| = o(\|{\bf p}\|)\;.
\end{equation}
Here we used that the difference of the error terms in (\ref{eq:Rn00}) can be estimated as $\|{\bm p}\|$ times the derivative of the error term, which is estimated in (\ref{eq:Rn0}). Then, observing that, for momenta small enough so that the cutoff function in the definition of $g_{+;\mr{s}}$ is $1$,
\begin{equation}
g_{+;\mr{s}}({\bm q}) - g_{+;\mr{s}}({\bm q} + {\bm p}) = (i v_{0,+} p_{0} + v_{1,+} p_{1}) g_{+;\mr{s}}({\bm q}) g_{+;\mr{s}}({\bm q} + {\bm p})\;,
\end{equation}
and using that $\| Z_{0,+} \|^{2} = 1 + O(\lambda)$, we have, equating (\ref{eq:WIvertcons}), (\ref{eq:WIvertcons2}):
\begin{equation}\label{eq:WIvertcons3}
\begin{split}
&(-p_{0} \zeta_{0,+} + ip_{1}\zeta_{1,+})  \\
&\quad = (-v_{0,+} p_{0} + i v_{1,+} p_{1})  + g_{+;\mr{s}}({\bm q})^{-1} g_{+;\mr{s}}({\bm q} + {\bm p})^{-1}\Big(\mathcal{E}_{2}({\bm k},{\bm p}) -  \mathcal{E}_{1}({\bm k},{\bm p})\Big)\;.
\end{split}
\end{equation}
The relations (\ref{eq:vertrel}) follows from (\ref{eq:WIvertcons3}) by choosing $|p_{1}| \ll |p_{0}|$ resp. $|p_{0}| \ll |p_{1}|$.
\end{proof}
We are left with proving Propositions \ref{prop:vertasy}, \ref{prp:asy2pt}.
\begin{proof} (of Proposition \ref{prop:vertasy}.) By the Wick rule, the vertex function in the left-hand side of (\ref{eq:WIvert}) can be written in terms of the two-point function as:
\begin{equation}
\begin{split}\label{eq:V3}
\hat  S_{3; \mu,\sigma,\zeta}(\bm{p}, \bm{k}, \bm{h}; x_{2}, y_{2}) &= \sum_{\bm{r}} \sum_{\rho,\varrho}  \sum_{z_{2},w_{2}}\big(\hat J_{\mu;\rho,\varrho}(r_{1},p_{1};z_{2},w_{2}) \\
&\qquad \cdot \hat S_{2;\sigma,\rho} (\bm{k},\bm{r} - \bm{p}, x_{2},z_{2})  \hat S_{2;\varrho,\zeta} ( \bm{r}, \bm{h},  w_{2}, y_{2} )\big)\;.
\end{split}
\end{equation}
Recall the decomposition (\ref{eq:2ptmain}) of the two-point function. In Fourier space, it reads:
\begin{equation}\label{eq:S2Fou}
\hat S_{2;\sigma,\zeta} (\bm{k}, {\bm h}; x_{2}, y_{2}) = \hat S^{\mr{s}}_{2;\sigma,\zeta} (\bm{k}, {\bm h}; x_{2}, y_{2}) + \hat R_{\sigma,\zeta}({\bm k}, {\bm h}; x_{2}, y_{2})\;;
\end{equation}
both sides can be non-zero only if ${\bm h} = {\bm k} + n{\bm \alpha}$ for some integer $n$, and we set $\hat S^{\mr{s}}_{2;\sigma,\zeta} (\bm{k}, {\bm k} + n{\bm \alpha}; x_{2}, y_{2}) \equiv \hat S^{\mr{s}}_{2;n,\sigma,\zeta} (\bm{k}; x_{2}, y_{2})$. The main term in (\ref{eq:S2Fou}) can be written as, see Eq. (\ref{eq:Ss}):
\begin{equation}\label{eq:ZgZ}
\hat S^{\mr{s}}_{2;n,\sigma,\zeta} \big(\bm{k}; x_{2}, y_{2}\big) = \sum_{m,\omega} Z_{m,\omega,\sigma}(x_{2}) g_{\omega;\mr{s}}({\bm q} + m{\bm \alpha})  \overline{Z_{-n+m,\omega,\zeta}(y_{2})}\;,
\end{equation}
where ${\bm q} = {\bm k} - {\bm k}_{F}^{\omega}(\lambda)$. Concerning the term $\hat R$ in (\ref{eq:S2Fou}), it is not explicit; the only information we will actually need is the following estimate in momentum space, proved in Appendix \ref{app:Rest}:
\begin{equation}\label{eq:RbdFou}
\begin{split}
\big| \text{d}_{k_{0}}^{n_{0}} \text{d}_{k_{1}}^{n_{1}} \hat R_{\sigma,\zeta}({\bm k}, {\bm k} + n{\bm \alpha}; x_{2}, y_{2}) \big| &\le \frac{C_{n_{0},n_{1}} |\lambda|^{\delta_{n\neq 0}} e^{-\frac{c}{16}|n|} e^{-\tilde c|x_{2} - y_{2}|}}{\| {\bm k} - {\bm k}_{F}^{+} \|^{n_{0} + n_{1} + 1-\theta}}\\&\quad + \sum_{\substack{\omega_{1},\omega_{2} \\ (\omega_{1},\omega_{2}) \neq (+,+)}} \!\!\! C_{n_{0},n_{1}}\frac{e^{-\frac{c}{16}|n|} e^{-c(|x_{2}|_{\omega_{1}} + |y_{2}|_{\omega_{2}})}}{|k_{0}|^{n_{0} + n_{1} + 1-\theta}}\;,
\end{split}
\end{equation}
for values of the momentum ${\bm k}$ safisfying the assumptions of the proposition, Eq. (\ref{eq:qqkk}). We rewrite (\ref{eq:V3}) as, using that the only nonzero terms in the ${\bm r}$ sum are those such that ${\bm r} - {\bm p}= {\bm k} + n{\bm \alpha}$, and choosing ${\bm h} = {\bm k} + {\bm p}$:
\begin{equation}\label{eq:S3S2}
\begin{split}
&\hat  S_{3; \mu,\sigma,\zeta}(\bm{p}, \bm{k}, \bm{k} + {\bm p}; x_{2}, y_{2})\\
& \quad =  \sum_{n} \sum_{\rho,\varrho}  \sum_{z_{2},w_{2}}\big(\hat J_{\mu;\rho,\varrho}(k_{1} + p_{1}+ n\alpha , p_{1};z_{2},w_{2}) \\
&\quad\qquad \cdot \hat S_{2;n,\sigma,\rho} (\bm{k}, x_{2}, z_{2})  \hat S_{2;-n,\varrho,\zeta} ( \bm{k} + {\bm p} + n{\bm \alpha} ,  w_{2},  y_{2} )\big)\;. 
\end{split}
\end{equation}
Next, we write:
\begin{equation}\label{eq:S3dec}
\begin{split}
\hat  S_{3; \mu,\sigma,\zeta}(\bm{p}, \bm{k}, \bm{k} + {\bm p}; x_{2}, y_{2}) &= \hat  S^{\mr{s}}_{3; \mu,\sigma,\zeta}(\bm{p}, \bm{k}, \bm{k} + {\bm p}; x_{2}, y_{2})\\
&\qquad + \hat  R_{3; \mu,\sigma,\zeta}\big(\bm{p}, \bm{k}, \bm{k} + {\bm p}; x_{2}, y_{2}\big)\;,
\end{split}
\end{equation}
where $\hat  S^{\mr{s}}_{3}$ is obtained from $\hat  S_{3}$ replacing all $\hat S_{2}$ with $\hat S^{\mr{s}}_{2}$, while $\hat R_{3}$ collects all the error terms, where there is at least one $\hat  R$. These last terms are subleading with respect to $\hat  S^{\mr{s}}_{3}$, due to the bound (\ref{eq:RbdFou}). In particular, the term $\hat  R_{3; \mu,\sigma,\zeta}$ satisfies the estimate (\ref{eq:estvert}).

Using (\ref{eq:ZgZ}), we have:
\begin{equation}\label{eq:SS}
\begin{split}
&\sum_{n}\hat S^{\text{s}}_{2;n,\sigma,\rho} \big(\bm{k}, x_{2}, z_{2}\big)  \hat S^{\text{s}}_{2;-n,\varrho,\zeta} \big( \bm{k} + {\bm p} + n{\bm \alpha} ,  w_{2},  y_{2} \big) \\
&\quad = \sum_{n} \sum_{m_{1}, m_{2}} \sum_{\omega_{1}, \omega_{2}}Z_{m_{1},\omega_{1},\sigma}(x_{2}) g_{\omega_{1};\mr{s}}({\bm q} + m_{1}{\bm \alpha})  \overline{Z_{-n+m_{1},\omega_{1},\rho}(z_{2})} \\
&\qquad \cdot Z_{m_{2},\omega_{2},\varrho}(w_{2}) g_{\omega_{2};\mr{s}}\big({\bm q} + {\bm p} + (m_{2} + n){\bm \alpha}\big)  \overline{Z_{n+m_{2},\omega_{2},\zeta}(y_{2})}\;.
\end{split}
\end{equation}
We rewrite:
\begin{equation}
(\ref{eq:SS}) = \text{I} + \text{II} + \text{III}\;,
\end{equation}
where: the terms $\text{I}$, $\text{II}$ collect the contributions to (\ref{eq:SS}) with $\omega_{1} = \omega_{2} = +$, and $\text{III}$ all the others. Specifically, the term $\text{I}$ collects the contribution of the indices $n,m_{1},m_{2}$ such that $m_{1} = 0$ and $n+m_{2} = 0$:
\begin{equation}
\begin{split}
\text{I} &= \sum_{n} Z_{0,+,\sigma}(x_{2}) g_{+;\mr{s}}({\bm q})  \overline{Z_{-n,+,\rho}(z_{2})} \\
&\qquad \cdot Z_{-n,+,\varrho}(w_{2}) g_{+;\mr{s}}({\bm q} + {\bm p})  \overline{Z_{0,+,\zeta}(y_{2})}\;;
\end{split}
\end{equation}
instead, the term $\text{II}$ collects the combinations $m_{1}\neq 0$ and/or $n+m_{2} \neq 0$. Before discussing them, let us consider $\text{III}$. From (\ref{eq:RbdFou}) we easily get the following bound:
\begin{equation}\label{eq:calE1bd}
\big|\text{III}\big|\leq \sum_{\substack{\omega_{1}, \omega_{2} \\ (\omega_{1},\omega_{2}) \neq (+,+)}}\frac{e^{-c(|x_{2}|_{\omega_{1}} + |z_{2}|_{\omega_{1}} + |w_{2}|_{\omega_{2}} +|y_{2}|_{\omega_{2}})}}{|k_{0}| |k_{0} + p_{0}|}\;.
\end{equation}
In particular, the term $\text{III}$ satisfies the estimate (\ref{eq:estvert}).
\paragraph{Analysis of $\text{II}$.} Let us consider the contribution coming from $m_{2} + n\neq 0$:
\begin{equation}\label{eq:IIA}
\begin{split}
&\text{II}^{\text{A}} := \sum_{\substack{n,m_{1},m_{2} \\ n+m_{2} \neq 0}} Z_{m_{1},+,\sigma}(x_{2}) g_{+;\mr{s}}({\bm q} + m_{1}{\bm \alpha})  \overline{Z_{-n+m_{1},+,\rho}(z_{2})} \\
&\quad\qquad \cdot Z_{m_{2},+,\varrho}(w_{2}) g_{+;\mr{s}}\big({\bm q} + {\bm p} + (m_{2} + n){\bm \alpha}\big)  \overline{Z_{n+m_{2},+,\zeta}(y_{2})}\;,
\end{split}
\end{equation}
we rewrite the sum over $m_{1}, m_{2}, n$ as:
\begin{equation}
\sum_{m_{1}, m_{2}, n:\, n+m_{2} \neq0} (\cdots) =  \sum_{\substack{m_{1}, m_{2}, n:\, n+m_{2} \neq0 \\ m_{1} = 0}} (\cdots) + \sum_{\substack{m_{1}, m_{2}, n:\, n+m_{2} \neq0 \\ m_{1} \neq 0}} (\cdots)\;.
\end{equation}
Consider the first sum. We further rewrite it as:
\begin{equation}
\begin{split}
& \sum_{\substack{m_{1}, m_{2}, n:\, n+m_{2} \neq0 \\ m_{1} = 0}} (\cdots) =\\&\qquad  \sum_{\substack{m_{1}, m_{2}, n:\, n+m_{2} \neq0 \\ m_{1} = 0,\; \| (n + m_{2}) {\bm \alpha} \| \le \sqrt{\|{\bm q} + {\bm p}\|}}} (\cdots)\; + \sum_{\substack{m_{1}, m_{2}, n:\, n+m_{2} \neq0 \\ m_{1} = 0,\; \| (n + m_{2}) {\bm \alpha} \| > \sqrt{\|{\bm q} + {\bm p}\|}}} (\cdots)\;;
 \end{split}
\end{equation}
Let us denote by $\text{II}^{\text{A}}_{1}$ and $\text{II}^{\text{A}}_{2}$ the two corresponding contributions to $\text{II}^{\text{A}}$. Consider $\text{II}^{\text{A}}_{1}$:
\begin{equation}
\begin{split}
&\text{II}^{\text{A}}_{1} = \\
&\sum_{\substack{m_{2}, n:\, n+m_{2} \neq0 \\ \| (n + m_{2}) {\bm \alpha} \| \le \sqrt{\|{\bm q} + {\bm p}\|}}} Z_{0,+,\sigma}(x_{2}) g_{+;\mr{s}}({\bm q})  \overline{Z_{-n,+,\rho}(z_{2})} \\
&\qquad \cdot Z_{m_{2},+,\varrho}(w_{2}) g_{+;\mr{s}}\big({\bm q} + {\bm p} + (m_{2} + n) {\bm \alpha}\big)  \overline{Z_{n + m_{2},+,\zeta}(y_{2})}\;.
\end{split}
\end{equation}
By the Diophantine condition (\ref{eq:diophantine}), since $n+m_{2} \neq 0$ we have:
\begin{equation}\label{eq:nm2diof}
\| (n + m_{2}) {\bm \alpha} \| \le \sqrt{\|{\bm q} + {\bm p}\|} \Rightarrow | n + m_{2} | \ge \frac{C}{\| {\bm q} + {\bm p} \|^{\tau/2}}\;,
\end{equation}
therefore, by the estimate (\ref{eq:bdZm}) for $Z_{n+m_{2}}$:
\begin{equation}\label{eq:decZZ}
| Z_{n+m_{2},+,\zeta}(y_{2}) | \le C |\lambda|^{1-\delta_{n+m_{2},0}} e^{-\frac{c}{2} |n+m_{2}|} e^{-c y_{2}} \exp\Big(-\frac{c}{2\| {\bm q} + {\bm p}\|^{\tau/2}}\Big)\;.
\end{equation}
Strictly speaking, the bound (\ref{eq:nm2diof}) only holds for $|n + m_{2}| \le L_{1}/2$, by the approximate Diophantine condition. In the case $|n + m_{2}| > L_{1}$, we use the exponential decay in $|n+m_{2}|$ of $Z_{n+m_{2}}$ to make sure that the bound (\ref{eq:decZZ}) holds, possibly with a different $c>0$ (this is possible, since the external momenta are fixed). 

This bound allows to show that $\text{II}^{\text{A}}_{1}$ is small. In fact:
\begin{equation}
\begin{split}
&|g_{+;\mr{s}}({\bm q})| \big| g_{+;\mr{s}}({\bm q} + {\bm p} + (m_{2} + n) {\bm \alpha})\big| \Big| \overline{Z_{n + m_{2},+,\zeta}(y_{2})} \Big| \\
&\qquad \le K |\lambda|^{1-\delta_{n + m_{2},0}} e^{-\frac{c}{2} |n + m_{2}|} e^{-c y_{2}} \frac{1}{|k_{0}| |k_{0} - p_{0}|}  \exp\Big(-\frac{c}{2\| {\bm q} + {\bm p}\|^{\tau/2}}\Big)\;.
\end{split}
\end{equation}
Recalling the conditions (\ref{eq:qqkk}), we get:
\begin{equation}
\begin{split}
&|g_{+;\mr{s}}({\bm q})| \big| g_{+;\mr{s}}\big({\bm q} + {\bm p} + (m_{2} + n) {\bm \alpha}\big)\big| \Big| \overline{Z_{n + m_{2},+,\zeta}(y_{2})} \Big| \\
&\qquad \le K_{M} |\lambda|^{1-\delta_{n+m_{2},0}} e^{-\frac{c}{2} |n + m_{2}|} e^{-c y_{2}} \frac{1}{\|{\bm q}\|^{2M}} \exp\Big(-\frac{c}{2\| {\bm q} \|^{\tau/2}}\Big) \\
&\qquad \le C_{r}  |\lambda|^{1-\delta_{n+m_{2},0}} e^{-\frac{c}{2} |n + m_{2}|} e^{-c y_{2}}  \|{\bm q}\|^{r}\;,\qquad \text{for any $r\in \mathbb{N}$.}
\end{split}
\end{equation}
In particular, the term $\text{II}^{\text{A}}_{1}$ is bounded as:
\begin{equation}
\big| \text{II}^{\text{A}}_{1} \big| \le C_{r} e^{-c (x_{2} + y_{2} + w_{2} + z_{2})} \|{\bm q}\|^{r}\;.
\end{equation}
Consider now the term $\text{II}^{\text{A}}_{2}$,
\begin{equation}
\begin{split}
\text{II}^{\text{A}}_{2} &=  \sum_{\substack{m_{2}, n:\, n+m_{2} \neq0 \\ \| (n + m_{2}) {\bm \alpha} \| > \sqrt{\|{\bm q} + {\bm p}\|}}} \Big( Z_{0,+,\sigma}(x_{2}) g_{+;\mr{s}}({\bm q})  \overline{Z_{-n,+,\rho}(z_{2})} \\
&\qquad \cdot Z_{m_{2},+,\varrho}(w_{2}) g_{+;\mr{s}}\big({\bm q} + {\bm p} + (m_{2} + n) {\bm \alpha}\big)  \overline{Z_{n + m_{2},+,\zeta}(y_{2})}\Big)\;.
\end{split}
\end{equation}
Here we use that the second propagator is partially regularized by $(n+m_{2}){\bm \alpha}$. In fact, we have:
\begin{equation}
\big| g_{+;\mr{s}}\big({\bm q} + {\bm p} + (m_{2} + n){\bm \alpha}\big) \big| \le \frac{C}{\sqrt{\| {\bm q} + {\bm p} \|}}\;.
\end{equation}
Therefore, again by the estimate (\ref{eq:bdZm}):
\begin{equation}
\big| \text{II}^{\text{A}}_{2} \big| \le e^{-c (x_{2} + y_{2} + w_{2} + z_{2})}\frac{1}{\| {\bm q} \|}  \frac{1}{\sqrt{\|{\bm q} + {\bm p}\|}}\;.
\end{equation}
This concludes the analysis of the $m_{1} = 0$ contribution to $\text{II}^{\text{A}}$. Let us now discuss the contribution of the modes $m_{1} \neq 0$ to $\text{II}^{\text{A}}$; we denote by $\text{II}^{\text{A}}_{3}$  this contribution. Here we separate both $n+m_{2}$ and $m_{1}$ in $\| (n + m_{2}) {\bm \alpha} \| \le \sqrt{\|{\bm q} + {\bm p}\|}$, $\| (n + m_{2}) {\bm \alpha} \| > \sqrt{\|{\bm q} + {\bm p}\|}$ and $\| m_{1}{\bm \alpha} \| \le \sqrt{\| {\bm q} \|}$, $\| m_{1}{\bm \alpha} \| > \sqrt{\| {\bm q} \|}$. As we did before for $n + m_{2}$, for the modes with small norm we use twice the Diophantine condition, while for the modes with larger norm, we use that the propagator is effectively regularized; we omit the details. The resulting estimate is:
\begin{equation}
|\text{II}^{\text{A}}_{3}| \le e^{-c (x_{2} + y_{2} + w_{2} + z_{2})} \frac{1}{\sqrt{\| {\bm q} \|}}  \frac{1}{\sqrt{\|{\bm q} + {\bm p}\|}}\;.
\end{equation} 
All in all, the term $\text{II}^{\text{A}}$ in (\ref{eq:IIA}) is estimated as:
\begin{equation}
|\text{II}^{\text{A}}| \le e^{-c (x_{2} + y_{2} + w_{2} + z_{2})} \frac{1}{\| {\bm q} \|}  \frac{1}{\sqrt{\|{\bm q} + {\bm p}\|}}\;.
\end{equation}
In order to conclude the discussion of $\text{II}$, we are left with discussing the contributions associated with $m_{1}\neq 0$ and $n + m_{2} = 0$. These are estimated exactly as we did for $n + m_{2} \neq 0$, and we omit the details. All in all, we obtained:
\begin{equation}\label{eq:IIest}
|\text{II}| \le e^{-c (x_{2} + y_{2} + w_{2} + z_{2})} \Big( \frac{1}{\sqrt{\| {\bm q} \|}}  \frac{1}{\|{\bm q} + {\bm p}\|} + \frac{1}{\| {\bm q} \|}  \frac{1}{\sqrt{\|{\bm q} + {\bm p}\|}}\Big)\;.
\end{equation}
In particular, the term $\text{II}$ satisfies the bound (\ref{eq:estvert}).
%
%
%
%
\paragraph{Conclusion of the proof of Proposition \ref{prop:vertasy}.} We are now ready to determine the low momentum behavior of the vertex function, starting from (\ref{eq:S3dec}) and from the above estimates. We have, recall Eq. (\ref{eq:S3S2}):
\begin{equation}\label{eq:S3s}
\begin{split}
&\hat S^{\mr{s}}_{3; \mu,\sigma,\zeta}(\bm{p}, \bm{k}, \bm{k} + {\bm p}; x_{2}, y_{2})\\
&\quad= \sum_{n} \sum_{\rho,\varrho}  \sum_{z_{2},w_{2}}\Big(\hat J_{\mu;\rho,\varrho}(k_{1} + n\alpha + p_{1},p_{1};z_{2},w_{2}) \\
&\qquad\quad \cdot  Z_{0,+,\sigma}(x_{2}) g_{+;\mr{s}}\big({\bm q}(\bm{k})\big)  \overline{Z_{-n,+,\rho}(z_{2})} \\
&\qquad\quad \cdot Z_{-n,+,\varrho}(w_{2}) g_{+;\mr{s}}\big({\bm q}(\bm{k}) + {\bm p}\big)  \overline{Z_{0,+,\zeta}(y_{2})}\Big) \\
&\qquad + \widetilde R_{3; \mu,\sigma,\zeta}(\bm{p}, \bm{k}, \bm{k} + {\bm p}; x_{2}, y_{2})\;,
\end{split}
\end{equation}
where the error term $\widetilde R_{3}$ takes into account the contribution obtained plugging $\text{II}$ (bounded in (\ref{eq:IIest})) and $\text{III}$ (bounded in (\ref{eq:calE1bd})) into (\ref{eq:S3S2}), while the main term is obtained plugging $\text{I}$ in (\ref{eq:S3S2}). Consider the main term. Choosing ${\bm k}$ and ${\bm k} + {\bm p}$ close enough to ${\bm k}_{F}^{+}(\lambda)$, the main contribution is:
\begin{equation}\label{eq:S3smain}
\begin{split}
&\hat S^{\mr{s}; \text{main}}_{3; \mu,\sigma,\zeta}(\bm{p}, \bm{k}, \bm{k} + {\bm p}; x_{2}, y_{2}) \\
&= Z_{0,+,\sigma}(x_{2}) \overline{Z_{0,+,\zeta}(y_{2})} g_{+;\mr{s}}({\bm q}) g_{+;\mr{s}}({\bm q} + {\bm p}) \sum_{n} \langle Z_{-n,+}, J_{\mu}(k_{F}^{+}(\lambda) + n\alpha,0) Z_{-n,+} \rangle\;,
\end{split}
\end{equation}
where we used the notation
\begin{equation}
\langle f,g \rangle := \sum_{x_{2}, \rho} \overline{f_{\rho}(x_{2})} g_{\rho}(x_{2})\;.
\end{equation}
All in all, putting together (\ref{eq:S3S2}), (\ref{eq:S3dec}), (\ref{eq:S3s}), (\ref{eq:S3smain}):
\begin{equation}\label{eq:vert1}
\begin{split}
\hat S_{3; \mu,\sigma,\zeta}(\bm{p}, \bm{k}, \bm{k} + {\bm p}; x_{2}, y_{2}) &= Z_{0,+,\sigma}(x_{2}) \overline{Z_{0,+,\zeta}(y_{2})} g_{+;\mr{s}}({\bm q}) g_{+;\mr{s}}({\bm q} + {\bm p}) \zeta_{\mu,+} \\
&\quad + R^{\text{tot}}_{3; \mu,\sigma,\zeta}(\bm{p}, \bm{k}, \bm{k} + {\bm p}; x_{2}, y_{2})\;,
\end{split}
\end{equation}
where
\begin{equation}
\zeta_{\mu,+} = \sum_{n} \langle Z_{-n,+}, J_{\mu}(k_{F}^{+}(\lambda) + n\alpha,0) Z_{-n,+} \rangle\;,
\end{equation}
and where, for $\| {\bm k} - {\bm k}_{F}^{\omega}(\lambda) \| \ll 1$, $\| {\bm p} \| \ll 1$ and $L_{2}$ large enough:
\begin{equation}\label{eq:vert3}
\begin{split}
&\sum_{x_{2}, y_{2}} \big| Z_{0,+,\sigma}(x_{2})\big| \big| Z_{0,+,\zeta}(y_{2})\big| \Big|g^{-1}_{+;\mr{s}}({\bm q})\Big| \Big|g^{-1}_{+;\mr{s}}({\bm q} + {\bm p})\Big|\\&\quad \cdot \Big|R^{\text{tot}}_{3; \mu,\sigma,\zeta}(\bm{p}, \bm{k}, \bm{k} + {\bm p}; x_{2}, y_{2})\Big| = \mr{o}(1)\;.
\end{split}
\end{equation}
This concludes the proof of Proposition \ref{prop:vertasy}
\end{proof}
\begin{proof}(of Proposition \ref{prp:asy2pt}.) As in the proof of Proposition \ref{prop:vertasy}, we start by writing:
\begin{equation}\label{eq:S2diff0}
\begin{split}
\hat S_{2;\sigma,\zeta}(\bm{k},{\bm k} ; x_{2}, y_{2}) &= \sum_{m,\omega} Z_{m,\omega,\sigma}(x_{2}) g_{\omega;\mr{s}}({\bm q} + m{\bm \alpha})  \overline{Z_{-n+m,\omega,\zeta}(y_{2})} \\
&\quad + \hat R_{\sigma,\zeta}({\bm k}, {\bm k}; x_{2}, y_{2})\;,
\end{split}
\end{equation}
where $\hat R_{\sigma,\zeta}({\bm k}, {\bm k}; x_{2}, y_{2})$ is bounded as, recall (\ref{eq:RbdFou}):
\begin{equation}
\Big| \text{d}_{k_{0}}^{n_{0}} \text{d}_{k_{1}}^{n_{1}} \hat R_{\sigma,\zeta}({\bm k}, {\bm k}; x_{2}, y_{2})\Big| \leq \frac{C_{n_{0},n_{1}} e^{-\tilde c|x_{2} - y_{2}|}}{\| {\bm k} - {\bm k}_{F}^{+} \|^{n_{0} + n_{1} + 1-\theta}} + \sum_{\substack{\omega_{1},\omega_{2} \\ (\omega_{1},\omega_{2}) \neq (+,+)}} \!\!\!  C_{n_{0},n_{1}}\frac{ e^{-c(|x_{2}|_{\omega_{1}} + |y_{2}|_{\omega_{2}})}}{|k_{0}|^{n_{0} + n_{1} + 1-\theta}}\;.
\end{equation}
Consider now the main term in (\ref{eq:S2diff0}). The contribution associated with $\omega = -$ satisfies the bound (\ref{eq:Rn0}), using the exponential decay of the edge modes. Next, consider the main contribution to (\ref{eq:S2diff0}) with $\omega = +$. Similarly to what we did in the analysis of the vertex function, we separate the $m=0$ term in the sum from the others. Proceeding as after (\ref{eq:IIA}), the terms with $m\neq 0$ give a contribution that is bounded by $e^{-c(x_{2} + y_{2})} / \| {\bm q} \|^{(1/2)(n_{0} + n_{1} + 1)}$, which in particular satisfies (\ref{eq:Rn0}). This concludes the proof of (\ref{eq:Rn00}), (\ref{eq:Rn0}).
\end{proof}
\subsection{Proof of Theorem \ref{thm:resp}} 
In this section we shall prove Theorem \ref{thm:resp}, which allows to compute the edge transport coefficients. The analysis will rely on the asymptotics of the two-point function, Eq. (\ref{eq:2ptmain}), on the relations (\ref{eq:relzeta}), and on the discussion of Section \ref{sec:JJ} about the structure of the current-current correlations functions, recall Eqs. (\ref{eq:chi0WI}), (\ref{eq:chi1WI}).

\subsubsection{Susceptibility}\label{sec:susce}
Let us start from the susceptibility, $\chi^{\beta, L}_{0}(\eta, \theta)$. To avoid carrying out the error term $\varepsilon^{\beta,L}_{0}(\eta, \theta)$ in (\ref{eq:wick}), we will suppose for the moment that $\eta = \eta_{\beta} \in (2\pi / \beta) \mathbb{N}$; considering a general $\eta>0$ will simply introduce an additive error term bounded as in (\ref{eq:epsest}), which we will take into account at the very end. From (\ref{eq:wick}), using Wick's rule:
\begin{equation}
\chi^{\beta, L}_{0}(\eta, \theta)  =-\theta \sum_{{\vec x},{\vec y}} \mu(\theta\vec  x)\phi_{\ell}(\theta y_{1}, y_{2})\int_{-\frac{\beta}{2}}^{\frac{\beta}{2}}  d s\,e^{-i\eta s} \Tr \big(S^{\beta,L}_{2}(s,{\vec  x}; 0,{\vec y}) S^{\beta,L}_{2}(0,{\vec y}; s,{\vec x})\big)\;,
\end{equation}
where the trace is over the internal degrees of freedom. We shall use that $\chi^{\beta, L}_{0}(\eta, \theta) = K_{00}^{\beta,L}(\eta, \theta)$, recall Eq. (\ref{eq:chiK}), and we shall now define the decomposition (\ref{eq:singreg}). To this end, the following technical lemma will be useful.
\begin{lemma}\label{lem:contK} Let $F(\vec {\bm x}; \vec {\bm y})$ be a $\beta$-periodic function in the imaginary times and satisfying the cylindric boundary conditions in $\vec x$, $\vec y$, such that:
\begin{equation}\label{eq:Festlem}
\begin{split}
F(\vec {\bm x}; \vec {\bm y}) &= \sum_{m \in \mathbb{Z}} F_{m}({\bm x} - {\bm y}; x_{2}, y_{2}) e^{-i m\alpha y_{1}} \\
\big|F_{m}({\bm x} - {\bm y}; x_{2}, y_{2})\big| &\leq \frac{C e^{-c|m|} e^{-c|x_{2} - y_{2}|}}{1 + \| {\bm x} - {\bm y} \|^{2+\xi}}\qquad \text{with $\xi>0$.}
\end{split}
\end{equation}
Consider the function:
\begin{equation}
g(\eta, \theta) = \theta \sum_{{\vec x},{\vec y}} \mu(\theta\vec  x)\phi_{\ell}(\theta y_{1}, y_{2})
\int_{-\frac{\beta}{2}}^{\frac{\beta}{2}}  d s\,e^{-i\eta s} F((s,\vec x); (0, \vec y))\;.
\end{equation}
Then, for $\alpha>0$:
\begin{equation}
\big|g(\eta, \theta)(\eta,\theta) - g(\eta, \theta)(\eta,\theta')\big|\le C (|\theta|^{\alpha} + |\theta'|^{\alpha})\;.
\end{equation}
\end{lemma}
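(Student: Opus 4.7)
The plan is to Fourier-decompose $F$ along the $y_1$ direction using (\ref{eq:Festlem}) and to treat the $m=0$ contribution separately from the oscillatory $m\neq 0$ contributions: the former admits a well-defined $\theta\to 0$ limit at a Hölder rate $\theta^\alpha$, while the latter is shown to be $O(\theta)$ uniformly in $\eta$ thanks to the smoothness of the test functions and the Diophantine condition (\ref{eq:diophantine}).

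More precisely, I write $g(\eta,\theta) = \sum_{m\in\mathbb{Z}} g_m(\eta,\theta)$, with $g_m$ the contribution of the $m$-th Fourier mode. For $m=0$, I first integrate out the imaginary-time variable, producing the effective kernel
\[\widetilde F_0(w; x_2, y_2; \eta) := \int_{-\beta/2}^{\beta/2} ds\, e^{-i\eta s} F_0\big((s,w); x_2, y_2\big),\]
which by (\ref{eq:Festlem}) satisfies $|\widetilde F_0| \leq Ce^{-c|x_2-y_2|}/(1+|w|^{1+\xi})$ uniformly in $\eta$ and $\beta$. After the change of variables $w = x_1 - y_1$, I would compare the Riemann sum $\theta \sum_{y_1} \mu(\theta(y_1+w), \theta x_2) \phi_\ell(\theta y_1, y_2)$ with its $\theta\to 0$ limit $\int dY\, \mu(Y,0)\phi_\ell(Y, y_2)$. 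The error splits into a standard smooth-Riemann-sum remainder, which decays faster than any power of $\theta$, and a Taylor-expansion remainder of size $C\min(1, \theta(|w|+|x_2|))\leq C\theta^\alpha (|w|+|x_2|)^\alpha$ for any $\alpha\in(0,1)$. Choosing $\alpha\in(0,\xi)$ renders $(|w|+|x_2|)^\alpha$ integrable against $\widetilde F_0$: the $y_2$-sum is restricted to $y_2\in[0,C\ell]$ by the support of $\phi_\ell$, and the $x_2$-sum is controlled by $e^{-c|x_2-y_2|}$. This yields $|g_0(\eta,\theta) - g_0^{\infty}(\eta)| \leq C_\ell \theta^\alpha$, with $g_0^{\infty}$ independent of $\theta$.

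For $m\neq 0$, I exploit the oscillatory factor $e^{-im\alpha y_1}$ by a single discrete summation by parts in $y_1$ on the periodic lattice: the Diophantine condition (\ref{eq:diophantine}) bounds the resulting partial-sum kernel $(1-e^{-im\alpha})^{-1}$ by $C|m|^\tau$, while the smoothness of $\phi_\ell(\theta y_1, y_2)$ --- whose discrete $y_1$-derivative is $O(\theta)$ on a support of cardinality $O(1/\theta)$ --- replaces the trivial $O(1/\theta)$ bound on the inner $y_1$-sum by $O(|m|^\tau)$. Combined with the $\theta$ prefactor in the definition of $g$ and the exponential decay $|F_m|\leq Ce^{-c|m|}$ from (\ref{eq:Festlem}), this gives $|g_m(\eta,\theta)| \leq C_\ell \theta |m|^\tau e^{-c|m|}$, summable over $m\neq 0$; hence $\big|\sum_{m\neq 0} g_m(\eta,\theta)\big| \leq C_\ell \theta$.

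Combining both estimates yields $|g(\eta,\theta) - g_0^{\infty}(\eta)| \leq C_\ell \theta^\alpha$, and symmetrically for $\theta'$; the triangle inequality then gives the lemma with any exponent $\alpha\in(0,\xi)$. The main technical obstacle I expect lies in the marginal integrability of $F$ at infinity: the power-law decay $\|\cdot\|^{-(2+\xi)}$ is only just summable in two dimensions, which is precisely what forces $\alpha<\xi$ in the Taylor step and makes the assumption $\xi>0$ essential rather than cosmetic. Uniformity in $\eta$, $\beta$, and $L$ follows since all estimates pass through $s$-integrable bounds on $\widetilde F_m$, with the $s$-integration domain $[-\beta/2,\beta/2]$ controlled uniformly by the two-dimensional decay inherited from (\ref{eq:Festlem}).
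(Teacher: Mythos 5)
Your proof is correct and arrives at the same structural decomposition as the paper's --- the $m=0$ Fourier mode of $F$ in $y_1$ converges $\theta$-H\"older at a rate governed by $\xi$, while the oscillatory modes $m\neq 0$ are $O(\theta)$ uniformly in $\eta$ --- but you travel a genuinely different route. The paper stays in momentum space throughout: it writes $g$ using $\hat\mu_\theta$ and $\hat\phi_{\theta,\ell}$, rescales $p\mapsto\theta p$, uses the rapid decay (\ref{eq:mutheta}) to localize both momenta to windows of width $\theta$, and invokes the Diophantine condition to show these windows around $0$ and $m\alpha$ are disjoint for small $m\neq 0$, handling large $|m\alpha|_{\mathbb T}$ by the polynomial decay of the test-function Fourier transforms. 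You instead remain in configuration space and handle $m\neq 0$ by a single discrete Abel summation in $y_1$: the total variation of $y_1\mapsto\mu(\theta(w+y_1),\theta x_2)\,\phi_\ell(\theta y_1,y_2)$ is $O(1)$ (discrete derivative $O(\theta)$ on a support of $O(1/\theta)$ points), so summation by parts replaces the trivial $O(1/\theta)$ bound by $O(|1-e^{-im\alpha}|^{-1})\leq C|m|^\tau$; combined with the external $\theta$ prefactor and $e^{-c|m|}$ from (\ref{eq:Festlem}) this gives $|g_m|\le C_\ell\,\theta\,|m|^\tau e^{-c|m|}$, summable in $m$. For $m=0$ you Taylor-expand the test functions in configuration space, and your observation that the borderline decay $\|\cdot\|^{-(2+\xi)}$ forces $\alpha<\xi$ is exactly right --- in the paper this appears as $\xi/2$-H\"older regularity of $\hat F_0$ at $p=0$. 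Your route is more elementary and makes the mechanism of the Diophantine gain transparent as a single integration by parts; the paper's route reuses the momentum-space estimates it has already set up and runs parallel to the calculations of Section \ref{sec:susce}. Two small technicalities that both accounts leave implicit and you should keep in mind: the Diophantine inequality (\ref{eq:diophantine}) is available only for $0<|m|\le L_1/2$, so for larger $|m|$ one must retreat to $e^{-c|m|}\le e^{-cL_1/2}$; and the identification of the $y_1$-Riemann sum with the integral of the non-periodized test function presupposes $\theta L_1\gg 1$, which is consistent with the order of limits in Theorem \ref{thm:resp} but matters if one wants a fully quantitative $C_\ell$.
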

We postpone the proof of this lemma to Appendix \ref{app:contK}. Let us define $K^{\text{sing}}_{00}(\eta, \theta)$ in (\ref{eq:singreg}) as 
\begin{equation}\label{eq:Ksing00}
K^{\text{sing}}_{00}(\eta, \theta) := - \theta\sum_{{\vec x},{\vec y}} \mu(\theta\vec  x)\phi_{\ell}(\theta y_{1}, y_{2})\int_{-\frac{\beta}{2}}^{\frac{\beta}{2}}  d s\,e^{-i\eta s} \Tr \big(S^{\mr{s}}_{2}(s,{\vec  x}; 0,{\vec y}) S^{\mr{s}}_{2}(0,{\vec y}; s,{\vec x})\big)\;,
\end{equation}
with $S^{\mr{s}}_{2}$ given by (\ref{eq:S2s}). Using the result about the structure of the subleading contributions to the two point function, Eq. (\ref{eq:Rest2pt}), together with Lemma \ref{lem:contK}, we see that $K^{\text{reg}}_{00}(\eta, \theta)$ satisfies the estimate (\ref{eq:contK}). Furthermore, from the information about the two-point function provided by Theorem \ref{thm:2pt}, and from the compact support of the test functions, we have:
\begin{equation}
|K_{\mu\nu}^{\beta,L}(\eta,\theta)|\le C_{\ell} |\log \theta|\;,
\end{equation}
uniformly in $\beta, \eta, L_{1}, L_{2}$. Thus, we can apply the strategy outlined in Section \ref{sec:JJ}, to compute the susceptibility only relying on $K^{\text{sing}}_{00}$.

We rewrite (\ref{eq:Ksing00}) using the explicit expression of $S^{\mr{s}}_{2}$, Eq. (\ref{eq:S2s}). We have:
\begin{equation}\label{eq:Ksing002}
\begin{split}
&K^{\text{sing}}_{00}(\eta, \theta) \\
&\quad= - \theta\sum_{{\vec x},{\vec y} } \mu(\theta \vec x) \phi_{\ell}(\theta y_{1}, y_{2}) | Z_{+}(\vec x) |^{2}  | Z_{+}(\vec y) |^{2} \\
&\quad\qquad \cdot \int_{-\beta/2}^{\beta/2}  d s\, e^{-i\eta s}  \check{g}_{+;\mr{s}}\big((x_{1}, s); (y_{1},0)\big) \check{g}_{+;\mr{s}}\big((y_{1},0); (x_{1}, s)\big)\\
&\qquad+ \mr{O}_{\eta,\theta}(e^{-c L_{2}})\;,
\end{split}
\end{equation}
where $| Z_{\omega}(\vec x) |^{2} = \sum_{\sigma} |Z_{\omega,\sigma}(\vec x)|^{2}$. The exponentially small error term takes into account the contribution due to the case in which at least one chirality is $\omega = -$. This corresponds to an edge state localized at $x_{2} = L_{2}$; the contribution is exponentially small because the test functions are localized at $x_{2} = 0$. The bound for this error term might be non-uniform in $\eta, \theta$ (via a logarithmic divergence); this is not an issue, since the limit $L_{2} \to \infty$ is taken before $(\eta, \theta) \to (0,0)$. 
\paragraph{Analysis of the singular part.} Let us consider the integral in (\ref{eq:Ksing002}). We have:
\begin{equation}\label{eq:Fouriersingularpart}
\begin{split}
&\int_{-\beta/2}^{\beta/2}  d s\, e^{-i\eta s} \check{g}_{+;\mr{s}}\big((x_{1}, s); (y_{1},0)\big) \check{g}_{+;\mr{s}}\big((y_{1},0); (x_{1}, s)\big) \\
&\qquad = \frac{1}{L_{1}} \sum_{p \in \msc D_{L}} e^{ip(x_{1} - y_{1})} \frac{1}{\beta L_{1}} \sum_{\bm{q} \in \msc D_{\beta,L}}  g_{+;\mr{s}}(q_{0} + \eta, q_1 + p) g_{+;\mr{s}}(q_{0}, q_{1}) \\
&\qquad \equiv \frac{1}{L_{1}} \sum_{p \in \msc D_{L}} e^{ip(x_{1} - y_{1})} \mc{B}_{+}^{\beta, L_{1}}(\eta, p)\;,
\end{split}
\end{equation}
where $\mc{B}_{\omega}^{\beta, L_{1}}(\eta,p)$ is the regularized relativistic bubble diagram at external momentum $(\eta, p)$, with chiralities $\omega$; we refer to Appendix \ref{app:bubble} for its computation. Plugging (\ref{eq:Fouriersingularpart}) into (\ref{eq:Ksing002}), we get:
\begin{equation}\label{eq:sing2}
\begin{split}
\chi^{\text{sing}}_{0}(\eta, \theta) =& -\frac{\theta}{L_{1}}\sum_{p} \sum_{x_{2},y_{2}}\mathcal{F}\big(\mu_{\theta}(\cdot,x_{2}) |Z_{+} (\cdot,x_{2})|^{2}\big)(-p) \\
&\quad \cdot \mathcal{F}\big(\phi_{\theta,\ell}(\cdot,y_{2}) | Z_{+}(\cdot,y_{2})|^{2}  \big)(p) \,\mc{B}_{+}^{\beta, L_{1}}(\eta, p)\\
&+\mr{O}_{\eta,\theta}(e^{-c L_{2}})\;,
\end{split}
\end{equation}
where $\mathcal{F}\colon f(\cdot)\mapsto \hat f(\cdot)= \sum_x e^{-i(\cdot) x}f(x)$ denotes the Fourier transform and:
\begin{equation}\label{eq:conv}
\begin{split}
\mathcal{F}\big(\mu_{\theta}(\cdot) | Z_{\omega}(\cdot,x_{2})|^{2}\big)(-p) &= \frac{1}{L_{1}} \sum_{q} \hat \mu_{\theta}(-p-q,x_2) \mathcal{F}\big(|Z_{\omega}(\cdot, x_{2}) |^2\big) (q) \\
\mathcal{F}\big( \phi_{\theta,\ell}(\cdot) | Z_{\omega}(\cdot,y_{2}) |^{2}  \big)(p) &= \frac{1}{L_{1}}\sum_{q} \hat \phi_{\theta,\ell}(p - q,y_2) \mathcal{F}\big(|Z_{\omega}(\cdot, y_{2})|^2\big)(q)\;.
 \end{split}
\end{equation}
Now, using  (\ref{eq:Zfou}):
\begin{equation}\label{eq:ZZconv}
\begin{split}
|Z_{\omega}(\vec x)|^2 &= \sum_{n \in \mathbb{Z}} \sum_{\sigma} \Big(\sum_{m\in \mathbb Z}Z_{m,\omega,\sigma}(x_2) \overline{Z_{m-n,\omega,\sigma}(x_2)} \Big) e^{-in\alpha x_{1}} \\
&\equiv \sum_{n \in \mathbb{Z}} (Z_{\omega}\star\overline{Z_{\omega}})(n;x_{2}) e^{-in\alpha x_{1}}\;,
\end{split}
\end{equation}
we can rewrite (\ref{eq:conv}) as:
\begin{equation}
\begin{split}
\mathcal{F}\big(\mu_{\theta}(\cdot) | Z_{\omega}(\cdot,x_{2})|^{2}\big)(-p) &= \sum_{n\in \mathbb{Z}} \hat \mu_{\theta}(-p-n\alpha,x_2) (Z_{\omega}\star\overline{Z_{\omega}})(n;x_{2}) \\
\mathcal{F}( \phi_{\theta,\ell}(\cdot) | Z_{\omega}(\cdot,y_{2}) |^{2}  )(p) &= \sum_{n\in   \mathbb{Z}} \hat \phi_{\theta,\ell}( p-n\alpha,y_{2}) (Z_{\omega}\star\overline{Z_{\omega}})(n;y_{2})\;.
 \end{split}
\end{equation}
Observe that, from (\ref{eq:bdZm}), the following estimate holds:
\begin{equation}\label{eq:convest}
|(Z_{\omega}\star\overline{Z_{\omega}})(n;x_{2})| \le Ce^{-\frac{c}{2}|n|} e^{-c|x_{2}|_{\omega}}\;.
\end{equation}
Plugging these expressions into (\ref{eq:sing2}), we get:
\begin{equation}
\begin{split}
\chi^{\text{sing}}_{0}(\eta, \theta)  &= - \frac{\theta}{L_{1}} \sum_{p} \sum_{n,m}  \sum_{x_{2},y_{2}} (Z_{+}\star\overline{Z_{+}})(m;y_{2}) (Z_{+}\star\overline{Z_{+}})(n;x_{2})\\
&\qquad \cdot \hat \mu_{\theta}(-p-n\alpha,x_2) \hat \phi_{\theta,\ell}( p-m\alpha,y_{2})   \mc{B}_{+}^{\beta, L_{1}}(\eta, p) \\
&\quad +\mr{O}_{\eta,\theta}(e^{-c L_{2}}) \\
&= -\frac{\theta}{L_{1}} \sum_{p} \sum_{n,m} \sum_{x_{2},y_{2}} (Z_{+}\star\overline{Z_{+}})(m;y_{2}) (Z_{+}\star\overline{Z_{+}})(n;x_{2}) \\
&\qquad \cdot \hat \mu_{\theta}(-p,x_2) \hat \phi_{\theta,\ell}\big(p-(m+n)\alpha,y_{2}\big)   \mc{B}_{+}^{\beta, L_{1}}(\eta, p-n\alpha) \\
&\quad + \mr{O}_{\eta,\theta}(e^{-c L_{2}})\;.
\end{split}
\end{equation}
\paragraph{Cutting off large frequencies.} Let:
\begin{equation}
\begin{split}
\chi^{\text{sing}}_{0; N}(\eta, \theta) &:=  -\frac{\theta}{L_{1}} \sum_{p} \sum_{\substack{n,m \\ |n| \le N,\, |m| \le N}} \sum_{x_{2},y_{2}} (Z_{+}\star\overline{Z_{+}})(m;y_{2}) (Z_{+}\star\overline{Z_{+}})(n;x_{2}) \\
&\qquad \cdot  \hat \mu_{\theta}(-p,x_2) \hat \phi_{\theta,\ell}\big(p-(m+n)\alpha,y_{2}\big)  \mc{B}_{+}^{\beta, L_{1}}(\eta, p-n\alpha)\;.
\end{split}
\end{equation}
This function approximates $\chi^{\text{sing}}_{0}(\eta, \theta)$, up to small errors, uniformly in all parameters. In fact, from the estimate (\ref{eq:convest}) for $Z_{n,\omega,\sigma}(x_{2})$, from the boundedness of $\hat \phi$, and from the boundedness of $\mc{B}_{\omega}^{\beta, L_{1}}$, we have:
\begin{equation}\label{eq:eN}
\begin{split}
\big| \chi^{\text{sing}}_{0}(\eta, \theta) - \chi^{\text{sing}}_{0; N}(\eta, \theta) \big| &\le \frac{C}{L_{1}} \sum_{p}  | \hat \mu_{\theta}(-p, x_2) | e^{-cN} + C_{\eta,\theta}e^{-cL_{2}} \\
&\le Ke^{-cN} + C_{\eta,\theta}e^{-cL_{2}}\;,
\end{split}
\end{equation}
where the last term takes into account the exponentially small error in (\ref{eq:Ksing002}). 

\paragraph{Estimating the contribution of $m+n\neq 0$.} Let us now focus on $\chi^{\text{sing}}_{0; N}(\eta, \theta)$. We claim that the contribution with $n+m\neq 0$ gives a subleading contribution, for $\theta$ small. In fact:
\begin{equation}\label{eq:nneqm}
\begin{split}
&\Big|\frac{\theta}{L_{1}} \sum_{p} \sum_{\substack{n\neq -m \\ |n| \le N,\, |m| \le N}} \sum_{x_{2},y_{2}}(Z_{+}\star\overline{Z_{+}})(m;y_{2}) (Z_{+}\star\overline{Z_{+}})(n;x_{2}) \\
&\qquad \cdot \hat \mu_{\theta}(-p,x_2) \hat \phi_{\theta,\ell}\big(p-(m+n)\alpha,y_{2}\big)   \mc{B}_{+}^{\beta, L_{1}}(\eta, p-n\alpha)\Big| \\
&\le \frac{C}{\theta L_{1}} \sum_{p} \sum_{n\neq -m} \sum_{x_{2},y_{2}} e^{-\frac{c}{2}|m|} e^{-cy_{2}} e^{-\frac{c}{2}|n|} e^{-cx_{2}} \\
&\qquad \cdot \frac{C_{r}}{1 + (|p|_{\mathbb{T}}/\theta)^{r}} \frac{C_{r}}{1 + \big(\big|\big(p - (m+n)\alpha\big)\big|_{\mathbb{T}}/\theta\big)^{r}}\;,\qquad \text{for all $r\in \mathbb{N}$}
\end{split}
\end{equation}
where the last two factors bound the functions $\hat \mu_{\theta}$ and $\hat \phi_{\theta,\ell}$, recall (\ref{eq:mutheta}). Thus, using that:
\begin{equation}
\begin{split}
&\frac{1}{1 + (|p|_{\mathbb{T}}/\theta)^{r}} \frac{1}{1 + \big(\big|\big(p - (m+n)\alpha\big)\big|_{\mathbb{T}}/\theta\big)^{r}} \\
& \leq  \frac{C_{r}}{1 + (|p|_{\mathbb{T}}/\theta)^{r/2}} \frac{1}{1 + (|p|_{\mathbb{T}}/\theta)^{r/2}}\frac{1}{1 + \big(\big|\big(p - (m+n)\alpha\big)\big|_{\mathbb{T}}/\theta\big)^{r/2}} \\
&= \frac{C_{r}}{1 + (|p|_{\mathbb{T}}/\theta)^{r/2}} \frac{\big(\big|p - p + (m+n)\alpha\big|_{\mathbb{T}}/\theta\big)^{r/2}}{1 + (|p|_{\mathbb{T}}/\theta)^{r/2}}\\&\qquad \cdot \frac{1}{1 + \big(\big|\big(p - (m+n)\alpha\big)\big|_{\mathbb{T}}/\theta\big)^{r/2}} \frac{1}{\big(\big|(m+n)\alpha\big|_{\mathbb{T}}/\theta\big)^{r/2}} \\
&\leq \frac{K_{r}}{1 + (|p|_{\mathbb{T}}/\theta)^{r/2}} \frac{\big(\big|p \big|_{\mathbb{T}}/\theta\big)^{r/2} + \big(\big|- p + (m+n)\alpha\big|_{\mathbb{T}}/\theta\big)^{r/2}}{1 + (|p|_{\mathbb{T}}/\theta)^{r/2}}\\&\quad \cdot \frac{1}{1 + \big(\big|\big(p - (m+n)\alpha\big)\big|_{\mathbb{T}}/\theta\big)^{r/2}} \frac{1}{\big(\big|(m+n)\alpha\big|_{\mathbb{T}}/\theta\big)^{r/2}} \\
&\leq \frac{2K_{r}}{1 + (|p|_{\mathbb{T}}/\theta)^{r/2}} \frac{1}{\big(\big|(m+n)\alpha\big|_{\mathbb{T}}/\theta\big)^{r/2}}\;,
\end{split}
\end{equation}
and that $|(n+m)\alpha|_{\mathbb{T}} \ge c / |n+m|^{\tau} \geq  C N^{-\tau}$ for $0<|n+m| \le 2N$, we get
\begin{equation}
(\ref{eq:nneqm}) \le C_{N} \theta^{r/2}\;,
\end{equation}
uniformly in all the other parameters. 
\paragraph{Estimating the contribution of $m+n = 0$, $n\neq 0$.} We are now left with the contribution to $\chi^{\text{sing}}_{0; N}(\eta, \theta)$ coming from the modes $m,n$ such that $m+n = 0$. It is:
\begin{equation}
\begin{split}
\chi^{\text{main}}_{0; N}(\eta, \theta) &= -\frac{\theta}{L_{1}} \sum_{p} \sum_{n: |n| \le N} \sum_{x_{2},y_{2}} (Z_{+}\star\overline{Z_{+}})(-n;y_{2}) (Z_{+}\star\overline{Z_{+}})(n;x_{2}) \\
&\qquad \cdot\hat \mu_{\theta}(-p,x_2) \hat \phi_{\theta,\ell}(p,y_{2})   \mc{B}_{+}^{\beta, L_{1}}(\eta, p-n\alpha)
\end{split}
\end{equation}
which we further rewrite as:
\begin{equation}\label{eq:chimain}
\begin{split}
\chi^{\text{main}}_{0; N}(\eta, \theta) &= -\frac{\theta}{L_{1}} \sum_{p} \sum_{x_{2},y_{2}}(Z_{+}\star\overline{Z_{+}})(0;y_{2}) (Z_{+}\star\overline{Z_{+}})(0;x_{2}) \\
&\qquad \cdot\hat \mu_{\theta}(-p,x_2) \hat \phi_{\theta,\ell}(p,y_{2})   \mc{B}_{+}^{\beta, L_{1}}(\eta, p)\\
&\quad + R_{0;N}(\eta,\theta)
\end{split}
\end{equation}
where $R_{0;N}(\eta,\theta)$ takes into account the contribution of the modes $n\neq 0$, $|n| \le N$. We rewrite this last term as, after the change of variables $p/\theta \to p$:
\begin{equation}\label{eq:R0N}
\begin{split}
&R_{0;N}(\eta,\theta) \\
&= -\frac{1}{\theta L_{1}} \sum_{p \in \frac{2\pi}{\theta L_{1}} (\mathbb{Z} / L_{1}\mathbb{Z})} \sum_{n: 0<|n| \le N} \sum_{x_{2},y_{2}}(Z_{+}\star\overline{Z_{+}})(-n;y_{2}) (Z_{+}\star\overline{Z_{+}})(n;x_{2}) \\
&\qquad \cdot \theta\hat \mu_{\theta}(-\theta p,\theta x_2) \,\theta \hat \phi_{\theta,\ell}(\theta p,y_{2})   \mc{B}_{+}^{\beta, L_{1}}(\eta, \theta p-n\alpha)\;.
\end{split}
\end{equation}
In order to bound this term, let us recall the estimates, valid for $p\in (2\pi / L_{1}\theta) (\mathbb{Z} / L_{1}\mathbb{Z})$:
\begin{equation}\label{eq:estmuphi}
\begin{split}
\theta \big|\hat \mu_{\theta}(\theta p,\theta x_2)\big| &\le \frac{C_{r}}{1+(|p|_{\mathbb{T}_{\theta^{-1}}})^{r}}\frac{1}{1+|\theta x_{2}|^{r}} \\
\theta \big|\hat \phi_{\theta,\ell}(-\theta p,y_{2})\big| &\le \frac{C_{r}}{1+(|p|_{\mathbb{T}_{\theta^{-1}}})^{r}} \frac{1}{1+|y_{2} / \ell|^{r}}\;,
\end{split}
\end{equation}
with $\mathbb{T}_{\theta^{-1}}$ the torus of side $2\pi/\theta$. Observe that $|n\alpha|_{\mathbb{T}} \ge c / N^{\tau} = c_{N}$ for $0<|n| \le N$; thus, we can further separate the sum over $p$ as:
\begin{equation}
\sum_{p \in \frac{2\pi}{\theta L_{1}} (\mathbb{Z} / L_{1}\mathbb{Z})} (\cdots) = \sum_{\substack{p \in \frac{2\pi}{\theta L_{1}} (\mathbb{Z} / L_{1}\mathbb{Z}) \\ 2\theta |p|_{\mathbb{T}_{\theta^{-1}}} \le c_{N}}} (\cdots) +  \sum_{\substack{p \in \frac{2\pi}{\theta L_{1}} (\mathbb{Z} / L_{1}\mathbb{Z}) \\ 2\theta |p|_{\mathbb{T}_{\theta^{-1}}} > c_{N}}} (\cdots)\;;
\end{equation}
correspondingly, we write:
\begin{equation}\label{eq:R><}
R_{0;N}(\eta,\theta) = R^{\le }_{0;N}(\eta,\theta) + R^{>}_{0;N}(\eta,\theta)\;.
\end{equation}
Thanks to the estimates (\ref{eq:estmuphi}) for the decay of the test functions in Fourier space, $R^{>}_{0;N}(\eta,\theta)$ satisfies:
\begin{equation}\label{eq:R<}
| R^{>}_{0;N}(\eta,\theta) | \le C_{N,r} \theta^{r}\;,\qquad \text{for all $r\in \mathbb{N}$.}
\end{equation}
Next, consider $R^{\leq}_{0;N}$. Here, since $\theta | p |_{\mathbb{T}_{\theta^{-1}}} = | \theta p |_{\mathbb{T}} < c_{N}/2$ and $|n  \alpha |_{\mathbb{T}}>c_{N}$, we will use that the function $(\eta, \theta)\mapsto \mc{B}_{+}^{\beta, L_{1}}(\eta, \theta p-n\alpha)$ is continuous in $\theta p$ at $\theta p = 0$, which can be checked from the explicit expression obtained in Appendix \ref{app:bubble}, and reported below, see (\ref{eq:binfty}). We claim that:
\begin{equation}\label{eq:Rele}
\Big| R^{\le }_{0;N}(\eta,\theta) - R^{\le }_{0;N}(\eta,\theta') \Big| \leq C_{N} (|\theta|^{\alpha} + |\theta'|^{\alpha})
\end{equation}
for some $\alpha > 0$. To see this, we approximate the sum over the momenta as an integral:
\begin{equation}\label{eq:rmin}
\begin{split}
&R^{\le }_{0;N}(\eta,\theta) \\
&= -\int_{\mathbb{T}_{\theta^{-1}}:\, 2\theta |p|_{\mathbb{T}_{\theta^{-1}}} \le c_{N}} \frac{dp}{(2\pi)} \sum_{n: 0<|n| \le N} \sum_{x_{2},y_{2}}(Z_{+}\star\overline{Z_{+}})(-n;y_{2}) (Z_{+}\star\overline{Z_{+}})(n;x_{2}) \\
&\qquad \cdot \theta\hat \mu_{\theta}(-\theta p,\theta x_2) \,\theta \hat \phi_{\theta,\ell}(\theta p,y_{2})    \mc{B}_{+}^{\infty}(\eta, \theta p-n\alpha) + \mr{o}(1)\;,
\end{split}
\end{equation}
where the error terms vanish as $L_{1} \to \infty$ (to control the errors coming from the integral approximation). Then, using the exponential decay of $Z_{m,\omega,\sigma}(x_2)$, we can localize the test functions at $x_{2} = y_{2} = 0$, up to errors that vanish as $\theta \to 0$ and $\ell \to \infty$, which are included in the new $\mr{o}(1)$ terms:
\begin{equation}
\begin{split}
R^{\le }_{0;N}(\eta,\theta) &= -\int_{\mathbb{T}_{\theta^{-1}}:\, 2\theta |p|_{\mathbb{T}_{\theta^{-1}}} \le c_{N}} \frac{dp}{(2\pi)} \sum_{n: 0<|n| \le N} \sum_{x_{2},y_{2}}(Z_{+}\star\overline{Z_{+}})(-n;y_{2}) (Z_{+}\star\overline{Z_{+}})(n;x_{2}) \\
&\qquad \cdot \theta\hat \mu_{\theta}(-\theta p, 0) \,\theta \hat \phi_{\theta,\ell}(\theta p, 0)   \mc{B}_{+}^{\infty}(\eta, \theta p-n\alpha) + \mr{o}(1)\;.
\end{split}
\end{equation}
Next, we rewrite:
\begin{equation}\label{eq:Rleq}
\begin{split}
R^{\le }_{0;N}(\eta,\theta) &= -\int_{\mathbb{T}_{\theta^{-1}}:\, 2\theta |p|_{\mathbb{T}_{\theta^{-1}}} \le c_{N}} \frac{dp}{(2\pi)}\, \sum_{n: 0<|n| \le N} \sum_{x_{2},y_{2}} (Z_{+}\star\overline{Z_{+}})(-n;y_{2}) (Z_{+}\star\overline{Z_{+}})(n;x_{2}) \\
&\qquad \cdot \theta\hat \mu_{\theta}(-\theta p,0) \,\theta \hat \phi_{\theta,\ell}(\theta p,0)   \mc{B}_{+}^{\infty}(\eta,-n\alpha) \\
&\quad + \int_{\mathbb{T}_{\theta^{-1}}:\, 2\theta |p|_{\mathbb{T}_{\theta^{-1}}} \le c_{N}} \frac{dp}{(2\pi)} \sum_{n: 0<|n| \le N} \sum_{x_{2},y_{2}}(Z_{+}\star\overline{Z_{+}})(-n;y_{2}) (Z_{+}\star\overline{Z_{+}})(n;x_{2}) \\
&\qquad \cdot \theta\hat \mu_{\theta}(-\theta p,0) \,\theta \hat \phi_{\theta,\ell}(\theta p,0)   \Big( \mc{B}_{+}^{\infty}(\eta, \theta p-n\alpha) - \mc{B}_{+}^{\infty}(\eta,-n\alpha)\Big) + \mr{o}(1)\;.
\end{split}
\end{equation}
Consider the first term in the right-hand side of (\ref{eq:Rleq}). We have, using the decay properties of the test functions, and recalling the definition of (\ref{eq:fourier}) of periodized, rescaled test function:
\begin{equation}\label{eq:rminmain}
\begin{split}
&\int_{\mathbb{T}_{\theta^{-1}}:\, 2\theta |p|_{\mathbb{T}_{\theta^{-1}}} \le c_{N}} \frac{dp}{(2\pi)} \sum_{n: 0<|n| \le N} \sum_{x_{2},y_{2}}(Z_{+}\star\overline{Z_{+}})(-n;y_{2}) (Z_{+}\star\overline{Z_{+}})(n;x_{2}) \\
&\qquad \cdot \theta\hat \mu_{\theta}(-\theta p,0) \,\theta \hat \phi_{\theta,\ell}(\theta p,0)    \mc{B}_{+}^{\infty}(\eta,-n\alpha) \\
&= \int \frac{dp}{(2\pi)} \sum_{n: 0<|n| \le N} \sum_{x_{2},y_{2}}(Z_{+}\star\overline{Z_{+}})(-n;y_{2}) (Z_{+}\star\overline{Z_{+}})(n;x_{2}) \\
&\qquad \cdot \hat \mu_{\infty}(-p,0) \hat \phi_{\infty,}(p,0)   \mc{B}_{+}^{\infty}(\eta,-n\alpha) + \mr{o}_{N}(1)\;,
\end{split}
\end{equation}
where the main term in the right-hand side is now independent of $\theta$ and the new error $\mr{o}_{N}(1)$ can be made arbitrarily small for $\theta$ small enough. Consider now the second term in the right-hand side of (\ref{eq:Rleq}). Using that $|n\alpha|_{\mathbb{T}} > c_{N}$, we have, for $2\theta |p| \le c_{N}$:
\begin{equation}
\Big| \mc{B}_{+}^{\infty}(\eta, \theta p-n\alpha) - \mc{B}_{+}^{\infty}(\eta,-n\alpha) \Big| \leq C_{N} \theta |p|_{\mathbb{T}_{\theta}^{-1}}\;.
\end{equation}
Plugging this bound in the second term in the right-hand side of (\ref{eq:Rleq}), and using the decay properties of the test functions (\ref{eq:estmuphi}), we get:
\begin{equation}\label{eq:rmin2}
\begin{split}
&\Big|\int_{\mathbb{T}_{\theta^{-1}}:\, 2\theta |p|_{\mathbb{T}_{\theta^{-1}}} \le c_{N}} \frac{dp}{(2\pi)} \sum_{n: 0<|n| \le N} \sum_{x_{2},y_{2}} (Z_{+}\star\overline{Z_{+}})(-n;y_{2}) (Z_{+}\star\overline{Z_{+}})(n;x_{2}) \\
&\qquad \cdot \theta\hat \mu_{\theta}(-\theta p,0) \,\theta \hat \phi_{\theta,\ell}(\theta p,0)   \Big( \mc{B}_{+}^{\infty}(\eta, \theta p-n\alpha) - \mc{B}_{+}^{\infty}(\eta,-n\alpha)\Big) \Big| \leq C \theta\;.
\end{split}
\end{equation}
Thus, putting together (\ref{eq:rmin})-(\ref{eq:rmin2}), and using that the main term in (\ref{eq:rminmain}) is independent of $\theta$, the bound (\ref{eq:Rele}) follows. 

Combining (\ref{eq:R<}) and (\ref{eq:Rele}), the function $R_{0;N}(\eta,\theta)$ satisfies the bound (\ref{eq:contK}), with a constant $C_{N}$; hence, $R_{0;N}(\eta,\theta)$ can be considered together with $K^{\text{reg}}_{00}(\eta,\theta)$.
\paragraph{Conclusion.} Let $\widetilde{\chi}^{\text{main}}_{0}(\eta,\theta)$ be the first term in the right-hand side of (\ref{eq:chimain}). As $\beta, L_{1}\to \infty$:
\begin{equation}\label{eq:sumtoint}
\begin{split}
\widetilde{\chi}^{\text{main}}_{0}(\eta,\theta) &= - \int \frac{d p}{2\pi} \sum_{x_{2},y_{2}}  (Z_{+}\star\overline{Z_{+}})(0;y_{2}) (Z_{+}\star\overline{Z_{+}})(0;x_{2}) \\
&\qquad \cdot \hat \mu_{\infty}( -p,\theta x_2) \hat \phi_{\infty}( p,y_{2})   \mc{B}_{+}^{\infty}(\eta, \theta p) + \mr{o}_{\beta,L}(1)\;,
\end{split}
\end{equation}
where $ \mc{B}_{+}^{\infty}(\eta, \theta p) = \lim_{\beta, L_{1}\to \infty}  \mc{B}_{+}^{\beta,L_{1}}(
-\eta, \theta p)$, and where $ \mr{o}_{\beta,L}(1)$ takes into account errors vanishing as $\beta, L_{1}\to \infty$, arising from the approximation of the sum as an integral. As discussed in Appendix \ref{app:bubble}:
\begin{equation}\label{eq:binfty}
\mc{B}_{+}^{\infty}(\eta, \theta p) =  \frac{1}{4\pi \big|v_{1,+}(\lambda)  v_{0,+}(\lambda) \big|}\frac{iv_{0,+}(\lambda)\eta-v_{1,+}(\lambda)\theta p}{i v_{0,+}(\lambda)\eta +v_{1,+}(\lambda)\theta p} + r_{+}(\eta,\theta p)\;,
\end{equation}
where $r_{+}(\eta,\theta p)$ vanishes continuously as $(\eta, \theta) \to (0,0)$. Now, by (\ref{eq:chi0WI}) we have:
\begin{equation}\label{eq:chi0afterWI}
\begin{split}
\chi^{\beta,L}_{0}(\eta, \theta) &= - \int \frac{d p}{2\pi} \sum_{x_{2},y_{2}} (Z_{+}\star\overline{Z_{+}})(0;y_{2}) (Z_{+}\star\overline{Z_{+}})(0;x_{2}) \\
&\qquad \cdot \hat \mu_{\infty}(- p,\theta x_2) \hat \phi_{\infty}( p,y_{2}) \big(  \mc{B}_{+}^{\infty}(\eta, \theta p) - \mc{B}_{+}^{\infty}(\eta, \eta^{2} p)\big) \\
&\quad + \mf{e}^{\beta,L}_{0}(\eta,\theta,N)\;,
\end{split}
\end{equation}
where $\mf{e}^{\beta,L}_{0}(\eta,\theta,N)$ collects the error term $E_{00}^{\beta,L}(\eta,\theta)$ (updated to  take into account also $R^{\leq }_{0;N}$, Eqs. (\ref{eq:R0N})-(\ref{eq:R><})) plus: the error term coming from the Wick rotation at finite temperature, Eq. (\ref{eq:epsest}), which arises if $\eta >0$ is not in $(2\pi/\beta)\mathbb{N}$; the error term (\ref{eq:eN}); the error term (\ref{eq:R<}); the error term coming from the integral approximation (\ref{eq:sumtoint}). All in all, the net error term goes to zero in the following order of limits:
\begin{equation}\label{eq:errdef}
\lim_{\ell \to \infty} \lim_{N\to \infty} \lim_{(\eta,\theta) \to 0} \lim_{\substack{\beta, L_{i}\to \infty \\ \kappa L_{i} \ge \beta}}\mf{e}^{\beta,L}_{0}(\eta,\theta,N) = 0\;,
\end{equation}
where the relative order of $\eta$ and $\theta$ is irrelevant. 

In the following, we shall denote by $\mf{e}^{\beta,L}_{0,a}(\eta,\theta)$, with $a=1,2,3,4$, further error terms that satisfy (\ref{eq:errdef}). Consider the main term in (\ref{eq:chi0afterWI}). Observe that, from (\ref{eq:binfty}):
\begin{equation}
\mc{B}_{+}^{\infty}(\eta, \eta^{2} p) = \frac{1}{4\pi \big|v_{1,+}(\lambda)  v_{0,+}(\lambda) \big|} + \tilde r(\eta, \eta p)\;,
\end{equation}
where $\tilde r(\eta, \eta p)$ vanishes continuously as $(\eta, \theta) \to (0,0)$ for all fixed $p$. Then:
\begin{equation}
\begin{split}
&\mc{B}_{+}^{\infty}(\eta, \theta p) - \mc{B}_{+}^{\infty}(\eta, \eta^{2} p) \\
&\quad = \frac{1}{2\pi \big|v_{1,+}(\lambda)  v_{0,+}(\lambda) \big|}\frac{-v_{1,+}(\lambda)\theta p}{i v_{0,+}(\lambda)\eta +v_{1,+}(\lambda)\theta p} + r(\eta,\theta p) - \tilde r(\eta, \eta p)\;,
\end{split}
\end{equation}
we find:
\begin{equation}
\begin{split}
\chi^{\beta,L}_{0}(\eta, \theta) &= \int \frac{d p}{2\pi} \sum_{x_{2},y_{2}} (Z_{+}\star\overline{Z_{+}})(0;y_{2}) (Z_{+}\star\overline{Z_{+}})(0;x_{2}) \\
&\qquad \cdot \hat \mu_{\infty}( -p,\theta x_2) \hat \phi_{\infty}( p,y_{2}) \\
&\qquad\cdot\frac{1}{2\pi \big|v_{1,+}(\lambda)  v_{0,+}(\lambda) \big|}\frac{v_{1,+}(\lambda)\theta p}{i v_{0,+}(\lambda)\eta +v_{1,+}(\lambda)\theta p} \\
&\quad + \mf{e}^{\beta,L}_{0,2}(\eta,\theta,N)\;.
\end{split}
\end{equation}
In order to further simplify the expression, we will rely on the relations implied by the vertex Ward identity, Eqs. (\ref{eq:vertrel}). Using that:
\begin{equation}
\begin{split}
&\sum_{x_{2},y_{2}} (Z_{+}\star\overline{Z_{+}})(0;y_{2}) (Z_{+}\star\overline{Z_{+}})(0;x_{2})  \hat \mu(- p,\theta x_2) \hat \phi( p,y_{2}) \\
&\quad = v_{0,+}(\lambda)^{2} \hat \mu( p, 0) \hat \phi( -p, 0) + \mr{o}_{\theta,\ell}(1)\;,
\end{split}
\end{equation}
where $\mr{o}_{\theta,\ell}(1)$ vanishes as $\theta \to 0$ and $\ell\to \infty$,  we get:
\begin{equation}\label{eq:chi0aaa}
\begin{split}
\chi^{\beta,L}_{0}(\eta, \theta) &=  \int \frac{dp}{(2\pi)^2} \hat \mu_{\infty}(- p, 0) \hat \phi_{\infty}( p,0)\frac{\text{sgn}(v_{1,+})\,\theta p}{i \eta + \mf{v}(\lambda) \theta p} + \mf{e}^{\beta,L}_{0,3}(\eta,\theta,N)\;,
\end{split}
\end{equation}
where $\mf{v}(\lambda) := \lim_{\beta\to \infty} \lim_{L_{1},L_{2}\to \infty} v_{1,+} / v_{0,+}$, where the $L_{1}\to \infty$ limit is taken over sequences specified in Remark \ref{rem:lim} item (ii). The multiscale method used in this paper also allows to prove the existence of the limits; we refer the reader to {\it e.g.} \cite[Section 4.4]{GM} for a proof of this statement for the quasi-periodic Ising model, which can be adapted to the present case. In particular, for $\eta \ll \theta$:
\begin{equation}
\chi^{\beta,L}_{0}(\eta, \theta) = \frac{1}{2\pi |\mf{v}(\lambda)|} \langle \mu, \phi \rangle_{\text{edge}} + \mf{e}^{\beta,L}_{0,4}(\eta,\theta,N)\;,
\end{equation}
which proves the first of (\ref{eq:respf}).

\subsubsection{Edge conductance} 

To conclude the proof of Theorem \ref{thm:resp}, let us consider the edge conductance and prove the second of (\ref{eq:respf}). Recall the expression of the edge conductance after Wick rotation, for $\eta \in (2\pi / \beta) \mathbb{N}$:
\begin{equation}
\chi_{1}^{\beta,L}(\eta, \theta) = \theta\sum_{\vec x,\vec y } \mu(\theta\vec  x)  \phi_{\ell}(\theta y_{1}, y_{2}) \int_{-\beta/2}^{\beta/2}  d s\, e^{-i\eta s} \langle \timord \gamma_{s}(n_{\vec x})\;; j_{1,\vec y} \rangle_{\beta,\mu,L}\;,
\end{equation}
where the current operator is:
\begin{equation}\label{eq:currchi}
 j_{1,\vec y}:=  j_{\vec y,\vec y+\vec e_1}+\frac{1}{2}\big( j_{\vec y,\vec y+\vec e_1+\vec e_2}+  j_{\vec y,\vec y+\vec e_1-\vec e_2}+  j_{\vec y+\vec e_2,\vec y+\vec e_1}+  j_{\vec y-\vec e_2,\vec y+\vec e_1}\big)\;,
\end{equation}
and the bond current is:
\begin{equation}\label{eq:bondcu}
j_{\vec y,\vec z} = i\sum_{\sigma,\zeta} \big( H_{\zeta,\sigma}(\vec y,\vec z)  a^*_{\vec y ,\zeta} a_{\vec z ,\sigma}-H_{\sigma,\zeta}(\vec z,\vec y) a^*_{\vec z,\sigma} a_{\vec y ,\zeta}\big)\;.
\end{equation}
We will compute the edge conductance by studying all the contributions associated with the bond currents, and then we will put everything together. To this end, let us introduce the notation, for $\vec a,\vec b\in \Lambda_{L}$:
\begin{equation}
j_{y}^{ab} := i\sum_{\sigma,\zeta} H^{ab}_{\zeta,\sigma}(y_{2}) a^*_{\vec y + \vec a ,\zeta} a_{\vec y + \vec b,\sigma}\;,
\end{equation}
where:
\begin{equation}\label{eq:Hab}
H^{ab}_{\zeta,\sigma}(y_{2}) := H_{\zeta,\sigma}(\vec y + \vec a, \vec y + \vec b)\;;
\end{equation}
by translation-invariance in the $y_{1}$ direction, the right-hand side of (\ref{eq:Hab}) only depends on $\vec a, \vec b, y_{2}$. It is clear that the edge current can be expressed as a combination of the operators $j_{y}^{ab}$.

Thus, consider (dropping the $\beta, \mu, L$ labels in the state):
\begin{equation}
\begin{split}
&\chi_{ab}^{\beta,L}(\eta, \theta) \\
&= \theta\sum_{\vec x,\vec y } \mu(\theta\vec  x)  \phi_{\ell}(\theta y_{1}, y_{2}) \int_{-\frac{\beta}{2}}^{\frac{\beta}{2}}  d s\, e^{-i\eta s} \langle \timord \gamma_{s}(n_{\vec x})\;; j^{ab}_{\vec y} \rangle \\
&= i\theta\sum_{\vec x,\vec y }\sum_{\sigma,\zeta}\mu(\theta\vec  x)    \phi_{\ell}(\theta y_{1}, y_{2}) H^{ab}_{\zeta,\sigma}(y_{2}) \\
&\qquad\cdot\int_{-\frac{\beta}{2}}^{\frac{\beta}{2}}  d s\, e^{-i\eta s} \langle \timord \gamma_{s}(n_{\vec x})\;; a^*_{\vec y + \vec a ,\zeta} a_{\vec y + \vec b,\sigma} \rangle\;.
\end{split}
\end{equation}
We can evaluate the average by Wick's rule. We get:
\begin{equation}
\begin{split}
&\chi_{ab}^{\beta,L}(\eta, \theta) \\
&= -i\theta\sum_{\vec x,\vec y }\sum_{\sigma,\zeta,\rho}  \mu(\theta\vec  x)   \phi_{\ell}(\theta y_{1}, y_{2}) H^{ab}_{\zeta,\sigma}(y_{2}) \\
&\qquad \cdot \int_{-\frac{\beta}{2}}^{\frac{\beta}{2}}  d s\, e^{-i\eta s} S_{2;\rho,\zeta}\big((\vec x, s); (\vec y + \vec a, 0)\big) S_{2;\sigma,\rho}\big((\vec y + \vec b, 0); (\vec x, s)\big)\;.
\end{split}
\end{equation}
As for the susceptibility, we define the singular part of $\chi_{ab}^{\beta,L}(\eta, \theta)$ as:
\begin{equation}
\begin{split}
&\chi_{ab}^{\text{sing}}(\eta, \theta) \\
&= -i\theta\sum_{\vec x,\vec y } \sum_{\sigma,\zeta,\rho}\mu(\theta\vec  x)   \phi_{\ell}(\theta y_{1}, y_{2}) H^{ab}_{\zeta,\sigma}(y_{2}) \\
&\qquad \cdot \int_{-\frac{\beta}{2}}^{\frac{\beta}{2}}  d s\, e^{-i\eta s} S^{\mr{s}}_{2;\rho,\zeta}\big((\vec x, s); (\vec y + \vec a, 0)\big) S^{\mr{s}}_{2;\sigma,\rho}\big((\vec y + \vec b, 0); (\vec x, s)\big)\;.
\end{split}
\end{equation}
By Theorem \ref{thm:2pt}:
\begin{equation}
\begin{split}
&S^{\mr{s}}_{2;\rho,\zeta}\big((\vec x, s); (\vec y + \vec a, 0)\big) \\
&= \sum_{\omega} e^{i k_{F}^{\omega}(\lambda)(x_{1} - y_{1} - a_{1})}Z_{\omega,\rho}(\vec x) \check{g}_{\mr{s};\omega}\big(( x_{1},s); (y_{1} +a_{1}, 0)\big) \overline{Z_{\omega,\zeta}(y_{1} + a_{1})} \\
&S^{\mr{s}}_{2;\sigma,\rho}\big(( y_{1} +  b_{1}, 0); ( x_{1}, s)\big) \\
&= \sum_{\omega} e^{i k_{F}^{\omega}(\lambda)(y_{1} + b_{1} - x_{1})}Z_{\omega,\sigma}(\vec y + \vec b) \check{g}_{\mr{s};\omega}\big( y_{1}+ b_{1},0); ( x_{1}, s)\big) \overline{Z_{\omega,\rho}(\vec x)}\;.
\end{split}
\end{equation}
Now, for the purpose of determining the singular part of the response function, we can safely drop $a_{1}$ and $b_{1}$ in the relativistic propagator. In fact:
\begin{equation}\label{eq:diffprop}
\big| \check{g}_{\mr{s};\omega}\big(( y_{1} +  b_{1},0); ( x_{1}, s)\big) - \check{g}_{\mr{s};\omega}\big(( y_{1} ,0); (x_{1}, s)\big)\big| \le \frac{C_{b}}{1 + \| ( x_{1}, s) - ( y_{1}, 0) \|^{1+\alpha}}\;,
\end{equation}
with $\alpha > 0$. Since in our application $|\vec b| \le 2$, the above bound proves that the difference between the propagators has improved decay. We use this observation to rewrite:
\begin{equation}\label{eq:chi1sing}
\begin{split}
\chi_{ab}^{\text{sing}}(\eta, \theta) &= -i\theta\sum_{\vec x,\vec y }\sum_{\sigma,\zeta,\rho} \mu(\theta\vec  x)    \phi_{\ell}(\theta y_{1}, y_{2}) e^{i k_{F}^{+}(\lambda)(b_{1} - a_{1})}H^{ab}_{\zeta,\sigma}(y_{2}) \\
&\qquad \cdot Z_{+,\rho}(\vec x) \overline{Z_{+,\zeta}(\vec y + \vec a)} Z_{+,\sigma}(\vec y + \vec b) \overline{Z_{+,\rho}(\vec x)} \\
&\qquad \cdot \int_{-\frac{\beta}{2}}^{\frac{\beta}{2}}  d s\, e^{-i\eta s} \check{g}_{\mr{s};+}\big(( x_{1},s); ( y_{1}, 0)\big) \check{g}_{\mr{s};+}\big(( y_{1},0); ( x_{1}, s)\big) \\
&\quad + \mf{e}_{1,1}(\eta, \theta) + \mf{e}_{1,2}(\eta, \theta)
\end{split}
\end{equation}
where $\mf{e}_{1,1}(\eta, \theta)$ collects the contributions coming from the chiralities associated with the edge mode as $x_{2} = L_{2}$, and $\mf{e}_{1,2}(\eta, \theta)$ the contribution coming from the difference of the propagators (\ref{eq:diffprop}). Therefore,
\begin{equation}
| \mf{e}_{1,1}(\eta, \theta) | \le C_{\theta,\ell} e^{-cL_{2}}\;,
\end{equation}
while $\mf{e}_{1,2}(\eta, \theta)$ satisfies the bound (\ref{eq:contK}), and hence it can be absorbed into the regular part. Now, let $Z_{+,\zeta}(\vec y + \vec a) \equiv Z^{a}_{+,\zeta}(\vec y)$, where:
\begin{equation}\label{eq:Za}
\begin{split}
Z^{a}_{+,\zeta}(\vec y) &= \sum_{n} e^{-i n \alpha y_{1}} e^{-i n \alpha a_{1}} Z_{+,n,\zeta}(y_{2} + a_{2}) \\
&\equiv \sum_{n} e^{-i n \alpha y_{1}} Z^{a}_{+,n,\zeta}(y_{2})\;.
\end{split}
\end{equation}
Let us denote by $\widetilde{\chi}^{\text{sing}}_{ab}(\eta,\theta)$ the main contribution to the right-hand side of (\ref{eq:chi1sing}). Using the above notation, we have:
\begin{equation}
\begin{split}
\widetilde \chi_{ab}^{\text{sing}}(\eta, \theta) &= -i\theta\sum_{\vec x,\vec y } \sum_{\sigma,\zeta,\rho}  \mu(\theta\vec  x)  \phi_{\ell}(\theta y_{1}, y_{2}) e^{i k_{F}^{+}(\lambda)(b_{1} - a_{1})}H^{ab}_{\zeta,\sigma}(y_{2}) \\
&\qquad \cdot Z_{+,\rho}(\vec x) \overline{Z^{a}_{+,\zeta}(\vec y)} Z^{b}_{+,\sigma}(\vec y) \overline{Z_{+,\rho}(\vec x)} \\
&\qquad \cdot \int_{-\frac{\beta}{2}}^{\frac{\beta}{2}}  d s\, e^{-i\eta s} \check{g}_{\mr{s};+}\big(( x_{1},s); (y_{1}, 0)\big) \check{g}_{\mr{s};+}\big(( y_{1},0); ( x_{1}, s)\big)\;,
\end{split}
\end{equation}
which we rewrite as, using (\ref{eq:Fouriersingularpart}):
\begin{equation}
\begin{split}
\widetilde \chi_{ab}^{\text{sing}}(\eta, \theta) &= -i\theta\sum_{\vec x,\vec y } \sum_{\sigma,\zeta,\rho} \mu(\theta\vec  x)    \phi_{\ell}(\theta y_{1}, y_{2}) e^{i k_{F}^{+}(\lambda)(b_{1} - a_{1})}H^{ab}_{\zeta,\sigma}(y_{2}) \\
&\qquad \cdot Z_{+,\rho}(\vec x) \overline{Z^{a}_{+,\zeta}(\vec y)} Z^{b}_{+,\sigma}(\vec y) \overline{Z_{+,\rho}(\vec x)} \\
&\qquad \cdot \frac{1}{L_{1}} \sum_{p} e^{ip (x_{1} - y_{1})} \mc{B}^{\beta,L_{1}}_{+}(\eta,p)\;.
\end{split}
\end{equation}
That is:
\begin{equation}
\begin{split}
\widetilde \chi_{ab}^{\text{sing}}(\eta, \theta) &=  -\frac{\theta}{L_{1}}\sum_{p} \sum_{x_{2},y_{2}}\sum_{\sigma,\zeta,\rho}\mathcal{F}\big(\mu_{\theta}(\cdot,x_{2}) |Z_{+,\rho} (\cdot,x_{2})|^{2}\big)(-p) \\
&\qquad \cdot \mathcal{F}\big(\phi^{ab}_{\theta,\ell;\zeta,\sigma}(\cdot,y_{2}) Z^{b}_{+,\sigma}(\cdot,y_{2}) \overline{Z^{a}_{+,\zeta}(\cdot, y_{2})}\big)(p) \mc{B}_{+}^{\beta, L_{1}}(\eta, p)\;,
\end{split}
\end{equation}
where we set:
\begin{equation}\label{eq:phiab}
\phi^{ab}_{\theta,\ell;\zeta,\sigma}(\vec y) := i\phi_{\theta,\ell}(\vec y) e^{i k_{F}^{+}(\lambda)(b_{1} - a_{1})} H^{ab}_{\zeta,\sigma}(y_{2})\;.
\end{equation}
The next step is to expand the oscillatory factors into their Fourier series, and proceed as after (\ref{eq:ZZconv}) to isolate the main singular term from other contributions that are either small or regular in $\eta, \theta$; the only difference in the present setting is the $a,b$ decorations of the Fourier coefficients, and the redefinition (\ref{eq:phiab}) of the test function. Both are completely irrelevant for the purpose of the discussion following (\ref{eq:sing2}) till (\ref{eq:sumtoint}). Thus, we get:
\begin{equation}
\begin{split}
\widetilde \chi_{ab}^{\text{sing}}(\eta, \theta) &=  - \int \frac{d p}{2\pi} \sum_{x_{2},y_{2}} \sum_{\sigma, \zeta,\rho} (Z^{b}_{+,\sigma}\star\overline{Z^{a}_{+,\zeta}})(0;y_{2}) (Z_{+,\rho}\star\overline{Z_{+,\rho}})(0;x_{2}) \\
&\qquad \cdot \hat \mu_{\infty}( -p,\theta x_2) \hat \phi^{ab}_{\infty; \zeta,\sigma}( p,y_{2})   \mc{B}_{+}^{\infty}(\eta, \theta p) + \mr{o}(1)\;,
\end{split}
\end{equation}
where:
\begin{equation}
\hat \phi^{ab}_{\infty;\zeta,\sigma}( -p,y_{2}) = i \hat \phi_{\infty}(-p,y_{2}) e^{i k_{F}^{+}(\lambda)(b_{1} - a_{1})} H^{ab}_{\zeta,\sigma}(y_{2})\;.
\end{equation}
Up to vanishing error terms as $\theta\to 0$, $\ell \to \infty$, we can localize the test functions on the boundary. We get:
\begin{equation}
\begin{split}
\widetilde \chi_{ab}^{\text{sing}}(\eta, \theta) = & - \Big(\int \frac{d p}{2\pi}\,  \hat \mu_{\infty}( -p, 0)\hat \phi_{\infty}(p,0)  \mc{B}_{+}^{\infty}(\eta, \theta p)\Big)\\
&\quad\cdot i v_{0,+}(\lambda) \sum_{\sigma,\zeta}\sum_{y_{2}} (Z^{b}_{+,\sigma}*\overline{Z^{a}_{+,\zeta}})(0;y_{2}) e^{i k_{F}^{+}(\lambda)(b_{1} - a_{1})} H^{ab}_{\zeta,\sigma}(y_{2})  \\
&+ \mr{o}(1)\;,
\end{split}
\end{equation}
where we used the first of the relations (\ref{eq:vertrel}), as for the susceptibility. Let us now consider the last factor. We rewrite it as:
\begin{equation}
\begin{split}
&i\sum_{\sigma,\zeta} \sum_{y_{2}} (Z^{b}_{+,\sigma}*\overline{Z^{a}_{+,\zeta}})(0;y_{2}) e^{i k_{F}^{+}(\lambda)(b_{1} - a_{1})} H^{ab}_{\zeta,\sigma}(y_{2}) \\
&\quad =i \sum_{n} e^{i (k_{F}^{+} - n\alpha)(b_{1} - a_{1})}\sum_{y_{2}} \big\langle Z_{+,n}(y_{2} + a_{2}),  H_{\zeta,\sigma}(\vec y + \vec a; \vec y + \vec b) Z_{+,n}(y_{2} + b_{2})\big\rangle \;,
\end{split}
\end{equation}
recall (\ref{eq:Hab}) and (\ref{eq:Za}), and where the scalar product is over the internal degrees of freedom as in (\ref{eq:zetamu}). We now have to add up all the possible choices of $(\vec a, \vec b)$, and subtracting the case $\vec a \leftrightarrow \vec b$, because of the definition of bond current (\ref{eq:bondcu}). From (\ref{eq:currchi}), the possible choices for $(\vec a, \vec b)$, associated with the bond currents, are:
\begin{equation}\label{eq:abcases}
(0, \vec e_{1})\;,\quad (0, \vec e_{1} + \vec e_{2})\;,\quad (0, \vec e_{1} - \vec e_{2})\;,\quad (\vec e_{2}, \vec e_{1})\;,\quad (-\vec e_{2}, \vec e_{1})\;,
\end{equation}
minus the cases $\vec a\leftrightarrow \vec b$. Except for the first case in (\ref{eq:abcases}), the other bond currents come with a factor $1/2$. However, when summing over $y_{2}$, second case and the fourth case, and the third case and the fifth case, give the same outcome, using the Dirichlet boundary conditions; recall Eq. (\ref{eq:jsumx2}).

Therefore, summing over all bond currents we get:
\begin{equation}\label{eq:partH}
\begin{split}
&i\sum_{n} e^{i (k_{F}^{+}(\lambda) - n\alpha)} \sum_{y_{2}} \big\langle Z_{+,n}(y_{2}),  H(\vec y; \vec y + \vec e_{1}) Z_{+,n}(y_{2})\big\rangle \\
& + i\sum_{n}  e^{i (k_{F}^{+}(\lambda) - n\alpha)} \sum_{y_{2}} \big\langle Z_{+,n}(y_{2}),  H(\vec y; \vec y + \vec e_{1} + \vec e_{2}) Z_{+,n}(y_{2} + 1)\big\rangle \\ 
& + i\sum_{n} e^{i (k_{F}^{+}(\lambda) - n\alpha)} \sum_{y_{2}} \big\langle Z_{+,n}(y_{2}),  H(\vec y; \vec y + \vec e_{1} - \vec e_{2}) Z_{+,n}(y_{2} - 1)\big\rangle + \text{c.c.;}
\end{split}
\end{equation}
adding the complex conjugate is the same as subtracting the cases $\vec a\leftrightarrow \vec b$. 
Recalling (\ref{eq:currentkernel}) and (\ref{eq:currentkernelzero}), we rewrite the first term as:
\begin{equation}
\begin{split}
&i\sum_{n} e^{i (k_{F}^{+}(\lambda) - n\alpha)} \sum_{y_{2}} \big\langle Z_{+,n}(y_{2}),  H(\vec y; \vec y + \vec e_{1}) Z_{+,n}(y_{2})\big\rangle + \text{c.c.} \\
&= \sum_{n} \big\langle Z_{+,n}(y_{2}), \partial_{k_{1}} \hat H\big(k_{F}^{+}(\lambda) - n\alpha; y_{2}, y_{2}\big) Z_{+,n}(y_{2}) \big\rangle\;, 
\end{split}
\end{equation}
 Repeating the computation for the other two cases in (\ref{eq:partH}), we see that:
\begin{equation}
(\ref{eq:partH}) = \sum_{n} \big\langle Z_{+,n}, \partial_{k_{1}} \hat H\big(k_{F}^{+}(\lambda) - n\alpha\big) Z_{+,n} \big\rangle\;,
\end{equation}
where now the scalar product is over the internal degrees of freedom and over the second spatial variable. These quantity is precisely the coefficient $\zeta_{1,+}$, appearing in the analysis of the vertex function, Eq. (\ref{eq:zetamu}). By the second of (\ref{eq:vertrel}), we know that $\zeta_{1,+} = v_{1,+}$. Coming back to the edge conductance, we obtained the following representation for its singular contribution:
\begin{equation}
\widetilde \chi^{\text{sing}}_{1}(\eta, \theta) = v_{0,+}(\lambda) v_{1,+}(\lambda) \Big(\int \frac{d p}{2\pi}\,  \hat \mu_{\infty}( -p, 0)\hat \phi_{\infty}(p,0)  \mc{B}_{+}^{\infty}(\eta, \theta p)\Big) + \mr{o}(1)\;.
\end{equation}
By the consequence of the current-current Ward identity (\ref{eq:chi1WI}), we obtain the analogue of (\ref{eq:chi0afterWI}), (\ref{eq:chi0aaa}) namely:
\begin{equation}
\chi^{\beta,L}_{1}(\eta, \theta) = \int \frac{dp}{(2\pi)^2} \hat \mu_{\infty}( -p, 0) \hat \phi_{\infty}( p,0)\frac{|\mf{v}(\lambda)|\theta p}{i \eta + \mf{v}(\lambda) \theta p} + \tilde{\mf{e}}^{\beta,L}_{1}(\eta,\theta,N)\;.
\end{equation}
In particular, for $\eta \ll \theta$:
\begin{equation}
\chi^{\beta,L}_{1}(\eta, \theta) =  \frac{\text{sgn}(\mf{v}(\lambda))}{2\pi} \langle \mu, \varphi \rangle_{\text{edge}} + \mf{e}_{1}^{\beta,L}(\eta,\theta,N)\;.
\end{equation}
This proves the second of (\ref{eq:respf}), and concludes the proof of Theorem \ref{thm:resp}.  \qed

\appendix

\section{Decay estimate away from the Fermi level}\label{app:CT}

In this appendix we shall prove the bound (\ref{eq:bdGb}) for the propagator associated with energies away from the Fermi level $\mu$. Recall:
\begin{equation}\label{eq:CT1}
\begin{split}
G^{(\text{b})}(\bm{k},x_2, y_2) &= \Big(\big(1-\chi(k_{0}) \chi\big(\hat H(k_{1}) - \mu\big)\big) G(\bm{k})\Big)(x_{2},y_{2}) \\
&= (1-\chi(k_{0})) G(\bm{k},x_2, y_2) + \chi(k_{0}) \Big(\big(1 - \chi\big(\hat H(k_{1}) - \mu\big)\big) G(\bm{k})\Big)(x_{2},y_{2}) \\
&\equiv G^{(\text{b}); 1}(\bm{k},x_2, y_2) + G^{(\text{b}); 2}(\bm{k},x_2, y_2)\;.
\end{split}
\end{equation}
For the first term in (\ref{eq:CT1}), the function $(1-\chi(k_{0}))$ implies that $|k_{0}| \geq C\delta$. For a finite ranged Hamiltonian $H$, the Combes-Thomas bound states that, for $z \notin \sigma(H)$, see {\it e.g.} \cite{AW}:
\begin{equation}
\Big\| \frac{1}{H - z}(x_{2}, y_{2}) \Big\| \leq \frac{C}{|\text{dist}(z,\sigma(H))|} e^{-c|x_{2} -y_{2}|}\;,
\end{equation}
where the constant $c$ depends on the distance from $z$ to the spectrum of $H$. Thus, in our case, for $|k_{0}|\geq C\delta$:
\begin{equation}\label{eq:Gb1}
\Big\| G^{(\text{b}); 1}(\bm{k},x_2, y_2)\Big\| \leq \frac{C_{\delta}}{1+|k_{0}|} e^{-c|x_{2}- y_{2}|}\;.
\end{equation}
Consider now the term $G^{(\text{b}); 2}_{\sigma,\zeta}(\bm{k},x_2, y_2)$. By the approximate linearity of the edge modes close to $k_{F}^{\omega}$, if $|k_{1} - k_{F}^{\omega}| \geq \kappa \delta / v_{\omega}$ for all $\omega$ and for a suitable $\kappa>0$ then:
\begin{equation}
G^{(\text{b}); 2}(\bm{k},x_2, y_2) = \frac{\chi(k_{0})}{ik_{0} + \hat H(k_{1})- \mu}(x_{2}, y_{2})
\end{equation}
and $\text{dist}(\mu, \sigma(\hat H(k_{1}))) \geq c\delta$. Therefore, we can apply the Combes-Thomas bound and we get:
\begin{equation}\label{eq:Gb21}
\Big\|G^{(\text{b}); 2}(\bm{k},x_2, y_2)\Big\| \leq \frac{C_{\delta}}{|k_{0}| + 1} e^{-c|x_{2} - y_{2}|}\;.
\end{equation}
Finally, suppose that $|k_{1} - k_{F}^{\omega}| < \delta / v_{\omega}$ for some $\omega$. Let us introduce the short-hand notation:
\begin{equation}
\chi_{>\delta}(\hat H(k_{1})) := 1 - \chi\big(\hat H(k_{1}) - \mu\big)\;.
\end{equation}
Observe that the function $\chi_{>\delta}(\lambda)$ is equal to $1$ for $|\lambda - \mu| \geq 2\delta$, it vanishes for $|\lambda| \leq \delta$ and it is between $0$ and $1$ for $\delta \leq |\lambda - \mu| \leq 2\delta$. We write:
\begin{equation}\label{eq:A7}
\begin{split}
G^{(\text{b}); 2}(\bm{k}) = \chi(k_{0}) \chi_{>\delta}(\hat H(k_{1})) P G_{\sigma,\zeta}(\bm{k}) + \chi(k_{0}) \chi_{>\delta}(\hat H(k_{1})) P^{\perp} G_{\sigma,\zeta}(\bm{k})
\end{split}
\end{equation}
wher $P = \mathbbm{1}(|\hat H(k_{1}) - \mu| \leq 2\delta)$. Observe that, by the assumptions on the Hamiltonian (Assumption \ref{ass:H3}):
\begin{equation}
P = \sum_{e: |\varepsilon_{e}(k_{1}) - \mu| \leq 2\delta} P_{e}\;,
\end{equation}
with $P_{e}$ the projector over the edge mode of $\hat H(k_{1})$ labelled by $e$. Also,
\begin{equation}\label{eq:PPperp}
\chi_{>\delta}(\hat H(k_{1})) P^{\perp} = P^{\perp}\;,\qquad P^{\perp} \frac{1}{ik_{0} + \hat H(k_{1}) - \mu} = P^{\perp} \frac{1}{ik_{0} + P^{\perp}\hat H(k_{1}) - \mu}\;.
\end{equation}
Consider the first term in the right-hand side of (\ref{eq:A7}). We have:
\begin{equation}
\chi(k_{0}) \chi_{>\delta}(\hat H(k_{1})) P G_{\sigma,\zeta}(\bm{k}) = \chi(k_{0})\sum_{e: |\varepsilon_{e}(k_{1}) - \mu| \leq 2\delta} \frac{\chi_{>\delta}(\varepsilon_{e}(k_{1}))}{ik_{0} + \varepsilon_{e}(k_{1}) - \mu} P_{e}(x_{2};y_{2})\;.
\end{equation}
By the exponential decay of the edge modes, we easily get:
\begin{equation}\label{eq:Gb3}
\Big\| \chi(k_{0}) \chi_{>\delta}(\hat H(k_{1})) P G_{\sigma,\zeta}(\bm{k}) \Big\| \leq C_{\delta} \chi(k_{0}) e^{-|x_{2} - y_{2}|}\;.
\end{equation}
Consider now the second term in the right-hand side of (\ref{eq:A7}). Here we have:
\begin{equation}
\chi(k_{0}) \chi_{>\delta}(\hat H(k_{1})) P^{\perp} G_{\sigma,\zeta}(\bm{k}) = \chi(k_{0}) P^{\perp} \frac{1}{ik_{0} + P^{\perp}\hat H(k_{1}) - \mu}\;,
\end{equation}
and 
\begin{equation}
P^{\perp}\hat H(k_{1}) = \hat H(k_{1}) - \sum_{e: |\varepsilon_{e}(k_{1}) - \mu| \leq 2\delta} \varepsilon_{e}(k_{1}) P_{e} =: \widetilde{H}(k_{1})\;.
\end{equation}
By Assumption \ref{ass:H3}, we observe that $\text{dist}(\mu, \sigma(\widetilde{H}(k_{1}))) \geq 2\delta$. The Hamiltonian $\widetilde{H}(k_{1})$ is not finite-ranged, due to the projectors of the edge modes. However, one can check that, for $|\alpha|$ sufficiently small:
\begin{equation}
\| e^{\alpha \hat x_{2}} \widetilde{H}(k_{1}) e^{-\alpha \hat x_{2}} - \widetilde{H}(k_{1})  \| \leq C|\alpha|\;.
\end{equation}
This bound is all it is needed to prove the Combes-Thomas estimate for the resolvent of $\widetilde{H}(k_{1})$; it follows from the short-range of $\hat H(k_{1})$, and from the exponential decay of the edge modes. Thus, from the exponential decay of $P^{\perp}(x_{2}, y_{2}) = \mathbbm{1}(x_{2},y_{2}) - P(x_{2},y_{2})$, and from the Combes-Thomas bound for the resolvent of $\widetilde{H}(k_{1})$, we obtain:
\begin{equation}\label{eq:Gb4}
\Big\| \chi(k_{0}) \Big( P^{\perp} \frac{1}{ik_{0} + P^{\perp}\hat H(k_{1}) - \mu}\Big) (x_{2},y_{2}) \Big\| \leq C_{\delta} \chi(k_{0}) e^{-c|x_{2} - y_{2}|}\;.
\end{equation}
Finally, from (\ref{eq:Gb1}), (\ref{eq:Gb21}), (\ref{eq:Gb3}), (\ref{eq:Gb4}), we get:
\begin{equation}
\Big\| G^{(\text{b})}(\bm{k},x_2, y_2) \Big\| \leq \frac{C_{\delta}}{|k_{0}| + 1} e^{-c|x_{2} - y_{2}|}\;.
\end{equation}
The bound for the derivatives of $G^{(\text{b})}$ can be proved in the same way, using the exponential decay of the derivatives of the edge modes, Assumption \ref{ass:H3}. We omit the details.


\section{Recursion relation for the two-point function}\label{app:2pt}

In this appendix we shall prove the coupled recursion relations (\ref{eq:recphi2}), (\ref{eq:rec2}). 

\paragraph{Proof of (\ref{eq:recphi2}).} To begin, we write:
\begin{equation}\label{eq:A1}
W_{\phi\phi;n,\sigma,\zeta}^{(h)}({\bm k}; x_{2}, y_{2}) = \beta L_{1} \frac{\partial^{2}}{\partial \phi^{-}_{{\bm k} + n{\bm \alpha}, y_{2}, \zeta}\partial \phi^{+}_{{\bm k}, x_{2}, \sigma}} \log \mathbbm{E}_{h}\Big( e^{V^{(h)}(\psi^{(\leq h)}, \phi)} \Big)\Big|_{\phi=\psi^{(<h)} = 0}\;.
\end{equation}
where $\mathbbm{E}_{h}$ is the Gaussian expectation associated with $\widetilde{P}_{h}(d\psi^{(h)})$, recall (\ref{eq:tildeP2pt}), and $V^{(h)}(\psi^{(\leq h)}, \phi)$ is a short-hand notation for the argument of the exponential in (\ref{eq:tildeP2pt}). We compute:
\begin{equation}\label{eq:1der}
\begin{split}
&\frac{\partial}{\partial \phi^{+}_{{\bm k}, x_{2}, \sigma}}  \log \mathbbm{E}_{h}\Big( e^{V^{(h)}(\psi^{(\leq h)}, \phi)} \Big) \\
&\qquad = \frac{1}{\beta L_{1}} \sum_{m,\omega} W^{(h)}_{\phi\psi;m,\omega,\sigma}({\bm q}({\bm k}), x_{2})   \frac{\mathbbm{E}_{h}\Big( \psi^{(\leq h)-}_{{\bm q}({\bm k}) + m{\bm \alpha}, \omega}  e^{V^{(h)}(\psi^{(\leq h)}, \phi)}\Big)}{\mathbbm{E}_{h}\Big( e^{V^{(h)}(\psi^{(\leq h)}, \phi)} \Big)}\;,
\end{split}
\end{equation}
and:
\begin{equation}\label{eq:A3}
\begin{split}
&\frac{\partial^{2}}{\partial \phi^{-}_{{\bm k} + n{\bm \alpha}, y_{2}, \zeta}\partial \phi^{+}_{{\bm k}, x_{2}, \sigma}} \log \mathbbm{E}_{h}\Big( e^{V^{(h)}(\psi^{(\leq h)}, \phi)} \Big)\Big|_{\psi^{(<h)} = \phi = 0} =  \\
&\quad = \frac{1}{\beta L_{1}} \sum_{m,\omega} W^{(h)}_{\phi\psi;m,\omega,\sigma}({\bm q}({\bm k}), x_{2})   \frac{\partial}{\partial  \phi^{-}_{{\bm k} + n{\bm \alpha}, y_{2}, \zeta}}\frac{\mathbbm{E}_{h}\Big( \psi^{(\leq h)-}_{{\bm q}({\bm k}) + m{\bm \alpha}, \omega}  e^{V^{(h)}(\psi^{(\leq h)}, \phi)}\Big)}{\mathbbm{E}_{h}\Big( e^{V^{(h)}(\psi^{(\leq h)}, \phi)} \Big)}\Big|_{\psi^{(<h)} = \phi = 0} \\
&\quad =  \frac{1}{(\beta L_{1})^{2}} \sum_{m,\omega} W^{(h)}_{\phi\psi;m,\omega,\sigma}({\bm q}({\bm k}), x_{2}) \sum_{m',\omega'} W^{(h)}_{\psi\phi;m',\omega',\zeta}({\bm q}(\bm k) + (n-m'){\bm \alpha}, y_{2}) \\
&\qquad \cdot \frac{\mathbbm{E}_{h}\Big( \psi^{(\leq h)-}_{{\bm q}({\bm k}) + m{\bm \alpha}, \omega} \psi^{(\leq h)+}_{{\bm q}({\bm k}) + (n-m'){\bm \alpha}, \omega'} e^{V^{(h)}(\psi^{(\leq h)}, 0)} \Big)  }{\mathbbm{E}_{h}\Big( e^{V^{(h)}(\psi^{(\leq h)}, 0)} \Big)}\Big|_{\psi^{(<h)} = 0}\;,
\end{split}
\end{equation}
where we used that the average of an odd number of Grassmann variables is zero. Next, using that the distribution of the Gaussian Grassmann integration is:
\begin{equation}
\exp \Big(- \frac{1}{\beta L_{1}} \sum_{{\bm q}, \omega} \psi^{(h)+}_{{\bm q}, \omega} g^{(h)}_{\omega}({\bm q})^{-1} \psi^{(h)-}_{{\bm q}, \omega} \Big)
\end{equation}
we have, by Grassmann integration by parts:
\begin{equation}\label{eq:parts}
\mathbb{E}_{h}(\psi^{(h)-}_{{\bm q}, \omega} A) = \beta L_{1} g^{(h)}_{\omega}({\bm q}) \mathbbm{E}_{h}\Big( \frac{\partial}{\partial \psi^{(h)+}_{{\bm q},\omega}}A \Big)\;,\quad \mathbb{E}_{h}(\psi^{(h)+}_{{\bm q}, \omega} A) = -\beta L_{1} g^{(h)}_{\omega}({\bm q}) \mathbbm{E}_{h}\Big( \frac{\partial}{\partial \psi^{(h)-}_{{\bm q},\omega}}A \Big)\;.
\end{equation}
We use these identities to write:
\begin{equation}
\begin{split}
&\mathbbm{E}_{h}\Big( \psi^{(h)-}_{{\bm q}({\bm k}) + m{\bm \alpha}, \omega} \psi^{(h)+}_{{\bm q}({\bm k}) + (n-m'){\bm \alpha}, \omega'} e^{V^{(h)}(\psi^{(\leq h)}, 0)} \Big) \\
&\qquad = \beta L_{1} g^{(h)}_{\omega}({\bm q}(\bm k) + m{\bm \alpha}) \mathbbm{E}_{h}\Big( \frac{\partial}{\partial \psi^{(h)+}_{{\bm q}(\bm k) + m{\bm \alpha},\omega}} \psi^{(h)+}_{{\bm q}({\bm k}) + (n-m'){\bm \alpha}, \omega'} e^{V^{(h)}(\psi^{(\leq h)}, 0)} \Big) \\
&\qquad = \delta_{\omega,\omega'} \delta_{n-m',m} \beta L_{1} g^{(h)}_{\omega}({\bm q}(\bm k) + m{\bm \alpha}) \mathbbm{E}_{h}\Big(  e^{V^{(h)}(\psi^{(\leq h)}, 0)} \Big) \\
&\qquad\quad -  \beta L_{1} g^{(h)}_{\omega}({\bm q}(\bm k) + m{\bm \alpha}) \mathbbm{E}_{h}\Big(  \psi^{(h)+}_{{\bm q}({\bm k}) + (n-m'){\bm \alpha}, \omega'} \frac{\partial}{\partial \psi^{(h)+}_{{\bm q}(\bm k) + m{\bm \alpha},\omega}} e^{V^{(h)}(\psi^{(\leq h)}, 0)} \Big) \\
&\qquad = \delta_{\omega,\omega'} \delta_{n-m',m} \beta L_{1} g^{(h)}_{\omega}({\bm q}(\bm k) + m{\bm \alpha}) \mathbbm{E}_{h}\Big(  e^{V^{(h)}(\psi^{(\leq h)}, 0)} \Big) \\
&\qquad\quad + (\beta L_{1})^{2} g^{(h)}_{\omega}({\bm q}(\bm k) + m{\bm \alpha}) g^{(h)}_{\omega'}({\bm q}(\bm k) + (n-m'){\bm \alpha}) \\&\qquad\qquad \cdot \mathbbm{E}_{h}\Big(  \frac{\partial^{2}}{ \partial \psi^{(h)-}_{{\bm q}({\bm k}) + (n-m'){\bm \alpha}, \omega'} \partial \psi^{(h)+}_{{\bm q}(\bm k) + m{\bm \alpha},\omega}} e^{V^{(h)}(\psi^{(\leq h)}, 0)} \Big)\;.
\end{split}
\end{equation}
Thus, from this formula we easily get:
\begin{equation}\label{eq:psipsi}
\begin{split}
&\frac{\mathbbm{E}_{h}\Big( \psi^{(\leq h)-}_{{\bm q}({\bm k}) + m{\bm \alpha}, \omega} \psi^{(\leq h)+}_{{\bm q}({\bm k}) + (n-m'){\bm \alpha}, \omega'} e^{V^{(h)}(\psi^{(\leq h)}, 0)} \Big)  }{\mathbbm{E}_{h}\Big( e^{V^{(h)}(\psi^{(\leq h)}, 0)} \Big)}\Big|_{\psi^{(<h)} = 0} \\
&\qquad = \delta_{\omega,\omega'}\delta_{n-m',m}\beta L_{1} g^{(h)}_{\omega}({\bm q}(\bm k) + m{\bm \alpha}) \\
&\qquad\quad + (\beta L_{1})^{2} g^{(h)}_{\omega}({\bm q}(\bm k) + m{\bm \alpha}) g^{(h)}_{\omega'}({\bm q}(\bm k) + (n-m'){\bm \alpha})\\&\qquad\qquad\cdot  \frac{\partial^{2}}{ \partial \psi^{(<h)-}_{{\bm q}({\bm k}) + (n-m'){\bm \alpha}, \omega'} \partial \psi^{(<h)+}_{{\bm q}(\bm k) + m{\bm \alpha},\omega}} \log \mathbbm{E}_{h}\Big( e^{V^{(h)}(\psi^{(\leq h)}, 0)} \Big)\Big|_{\psi^{(<h)} = 0}\;;
\end{split}
\end{equation}
to obtain the expression, we used that differentiating the integrand with respect to $\psi^{(h)}$ or $\psi^{(<h)}$ is equivalent, and the fact that the average of an odd number of Grassmann variables is zero. Next, we observe that, from the definition of effective potential:
\begin{equation}\label{eq:effpot}
\begin{split}
& \frac{\partial^{2}}{ \partial \psi^{(<h)-}_{{\bm q}({\bm k}) + (n-m'){\bm \alpha}, \omega'} \partial \psi^{(<h)+}_{{\bm q}(\bm k) + m{\bm \alpha},\omega}} \log \mathbbm{E}_{h}\Big( e^{V^{(h)}(\psi^{(\leq h)}, 0)} \Big)\Big|_{\psi^{(<h)} = 0} \\
&\qquad = \frac{1}{\beta L_{1}}V^{(h-1)}_{\omega,\omega';n - m' - m}({\bm q}(\bm k) + m{\bm \alpha})\;;
 \end{split}
\end{equation}
plugging this formula in (\ref{eq:psipsi}) we get:
\begin{equation}
\begin{split}
&\frac{\mathbbm{E}_{h}\Big( \psi^{(\leq h)-}_{{\bm q}({\bm k}) + m{\bm \alpha}, \omega} \psi^{(\leq h)+}_{{\bm q}({\bm k}) + (n-m'){\bm \alpha}, \omega'} e^{V^{(h)}(\psi^{(\leq h)}, 0)} \Big)  }{\mathbbm{E}_{h}\Big( e^{V^{(h)}(\psi^{(\leq h)}, 0)} \Big)}\Big|_{\psi^{(<h)} = 0} \\
&\qquad = \delta_{\omega,\omega'}\delta_{n-m',m} \beta L_{1} g^{(h)}_{\omega}({\bm q}(\bm k) + m{\bm \alpha})\\
&\qquad\quad +  \beta L_{1} g^{(h)}_{\omega}({\bm q}(\bm k) + m{\bm \alpha}) g^{(h)}_{\omega'}({\bm q}(\bm k) + (n-m'){\bm \alpha}) V^{(h-1)}_{\omega,\omega';n-m'-m}({\bm q}(\bm k) + m{\bm \alpha})\;.
\end{split}
\end{equation}
Inserting this expression in (\ref{eq:A3}), we find, recalling (\ref{eq:A1}):
\begin{equation}
\begin{split}
&W_{\phi\phi;n,\sigma,\zeta}^{(h)}({\bm k}; x_{2}, y_{2}) \\
&\quad = \sum_{m,\omega} W^{(h)}_{\phi\psi;m,\omega,\sigma}({\bm q}({\bm k}), x_{2}) g^{(h)}_{\omega}({\bm q}({\bm k}) + m{\bm \alpha}) W^{(h)}_{\psi\phi;n-m,\omega,\zeta}({\bm q}({\bm k}) + m{\bm \alpha}, y_{2}) \\
&\qquad + \sum_{\substack{m,\omega \\ m',\omega'}} W^{(h)}_{\phi\psi;m,\omega,\sigma}({\bm q}({\bm k}), x_{2})  g^{(h)}_{\omega}({\bm q}(\bm k) + m{\bm \alpha}) V^{(h-1)}_{\omega,\omega'; n-m'-m}({\bm q(\bm k)} + m{\bm \alpha}) \\&\qquad\qquad\cdot g^{(h)}_{\omega'}({\bm q}(\bm k) + (n-m') {\bm \alpha}) W^{(h)}_{\psi\phi;m',\omega',\zeta}({\bm q}(\bm k) + (n-m'){\bm \alpha}, y_{2})\;.
\end{split}
\end{equation}
This proves (\ref{eq:recphi2}), after the change of variables $n-m'-m\to m'$.
\paragraph{Proof of (\ref{eq:rec2}).} The starting point is the formula:
\begin{equation}
W^{(h-1)}_{\phi\psi;n,\omega,\sigma}({\bm q}({\bm k}), x_{2}) = \beta L_{1} \frac{\partial^{2}}{\partial \psi^{(<h)-}_{{\bm q}({\bm k}) + n{\bm \alpha}, \omega} \partial \phi^{+}_{{\bm k}, x_{2}, \sigma}  }  \log \mathbbm{E}_{h}\Big( e^{V^{(h)}(\psi^{(\leq h)}, \phi)} \Big)\Big|_{\phi=\psi^{(<h)} = 0}\;.
\end{equation}
By (\ref{eq:1der}), we have
\begin{equation}\label{eq:psiphi1}
\begin{split}
& \frac{\partial^{2}}{\partial \psi^{(<h)-}_{{\bm q}({\bm k}) + n{\bm \alpha}, \omega} \partial \phi^{+}_{{\bm k}, x_{2}, \sigma}  }  \log \mathbbm{E}_{h}\Big( e^{V^{(h)}(\psi^{(\leq h)}, \phi)} \Big) \\
&\quad =  \frac{\partial}{\partial \psi^{(<h)-}_{{\bm q}({\bm k}) + n{\bm \alpha}, \omega}} \frac{1}{\beta L_{1}} \sum_{m,\omega'} W^{(h)}_{\phi\psi;m,\omega',\sigma}({\bm q}({\bm k}), x_{2})   \frac{\mathbbm{E}_{h}\Big( \psi^{(\leq h)-}_{{\bm q}({\bm k}) + m{\bm \alpha}, \omega'}  e^{V^{(h)}(\psi^{(\leq h)}, \phi)}\Big)}{\mathbbm{E}_{h}\Big( e^{V^{(h)}(\psi^{(\leq h)}, \phi)} \Big)} \\
&\quad = \frac{1}{\beta L_{1}} W^{(h)}_{\phi\psi;n,\omega,\sigma}({\bm q}({\bm k}), x_{2})\\
&\qquad - \frac{1}{\beta L_{1}} \sum_{m,\omega'} W^{(h)}_{\phi\psi;m,\omega',\sigma}({\bm q}({\bm k}), x_{2})   \frac{\mathbbm{E}_{h}\Big( \psi^{(\leq h)-}_{{\bm q}({\bm k}) + m{\bm \alpha}, \omega'}  \frac{\partial}{\partial \psi^{(<h)-}_{{\bm q}({\bm k}) + n{\bm \alpha}, \omega}} e^{V^{(h)}(\psi^{(\leq h)}, \phi)}\Big)}{\mathbbm{E}_{h}\Big( e^{V^{(h)}(\psi^{(\leq h)}, \phi)} \Big)}\;;
 \end{split}
\end{equation}
then, by (\ref{eq:parts}) we have:
\begin{equation}
\begin{split}
&\mathbbm{E}_{h}\Big( \psi^{(\leq h)-}_{{\bm q}({\bm k}) + m{\bm \alpha}, \omega'}  \frac{\partial}{\partial \psi^{(<h)-}_{{\bm q}({\bm k}) + n{\bm \alpha}, \omega}} e^{V^{(h)}(\psi^{(\leq h)}, \phi)}\Big)\Big|_{\psi^{(<h)} = \phi = 0} \\
&\quad = \beta L_{1} g^{(h)}_{\omega'}({\bm q}(\bm k) + m{\bm \alpha}) \frac{\partial^{2}}{\partial \psi^{(<h)+}_{{\bm q}({\bm k}) + m{\bm \alpha}, \omega'} \partial \psi^{(<h)-}_{{\bm q}({\bm k}) + n{\bm \alpha}, \omega}} \mathbbm{E}_{h}\Big( e^{V^{(h)}(\psi^{(\leq h)}, \phi)}\Big)\Big|_{\psi^{(<h)} = 0}\;.
\end{split}
\end{equation}
Thus, plugging this identity in (\ref{eq:psiphi1}), and recalling (\ref{eq:effpot}), we have:
\begin{equation}
\begin{split}
W^{(h-1)}_{\phi\psi;n,\omega,\sigma}({\bm q}({\bm k}), x_{2}) &= W^{(h)}_{\phi\psi;n,\omega,\sigma}({\bm q}({\bm k}), x_{2}) \\
&\quad + \sum_{m,\omega'} W^{(h)}_{\phi\psi;m,\omega',\sigma}({\bm q}({\bm k}), x_{2}) g^{(h)}_{\omega'}({\bm q}(\bm k) + m{\bm \alpha}) V^{(h-1)}_{\omega',\omega;n-m}({\bm q}(\bm k) + m{\bm \alpha})\;.
\end{split}
\end{equation}
This concludes the proof of (\ref{eq:rec2}).

\section{The anomalous bubble diagram}\label{app:bubble}

In this appendix we shall compute the anomalous bubble diagram $\mc{B}_{+}^{\beta, L_{1}}(\bm{p})$, and recover the expression in Eq. (\ref{eq:binfty}). The computation is well-known, but we reproduce it here for completeness. Let $\bm p:=(\eta, p)$. First, we approximate $\mc{B}_{+}^{\beta, L_{1}}(\bm{p})$ by an integral, corresponding to the $\beta, L_{1} \to \infty$ limit:

\begin{equation}
    \begin{split}
    &\mc{B}_{+}^{\beta, L_{1}}(\bm{p})\\
    &= \frac{1}{\beta L_{1}} \sum_{\bm{q} \in \msc D_{\beta,L}}  g_{+;\mr{s}}(\bm{p}+\bm{q}) g_{+;\mr{s}}(\bm{q})\\
        &=\underbrace{\int\frac{ d  \bm{q}}{(2\pi)^2}\bigg(\frac{\chi( \bm{p}+\bm{q})}{iv_{0,+}(\lambda)(\eta+q_0)+v_{1,+}(\lambda)(p+q_1)} \frac{\chi(\bm{q})}{iv_{0,+}(\lambda)q_0+v_{1,+}(\lambda)q_1} \bigg)}_{\eqqcolon \mc{B}_{+}^{\infty}(\bm{p})}\\
        &\quad+\mr{o}_{\beta, L}(1)\,.
            \end{split}
\end{equation}
Then,
\begin{equation}
\begin{split}
   &\mc{B}_{+}^{\infty}(\bm{p}) \\
   &\quad= \int \frac{ d  \bm q}{(2\pi)^2} \,\frac{\chi( \bm {q})\chi(\bm{p}+\bm{q} )}{iv_{0,+}(\lambda)\eta+v_{1,+}(\lambda)p} \bigg(\frac{1}{iv_{0,+}(\lambda)q_0+v_{1,+}(\lambda)q_1  }\\
   &\qquad-\frac{1}{iv_{0,+}(\lambda)(\eta+q_0)+v_{1,+}(\lambda)(p+q_1)}\bigg)\\
        &  \quad = \int \frac{ d  \bm q}{(2\pi)^2} \,\frac{\chi(\bm{q})\big(\chi(\bm{p}+\bm{q})-\chi (\bm{q}-\bm{p})\big)}{iv_{0,+}(\lambda)\eta+v_{1,+}(\lambda)p}\frac{1}{iv_{0,+}(\lambda)q_0+v_{1,+}(\lambda)q_1 }\;,
        \end{split}
\end{equation}
which we further rewrite as:
\begin{equation}
    \begin{split}
    \mc{B}_{+}^{\infty}(\bm{p})  &= \frac{2}{iv_{0,+}(\lambda)\eta+v_{1,+}(\lambda)p}\int \frac{ d  \bm q}{(2\pi)^2} \, \frac{\chi(\bm{q})\big(\bm p\cdot{\nabla \chi(\bm{q})}\big) }{iv_{0,+}(\lambda)q_0+v_{1,+}(\lambda)q_1}+\mr O(\|\bm p\| )\\ 
        &=\frac{2\eta}{iv_{0,+}(\lambda)\eta+v_{1,+}(\lambda)p} \underbrace{\int \frac{ d  \bm q}{(2\pi)^2} \frac{\chi(\bm{q})\partial_0 \chi(\bm{q})}{iv_{0,+}(\lambda)q_0+v_{1,+}(\lambda)q_1 }}_{\eqqcolon I_0} \\
    &\quad+\frac{2 p}{iv_{0,+}(\lambda)\eta+v_{1,+}(\lambda) p} \underbrace{\int \frac{ d  \bm q}{(2\pi)^2} \frac{ \chi(\bm{q})\partial_1\chi(\bm{q})}{iv_{0,+}(\lambda)q_0+v_{1,+}(\lambda) q_1 }}_{\eqqcolon I_1} +\mr O(\| \bm p\|).
    \end{split}
\end{equation}
We recall that, with a slight abuse of notation, 
\begin{equation}
    \chi(\bm{q})=\chi\Big(\sqrt{\big(v_{0,+}(\lambda)q_{0}\big)^2+\big(v_{1,+}(\lambda)q_{1}\big)^2}\Big)\;.
\end{equation}
Thus, after the change of variable $\kappa_{0} = v_{0,+}(\lambda) q_{0}$ and $\kappa_{1} = v_{1,+}(\lambda) q_{1}$:
\begin{equation}
    \begin{split}
        I_0& = \frac{1}{|{v}_{1,+}(\lambda) |} \int \frac{ d  \bm \kappa}{(2\pi)^2} \frac{\kappa_0 \, \chi(\|\bm \kappa\|)}{ \|\bm \kappa\|}\frac{\chi'(\|\bm \kappa\|)}{i \kappa_0+\kappa_1}\\
       & =\frac{1}{|{v}_{1,+}(\lambda) |}\int \frac{ d  \bm \kappa}{(2\pi)^2} \frac{\kappa_{1} \,\chi(\|\bm \kappa\|)}{\|\bm \kappa\|} \frac{\chi'(\|\bm \kappa\|)}{i \kappa_1+\kappa_0}\\
       &=-\frac{i}{|{v}_{1,+}(\lambda) |}\int \frac{ d  \bm \kappa}{(2\pi)^2} \frac{\kappa_{1}\, \chi(\|\bm \kappa\|)}{\|\bm \kappa\|}\frac{\chi'(\|\bm \kappa\|)}{-i\kappa_0+\kappa_{1}}\\
       &=-\frac{i}{|{v}_{1,+}(\lambda) |}\int \frac{ d  \bm \kappa}{(2\pi)^2} \frac{\kappa_{1}\, \chi(\|\bm \kappa\|)}{\|\bm \kappa\|}\frac{\chi'(\|\bm \kappa\|)}{i\kappa_0+\kappa_{1}}=   -\frac{i}{\mf{v}(\lambda) } I_1\;,
    \end{split}
\end{equation}
that is $I_1=i\mf{v}(\lambda)I_{0}$. The integral in $I_0$ can be computed explicitly:
\begin{equation}
    \begin{split}
       \int \frac{ d  \bm \kappa}{(2\pi)^2} \frac{\kappa_0 \chi(\|\bm \kappa\|)}{\|\bm \kappa\|}\frac{\chi'(\|\bm \kappa\|)}{i \kappa_0+\kappa_1} 
       &=  \int \frac{ d  \theta\, d  \rho }{(2\pi)^2}\,\rho\,\frac{\rho \cos \theta \,\chi(\rho)\chi'(\rho)}{\rho(i \rho\cos \theta +\rho\sin \theta)}\\
        &=  \int  d  \rho \chi(\rho)\chi'(\rho) \int  \frac{ d  \theta}{(2\pi)^2} \frac{\cos \theta}{i \cos \theta +\sin \theta} \\
        &= \restr{\frac{1}{2}\big(\chi \big)^2(\rho)}{0}^{\mkern-6mu \updelta}  \int \frac{ d  \theta}{(2\pi)^2}   {\cos \theta\big(-i \cos \theta +\sin \theta\big)} \\
        &= \frac{i}{2} \int \frac{ d  \theta}{(2\pi)^2} \cos^2 \theta =\frac{i}{8\pi}\;.
    \end{split}
\end{equation}
In conclusion:
\begin{equation}
    \begin{split}
      \mc{B}_{+}^{\infty}(\bm{p})&=\frac{1}{4\pi \big| v_{0,+}(\lambda) v_{1,+}(\lambda)\big|}\frac{iv_{0,+}(\lambda)\eta-v_{1,+}(\lambda)p}{i v_{0,+}(\lambda)\eta +v_{1,+}(\lambda)p} +\mr O(\|\bm{p}\|) .
    \end{split}
\end{equation}

\section{Proof of (\ref{eq:RbdFou})}\label{app:Rest}

In this appendix we shall prove a momentum-space estimates for the remainder term in the expression of the two-point function, Eq. (\ref{eq:RbdFou}), which is used the analysis of the vertex Ward identity of Section \ref{sec:vertexcons}. Suppose that ${\bm k}$ is close to ${\bm k}_{F}^{+}$, and that, for $M\geq 1$:
\begin{equation}\label{eq:k0hyp}
|k - k_{F}^{+}|^{M} \leq |k_{0}| \leq |k - k_{F}^{+}|\;.
\end{equation}
Then, recalling the notation $\hat R_{n,\sigma,\zeta}({\bm k}; x_{2}, y_{2}) = \hat R_{\sigma,\zeta}({\bm k}, {\bm k} + n{\bm \alpha}; x_{2}, y_{2})$, and setting $\hat R_{n,\sigma,\zeta} = \hat R_{1;n,\sigma,\zeta} + \hat R_{2;n,\sigma,\zeta}$ where $\hat R_{i;n,\sigma,\zeta}$ are defined in Section \ref{sec:thm2ptproof}, we claim that, for $L_{2}$ large enough:
\begin{equation}\label{eq:R1momentum}
\begin{split}
\big| \text{d}_{k_{0}}^{n_{0}} \text{d}_{k_{1}}^{n_{1}} \hat R_{n,\sigma,\zeta}({\bm k}; x_{2}, y_{2}) \big| &\leq \frac{C_{n_{0},n_{1}} |\lambda|^{\delta_{n\neq 0}} e^{-\frac{c}{16}|n|} e^{-\tilde c|x_{2} - y_{2}|}}{\| {\bm k} - {\bm k}_{F}^{+} \|^{n_{0} + n_{1} + 1-\theta}}\\&\quad + \sum_{\substack{\omega_{1},\omega_{2} \\ (\omega_{1},\omega_{2}) \neq (+,+)}} \!\!\! C_{n_{0},n_{1}}\frac{e^{-\frac{c}{16}|n|} e^{-c(|x_{2}|_{\omega_{1}} + |y_{2}|_{\omega_{2}})}}{|k_{0}|^{n_{0} + n_{1} + 1-\theta}}\;.
\end{split}
\end{equation}
The bound holds true for $\hat R_{2;n,\sigma,\zeta}({\bm k}; x_{2}, y_{2})$, thanks to (\ref{eq:R2bd}). Let us now prove it for $\hat R_{1;n,\sigma,\zeta}({\bm k}; x_{2}, y_{2})$ defined as in (\ref{eq:R1n}). Let us denote by $\hat R_{1;+,n,\sigma,\zeta}({\bm k}; x_{2}, y_{2})$ the contribution to $\hat R_{1;n,\sigma,\zeta}({\bm k}; x_{2}, y_{2})$ coming from all chiralities equal to $+$. Then, by the exponential decay of the edge modes, and using that $\gamma^{h-1} \geq |k_{0}|$, we easily get:
\begin{equation}
\begin{split}
&\Big| \text{d}_{k_{0}}^{n_{0}} \text{d}_{k_{1}}^{n_{1}} \hat R_{1;n,\sigma,\zeta}({\bm k}; x_{2}, y_{2}) - \text{d}_{k_{0}}^{n_{0}} \text{d}_{k_{1}}^{n_{1}} \hat R_{1;+,n,\sigma,\zeta}({\bm k}; x_{2}, y_{2}) \Big| \\&\qquad \leq   \sum_{\substack{\omega_{1},\omega_{2} \\ (\omega_{1},\omega_{2}) \neq (+,+)}} \!\!\! C_{n_{0},n_{1}}\frac{e^{-\frac{c}{16}|n|} e^{-c(|x_{2}|_{\omega_{1}} + |y_{2}|_{\omega_{2}})}}{|k_{0}|^{n_{0} + n_{1} + 1-\theta}}\;.
\end{split}
\end{equation}
Next, consider $\hat R_{1;+,n,\sigma,\zeta}({\bm k}; x_{2}, y_{2})$. Observe that $\hat R_{1;+,n,\sigma,\zeta}$ is given by a sum over scales $h$ of contributions of the form (\ref{eq:Ah0}) or of the form (\ref{eq:Bh}). We organize the sum over $m$ and over the scales $h$ as:
\begin{equation}\label{eq:organize}
\begin{split}
&\sum_{h, m} (\cdots) = \sum_{h, m} \mathbbm{1}(m=0)(\cdots) \\&\quad  + \sum_{m} \sum_{h:\, \gamma^{h} \geq |k - k_{F}^{+}|}\mathbbm{1}(m\neq 0)(\cdots) + \sum_{m} \sum_{h:\, \gamma^{h} < |k - k_{F}^{+}|} \mathbbm{1}(m\neq 0)(\cdots)\;;
\end{split}
\end{equation}
correspondingly, we write:
\begin{equation}
\begin{split}
&\hat R_{1;+,n,\sigma,\zeta}({\bm k}; x_{2}, y_{2}) \\
&= \hat R^{(a)}_{1;+,n,\sigma,\zeta}({\bm k}; x_{2}, y_{2}) + \hat R^{(b)}_{1;+,n,\sigma,\zeta}({\bm k}; x_{2}, y_{2}) + \hat R^{(c)}_{1;+,n,\sigma,\zeta}({\bm k}; x_{2}, y_{2})\;.
\end{split}
\end{equation}
Consider the term $\hat R^{(a)}_{1;+,n,\sigma,\zeta}({\bm k}; x_{2}, y_{2})$. Here, the scale $h$ of the contributions (\ref{eq:Ah0}), (\ref{eq:Bh}) has to be such that $\delta\gamma^{h-1} \leq \|{\bm k} - {\bm k}_{F}^{+}\| \leq \delta\gamma^{h+1}$. Therefore, from the bounds (\ref{eq:errA1}), (\ref{eq:errB1}) we get:
\begin{equation}\label{eq:R1a}
\sum_{h} \Big|\text{d}_{k_{0}}^{n_{0}} \text{d}_{k_{1}}^{n_{1}}\hat R^{(a);(h)}_{1;+,n,\sigma,\zeta}({\bm k}; x_{2}, y_{2})\Big| \leq C_{n_{0},n_{1}} \frac{|\lambda|^{\delta_{n\neq 0}} e^{-\frac{c}{16}|n|} e^{- \frac{\tilde c}{32}x_{2}} e^{- \frac{\tilde c}{32}y_{2}}}{\|{\bm k} - {\bm k}^{+}_{F}\|^{n_{0} + n_{1}+1-\theta}}\;.
\end{equation}
Consider now the term $\hat R^{(b)}_{1;n,\sigma,\zeta}({\bm k}; x_{2}, y_{2})$. Here, using that ${\bm k} - {\bm k}_{F}^{+} + m\alpha$ has to be in the support of $g_{+}^{(h)}$, together with the Diophantine condition, we get:
\begin{equation}
\Big(m\neq 0\;,\quad | m \alpha |_{\mathbb{T}} \leq 2\delta\gamma^{h+1}\Big)\Rightarrow |m| \geq C\gamma^{-h/\tau}\;.
\end{equation}
Thus, we can use a fraction of the exponential decay in $m$ of the bound for the kernel $W^{(h)}_{\phi\psi;m,+,\sigma}$ to prove that:
\begin{equation}\label{eq:R1b}
\sum_{h} \Big|\text{d}_{k_{0}}^{n_{0}} \text{d}_{k_{1}}^{n_{1}}\hat R^{(b);(h)}_{1;+,n,\sigma,\zeta}({\bm k}; x_{2}, y_{2})\Big| \leq C_{n_{0},n_{1}} |\lambda|^{\delta_{n\neq 0}} e^{-\frac{c}{16}|n|} e^{- \frac{\tilde c}{32}x_{2}} e^{- \frac{\tilde c}{32}y_{2}}\;.
\end{equation}
Finally, consider the last term in (\ref{eq:organize}). Here, we use that the last scale is determined by $k_{0}$, since $\delta\gamma^{h-1} \geq |k_{0}|$. Also, using that $|k - k_{F}^{+} + m\alpha|_{\mathbb{T}} \leq \gamma^{h+1}$, that $\gamma^{h} < |k - k_{F}^{+}|$, and hence that $|m\alpha|_{\mathbb{T}} \leq \gamma |k - k_{F}^{+}| + |k-k_{F}^{+}|$, the Diophantine condition in this regime implies:
\begin{equation}
\Big(m\neq 0\;,\quad | m \alpha |_{\mathbb{T}} \leq 2\gamma |k - k_{F}^{+}|\Big)\Rightarrow |m| \geq C|k-k_{F}^{+}|^{-\frac{1}{\tau}}\;.
\end{equation}
Therefore, using again a fraction of the exponential decay in $m$ of the bound for the kernel $W^{(h)}_{\phi\psi;m,+,\sigma}$, we get:
\begin{equation}\label{eq:R1c}
\begin{split}
\sum_{h} \Big| \text{d}_{k_{0}}^{n_{0}} \text{d}_{k_{1}}^{n_{1}} \hat R^{(c);(h)}_{1;+,n,\sigma,\zeta}({\bm k}; x_{2}, y_{2})\Big| &\leq C_{n_{0},n_{1}} \frac{|\lambda|^{\delta_{n\neq 0}} e^{-\frac{c}{16}|n|} e^{- \frac{\tilde c}{32}x_{2}} e^{- \frac{\tilde c}{32}y_{2}}}{|k_{0}|^{n_{0} + n_{1}+1-\theta}} e^{-\delta |k - k_{F}^{+}|^{-\frac{1}{\tau}}} \\
&\leq C_{n_{0},n_{1}}  \frac{|\lambda|^{\delta_{n\neq 0}} e^{-\frac{c}{16}|n|} e^{- \frac{\tilde c}{32}x_{2}} e^{- \frac{\tilde c}{32}y_{2}}}{\|{\bm k} - {\bm k_{F}^{+}} \|^{1-\theta}}\;,
\end{split}
\end{equation}
where we used that, thanks to the assumption (\ref{eq:k0hyp}):
\begin{equation}
\begin{split}
\frac{e^{-\delta |k - k_{F}^{+}|^{-\frac{1}{\tau}}}}{|k_{0}|^{n_{0} + n_{1} + 1-\theta}} &= \frac{e^{-\delta |k - k_{F}^{+}|^{-\frac{1}{\tau}}}}{\|{\bm k} - {\bm k_{F}^{+}} \|^{n_{0} + n_{1} + 1-\theta}} \frac{\|{\bm k} - {\bm k_{F}^{+}} \|^{n_{0} + n_{1} + 1-\theta}}{|k_{0}|^{n_{0} + n_{1} + 1-\theta}} \\
&\leq K\frac{e^{-\delta |k - k_{F}^{+}|^{-\frac{1}{\tau}}}}{\|{\bm k} - {\bm k_{F}^{+}} \|^{n_{0} + n_{1} + 1-\theta}} \frac{| k - k_{F}^{+} |^{n_{0} + n_{1} + 1-\theta}}{|k - k_{F}^{+}|^{M(n_{0} + n_{1} + 1-\theta)}} \\
&\leq \frac{K_{n_{0}, n_{1}}}{\|{\bm k} - {\bm k_{F}^{+}} \|^{n_{0} + n_{1} + 1-\theta}}\;.
\end{split}
\end{equation}
Thus, Eqs. (\ref{eq:R1a}), (\ref{eq:R1b}), (\ref{eq:R1c}) prove the momentum-space estimate (\ref{eq:R1momentum}).

\section{Proof of Lemma \ref{lem:contK}}\label{app:contK}

Let us rewrite the function $g(\eta,\theta)$ as:
\begin{equation}
g(\eta, \theta) = \frac{\theta}{L_{1}}\sum_{p \in \msc D_{L}} \sum_{m\in \mathbb{Z}} \sum_{x_{2},y_{2}}\hat \mu_{\theta}(p, x_{2}) \hat \phi_{\theta,\ell}(-p+m\alpha,y_{2}) \hat F_{m}(\eta, p; x_{2}, y_{2})
\end{equation}
where
\begin{equation}
\hat F_{m}(\eta, p; x_{2}, y_{2}) = \int_{-\beta/2}^{\beta/2} ds\, e^{-i\eta s} \sum_{x_{1} = 1}^{L_{1}}  e^{-ip_{1}z_{1}} F_{m}((s, z_{1}); x_{2}, y_{2})\;.
\end{equation}
Let us change variable, $p \to \theta p$:
\begin{equation}
g(\eta, \theta) = \frac{\theta}{L_{1}}\sum_{p \in \frac{1}{\theta}\msc D_{L}} \sum_{m\in \mathbb{Z}} \sum_{x_{2},y_{2}}\hat \mu_{\theta}(\theta p, x_{2}) \hat \phi_{\theta,\ell}(-\theta p+m\alpha,y_{2}) \hat F_{m}(\eta, \theta p; x_{2}, y_{2})\;,
\end{equation}
where, by (\ref{eq:estmuphi}), the functions $\hat \mu_{\theta}(\theta q, x_{2})$, $\hat \phi_{\theta,\ell}(\theta q,y_{2})$ decay faster than any power in $|q|_{\mathbb{T}_{\theta^{-1}}}$. Thus, in the sum over $m$ we isolate the term with $m=0$ from the rest:
\begin{equation}
\begin{split}
&g(\eta, \theta) \\
&= \frac{\theta}{L_{1}}\sum_{p \in \frac{1}{\theta}\msc D_{L}} \sum_{x_{2},y_{2}}\hat \mu_{\theta}(\theta p, x_{2}) \hat \phi_{\theta,\ell}(-\theta p,y_{2}) \hat F_{0}(\eta, \theta p; x_{2}, y_{2})\\
&\quad + \frac{\theta}{L_{1}}\sum_{p \in \frac{1}{\theta}\msc D_{L}} \sum_{m\neq 0} \sum_{x_{2},y_{2}}\hat \mu_{\theta}(\theta p, x_{2}) \hat \phi_{\theta,\ell}(-\theta (p + m\alpha/\theta),y_{2}) \hat F_{m}(\eta, \theta p; x_{2}, y_{2})\\
&=: g_{0}(\eta, \theta) + g_{>}(\eta, \theta)\;.
\end{split}
\end{equation}
As a consequence of the assumptions on $F$, we have
\begin{equation}
|\hat F_{m}(\eta, \theta p; x_{2}, y_{2})|\leq Ce^{-c|m|} e^{-c|x_{2} - y_{2}|}
\end{equation}
uniformly in all other parameters; hence, we can estimate:
\begin{equation}
\begin{split}
|g_{>}(\eta, \theta)| &\leq \frac{1}{\theta L_{1}}\sum_{p \in \frac{1}{\theta}\msc D_{L}} \sum_{m\neq 0} \sum_{x_{2},y_{2}} \frac{C_{r}}{1+(|p|_{\mathbb{T}_{\theta^{-1}}})^{r}} \frac{1}{1+(|p + m\alpha/\theta|_{\mathbb{T}_{\theta^{-1}}})^{r}} \\&\quad \cdot \frac{1}{1 + | \theta x_{2}|^{r}} \frac{1}{1+|y_{2} / \ell|^{r}} e^{-c|m|} e^{-c|x_{2} - y_{2}|}\;.
\end{split}
\end{equation}
We then separate the values of $m$ such that $| m\alpha |_{\mathbb{T}} > \sqrt{\theta}$ and $| m\alpha |_{\mathbb{T}} \leq \sqrt{\theta}$; both contributions are small as $\theta \to 0$, for different reasons. For the former, we proceed as follows:
\begin{equation}
\begin{split}
&\frac{1}{1+(|p|_{\mathbb{T}_{\theta^{-1}}})^{r}} \frac{1}{1+(|p + m\alpha/\theta|_{\mathbb{T}_{\theta^{-1}}})^{r}} \\
&\quad \leq \frac{C_{r}}{1+(|p|_{\mathbb{T}_{\theta^{-1}}})^{r/2}} \frac{1}{1+(|p|_{\mathbb{T}_{\theta^{-1}}})^{r/2}} \frac{1}{1+(|p + m\alpha/\theta|_{\mathbb{T}_{\theta^{-1}}})^{r/2}} \\
&\quad = \frac{C_{r}}{1+(|p|_{\mathbb{T}_{\theta^{-1}}})^{r/2}} \frac{1}{1+(|p|_{\mathbb{T}_{\theta^{-1}}})^{r/2}} \frac{|-p + p + m\alpha/\theta|^{r/2}_{\mathbb{T}_{\theta^{-1}}}}{1+(|p + m\alpha/\theta|_{\mathbb{T}_{\theta^{-1}}})^{r/2}}  \frac{1}{1+(|m\alpha/\theta|_{\mathbb{T}_{\theta^{-1}}})^{r/2}} \\
&\quad \leq \frac{K_{r}}{1+(|p|_{\mathbb{T}_{\theta^{-1}}})^{r/2}} \frac{1}{1+(|p|_{\mathbb{T}_{\theta^{-1}}})^{r/2}} \frac{|p + m\alpha/\theta|^{r/2}_{\mathbb{T}_{\theta^{-1}}} + |p|^{r/2}_{\mathbb{T}_{\theta^{-1}}}}{1+(|p + m\alpha/\theta|_{\mathbb{T}_{\theta^{-1}}})^{r/2}}  \frac{1}{1+(|m\alpha/\theta|_{\mathbb{T}_{\theta^{-1}}})^{r/2}} \\
&\quad\leq \frac{2K_{r}}{1+(|p|_{\mathbb{T}_{\theta^{-1}}})^{r/2}}  \frac{1}{1+(|m\alpha/\theta|_{\mathbb{T}_{\theta^{-1}}})^{r/2}}\;, 
\end{split}
\end{equation}
and we use the smallness of the function $e^{-c|m|} | m\alpha/\theta |_{\mathbb{T}_{\theta^{-1}}}^{-r/2}$ as $\theta \ll 1$. For the regime $| m\alpha |_{\mathbb{T}} \leq \sqrt{\theta}$, we use the Diophantine condition; we omit the details. We obtain:
\begin{equation}
|g_{>}(\eta, \theta)| \leq C_{\ell}\theta\;.
\end{equation}
Next, consider the term $g_{0}(\eta, \theta)$. Using that, as a consequence of assumption (\ref{eq:Festlem}),
\begin{equation}
\Big| \hat F_{m}(\eta, \theta p; x_{2}, y_{2}) - \hat F_{m}(\eta, 0; x_{2}, y_{2}) \Big| \leq C |\theta p|^{\xi/2} e^{-c|m|} e^{-c|x_{2} - y_{2}|}\;,
\end{equation}
we have:
\begin{equation}\label{eq:C9}
\begin{split}
g_{0}(\eta, \theta) &= \frac{\theta}{L_{1}}\sum_{p \in \frac{1}{\theta}\msc D_{L}} \sum_{x_{2},y_{2}}\hat \mu_{\theta}(\theta p, x_{2}) \hat \phi_{\theta,\ell}(-\theta p,y_{2}) \hat F_{0}(\eta, 0; x_{2}, y_{2}) + \tilde g_{0}(\eta, \theta) \\
|\tilde g_{0}(\eta, \theta)| &\leq C_{\ell}\theta^{\xi/2}\;.
\end{split}
\end{equation}
Concerning the main term in (\ref{eq:C9}), for $\theta \ll 1$ and as $L_{1}\to \infty$, we can approximate it as an integral:
\begin{equation}\label{eq:gintegral}
\begin{split}
&\frac{\theta}{L_{1}}\sum_{p \in \frac{1}{\theta}\msc D_{L}} \sum_{x_{2},y_{2}}\hat \mu_{\theta}(\theta p, x_{2}) \hat \phi_{\theta,\ell}(-\theta p,y_{2}) \hat F_{0}(\eta, 0; x_{2}, y_{2}) \\
&\quad = \sum_{x_{2}, y_{2}} \int_{\mathbb{R}} \frac{dp}{(2\pi)}\, \hat \mu(p, 0) \hat \varphi(-p, y_{2}/\ell) \hat F_{0}(\eta, 0; x_{2}, y_{2}) + O_{\ell}(\theta^{\alpha})\;.
\end{split}
\end{equation}
for some $\alpha > 0$.

All in all, we rewrote $g(\eta, \theta)$ as a quantity that does not depend on $\theta$, given by the integral in (\ref{eq:gintegral}), plus error terms that vanish with $\theta$ uniformly in all parameters except $\ell$. In particular,
\begin{equation}
| g(\eta, \theta) - g(\eta, \theta') | \leq C_{\ell} (|\theta|^{\alpha} + |\theta'|^{\alpha})
\end{equation}
for some $\alpha>0$. This concludes the proof of Lemma \ref{lem:contK}.\qed

\end{document}